%%%%%%%%%%%%%%%%%%%%%%%%%%%%%%%%%%%%%%%%%%%%%%%%%%%%%%%%%%%%%%%%%%%%%%%%%%%%%%%%%%%%%%%%%%%%%%%%%%%%
%% This compiles with  "make"  if properly configured or with  "pdflatex"  if figures are already %%
%% compiled.                                                                                      %%
%%%%%%%%%%%%%%%%%%%%%%%%%%%%%%%%%%%%%%%%%%%%%%%%%%%%%%%%%%%%%%%%%%%%%%%%%%%%%%%%%%%%%%%%%%%%%%%%%%%%
\newif\ifdraft \drafttrue \draftfalse
\documentclass[final]{dmtcs-episciences}
\usepackage[utf8]{inputenc}
\usepackage{amsthm,amsmath,amssymb}
\usepackage{xargs,calc}
\usepackage{stmaryrd}
\usepackage{xcolor}
% \usepackage{graphicx}
% \usepackage{subfig}
% \captionsetup[figure]{labelfont=bf}
\usepackage{subfigure}

\usepackage{pstricks}
\usepackage{xspace}
\usepackage[shortlabels]{enumitem}
\setlist{parsep=0pt, itemsep={0.2\thmspace}, topsep={0.3\thmspace}}
\setlist[enumerate]{label=\enumstyle{\alph{*}}}
\setlist[trivlist]{parsep=0pt,partopsep=0pt,topsep=\thmspace}
\newcommand{\enumstyle}[1]{{\rm(#1)}}

% \usepackage[backend=bibtex,isbn=false]{biblatex}
%
% \addbibresource{Alexandrie-abbrevs.bib}
% \addbibresource{Alexandrie-AC.bib}
% \addbibresource{Alexandrie-DF.bib}
% \addbibresource{Alexandrie-GL.bib}
% \addbibresource{Alexandrie-MR.bib}
% \addbibresource{Alexandrie-SZ.bib}
% \addbibresource{CANT.bib}
% \addbibresource{extra.bib}

% \usepackage{hyperref}

% \def\grabtime #1#2#3#4#5#6#7#8#9{\grabtimeB}
% \def\grabtimeB #1#2#3#4#5#6#7#8#9{#2#3:#4#5\grabtimeC}
% \def\grabtimeC #1#2#3#4#5{}

% \usepackage{fancyhdr}
% \pagestyle{fancy}
% \fancyhead{}

% \ifdraft
%   \fancyfoot[L]{\isotoday~~\expandafter\grabtime\pdfcreationdate}
%   \fancyhead[C]{\ifthenelse{\equal{\rightmark}{}}{\nouppercase\leftmark}{\rightmark}}
% \else
%   \renewcommand{\headrulewidth}{0pt}
% \fi

% \ifdraft
%   \usepackage[mathlines,right]{lineno}
%   \linenumbers
%   \usepackage[notref,notcite]{showkeys}
%   \renewcommand*\showkeyslabelformat[1]{\scriptsize\normalfont\ttfamily #1}
%   %   \showlabels
%   %   \renewcommand{\showlabelfont}{\footnotesize\tt}
%   %   \renewcommand{\showlabelsetlabel}[1]
%   %   {\textbullet\begin{turn}{60}\showlabelfont #1\end{turn}}
% \else
  \newcommand*\showkeyslabelformat[1]{}
% %   \usepackage[notref,notcite,final]{showkeys}
%   % \noshowlabels
% \fi

\newcommand{\TreeScale}{0.5}

\usepackage[round]{natbib}

\title{On subtrees of the representation tree in rational base numeration systems}

\author
{
  Shigeki Akiyama\affiliationmark{1}%
%   \thanks{University of Tsukuba, Ibaraki, Japan.}
  \and
  Victor Marsault\affiliationmark{2,3}%
%   \thanks{Corresponding author \protect\mel{victor.marsault}{ed.ac.uk}}
%   \thanks{University of Li\`ege, Belgium.}
  \and
  Jacques Sakarovitch\affiliationmark{4}%
%   \thanks{IRIF, CNRS/Paris Diderot University, and LTCI, Telecom-ParisTech, France.}
}

\affiliation{
  University of Tsukuba, Ibaraki, Japan.\\
  University of Edinburgh, United Kingdom.\\
  University of Li\`ege, Belgium.\\
  IRIF, CNRS/Paris Diderot University, and LTCI, Telecom-ParisTech, France.
}
\received{2017-6-27}
\revised{2018-1-27}
\accepted{2018-2-23}

\keywords{Rational base numeration systems, Real-representation tree, Infinite words, Infinite transducers, Cantor sets, Hausdorff measure}%

\makeatletter
\hypersetup{
  pdftitle={\@title{}},
  pdfauthor={Shigeki Akiyama, Victor Marsault and Jacques Sakarovitch},
  bookmarks=false,
}
\makeatother

% Affiliated to the University of Liège during the time where the research leading
% to the present article was conducted.
% Affiliated to the University of Liège while conducting the research leading
% to the present article.

% \author[1]{Shigeki Akiyama}
% \affil{University of Tsukuba, Ibaraki, Japan}
% \author[2]{Victor Marsault}
% \affil{\rule{0pt}{1.1em}University of Li\`ege, Belgium}
% \affil[ ]{~\texttt{victor.marsault@ulg.ac.be}}
% \author[3,4]{Jacques Sakarovitch}
% \affil[3]{\rule{0pt}{1.1em}Télécom-ParisTech, Paris, France}
% \affil[4]{\rule{0pt}{1.1em}CNRS and IRIF, Paris Diderot University, France.}
%   \and
%     Jacques Sakarovitch \vspace*{-1ex}\\
%     Télécom ParisTech,\\
%
% }
%   \affil{University of Tsukuba, 1-1-1 Tennodai, Tsukuba, Ibaraki, 350-8571 Japan.}
%   \and
%   Victor Marsault%
%         \thanks{Corresponding author, victor.marsault@ulg.ac.be}}
%   \affil{University of Liège, Allée de la découverte 12, 4000 Liège 1, Belgium.}
% \author{Jacques Sakarovitch}
% \affil{Telecom-ParisTech, 46 rue Barrault, 75013 Paris, France.}
% \affil{IRIF, Paris-Diderot University, Building Sophie Germain,
% 8 place Aurélie, 75013 Paris }

\date{\isotoday}

% \newcounter{oldthm}
\newlength{\thmspace}\setlength{\thmspace}{2.5ex plus 0.3ex}
% \newtheoremstyle{vmstyle}%
% {\thmspace} % Space above
% {0pt} % Space below
% {\slshape\renewcommand{\emph}[1]{\textit{#1}}} % Body font
% {} % Indent amount
% {\bfseries} % Theorem head font
% {.} % Punctuation after theorem head
% {2ex} % Space after theorem head
% {} % Theorem head spec (can be left empty, meaning `normal')
% \theoremstyle{vmstyle}
\theoremstyle{plain}

% \newenvironment{nclaim}{\stepcounter{subclaim}
%   \begin{claim}
% }{%
%   \end{claim}
% }
% \newenvironment{claimproof}{%
%   \renewcommand{\qedsymbol}{}%
% %   \begin{changemargin}{25mm}{0mm}%
%   \begin{proof}[Proof of Claim~\theclaim]
% }{%
%   \end{proof}%
% %   \end{changemargin}%
%   }
%%%%%%%%%%%%%%%%%%%%%%%%%%%%%%%%%%%%%%%%%%%%%%%%%%%%%%%%%%%%%%%%% GENERAL INPUTS
 
%%%%%%%%%%%%%%%%%%%%%%%%%%%%%%%%%%%%%%%%%%%%%%%%%%%%%%%%%%%%%
%                                                           %
%                  js_symboles.tex                         %
%
%       Fichier general de symboles                         %
%                                                           %
%   (doit etre complete par un fichier general de           %
%               commandes et de macros                      %
%   et par fichier particulier pour chaque article)         %
%                                                           %
%                                                           %
%%%%%%%%%%%%%%%%%%%%%%%%%%%%%%%%%%%%%%%%%%%%%%%%%%%%%%%%%%%%%
%
%
%     Symboles mathematiques   abreviations de LaTeX
%
\newcommand{\fa}{\forall}
\newcommand{\ext}{\exists}
 % il existe un unique 990926

 % pour compatibilit\'e

% 000426

%%%%%% \section{mep formules}
%       commandes de mise en page des formules
\newcommand{\e}{\text{\quad}}                 % un moins petit espace
\newcommand{\ee}{\text{\qquad}}               % un espace
\newcommand{\eee}{\text{\qquad \qquad}} % et un grand
%%% nouvelle programmation des espaces mode math 021009
\newsavebox{\InterSymbolSpace}
\savebox{\InterSymbolSpace}{\hspace{0.125em}}
\newsavebox{\SideFormulaSpace}
\savebox{\SideFormulaSpace}{\hspace{0.2em}}
\newcommand{\msp}{\usebox{\SideFormulaSpace}} % espace pour faire ressortir
\newcommand{\xmd}{\usebox{\InterSymbolSpace}} % espace entre les symboles
% ponctuation (dependra peut-etre de la langue)
\newcommand{\eqpnt}{\makebox[0pt][l]{\: .}}
\newcommand{\eqpntvrg}{\makebox[0pt][l]{\: ;}}
\newcommand{\eqvrg}{\makebox[0pt][l]{\: ,}}

\newcommand{\quantvrg}{\, , \;}
\newcommand{\quantsp}{\ee }
\newcommand{\quantsmsp}{\e }
%

%%%%%% \section{texte formules}
%%% 070127 new definition

%

%

%%%%%%
%%%%%% \section{abr. latines}
% abbreviations de locutions latines
%%% 070127 new definition
\newcommand{\LatinLocution}[1]{{\itshape #1}\xspace}

\newcommand{\cf}{\LatinLocution{cf.}}

 % pas de blanc laiss\'e apr\`es etc.

% verbatim

%%%%%%%%%%%%%%%%%%%%%%%%%%%%%%%%%%%%
% 990508 implications (param\'etr\'ees)

%

%
%%%%%%%%%%%%%%%%%%%%%%%%%%
% quelques chiffres en "oldstyle"
% \def\zold{\oldstyle{0}}
% \def\uold{\oldstyle{1}}
% \def\dold{\oldstyle{2}}
% en attendant mieux

%%%%%%%%%%%%%%%%%%%%%%%%%%
% parametrage des fontes pour certains symboles

%%%%%%%%%%%%%%%%%%%%%%%%%%
%%%%%% \section{capitales grasses}
% Lettres capitales grasses avec corps evide
%    (pour les semi-anneaux)
%  nouvelle version uniquement \mathbb 000510
%%% 070127 new definition

%
\newcommand{\UNmbb}{{\mathchoice
{\hbox{$\textstyle\rm 1\kern-0.2em I$}}%
{\hbox{$\textstyle\rm 1\kern-0.2em I$}}%
{\hbox{$\scriptstyle\rm 1\kern-0.15em I$}}%
{\hbox{$\scriptscriptstyle\rm 1\kern-0.1em I$}}%
}}
%%%%%%%%%%%%%%%%%%%%%%%%%%
%%%%%% \section{cap. cal}
\newcommand{\Ac}{\mathcal{A}}

\newcommand{\Dc}{\mathcal{D}}

\newcommand{\Tc}{\mathcal{T}}

%
%%%%%%%%%%%%%%%%%%%%%%%%%%
% Lettres grasses
%%%%%% \section{boldface}
% lettres "boldface" pour les maths

%

%%%%%% \section{boldsymbol}
% lettres "boldsymbol" pour les maths

%

%%%%%%%%%%%%%%%%%%%%%%%%%%
%%%%%% \section{sans serif}
% lettres "sans serif" pour les maths

%

%%%%%%%%%%%%%%%%%%%%%%%%%%
%%%%%% \section{lettres surlignees}
% lettres "barr\'ees"
% toutes les barres sont ˆ la mme hauteur,
% indŽpendamment de la lettre

%

%%% 070328  de js-macros3
% entiers sign\'es 021110 c'est le boxon y a deux sortes de
% lettres barr\'ees cf js_symboles.tex  \abar et \bara

% % lettres "tild\'ees"

%%%%%%%%%%%%%%%%%%%%%%%%%%%
%%%%%%%%%%%%%%%%%%%%%%%%%%
%%%%%% \section{zeros et uns}
% gros chiffres (pour les matrices en blocs
% 010128

% compatibilit\'e
%%%%%%%%%%%%%%%%%%%%%%%%%%
%%%%%% \section{symboles rŽduits}
%%% symboles rŽduits, plus jolis dans le texte
%%%%%%\section{adresse internet}   030821

%%% dollard \dol 010128

% alphabets avec dollard 090717

%%% infini 020223 (NB utilise \scalebox, donc pstricks!}

%%%%%%%%%%%%%%%%%%%%%%%%%%
% macros utilisŽes dans les figures

%%%%%%%%%%%%%%%%%%%%%%%%%%%
%%%%%% \section{h\'ebreu}
% 020116

%%%%%%%%%%%%%%%%%%%%%%%%%%
%%%%%% \section{diagrammes}
% environnement pour les diagrammes (utilise pstricks)
%
     %% echelle 1
% syle de diagramme pour les slides
\newlength{\ArrowDiagSize}
\setlength{\ArrowDiagSize}{6pt}
\newlength{\ArrowDiagWidth}
\setlength{\ArrowDiagWidth}{2pt}
\newpsstyle{SLDiagStyle}%
   {colsep=6ex,rowsep=5ex,nodesep=1ex,npos=.45,%
    arrows=->,linewidth=\ArrowDiagWidth,arrowsize=\ArrowDiagSize,%
        linestyle=solid,linecolor=\ArrowDiagColor}

\newenvironment{SLDiag}%
   {\psset{style=SLDiagStyle}\begin{psmatrix}}%
   {\end{psmatrix}}%
\newcommand{\CDSL}{\begin{SLDiag}}
\newcommand{\CDSLF}{\end{SLDiag}}
\newenvironment{DiagraBig}%
{\psmatrix[colsep=7ex,rowsep=6ex,arrows=->,nodesep=1ex,npos=.45]}%
{\endpsmatrix}
\newcommand{\CDB}{\begin{DiagraBig}}
\newcommand{\CDBF}{\end{DiagraBig}}
% la meme chose en plus petit
\newenvironment{DiagraSmall}%
{\psmatrix[colsep=3ex,rowsep=3ex,arrows=->,nodesep=1ex,npos=.45]}%
{\endpsmatrix}
\newcommand{\CDS}{\begin{DiagraSmall}}
\newcommand{\CDSF}{\end{DiagraSmall}}
% exemple d'utilisation des diagrammes
% a reprendre par copier coller
% \CD
% [name=A] F & & [name=B] E \\[0ex]
% [name=C] R & & [name=D] Q
% \ncline{A}{B}^{\varphi }
% \ncline{A}{C}<{\iota }
% \ncline{B}{D}>{\iota }
% \ncline{C}{D}_{\varphi }
% \CDF

%%%%%% diagramme sp\'ecial pour le groupe libre (II.6) %%%%%

%%%%%% \section{matrices et vecteurs}
%    Matrices et vecteurs
%
%  commandes pour rapprocher les colonnes et ecarter les
%  lignes des matrices. Retablissent les valeurs habituelles
%  a la sortie de chaque macro de matrices
%
%%% mod 010130  utilisation de \pmatrix, etc. (AmsTeX)
%  Matrice 1 x 1
\newcommand{\matriceuu}[1]%
    {\begin{pmatrix} #1 \end{pmatrix}}
%  Matrice 2 x 2  modif 010128
\newcommand{\matricedd}[4]%
    {\begin{pmatrix} #1 & #2 \\ #3 & #4 \end{pmatrix}}
%  Vecteur-colonne de dimension 2 modif 010128
\newcommand{\vecteurd}[2]%
    {\begin{pmatrix} #1 \\ #2 \end{pmatrix}}
%  Vecteur-ligne de dimension 2 modif 010128
\newcommand{\ligned}[2]%
    {\begin{pmatrix} #1 & #2 \end{pmatrix}}
%  Matrice 3 x 3
\newcommand{\matricett}[9]%
    {\begin{pmatrix}  #1 & #2 & #3 \\
                      #4 & #5 & #6 \\
                      #7 & #8 & #9 \end{pmatrix}}
%  Vecteur-colonne de dimension 3
\newcommand{\vecteurt}[3]%
    {\begin{pmatrix} #1 \\ #2 \\ #3 \end{pmatrix}}
%  Vecteur-ligne de dimension 3
\newcommand{\lignet}[3]%
    {\begin{pmatrix} #1 & #2 & #3 \end{pmatrix}}
    %%%%%%%%%%%%%% 030211

% construction de matrices-blocs
\newlength{\jsWidthCol}
\setlength{\jsWidthCol}{0pt}

\newlength{\blocinterligne}
\setlength{\blocinterligne}{1.4ex}

\newlength{\blocinterligned}
\setlength{\blocinterligned}{2ex}

%

% %  r\'eduction des grosse matrices
% \newcommand{\redmatu}[1]{\scalebox{0.83}{#1}}
% \newcommand{\redmatd}[1]{\scalebox{0.66}{#1}}
%
%
% cadres pour visualiser la decomposition des matrices en blocs
%
\newlength{\temparraycolsep}
\newlength{\longueurbloc}
\newlength{\hauteurbloc}
\newlength{\centragebloc}
\setlength{\longueurbloc}{9ex}
\setlength{\hauteurbloc}{7ex}
\setlength{\centragebloc}{-3ex}
% nouvelle longueurs 020927
\newlength{\longueurblc}
\newlength{\hauteurblc}
\newlength{\centrageblc}
\setlength{\longueurblc}{6.5ex}
\setlength{\hauteurblc}{5ex}
\setlength{\centrageblc}{-2ex}
\newcommand{\blocligne}[1]%
    {\framebox[\longueurbloc]{$#1$}}
\newcommand{\blocmatrice}[1]%
    {\framebox[\longueurbloc]{\rule[\centragebloc]{0mm}{\hauteurbloc}$#1$}}
\newcommand{\blocvecteur}[1]%
    {\framebox{\rule[\centragebloc]{0mm}{\hauteurbloc}$#1$}}
\newcommand{\blcligne}[1]%
    {\framebox[\longueurblc]{$#1$}}
\newcommand{\blcmatrice}[1]%
    {\framebox[\longueurblc]{\rule[\centrageblc]{0mm}{\hauteurblc}$#1$}}
\newcommand{\blcvecteur}[1]%
    {\framebox{\rule[\centrageblc]{0mm}{\hauteurblc}$#1$}}
%
%  Matrice 2 x 2  avec blocs visualis\'es 020927
\newcommand{\matriceddblvs}[4]%%
   {\setlength{\temparraycolsep}{\arraycolsep}%
    \setlength{\arraycolsep}{1.3pt}%
        \left (%
    \begin{array}{cc}%
                #1  & \blcligne{#2} \\
            \blcvecteur{#3} & \blcmatrice{#4}
        \end{array}%
        \right )%
    \setlength{\arraycolsep}{\temparraycolsep}%
   }%
%  Vecteur-colonne de dimension 2
\newcommand{\vecteurdblvs}[2]%
   {\setlength{\temparraycolsep}{\arraycolsep}%
    \setlength{\arraycolsep}{1.5pt}%
        \left (%
    \begin{array}{c}%
                #1  \\
            \blcvecteur{#2}
        \end{array}%
        \right )%
    \setlength{\arraycolsep}{\temparraycolsep}%
   }%
%     {\begin{pmatrix} #1 \\ \blcvecteur{#2} \end{pmatrix}}
%  Vecteur-ligne de dimension 2
\newcommand{\lignedblvs}[2]%
   {\setlength{\temparraycolsep}{\arraycolsep}%
    \setlength{\arraycolsep}{1.5pt}%
        \left (%
    \begin{array}{cc}%
                #1  & \blcligne{#2}
        \end{array}%
        \right )%
    \setlength{\arraycolsep}{\temparraycolsep}%
   }%
%     {\begin{pmatrix} #1 & \blcligne{#2} \end{pmatrix}}
%
%  Matrice 3 x 3  avec blocs visualis\'es 020928
\newcommand{\matricettblvs}[9]%%
   {\setlength{\temparraycolsep}{\arraycolsep}%
    \setlength{\arraycolsep}{1.5pt}%
        \left (%
    \begin{array}{ccc}%
                #1  & \blcligne{#2} & #3\\
            \blcvecteur{#4} & \blcmatrice{#5} & \blcvecteur{#6}\\
                #7  & \blcligne{#8} & #9\\
        \end{array}%
        \right )%
    \setlength{\arraycolsep}{\temparraycolsep}%
   }%
%  Vecteur-colonne de dimension 3
\newcommand{\vecteurtblvs}[3]%
   {\setlength{\temparraycolsep}{\arraycolsep}%
    \setlength{\arraycolsep}{1.5pt}%
        \left (%
    \begin{array}{c}%
                #1  \\
            \blcvecteur{#2}\\
                #3
        \end{array}%
        \right )%
    \setlength{\arraycolsep}{\temparraycolsep}%
   }%
%  Vecteur-ligne de dimension 3
\newcommand{\lignetblvs}[3]%
   {\setlength{\temparraycolsep}{\arraycolsep}%
    \setlength{\arraycolsep}{1.5pt}%
        \left (%
    \begin{array}{ccc}%
                #1  & \blcligne{#2} & #3
        \end{array}%
        \right )%
    \setlength{\arraycolsep}{\temparraycolsep}%
   }%
%
%  Matrice 3 x 3  avec blocs visualis\'es 020928
%  autre sorte de blocs
\newcommand{\matricettblblvs}[9]%%
   {\setlength{\temparraycolsep}{\arraycolsep}%
    \setlength{\arraycolsep}{1.5pt}%
        \left (%
    \begin{array}{ccc}%
                #1  & \blcligne{#2} & \blcligne{#3}\\
            \blcvecteur{#4} & \blcmatrice{#5} & \blcmatrice{#6}\\
                \blcvecteur{#7}  & \blcmatrice{#8} & \blcmatrice{#9}\\
        \end{array}%
        \right )%
    \setlength{\arraycolsep}{\temparraycolsep}%
   }%
%  Vecteur-colonne de dimension 3
\newcommand{\vecteurtblblvs}[3]%
   {\setlength{\temparraycolsep}{\arraycolsep}%
    \setlength{\arraycolsep}{1.5pt}%
        \left (%
    \begin{array}{c}%
                #1  \\
            \blcvecteur{#2}\\
                \blcvecteur{#3}
        \end{array}%
        \right )%
    \setlength{\arraycolsep}{\temparraycolsep}%
   }%
%     {\begin{pmatrix} #1 \\ \blcvecteur{#2} \end{pmatrix}}
%  Vecteur-ligne de dimension 3
\newcommand{\lignetblblvs}[3]%
   {\setlength{\temparraycolsep}{\arraycolsep}%
    \setlength{\arraycolsep}{1.5pt}%
        \left (%
    \begin{array}{ccc}%
                #1  & \blcligne{#2} & \blcligne{#3}
        \end{array}%
        \right )%
    \setlength{\arraycolsep}{\temparraycolsep}%
   }%
%     {\begin{pmatrix} #1 & \blcligne{#2} \end{pmatrix}}
%
%%%%%%%%%%%%%%%%%%%%%%%%%%

% -*- coding: mac-roman -*-
%%%%%%%%%%%%%%%%%% js_formatting_macros %%%%%%%%%%%%%%%%%%%%%%%%%%
%%%%%
%%%%% 120213  adaptation of the corresponding eta macro file
%%%%%         contains
%%%%%             negative vertical space
%%%%%             pushline
%%%%%             mathematical definition of variable height
%%%%%             two column  macros
%%%%%           no  simplified list
%%%%%           no  reduction of matrices
%%%%%         no ifenglish flag
%%%%%         needs ifdraft flag for the two column macros
%%%%%
%%%%% 130811  addition of
%%%%%              marginal notes
%%%%%         needs ifdraft flag for the NoteDraft macro
%%%%%
%%%%%%%%%%%%%%%%%%%%%%%%%%%%%%%%%%%%%%%%%%%
%%%%%% \section{negative vertical space}
 %960728
 %960728
 %960728
 %960728
 %960728
%
 %
%%%%%%%%%%%%%%%%%%%%%%%%%%%%

%%%%%%%%%%%%%%%%%%%%%%%%%%%%%%%%%%%%%%%%%%%%%%
%%%%%% \section{Mathematical definition}%%%
\newlength{\DefiTest}\setlength{\DefiTest}{0pt}%
\newlength{\DefiHeightu}\newlength{\DefiHeightd}%
\newlength{\DefiDepthu}\newlength{\DefiDepthd}%
\newcommand{\Defi}[2]%
    {%
     \settoheight{\DefiHeightu}{${\displaystyle #1}$}%
     \settodepth{\DefiDepthu}{${\displaystyle #1}$}%
     \addtolength{\DefiHeightu}{\DefiDepthu}%
     \settoheight{\DefiHeightd}{${\displaystyle #2}$}%
     \settodepth{\DefiDepthd}{${\displaystyle #2}$}%
     \addtolength{\DefiHeightd}{\DefiDepthd}%
     \left\{#1%
     \rule[-\DefiDepthd]{\DefiTest}{\DefiHeightd}%
     \xmd\right|%
     \left.%
     \rule[-\DefiDepthu]{\DefiTest}{\DefiHeightu}%
      #2\right\}%
     }
%%%%%%%%%%%%%%%%%%%%%%%%%%%%%%%%%%%%%%%%%%%%%%
%%%%%% \section{Twocolumns macros}%%% colonne + figure %%%%%%%%
%%% variables
\newlength{\ColoText}% largeur de la colonne "de texte"
\newlength{\ColoFigu}% largeur de la colonne "de figure"
\newlength{\TextFiguSpace}% intervalle entre les deux colonnes
\newlength{\parindenttemp} % for indentation in minipage
\newlength{\parskiptemp} % for alinea spacing in minioage
\newlength{\fboxseptemp} % pour memoriser \fboxsep
\newcommand{\TFBoxing}{}
\newcommand{\TFVertAlig}{}
\newcommand{\LeftLarg}{}
% values
\setlength{\fboxseptemp}{\fboxsep}% parce qu'on va l'annuler en draft
\setlength{\parindenttemp}{\parindent}
\setlength{\parskiptemp}{\parskip}
\setlength{\TextFiguSpace}{1.2em}% intervalle entre les deux colonnes
\renewcommand{\LeftLarg}{.66}
\ifdraft\renewcommand{\TFBoxing}{\fbox}\fi
%%%%%%%%%%%%%%%%%%%%%%%%%%%%%%%%%%%%%%
%%% 3 paramtres
%%% 1) largeur de la colonne de gauche, en % de \textwidth
%%%    valeur implicite : \LeftLarg = 2/3
%%% 2) texte de la colonne de gauche
%%% 3) texte de la colonne de droite (souvent une figure)
%%% Boites alignŽes sur le haut,
%%%        encadrŽes en mode draft
\newcommand{\TxtFg}[3]%
   {%
    \setlength{\ColoText}{#1\textwidth}%
    \setlength{\ColoFigu}{\textwidth}%
    \addtolength{\ColoFigu}{-\ColoText}%
    \addtolength{\ColoText}{-.5\TextFiguSpace}%
    \addtolength{\ColoFigu}{-.5\TextFiguSpace}%
    \ifdraft\setlength{\fboxsep}{0pt}\fi% % mod 000912, 050822
    \noi
    \TFBoxing{%
       \begin{minipage}[\TFVertAlig]{\ColoText}%
%           \RstBLS% 050822
          \setlength{\parindent}{\parindenttemp}%
          \setlength{\parskip}{\parskiptemp}%
          \par\vspace*{0mm}% pour l'alignement sur le haut
             #2
       \end{minipage}%
             }%
    \hspace*{\TextFiguSpace}%
    \TFBoxing{%
       \begin{minipage}[\TFVertAlig]{\ColoFigu}%
          \par\vspace*{0mm}%
             #3%
       \end{minipage}%
             }%
    \ifdraft\setlength{\fboxsep}{\fboxseptemp}\fi%050822
   }%
%%%
\newcommand{\TextFigu}[3][\LeftLarg]%
   {\renewcommand{\TFVertAlig}{t}\TxtFg{#1}{#2}{#3}}
\newcommand{\TextFiguC}[3][\LeftLarg]%
   {\renewcommand{\TFVertAlig}{c}\TxtFg{#1}{#2}{#3}}
%%%%%%%%% Figures vers l'exterieur avec dŽbordement
%%% ie la colonne "de droite" est du cotŽ de la marge
\newcommand{\TextFiguX}[3][\LeftLarg]
   {%
    \setlength{\ColoText}{#1\textwidth}%
    \setlength{\ColoFigu}{\textwidth}%
    \addtolength{\ColoFigu}{-\ColoText}%
    \addtolength{\ColoText}{-.5\TextFiguSpace}%
    \addtolength{\ColoFigu}{-.5\TextFiguSpace}%
    \addtolength{\ColoFigu}{\ETAExtendedLineWidth}% mod 000912,050822
    \ifdraft\setlength{\fboxsep}{0pt}\fi% % mod 000912, 050822
    \noi
    \ifodd\value{page}%
       \TFBoxing{%
          \begin{minipage}[t]{\ColoText}%
             \RstBLS% 050822
             \setlength{\parindent}{\parindenttemp}%
             \setlength{\parskip}{\parskiptemp}%
             \par\vspace*{0mm}% pour l'alignement sur le haut
                #2
          \end{minipage}%
                }%
       \hspace*{\TextFiguSpace}%
       \TFBoxing{%
          \begin{minipage}[t]{\ColoFigu}%
             \par\vspace*{0mm}%
                #3%
          \end{minipage}%
                }%
    \else
       \hspace*{-\ETAExtendedLineWidth}% mod 000912
       \TFBoxing{%
          \begin{minipage}[t]{\ColoFigu}%
             \par\vspace*{0mm}%
                #3%
          \end{minipage}%
                }%
       \hspace*{\TextFiguSpace}%
       \TFBoxing{%
          \begin{minipage}[t]{\ColoText}%
             \RstBLS% 050822
             \setlength{\parindent}{\parindenttemp}%
             \setlength{\parskip}{\parskiptemp}%
             \par\vspace*{0mm}% pour l'alignement sur le haut
                #2
          \end{minipage}%
                }%
    \fi%
    \ifdraft\setlength{\fboxsep}{\fboxseptemp}\fi%050822
   }
%%%%%%%%% 2 colonnes egales
%%% Attention! la dŽnomination est trompeuse
%%% car la colonne de droite ne comporte pas de
%%% retrait. Devrait s'appeler "FiguFigu"

%%%%%%%%%%%%%%%%%%%%%%%%%%%%%%%%%%%%%
%%%%%% \subsection{marginal notes}
\newcommand{\NoteEnMarge}[1]%
   {%
    \marginpar[\begin{flushright}%
               {\sl {\scriptsize #1}}%
               \end{flushright}]%
              {\begin{flushleft}%
               {\sl {\scriptsize #1}}%
               \end{flushleft}}%
	}%
%%%
 %030123-050821
%%%%%%%%%%%%%%%%%%%%%%%%%%%%%%%%%%%%%
% %%%%%% \section{enumeration labels} %%%
% %%% simplified version
% \newlength{\retraita}\setlength{\retraita}{1.5\parindent}
% \newlength{\listespa}\setlength{\listespa}{.8em}
% %
% \newcommand{\EnumLbl}[1]{\rm (#1)}%
% \newcommand{\jsListe}[1]%
%     {\noindent\makebox[\retraita][r]{\EnumLbl{#1}}%
%      \hspace*{\listespa}\ignorespaces}
% %
% \newcommand{\tha}{\jsListe{a}}
% \newcommand{\thb}{\jsListe{b}}
% \newcommand{\thc}{\jsListe{c}}
% \newcommand{\thd}{\jsListe{d}}
% \newcommand{\thejs}{\jsListe{e}}
% \newcommand{\thf}{\jsListe{f}}
% \newcommand{\thg}{\jsListe{g}}
% %
% \newcommand{\thi}{\jsListe{i}}
% \newcommand{\thii}{\jsListe{ii}}
% \newcommand{\thiii}{\jsListe{iii}}
% \newcommand{\thiv}{\jsListe{iv}}
% \newcommand{\thv}{\jsListe{v}}
% \newcommand{\thvi}{\jsListe{vi}}
% %
% \newcommand{\thip}{\jsListe{i$'$}}
% \newcommand{\thiip}{\jsListe{ii$'$}}
% \newcommand{\thiiip}{\jsListe{iii$'$}}
% \newcommand{\thivp}{\jsListe{iv$'$}}
% \newcommand{\thvp}{\jsListe{v$'$}}
% \newcommand{\thvip}{\jsListe{vi$'$}}
% %
% \newcommand{\thnu}{\jsListe{1}}
% \newcommand{\thnd}{\jsListe{2}}
% \newcommand{\thnt}{\jsListe{3}}
% \newcommand{\thnq}{\jsListe{4}}
% %%% to be fixed
% \newcommand{\point}{\makebox[\retraita][l]{$\bullet$}}
% \newcommand{\pointr}{\makebox[\retraita][r]{$\bullet$ \ }}
% \newcommand{\pointn}{\noindent \makebox[1.2em]{$\bullet$}\ignorespaces}
%%% liste d'axiomes
% should be followed with a % for a margin of fixed size
% \jspointtiret defined in eta_environments
\newcommand{\Axio}[1]%
   {\pointn #1\hspace*{.1em}\jspointtiret\hspace*{.4em}\ignorespaces}
%%%%%%%%%%%%%%%%%%%%%%%%%%%%%%%%%%%%%%%%%%%%%%%%%%%%%%%%%%%%%
% %%%%%%%%%%%%%\section{matrices}%%%%%%%%%%%%%%%%%%%%%%%%%%%%%%%%%%
% % Mise en page des grosse matrices
% % valeurs des echelles pour eta
% \newcommand{\redmatu}[1]{\scalebox{0.84}{#1}}
% \newcommand{\redmatd}[1]{\scalebox{0.66}{#1}}
% \newcommand{\redmatt}[1]{\scalebox{0.90}{#1}}
%%%%%%%%%%%%%%%%%%%%%%%%%%%%%%%%%%%%%%%%%%%%%%%%%%%%%%%%%%%%%%%%%%

\usepackage{ifthen}
\usepackage{etoolbox}
\usepackage{xargs}
\undef\clipbox
\usepackage{adjustbox}
\usepackage{chemarrow}

\newcommandx{\newtheoremy}[3][2={}]{
  \ifthenelse{\equal{#2}{}}{
    \ifcsmacro{#1}{}{\newtheorem{#1}{#3}}
  }{
    \ifcsmacro{#1}{}{\newtheorem{#1}[#2]{#3}}
  }
}

\newcommand{\thmBlockFont}[1]{#1}

\newtheoremy{Theorem}{\thmBlockFont{Theorem}}
\newtheoremy{Corollary}[Theorem]{\thmBlockFont{Corollary}}
\renewcommand{\theTheorem}{\Roman{Theorem}}
\makeatletter
\newcommand\ifcounter[3]{\@ifundefined{c@#1}{#3}{#2}}
\makeatother
\ifcounter{thm}{}{\newcounter{thm}}
\newtheoremy{algorithm}[thm]{\thmBlockFont{Algorithm}}
\newtheoremy{corollary}[thm]{\thmBlockFont{Corollary}}
\newtheoremy{conjecture}[thm]{\thmBlockFont{Conjecture}}
\newtheoremy{definition}[thm]{\thmBlockFont{Definition}}
\newtheoremy{example}[thm]{\thmBlockFont{Example}}
\newtheoremy{lemma}[thm]{\thmBlockFont{Lemma}}
\newtheoremy{proposition}[thm]{\thmBlockFont{Proposition}}
\newtheoremy{property}[thm]{\thmBlockFont{Property}}
\newtheoremy{question}[thm]{\thmBlockFont{Question}}
\newtheoremy{remark}[thm]{\thmBlockFont{Remark}}
\newtheoremy{notation}[thm]{\thmBlockFont{Notation}}
\newtheoremy{theorem}[thm]{\thmBlockFont{Theorem}}

% \newcounter{subclaim}

% \newcommand{\vmtmpfunction}{}
\newtheoremy{falsepropositionX}{\thmBlockFont{Proposition}}

\newtheoremy{falsetheoremX}{\thmBlockFont{Theorem}}
\newenvironment{falsetheorem}[1]{\begin{falsetheoremX}}{\end{falsetheoremX}}

\newtheoremy{falsecorollaryX}{\thmBlockFont{Corollary}}

\newtheoremy{falselemmaX}{\thmBlockFont{Lemma}}

\newtheorem*{falsestatementX}{\thmBlockFont{\thestatement}}

% \renewcommand{\thefalsetheoremX}{}
% \renewcommand{\thefalsepropositionX}{}
% \renewenviron

\newcommand{\lcorollary}[1]{\label{c.#1}}

\newcommand{\ldefinition}[1]{\label{d.#1}}

\newcommand{\llemma}[1]{\label{l.#1}}
\newcommand{\lproposition}[1]{\label{p.#1}}
\newcommand{\lproperty}[1]{\label{pp.#1}}

\newcommand{\lnotation}[1]{\label{n.#1}}
\newcommand{\lsection}[1]{\label{s.#1}}

\newcommand{\lfigure}[1]{\label{f.#1}}
\newcommand{\ltheorem}[1]{\label{t.#1}}
\newcommand{\lequation}[1]{\label{eq.#1}}

\newcommand{\preprocgenref}[2]{}
\newcommand{\generalref}[2]{%
  \preprocgenref{#1}{#2}%
  \ifthenelse{\equal{#1}{eq}}%
  {(\ref{#1.#2})}%
  {\ref{#1.#2}}%
}
\newcommand{\generalpageref}[2]{\pageref{#1.#2}}

\makeatletter
\newcommand*{\ralgorithm}{\@ifstar{\generalref{a}}{Algorithm~\ralgorithm*}}
\newcommand*{\palgorithm}{\@ifstar{\generalpageref{a}}{page~\palgorithm*}}

\newcommand*{\rcorollary}{\@ifstar{\generalref{c}}{Corollary~\rcorollary*}}
\newcommand*{\pcorollary}{\@ifstar{\generalpageref{c}}{page~\pcorollary*}}

\newcommand*{\rconjecture}{\@ifstar{\generalref{cj}}{Conjecture~\rconjecture*}}
\newcommand*{\pconjecture}{\@ifstar{\generalpageref{cj}}{page~\pconjecture*}}

\newcommand*{\rdefinition}{\@ifstar{\generalref{d}}{Definition~\rdefinition*}}
\newcommand*{\pdefinition}{\@ifstar{\generalpageref{d}}{page~\pdefinition*}}

\newcommand*{\rexample}{\@ifstar{\generalref{e}}{Example~\rexample*}}
\newcommand*{\pexample}{\@ifstar{\generalpageref{e}}{page~\pexample*}}

\newcommand*{\rlemma}{\@ifstar{\generalref{l}}{Lemma~\rlemma*}}
\newcommand*{\plemma}{\@ifstar{\generalpageref{l}}{page~\plemma*}}

\newcommand*{\rproposition}{\@ifstar{\generalref{p}}{Proposition~\rproposition*}}
\newcommand*{\pproposition}{\@ifstar{\generalpageref{p}}{page~\pproposition*}}

\newcommand*{\rproperty}{\@ifstar{\generalref{pp}}{Property~\rproperty*}}
\newcommand*{\pproperty}{\@ifstar{\generalpageref{pp}}{page~\pproperty*}}

\newcommand*{\rprocedure}{\@ifstar{\generalref{pc}}{Procedure~\rprocedure*}}
\newcommand*{\pprocedure}{\@ifstar{\generalpageref{pc}}{page~\pprocedure*}}

\newcommand*{\rremark}{\@ifstar{\generalref{r}}{Remark~\rremark*}}
\newcommand*{\premark}{\@ifstar{\generalpageref{r}}{page~\premark*}}

\newcommand*{\rnotation}{\@ifstar{\generalref{n}}{Notation~\rnotation*}}
\newcommand*{\pnotation}{\@ifstar{\generalpageref{n}}{page~\pnotation*}}

\newcommand*{\rsection}{\@ifstar{\generalref{s}}{Section~\rsection*}}
\newcommand*{\psection}{\@ifstar{\generalpageref{s}}{page~\psection*}}

\newcommand*{\rtable}{\@ifstar{\generalref{t}}{Table~\rtable*}}
\newcommand*{\ptable}{\@ifstar{\generalpageref{t}}{page~\ptable*}}

\newcommand*{\rfigure}{\@ifstar{\generalref{f}}{Figure~\rfigure*}}
\newcommand*{\pfigure}{\@ifstar{\generalpageref{f}}{page~\pfigure*}}

\newcommand*{\requation}{\@ifstar{\generalref{eq}}{Equation~\requation*}}
\newcommand*{\pequation}{\@ifstar{\generalpageref{eq}}{page~\pequation*}}

\newcommand*{\rtheorem}{\@ifstar{\generalref{t}}{Theorem~\rtheorem*}}
\newcommand*{\ptheorem}{\@ifstar{\generalpageref{t}}{page~\ptheorem*}}

\newcommand*{\rclaim}{\@ifstar{\generalref{cl}}{Claim~\rclaim*}}
\newcommand*{\pclaim}{\@ifstar{\generalpageref{cl}}{page~\pclaim*}}
\makeatother

\usepackage{boites}
\makeatletter
 \newdimen\bk@hauteurcourrante
  \newdimen\bk@hauteursuivante
  \newdimen\bk@tempdim
\newenvironment{leftbar}{%
  \def\bk@espace{ }%
  \def\pt@to@bp##1{##1=.99627393548##1}% 1bp=1.00374pt
  \def\bkvz@before@breakbox{\ifhmode\par\fi\bk@hauteurcourrante=1200bp}%
  \def\bkvz@set@linewidth{\advance\linewidth-0.5\parindent}%
  \def\bkvz@left{\hskip 1pt\vrule\@width 0.5pt\hskip0.5\parindent\hskip -1.5pt}%
  \let\bkvz@right\relax
  \let\bkvz@top\relax
  \let\bkvz@bottom\relax
  \breakbox}{\endbreakbox}

\makeatother

% \usepackage{atbegshi,picture}
% %   \usepackage{todonotes}
% \def\Vhrulefill{\leavevmode\leaders\hrule height 0.7ex depth \dimexpr0.4pt-0.7ex\hfill\kern0pt}
% \newcommand{\runningnotice}[1]{
%   \AtBeginShipout{%
%     \AtBeginShipoutUpperLeft{%
%       \put(\dimexpr\paperwidth-6pt\relax,-\paperheight+2.6cm){%
%         \rotatebox{90}{\makebox[\paperheight-3.5cm][r]{%
%           \Vhrulefill\normalfont\ttfamily\normalsize~#1~\raisebox{0.7ex}{\rule{1cm}{0.4pt}}%
%         }}%
%       }%
%     }
%   }
% }

% \newcommand{\word}[1]{\textup{\textrm{"}} #1 \textup{\textrm{"}}}
% \newcommand{\tword}[1]{$\word{#1}$}
\newcommandx{\wlen}[1]{|#1|}

% \vmnewcommandx{\pathaut}[2]{\underset{#2}{\pathx{#1}}}
\newcommandx{\cod}[2][2={}]{\ifthenelse{\equal{#2}{}}{\langle #1 \rangle}{\langle #1 \rangle_{#2}}}
\newcommandx{\floor}[1]{\lfloor #1 \rfloor}
\newcommandx{\bfloor}[1]{\left\lfloor #1 \right\rfloor}
\newcommandx{\bceil}[1]{\left\lceil #1 \right\rceil}
\newcommandx{\ceil}[1]{\lceil #1 \rceil}

\newcommandx{\newcommandy}[5][1=i,3=0,4={}]{%
% #1 : i = ignore, o = owerwrite
  \ifthenelse{\isundefined{#2}}{\newcommandx{#2}[#3][#4]{#5}}{%
      \ifthenelse{\equal{#1}{i}}{}{}%
      \ifthenelse{\equal{#1}{o}}{\renewcommandx{#2}[#3][#4]{#5}}{}%
    }%
}

\newcommandx{\yrightarrow}[4][1=\empty, 2=\empty, 4=\empty, usedefault=@]{%
  \ifthenelse{\equal{#1}{\empty}}%
  {% there's no text below
    \xrightarrow{~\adjustbox{raise={-#4}{\height-#4}{0pt},trim=0pt 0pt 0pt 1pt}{\ensuremath{\scriptstyle#3}}~}%
  }{%\adjustbox{margin=2pt 0pt 1.5pt 0pt,trim=0pt 3pt 0pt 0pt}{
    \adjustbox{trim=0pt 2pt 0pt 0pt}{\ensuremath{\xrightarrow%there is text below
    [\,~\adjustbox{scale=0.9,raise={#2}{\height}}{\ensuremath{\scriptstyle#1}}~\,]%2
    {~\adjustbox{raise={-#4}{\height-#4}{0pt},trim=0pt 0pt 0pt 1pt}{\ensuremath{\scriptstyle#3}}~}}}%
  }%
}
\newcommandx{\ylefttarrow}[4][1=\empty, 2=\empty, 4=\empty, usedefault=@]{%
  \ifthenelse{\equal{#1}{\empty}}%
  {% there's no text below
    \xleftarrow{~\adjustbox{raise={-#4}{\height-#4}{0pt},trim=0pt 0pt 0pt 1pt}{\ensuremath{\scriptstyle#3}}~}%
  }{%\adjustbox{margin=2pt 0pt 1.5pt 0pt,trim=0pt 3pt 0pt 0pt}{
    \adjustbox{trim=0pt #2 0pt 0pt}{\ensuremath{\xleftarrow%there is text below
    [~\adjustbox{scale=0.9,raise={#2}{\height}}{\ensuremath{\scriptstyle#1}}~]%2
    {~\adjustbox{raise={-#4}{\height-#4}{0pt},trim=0pt 0pt 0pt 1pt}{\scriptstyle#3}~}}}%
  }%
}
\newcommandy[o]{\pathx}[2][2=\empty]{{\let\rightarrow\chemarrow%
  \nlb%
  \hspace*{1mm}\yrightarrow[#2][3pt]{\minwidthbox{$\scriptstyle#1$}{5mm}\hspace*{2pt}}[3.5pt]\hspace*{1mm}%
  \nlb%
}}
\newcommandy[o]{\pathy}[2][2=\empty]{{%
  \nlb%
  \hspace*{1mm}\ylefttarrow[#2][3pt]{\hspace*{2pt}\minwidthbox{$\scriptstyle#1$}{5mm}}[3.5pt]\hspace*{1mm}%
  \nlb%
}}
\newcommand*{\minwidthbox}[2]{%
  \makebox[{\ifdim#2<\width\width\else#2\fi}]{#1}%
}

\makeatletter

\newlength{\vm@xmd@d}
\setlength{\vm@xmd@d}{0.25ex plus 0.10ex minus 0.10ex}

\newlength{\vm@xmd@n}
\setlength{\vm@xmd@n}{0.20ex plus 0.05ex minus 0.03ex}

\newlength{\vm@xmd@s}
\setlength{\vm@xmd@s}{0.15ex plus 0.03ex minus 0.02ex}

\newlength{\vm@xmd@ss}
\setlength{\vm@xmd@ss}{0.12ex plus 0.01ex minus 0.005ex}

\newcommandy[o]{\xmd}{%
  \ifmmode%
    \mathchoice%
    {\hspace*{\vm@xmd@d}}% display style
    {\hspace*{\vm@xmd@n}}% text style
    {\hspace*{\vm@xmd@s}}% script style
    {\hspace*{\vm@xmd@ss}}% scriptscript style
  \else%
    \hspace*{\vm@xmd@n}%
  \fi%
}
\makeatother
\newcommand{\transpair}[2]{ #1 \xmd | \xmd #2 }

\newcommand{\val}[1]{\widebar{#1}}
\newcommand{\card}[1]{{\tt Card}(#1)}

\newcommand{\pref}[1]{\text{Pref}\,(#1)}

\newcommand{\set}[1]{%
  \left\{\mathchoice%
  {\halfspace #1 \halfspace}%
  {\thirdspace #1 \thirdspace}%
  {#1}%
  {#1}\right\}%
}

\newcommandy[o]{\Z}{\mathbb{Z}}
\newcommandy[o]{\N}{\mathbb{N}}
\newcommandy[o]{\Q}{\mathbb{Q}}
\newcommandy[o]{\R}{\mathbb{R}}
\newcommand{\widebar}{\overline}

\newcommandy[o]{\Cup}{\bigcup}
\newcommand{\nlb}{\nolinebreak}

\renewcommand{\thmBlockFont}[1]{\ssc{#1}}

\newcommand{\vmfiguretodo}[2][]{%
  \begin{figure}[ht!]
    \frame{%
      \begin{minipage}{\linewidth}
        ~\hfill~
        \vspace*{#2cm}
      \end{minipage}
    }
    \ifthenelse{\equal{#1}{}}{}{\caption{#1}}
  \end{figure}
}

\makeatletter
\newcommandx{\newcommandWithStar}[3][1=i]{%
  \newcommandy[#1]{#2}{\protect\@ifstar{\leavevmode\protect\nlb$\protect#3$}{#3}}
}
\makeatother

\makeatletter
\newcommand{\vm@date@separator}{\hspace*{0.15ex}\rule[0.4\vm@date@height]{1ex}{0.07\vm@date@height}\hspace*{0.15ex}}
\newcommand{\vmdatefont}[1]{#1}
\newcommand{\isotoday}{%
  \vmdatefont{
    \newdimen\vm@date@height%
    \setbox0=\hbox{0123456789}%
    \vm@date@height=\ht0 \advance\vm@date@height by -\dp0
    \the\year\vm@date@separator\two@digits{\month}\vm@date@separator\two@digits{\day}%
  }%
}

% \newlength{\loe}
% \newcommand{\comploe}{
%   \mathchoice%
%       {\global\loe=6pt}%
%       {\global\loe=12pt}%
%       {\global\loe=3pt}%
%       {\global\loe=20pt}%
% %
% }
% \newcommand{\loe}{%
%
%     \vm@lengthofequal@width%
%   }

\makeatother

\newcommand{\vmfbox}[1]{{%
  \fboxsep=0pt%
  \ifmmode%
    \mathchoice%
      {\fbox{$\displaystyle#1$}}%
      {\fbox{$\textstyle#1$}}%
      {\fbox{$\scriptstyle#1$}}%
      {\fbox{$\scriptscriptstyle#1$}}%
  \else%
    \fbox{#1}%
  \fi%
}}

\newcommand{\halfspace}{\hspace{0.5\fontdimen2\font plus 0.5\fontdimen3\font
minus 0.5\fontdimen4\font}}
\newcommand{\thirdspace}{\hspace{0.33\fontdimen2\font plus 0.33\fontdimen3\font
minus 0.33\fontdimen4\font}}

         % = `top' strut
   % = `bottom' strut

\renewcommand{\leq}{\leqslant}
\renewcommand{\geq}{\geqslant}
\renewcommand{\phi}{\varphi}
\renewcommand{\epsilon}{\varepsilon}
\renewcommand{\mod}{\text{~mod~}}

\newtheoremy{Theorem}{\thmBlockFont{Theorem}}
\renewcommand{\thmBlockFont}[1]{\textbf{#1}}
\renewcommand{\theTheorem}{\Roman{Theorem}}

\newcommand{\Der}{\xi}

\newcommand{\pq}{\frac{p}{q}}
\newcommand{\Lpq}{L_\base}
\newcommand{\base}{z}
\newcommand{\Tpq}{\mathcal{T}_\base}
\newcommand{\Tpqp}{{\mathcal{S}_\base}}
\newcommand{\Dpq}{\mathcal{D}_{\hspace*{-0.1ex}\base}}
\newcommand{\Dpqi}{\mathcal{D}_{\hspace*{-0.1ex}\base,i}}
\newcommand{\Spq}{\mathrm{Span}_\base}
\newcommand{\Wpq}{\ows{W}_{\hspace*{-1pt}\base}}
\newcommand{\Wpqp}{\ibehav{\Tpqp}}
\newcommand{\taupq}{\mathsf{\tau}_\base}

\newcommand{\td}{\frac{3}{2}}
\newcommand{\Ltd}{L_\td}
\newcommand{\Ttd}{\mathcal{T}_\td}
\newcommand{\Ttdp}{\mathcal{S}_\td}
\newcommand{\Dtd}{\mathcal{D}_\td}
\newcommand{\Wtd}{\ows{W}_\td}

\newcommand{\qt}{\frac{4}{3}}

\newcommand{\Tqt}{\mathcal{T}_\qt}
\newcommand{\Tqtp}{\mathcal{S}_\qt}
\newcommand{\Dqt}{\mathcal{D}_\qt}

\newcommand{\st}{\frac{7}{3}}
\newcommand{\Lst}{L_\st}
\newcommand{\Tst}{\mathcal{T}_\st}
\newcommand{\Tstp}{\mathcal{S}_\st}
\newcommand{\Dst}{\mathcal{D}_\st}
\newcommand{\Wst}{\ows{W}_\st}

\newcommand{\Btd}{D_{\td}}
\newcommand{\Bqt}{D_{\qt}}
\newcommand{\Bst}{D_{\st}}
\newcommand{\Amax}{C_z}
\newcommand{\Amaxs}{{C_z}^{\!*}}
\newcommand{\Amaxo}{{C_z}^{\!\omega}}

\newcounter{subthm}[thm]
\newcounter{cond}[thm]
\newcounter{outthm}
\renewcommand{\theoutthm}{\thethm}

\newcommand{\sublabel}[1]{\refstepcounter{subthm}\label{#1}{\renewcommand\showkeyslabelformat[1]{}\refstepcounter{cond}\label{#1*}}}

\newcommand{\Ap}{A_p}
\newcommand{\Aq}{B_\base}
\newcommand{\Bpq}{D_\base}
\newcommand{\Aps}{{\Ap}^{\!*}}
\newcommand{\Aqs}{{\Aq}^{\!*}}
\newcommand{\Bpqs}{{\Bpq}^{\!*}}
\newcommand{\Apo}{{\Ap}^{\!\omega}}
\newcommand{\Aqo}{{\Aq}^{\!\omega}}
\newcommand{\Bpqo}{{\Bpq}^{\!\omega}}

\newcommand{\mathbsf}[1]{\boldsymbol{\mathsf{#1}}}
\newcommand{\minword}[1]{\ow{w}_{#1}^{-}}%{\mathop{\ow{b}(#1)}}
\newcommand{\maxword}[1]{\ow{w}_{#1}^{+}}%{\mathop{\ow{t}(#1)}}
\newcommand{\spanword}[1]{\mathop{\ow{s}(#1)}}
\newcommand{\minwords}{\ows{Bot}_{\base}}
\newcommand{\maxwords}{\ows{Top}_{\base}}
\newcommand{\spanwords}{\ows{Spw}_{\base}}
\newcommand{\Suf}[1]{\ows{Suf}\ifthenelse{\equal{#1}{\empty}}{}{(#1)}}

\newcommand{\aut}[1]{\left\langle #1 \right\rangle}

\newcommand{\rempfun}{\psi}

\newcommand{\powerset}{\mathfrak{P}}

\renewcommandx{\val}[3][1=s,3={\base}]{%
  \pi^{~}_{#3}%
  \ifthenelse{\equal{#2}{}}%
  {}%
  {%
    \ifthenelse{\equal{#1}{b}}%
      {\hspace*{-1mm}\left(#2\right)}%
      {(#2)}%
  }
}
\newcommandx{\realval}[3][1=s,3={\base}]{%
  \rho^{~}_{#3}%
  \ifthenelse{\equal{#2}{}}%
  {}%
  {%
    \ifthenelse{\equal{#1}{b}}%
      {\hspace*{-1mm}\left(#2\right)}%
      {(#2)}%
  }
}

\newcommand{\abs}[1]{\mathsf{Abs}\left(#1\right)}

\newcommand{\rad}[1]{\mathrel{{#1}_{\text{rad}}}}
\newcommand{\lex}[1]{\mathrel{{#1}_{\text{lex}}}}

\newcommand{\behav}[1]{L(#1)}
\newcommand{\ibehav}[1]{\Lambda(#1)}

\newcommand{\Vn}{\ows{V}_n}

\newcommand{\spann}[1][n]{\sigma(#1)}

\newcommand{\adh}[1]{c\ell(#1)}

\newcommand{\Zu}[1][u]{\ows{Z}_{#1}}
\newcommand{\Iu}[1][u]{I_{#1}}
\newcommand{\II}{\mathfrak{I}}
\newcommand{\pII}{\powerset(\II)}
\newcommand{\realvalue}{a.r.p.\@ value\xspace}
\newcommand{\realvalues}{a.r.p.\@ values\xspace}

\newcommand{\itrans}{\textsf{refine}}

\newcommand{\pS}{\mathbb{S}}

\newcommand{\oword}{{$\omega$-}word\xspace}
\newcommand{\owords}{{$\omega$-}words\xspace}
\newcommand{\olanguage}{{$\omega$-}language\xspace}
\newcommand{\olanguages}{{$\omega$-}languages\xspace}
\newcommand{\orun}{{$\omega$-}run\xspace}

\newcommand{\opath}{branch\xspace}

\newcommand{\omegapq}{\omega_\base}
\newcommand{\gammapq}{\gamma_\base}

\newcommand{\dwminus}{\mathbin{\hspace*{1pt}\ominus\hspace*{1pt}}}
\newcommand{\dwplus}{\mathbin{\hspace*{1pt}\oplus\hspace*{1pt}}}

\DeclareFontFamily{U}{arr}{\hyphenchar\font=-1}
\DeclareFontShape{U}{arr}{m}{n}{ <-> arrow}{}

 %stolen from chemarrow package
% \def\lpatharrow{\textrm{\vmarro\symbol{70}}} %stolen from chemarrow package

\renewcommand{\cod}[1]{\mathord{\left\langle#1\right\rangle_{\!\base}}}

\newcommand{\ow}[1]{{\mathbsf{#1}}}
\newcommand{\ows}[1]{\mathsf{#1}}

\newcommand{\mes}{\hspace{1.5pt plus 1pt minus 1.5pt}}

\renewcommand{\pref}[1]{\textsc{Pre}\ifthenelse{\equal{#1}{\empty}}{}{(#1)}}

\newcommand{\hausmeas}[2][d]{\mathcal{H}^{#1}(#2)}
\newcommand{\hausdim}[1]{\textsf{dim}_{\mathcal{H}}(#1)}

% Attempt to have a clean way to treat spacing in mathmode
\DeclareMathSymbol{\in}{\mathbin}{symbols}{"32}
\newcommand{\noneqspacing}{
  \thickmuskip=5mu plus 3mu minus 2mu
  \medmuskip=3mu plus 2mu minus 2mu
}
\noneqspacing
\newcommand{\eqspacing}{
  \thickmuskip=12mu plus 3mu minus 3mu
  \medmuskip=5mu plus 3mu minus 2mu
}%
\AtBeginEnvironment{gather*}{\eqspacing}%
\AtBeginEnvironment{gather}{\eqspacing}%
\AtBeginEnvironment{equation}{\eqspacing}%
\AtBeginEnvironment{equation*}{\eqspacing}%
\AtBeginEnvironment{align}{\eqspacing}%
\AtBeginEnvironment{align*}{\eqspacing}%
\AtBeginEnvironment{multline}{\eqspacing}%
\AtBeginEnvironment{multline*}{\eqspacing}%

\begin{document}

\publicationdetails{20}{2018}{1}{10}{3742}

\ifdraft
\setcounter{page}{-1}
\noindent{%
  \fboxsep=1em%
  \hspace*{-\fboxsep}%
  \hspace*{-\fboxrule}%
  \noindent\fbox{%
    \noindent\begin{minipage}{\linewidth}
    \hfill{\bf\Large Notes\rule[-\fboxsep]{0pt}{\fboxsep}}\hfill\hspace*{0cm}

    In the abstract, I use infinite word instead of~$\omega$-word for the sake of clarity. (VM)

    Order or base treatment: JS suggests fit, large then small. It does not follows the order of sections
    6.1 (which treats fit and small bases), and 6.2 (which treat big bases).  I suggest the order fit, small then large.
    (VM)

    In Willard 1970, the author \emph{Cantor set} is what we call \emph{ternary Cantor set}
    but also mentions the later name.
    He uses \emph{Cantor space} to refer to a
    (possibly nondenumerable) product of finite discrete spaces.
    Wikipedia uses \emph{Perfect set that is nowhere dense}, or, in our case,
    perfect set of measure zero.

    \end{minipage}%
  }}
\medskip
\noindent{%
  \fboxsep=1em%
  \hspace*{-\fboxsep}%
  \hspace*{-\fboxrule}%
  \noindent\fbox{%
    \noindent\begin{minipage}{\linewidth}
    \hfill{\bf\Large Comments and questions\rule[-\fboxsep]{0pt}{\fboxsep}}\hfill\hspace*{0cm}

%%%%%%%%%%%%%%%%%%%%%%%%%%%%%%%%%%%%%%%%%%%%%%%%%%%%%%%%%%%%%%%%%%%%%%%%
%%%%%%%%%%%%%%%%%%%%%%%%%%%%%%%%%%%%%%%%%%%%%%%%%%%%%%%%%%%%%%%%%%%%%%%%
\begin{description}

\item[p.7 ]  Caption of \rfigure{132}: `... represented as an infinite
tree' whereas
Caption of \rfigure{173}: `... represented as a tree'

$\rightarrow$ Removed "infinite" in the first caption.

\item[p.19 l.-7]   The usefulness of the notation $\ow{u}_{n}$ for
$\spanword{n}$ is disputable; I did not remove it however.

$\rightarrow$ I Agree and removed it. \_VM

On the other hand, I have suppressed the former Property~56 and
included it in \rdefinition{spans}. JS

\item[p.21 l.-2]  Usually, I do not like this kind of `recall that...'
But I was so much lost that I thought it could help some readers.

\item[p.22 l.5] This line of equalities looked very faulty in the
previous version.
It convinced me that I should read carefully as much as I cold the
resrt of the text.

\item[p.11 l.4] Statement of \rproperty{mini-mini}:
\begin{equation}
   \minwords = \Wpq \cap \Aq^{\omega}
   \notag
\end{equation}
is just false, which is embarrassing. 
Une formulation qui me semblerait correcte serait, en posant
% \begin{equation}
$\msp
\Suf{\Wpq} = \left(\Aps\right)^{-1}\Wpq
  = \Defi{\ow{v}\in\Apo}{\exists u\in\Aps\quantsmsp u\xmd\ow{v}\in\Wpq}
\msp$,
%   \notag
% \end{equation}
\begin{equation}
   \minwords = \Suf{\Wpq} \cap \Aq^{\omega}
   \notag
\end{equation}
mais il faudrait alors reprendre la notation~$\Vn$ du d\' ebut de 6.1.

\end{description}
%%%%%%%%%%%%%%%%%%%%%%%%%%%%%%%%%%%%%%%%%%%%%%%%%%%%%%%%%%%%%%%%%%%%%%%%
%%%%%%%%%%%%%%%%%%%%%%%%%%%%%%%%%%%%%%%%%%%%%%%%%%%%%%%%%%%%%%%%%%%%%%%%

% \begin{description}
% \item[p.7 ]  Caption of \rfigure{132}: `... represented as an infinite
% tree' whereas
% Caption of \rfigure{173}: `... represented as a tree'
%
% $\rightarrow$ Removed "infinite" in the first caption.
%
% \item[p.19 l.-7]   The usefulness of the notation $\ow{u}_{n}$ for
% $\spanword{n}$ is disputable; I did not remove it however.
%
% $\rightarrow$ I Agree and removed it. \_VM
%
% On the other hand, I have suppressed the former Property~56 and
% included it in \rdefinition{spans}. JS
%
% \item[p.21 l.-2]  Usually, I do not like this kind of `recall that...'
% But I was so much lost that I thought it could help some readers.
%
% \item[p.22 l.5] This line of equalities looked very faulty in the
% previous version.
% It convinced me that I should read carefully as much as I cold the
% resrt of the text.
%
%
% \end{description}

    \end{minipage}%
  }}
\clearpage
\tableofcontents
\clearpage
\fi

\maketitle

\begin{abstract}
Every rational number~$\pq$ defines a rational base numeration system
in which every integer has a unique finite representation, up to leading zeroes.
This work is a contribution to the study of the set of the representations
of integers.
This prefix-closed subset of the free monoid is naturally represented as
a highly non-regular tree.
Its nodes are the integers, its edges bear labels taken in~$\{0,1,\ldots,p-1\}$,
and its subtrees are all distinct.
%

% This work is a contribution to the study of the representations of integers in a rational base numeration system~$\pq$.

%
We associate with each subtree (or with its root~$n$) three infinite words.
The bottom word of~$n$ is the lexicographically smallest word that is the label
of a branch of the subtree.
The top word of~$n$ is defined similarly.
The span-word of~$n$ is the digitwise difference between the latter and the former.
First, we show that the set of all the span-words is accepted by an infinite
automaton whose underlying graph is essentially the same as the tree itself.
Second, we study the function that computes for
all~$n$ the bottom word associated with~$n+1$ from the one associated
with~$n$, and show that it is realised by an infinite sequential transducer whose
underlying graph is once again essentially the same as the tree itself.
An infinite word may be interpreted as an expansion in base~$\pq$ after the radix point,
hence evaluated to a real number.
If~$T$ is a subtree whose root is~$n$, then the evaluations
of the labels of the branches of~$T$ form an interval of~$\R$.
The length of this interval is called the span of~$n$ and is equal to the
evaluation of the span-word of~$n$.
The set of all spans is then a subset of~$\R$ and we use the preceding
construction to study its topological closure.
We show that it is an interval when~$ p \leq 2\xmd q -1$, and
a Cantor set of measure zero otherwise.
\end{abstract}

\clearpage

%%%%%%%%%%%%%%%%%%%%%%%%%%%%%%%%%%%%%%%%%%%%%%%%%%%%%%%%%%%%%%%%%%%%%%%%%%%%%%%%%%%%%%%%%%%%%%%%%%%
%%%%%%%%%%%%%%%%%%%%%%%%%%%%%%%%%%%%%%%%%%%%%%%%%%%%%%%%%%%%%%%%%%%%%%%%%%%%%%%%%%%%%%%%%%%%%%%%%%%
%%%%%%%%%%%%%%%%%%%%%%%%%%%%%%%%%%%%%%%%%%%%%%%%%%%%%%%%%%%%%%%%%%%%%%%%%%%%%%%%%%%%%%%%%%%%%%%%%%%

%%% some additional macros for introduction AND conclusion
%%% and some typographical modifications to macros
%
\newcommand{\dex}[1]{\xmd(#1)}
\newtheoremy{Problem}[Theorem]{\thmBlockFont{Problem}}
\newtheoremy{problem}[thm]{\thmBlockFont{Problem}}
\newcommand{\lproblem}[1]{\label{pb.#1}}
\newcommand*{\rproblem}{\@ifstar{\generalref{pb}}{Problem~\rproblem*}}
\newcommand{\En}{{\mathcal{E}_n}}
\renewcommand{\Tpq}{\mathcal{T}_{\textstyle{\base}}}
\renewcommand{\Tpqp}{{\mathcal{S}_{\textstyle{\base}}}}
\renewcommand{\Dpq}{\mathcal{D}_{\hspace*{-0.1ex}{\textstyle{\base}}}}
\renewcommand{\Dpqi}{\mathcal{D}_{\hspace*{-0.1ex}{\textstyle{\base,i}}}}
\renewcommand{\Spq}{\mathbf{Span}_{\textstyle{\base}}}
\renewcommand{\Wpq}{\ows{W}_{\hspace*{-1pt}{\textstyle{\base}}}}
\renewcommand{\Lpq}{L_{{\textstyle{\base}}}}

\section{Introduction}

The purpose of this work is a further exploration and a better
understanding of the set of \emph{infinite words} that appear in the
definition of rational base numeration systems.
These numeration systems have been introduced and studied
by~\citet*{AkiyEtAl08}, leading to some progress and results in a
number theoretic problem related to the distribution modulo~$1$ of
the powers of rational numbers and usually known as Mahler's problem
\citep{Mahl68}.
Besides these results, these systems raise many new and fascinating
problems.

We give later the precise definition of rational base numeration systems
and of the representation of numbers (integers and reals) in such
systems.
But one can hint at the results established in this paper by just
looking at the figure showing the `representation tree' in a rational
base numeration system (\rfigure{rep-tre-sys}\dex{b} for the base~$\td$)
and by comparison with the representation tree in a integer base
numeration system (\rfigure{rep-tre-sys}\dex{a} for the base~$3$).
In these trees, nodes are the natural integers, and the label of the
path from the root to an integer~$n$ is the \emph{representation}
of~$n$ in the system, whereas the label of an infinite branch gives
the representation in the system of a real number, indeed, and
because the trees are drawn in a fractal way, of \emph{the} real
number which is the ordinate of the point where the branch ends.

% In most cases, a numeration system is associated with the set of finite
% words that represent the integers, the \emph{representation language}
% and often also with the set of infinite words that represent the real
% numbers.

%
\begin{figure}[h]
  \hfill%
  \subfigure[Integer base~$3$]{%
    \lfigure{rep-tre-sys-t}%
    \includegraphics[width=0.35\linewidth]{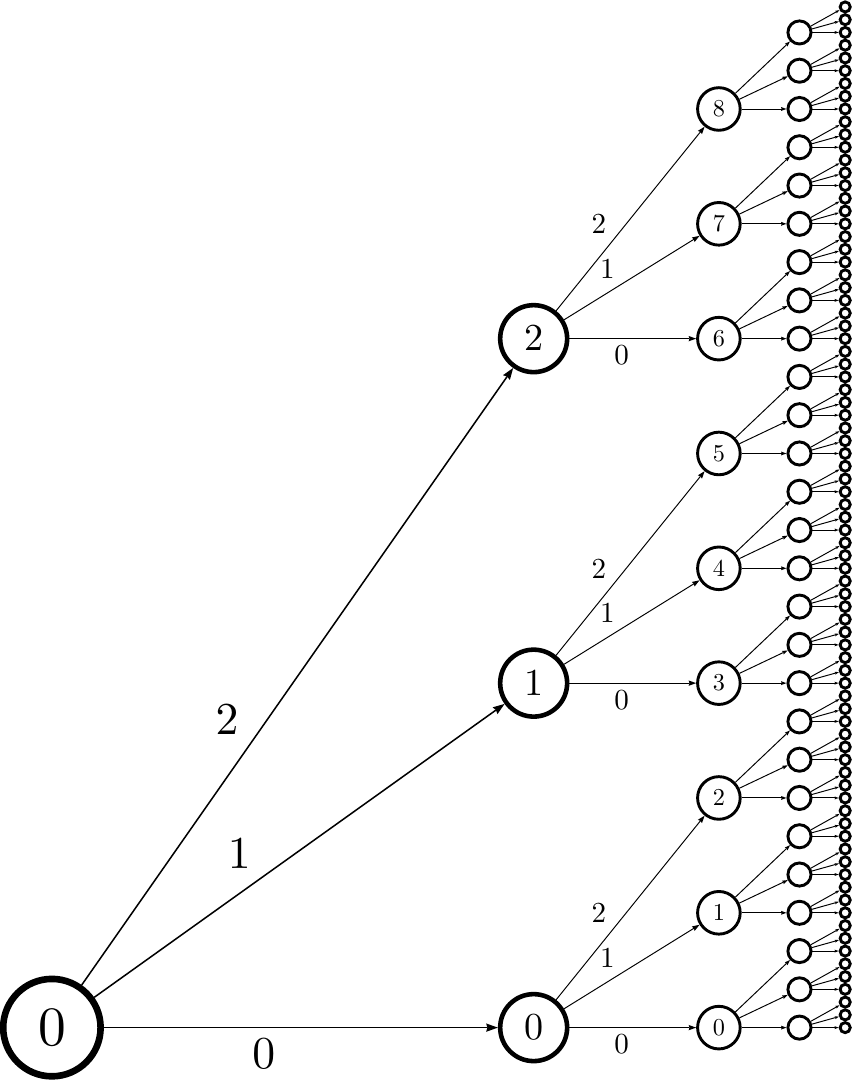}}%
  \hfill\hfill%
  \subfigure[Rational base~$\td$]{%
    \lfigure{rep-tre-sys-td}%
    \includegraphics[width=0.35\linewidth]{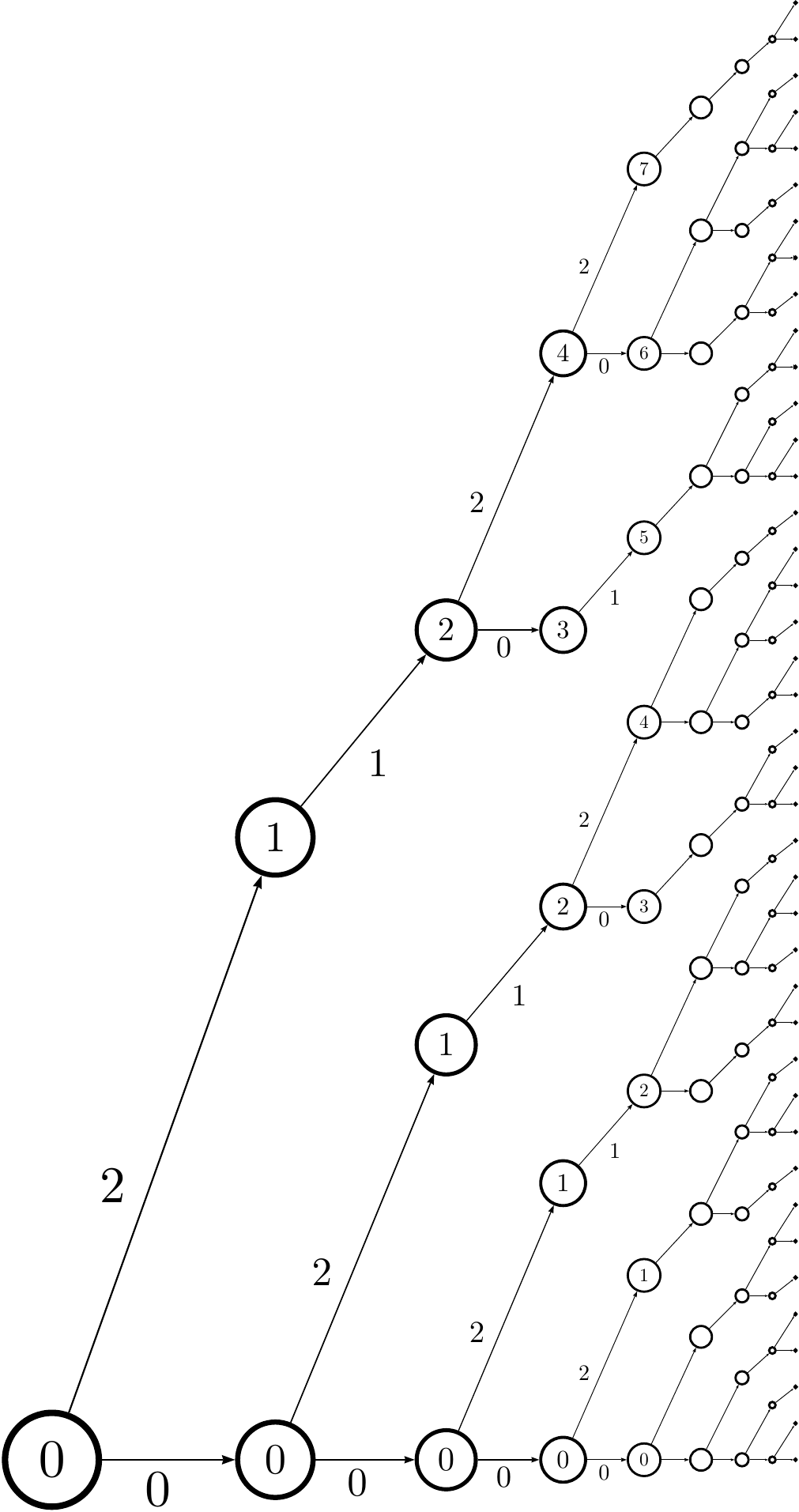}}%
  \hfill\hspace*{0cm}
  \caption{Representation trees in two number systems}
  \lfigure{rep-tre-sys}
\end{figure}
The first striking fact is that the \emph{representation language},
that is, the set of representations of integers, in a rational base
numeration system does not fit at all in the usual classifications of
formal language theory.
It looks very chaotic and defeats any kind of iteration lemma.
Nevertheless, these representation languages hide a certain kind of regularity
and we have shown \citep{MarsSaka17a} that they are so to speak characterized by
their \emph{periodic signatures}, that is, if one of these languages
is drawn as a tree and traversed breadth-first, the degrees of the nodes
are periodic.
%
% the signature of a prefix-closed language
% being the sequence of the degrees of nodes in a breadth-first traversal of the tree that represents the language.
%

If we now turn to the infinite branches of the trees, we first
find that every subtree in the tree of \rfigure{rep-tre-sys}\dex{a}
is the full ternary tree, whereas every subtree in the tree of
\rfigure{rep-tre-sys}\dex{b} is different from all other subtrees.
%
% As a result, the language of the representations of the integers is
%   not only a non regular language, but the situation is even worse as
%   this language indeed satisfies no iteration
%   lemma of any kind (\cite{MarsSaka13b}).
With the hope of finding some order or regularity within what seems to
be close to complete randomness (which, on the other hand, is not
established either and would be a very interesting result) we consider
the \emph{minimal words}, that we rather call \emph{bottom words},
originating from every node of the tree.

In the case of an integer base, this is perfectly uninteresting: all
these bottom words are equal to~$0^{\omega}$.
In the case of a rational base these words are on the contrary all
distinct, none are even ultimately periodic (as the other infinite
words in the representation tree).
In order to find some invariant of all these distinct words, or at
least a relationship between them, we have studied the function~$\Der$
that maps the bottom word~$\minword{n}$ associated with~$n$
onto~$\minword{n+1}$, the one associated with~$n+1$.
This function~$\Der$ is easily seen to be \emph{online} and
\emph{realtime}, that is, the knowledge of the first~$i$ digits of the
input is enough to compute the first~$i$ digits of the output, and
hence~$\Der$ is computable by an infinite sequential
letter-to-letter transducer.

The computation of such a transducer in the case the base~$\td$, and
more generally in the case of a base~$z=\pq$ with~$p=2\xmd q-1$, leads to a
surprising and unexpected result.
The transducer, denoted by~$\Dpq$, is obtained by replacing in the
representation tree, denoted by~$\Tpq$, the label of every edge by a
set of pairs of letters that depends upon this label only.
In other words, the \emph{underlying graphs} of~$\Tpq$ and~$\Dpq$
\emph{coincide}, and~$\Dpq$ is obtained from~$\Tpq$ by a
\emph{substitution} from the alphabet of digits into the alphabet of
pairs of digits, in this special and remarkable case.

The general case is hardly more difficult to describe, once it has
  been understood.
In the special case, the canonical digit alphabet has $p=2\xmd q-1$
  elements; in the general case, we still consider a digit alphabet
  with~$2\xmd q-1$ elements denoted by~$\Bpq$, either by keeping the
  larger~$2\xmd q-1$ elements of
  the canonical digit alphabet, when~$p$ is is greater than~$2\xmd q-1$,
  or by enlarging the canonical alphabet with enough negative digits,
  when~$p$ is smaller than~$2\xmd q-1$; in both cases, $p-1$ is the
  largest digit.

From~$\Tpq$ and with the digit alphabet~$\Bpq$, we then define
  another `representation graph' denoted by~$\Tpqp$: either by
  \emph{deleting the edges}
  of~$\Tpq$ labelled by digits that do not belong to~$\Bpq$ in the
  case where~$p>2\xmd q-1$ or, in the case where~$p<2\xmd q-1$ by \emph{adding
  edges} labelled with the new negative digits.
Then,~$\Dpq$ is obtained from~$\Tpqp$ exactly as above, by a
  \emph{substitution} from the alphabet of digits into the
  alphabet of pairs of digits.
This construction of~$\Dpq$, and the proof of its correctness yields:

\begin{Theorem}
\ltheorem{dpq=xi}~ %linebreak to have the statement on one line

Let~$p,q$ be two coprime integers such that~$p>q>1$ and~$z=\pq$.
Then~$\Dpq$ realises~$\Der$.
\end{Theorem}

In the original article \citep{AkiyEtAl08}, the tree~$\Tpq$, which is built from the representations
  of integers, is used to \emph{define} the representations of real
  numbers: the label of an infinite branch of the tree is the
  development `after the radix point' of a real number and the
  drawing of the tree as a fractal object --- like in
  \rfigure{rep-tre-sys} --- is fully justified by this point of view.
The same idea leads to the definition of the (normalised%
\footnote{%
   The classical definition of span of the node~$n$ is, in the fractal
   drawing, the width of the subtree rooted in~$n$.
   This value is obviously decreasing (exponentially) with the depth
   of the node~$n$, hence the span of two nodes cannot be easily
   compared.
   In this work, we only consider the \emph{normalised
   span} which is the span multiplied by~$(\pq)^k$, where~$k$ is the
   depth of the node~$n$.})
\emph{span} of a node~$n$ of the representation tree: it is the
difference between the real numbers represented respectively by the
top and the bottom words originating in the node~$n$
and let us denote by~$\Spq$ the set of spans for all integers and
by~$\adh{\Spq}$ its topological closure.

Again, this notion is totally uninteresting in the case of a
numeration system with an integer base~$p$: the span of every node~$n$
is always~$1$.
And again, the notion is far more richer and complex in the case of a
rational base~$\pq$ since we establish the following.

\begin{Theorem}
\ltheorem{span}%
Let~$p,q$ be two coprime integers such that~$p>q>1$ and~$z=\frac{p}{q}$.
\begin{enumerate}%[label={$(\alph{*})$}]
\item \renewcommand{\theoutthm}{\theTheorem}\sublabel{t.span-smal}
% If~$p\leq (2 q-1)$, then the topological closure of~$\Spq$ is an interval.
If~$p\leq 2\xmd q-1$, then~$\adh{\Spq}$ is an interval.
\item \sublabel{t.span-big}
% If $p > (2 q-1)$, then the topological closure of~$\Spq$ is a Cantor set
% (\ie $\Spq$  is closed, bounded, nowhere dense and has no isolated point).
If $p > 2\xmd q-1$, then~$\adh{\Spq}$ is a Cantor set of measure zero.
\end{enumerate}
\end{Theorem}

As different they may look, Theorems~I and~II have a common root
in the construction of the automaton~$\Tpqp$.
The trivial relationship between the bottom word originating at
node~$n+1$ and the top word originating at node~$n$ leads to the
connexion between the construction of the transducer~$\Dpq$ and the
description of the set of spans~$\Spq$.
The \emph{digitwise difference} between top and bottom words is
written on the alphabet~$\Bpq$, and all these `difference words' are
infinite branches in the automaton~$\Tpqp$.
This is explained in \rsection{spa-wor}.
\rtheorem{dpq=xi} is then established in \rsection{suc-fun} and
\rtheorem{span} in \rsection{on-span}.
The second case of \rtheorem{span} is completed with an upper
bound for the Hausdorff dimension of~$\adh{\Spq}$.
This paper is meant to be self-contained and starts, in particular,
with all necessary definitions concerning rational base number
systems in \rsection{rati-base}.
We conclude the paper with an open problem on minimal words which
indeed was the motivating force of all this work, and with a
conjecture on the Hausdorff dimension of~$\adh{\Spq}$.

The present article is a long version of a work \citep{AkiyEtAl13} presented
at the 9th International Conference on Words.
Most of the results are also part of the thesis of the second author~\citep{Mars16}.

%%%%%%%%%%%%%%%%%%%%%%%%%%%%%%%%%%%%%%%%%%%%%%%%%%%%%%%%%%%%%%%%%%%%%%%%%%%%%%%%%%%%%%%%%%%%%%%%%%%
%%%%%%%%%%%%%%%%%%%%%%%%%%%%%%%%%%%%%%%%%%%%%%%%%%%%%%%%%%%%%%%%%%%%%%%%%%%%%%%%%%%%%%%%%%%%%%%%%%%
%%%%%%%%%%%%%%%%%%%%%%%%%%%%%%%%%%%%%%%%%%%%%%%%%%%%%%%%%%%%%%%%%%%%%%%%%%%%%%%%%%%%%%%%%%%%%%%%%%%
\section{Preliminaries and notation}

%%%%%%%%%%%%%%%%%%%%%%%%%%%%%%%%%%%%%%%%%%%%%%%%%%%%%%%%%%%%%%%%%%%%%%%%%%%%%%%%%%%%%%%%%%%%%%%%%%%
%%%%%%%%%%%%%%%%%%%%%%%%%%%%%%%%%%%%%%%%%%%%%%%%%%%%%%%%%%%%%%%%%%%%%%%%%%%%%%%%%%%%%%%%%%%%%%%%%%%
\subsection{On words and numbers}

An \emph{alphabet} is a finite set of symbols, called \emph{letters}.
A \emph{word} (resp.\@ an \emph{\oword}) is a finite (resp.\@
infinite) sequence of letters and a language (resp.\@ an \olanguage)
is a set of words (resp.\@ \owords).
The set of the words (resp.\@ \owords) over an alphabet~$A$
is denoted by~$A^*$ (resp.~$A^\omega$).
Subsets of~$A^*$ are called \emph{languages} over~$A$ and those
of~$A^\omega$ are called \emph{\olanguages} over~$A$.
For the sake of clarity, we use the standard math font for letters
and words: $a,b,c,d,u,v,w$\ldots and
a bold sans-serif font for \owords: $\ow{u}, \ow{v},\ow{w}$\ldots %.
The \emph{length} of a word~$u$ is denoted by~$\wlen{u}$ and the
\emph{concatenation} of two words~$u$ and~$v$ is denoted simply
by~$uv$.
If~$w=uv$ (resp.~$\ow{w}=u\ow{v}$), then~$u$ is called a \emph{prefix}
of~$w$ (resp.~of~$\ow{w}$); note that the prefixes of word or of
\owords always are words.
We denote by~$\pref{}$ the
function~$A^*\cup A^\omega\rightarrow\powerset(A^*)$
that maps a word or an \oword to the set of all its
prefixes;~$\pref{}$ is naturally lifted to languages and \olanguages,
that is, to a
function~$\powerset(A^*)\cup\powerset(A^\omega)\rightarrow\powerset(A^*)$.
A language~$L$ is said \emph{prefix-closed} if~$\pref{L}=L$.
Words and \owords will later be evaluated using a rational base
numeration system (defined in \rsection{rati-base}).
It is then convenient to have a different index convention for words
and \owords: we index (finite) words \emph{from right to left} and
use~$0$ as the rightmost index (as in~$a_{k}\cdots a_1\xmd a_0$),
while \owords are indexed from left to right, starting with index~$1$
(as in~$a_1\xmd a_2\cdots$).
In this article, letters always are (relative) integers and we use
\emph{digit} as a synonym for letter.
Moreover, alphabets always are integer intervals, that is, sets of
consecutive integers.
In particular, our alphabets are totally ordered, which implies that
any set of words is equipped with two total orders: the \emph{radix
order} and the \emph{lexicographic order}:
\begin{definition}
Let~$u$ and~$v$ be words over~$A$ and~$w$ their longest common prefix.
\begin{enumerate}
\item $u\lex\leq v$ if
% A word in the \emph{lexicographic order}, denoted by~$,
%     if one of the two following conditions holds.
\begin{itemize}[nosep]
\item either~$u=w$, that is, $u$ is a prefix of~$v$,

\item or~$u=w\xmd a\xmd x$ and~$v=w\xmd b\xmd y$ with~$a,b$ in~$A$
and~$a<b$.
% where~$a,b$ are the unique digits and~$w$ the unique word such
% that~$a\neq b$,~$w\xmd a$ is a prefix of~$u$ and~$w\xmd b$ is a
% prefix of~$v$, if~$a<b$.
\end{itemize}
\item $u\rad\leq v$ if
% A word~$u$ is smaller than another word~$v$ in the \emph{radix
% order}, denoted by~$u\rad< v$
\begin{itemize}[nosep]
\item either~$\wlen{u}<\wlen{v}$ % if is~$u$ is shorter than~$v$
\item or~$\wlen{u}=\wlen{v}$ and~$u\lex\leq v$.
% if~$u$ and~$v$ are of the same length and~$u\lex<v$.
\end{itemize}
% \item The corresponding loose orders,~$\lex\leq$ and~$\rad\leq$, are
% defined as usual.
\end{enumerate}
Let~$\ow{u}$ and~$\ow{v}$ be \owords over~$A$.
\begin{enumerate}
    \setcounter{enumi}{2}%
\item $\ow{u}\lex\leq \ow{v}$ if
\begin{itemize}[nosep]
\item either~$\ow{u}=\ow{v}$,

\item or, if~$w$ (in~$A^*$) is their longest common prefix,
~$\ow{u}=w\xmd a\xmd\ow{x}$ and~$\ow{v}=w\xmd b\xmd\ow{y}$ with~$\ow{x},\ow{y}$ in~$A^\omega$ and~$a,b$ in~$A$
such that~$a<b$.
% where~$a,b$ are the unique digits and~$w$ the unique word such
% that~$a\neq b$,~$w\xmd a$ is a prefix of~$u$ and~$w\xmd b$ is a
% prefix of~$v$, if~$a<b$.
\end{itemize}
\end{enumerate}
\end{definition}
The set of \owords is classically equipped with the product topology
which can also be defined with a distance.
% Let us conclude Section~\thesubsection{} with a few words on the
% topology of infinite words.  Please see \cite[Chapter
% III]{PerrPin04} for more details.
% %
% This topology is defined by the distance~$d$ below.
% %

\begin{definition}
Let~$\ow{u},\ow{v}$ be two infinite words.
The distance between these two words is
\begin{equation*}
d(\ow{u},\ow{v}) = \left\{
\begin{array}{lp{8cm}}
   0 & if~~$\ow{u}=\ow{v}$ \\[2mm]
   2^{-\wlen{w}} & where~$w$ is the longest common prefix of~$\ow{u}$
   and~$\ow{v}$, otherwise.
\end{array}\right.
\end{equation*}
\end{definition}

% Note
%
% The distance between two \owords
% is always strictly smaller than 1, and~$A^\omega$ is dense for that
% topology.
% that every sequence with value in
% has a convergent subsequence.
%

%%%%%%%%%%%%%%%%%%%%%%%%%%%%%%%%%%%%%%%%%%%%%%%%%%%%
%%%%%%%%%%%%%%%%%%%%%%%%%%%%%%%%%%%%%%%%%%%%%%%%%%%%
\subsection{On trees, automata and transducers}

In this article we consider infinite, directed graphs of
a special form.
First, there is a special \emph{initial} vertex called the \emph{root}
and indicated by an incoming arrow in figures.
Second, the edges are labelled over a finite alphabet.
Third, they are \emph{deterministic}: there is never two different edges
originating from the same vertex and labelled by the same letter.
Such graphs are represented by quadruple~$\aut{A,V,i,\delta}$
where~$A$ is the finite alphabet,~$V$ is the (infinite)
vertex-set,~$\delta$ is a function $V\times A \rightarrow V$ is the
set of edges.
We call such graphs \emph{automata} and we use terminology of
automata theory;
in particular we use \emph{state} rather than \emph{vertex},
and \emph{transition} rather than \emph{edge}.
A transition is denoted by~$s\pathx{a}s'$,
where~$s,s'$ are states and~$a$ is a letter.
We will consider \emph{finite} and \emph{infinite} paths in these graphs.
We  refer to infinite paths as \emph{branches}
and refer to finite paths simply as \emph{paths}.
A branch is thus denoted by~$s\pathx{\ow{w}}\cdots$ and a path
by~$s\pathx{u}s'$, where~$s,s'$ denote states,~$\ow{w}$ an \oword
and~$u$ a word.
We call \emph{dead-end} a state with no outgoing transitions; in this
article, automata will have no dead-end.
A \emph{run} refers to a path starting from the root.
\emph{The run of~$u$} is the unique run labelled by~$u$ as a label, if
it exists; in which case~$u$ is said to be \emph{accepted} by the
automaton.
The language accepted by~$\Ac$, denoted by~$\behav{\Ac}$ is the set of
the words accepted by~$\Ac$.
The notions of \emph{\orun}  and \emph{accepted \olanguage}
(denoted by~$\ibehav{\Ac}$) are defined similarly.
If~$\Ac$ has no dead-end, then~$\behav{\Ac}=\pref{\ibehav{\Ac}}$.
We call \emph{tree} an automaton in which every state is reached by
exactly one run.
A \emph{transducer} is an automaton where the labels are taken in a
product alphabet~${A\times B}$;~$A$ is the \emph{input alphabet}
and~$B$ the \emph{output alphabet}.
All the transducers we consider are \emph{input-deterministic}:
if~$s\pathx{(a,b)}t$
and~$s\pathx{(a,b')}t'$ then~$b=b'$ and~$t=t'$.
They are interpreted as computing functions:
the first component is the input and the second is the output.
If~$(u,v)$ labels a run of a transducer~$\Tc$, then we say that~$v$ is
the \emph{image by~$\Tc$ of~$u$}; by abuse of language, this run will be called
 \emph{the run of~$u$}.
With the usual definition of automata and transducers (as for instance
in \citealp{Saka09}) what we call automaton is indeed an \emph{infinite
deterministic automaton with all states final} and what we call
transducer is indeed an \emph{infinite letter-to-letter pure-sequential
transducer}.

\medskip

Let us conclude this section with a statement linking the language and
the \olanguage accepted by an automaton (more details on the subject
in~\citealp{PerrPin04}).
\begin{lemma}\llemma{beha-ibeh}
  Let~$\Ac$ be an automaton with no dead-end and~$\ows{S}$
  an \olanguage.
  It holds~$\behav{\Ac} = \pref{\ows{S}}$ if and only if~$\ibehav{\Ac} = \adh{\ows{S}}$.
\end{lemma}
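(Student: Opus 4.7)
The plan rests on two characterisations that reduce the lemma to an unpacking of definitions. First, on the topological side, the product topology on $A^\omega$ described by the distance $d$ gives
\[
   \adh{\ows{S}} = \{\ow{w} \in A^\omega : \pref{\ow{w}} \subseteq \pref{\ows{S}}\},
\]
since a basis of neighbourhoods of $\ow{w}$ consists precisely of the cylinders determined by its finite prefixes, and $\ow{w}$ is adherent to $\ows{S}$ if and only if each such cylinder meets $\ows{S}$. Second, on the automaton side, I would establish that under the no-dead-end hypothesis
\[
   \ibehav{\Ac} = \{\ow{w} \in A^\omega : \pref{\ow{w}} \subseteq \behav{\Ac}\}.
\]
The inclusion from left to right is immediate: any prefix of a successful $\omega$-run labels a finite run. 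For the reverse inclusion I would use that $\Ac$ is deterministic, so each prefix of $\ow{w}$ has a unique accepting run; these runs are pairwise compatible, and the absence of dead-ends guarantees that they extend into a single infinite run of label $\ow{w}$.

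Given these two facts, the forward implication is just substitution: assuming $\behav{\Ac} = \pref{\ows{S}}$ in the second characterisation yields $\ibehav{\Ac} = \{\ow{w} : \pref{\ow{w}} \subseteq \pref{\ows{S}}\} = \adh{\ows{S}}$. For the converse I would invoke the equality $\behav{\Ac} = \pref{\ibehav{\Ac}}$ already noted in the text (valid because $\Ac$ has no dead-end); together with the hypothesis this gives $\behav{\Ac} = \pref{\adh{\ows{S}}}$, and the lemma then reduces to the general fact that $\pref{\adh{\ows{S}}} = \pref{\ows{S}}$ for every \olanguage.

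That last equality is the only step not completely automatic. The inclusion $\pref{\ows{S}} \subseteq \pref{\adh{\ows{S}}}$ follows from $\ows{S} \subseteq \adh{\ows{S}}$. For the reverse inclusion, given $u \in \pref{\adh{\ows{S}}}$ pick $\ow{w} \in \adh{\ows{S}}$ having $u$ as a prefix; the open cylinder $\{\ow{v} \in A^\omega : u \in \pref{\ow{v}}\}$ is a neighbourhood of $\ow{w}$, so by definition of the closure it meets $\ows{S}$, providing some $\ow{v} \in \ows{S}$ with $u \in \pref{\ow{v}}$. No deeper combinatorial or topological input is required; the main care is simply to keep straight which statements depend on the no-dead-end hypothesis (the second characterisation, and hence only the converse implication) and which do not.
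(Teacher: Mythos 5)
Your proof is correct and follows essentially the same route as the paper's: the forward direction is the chain of equivalences $\ow{w}\in\ibehav{\Ac}\Leftrightarrow\pref{\ow{w}}\subseteq\behav{\Ac}$ and $\pref{\ow{w}}\subseteq\pref{\ows{S}}\Leftrightarrow\ow{w}\in\adh{\ows{S}}$, and the backward direction uses $\behav{\Ac}=\pref{\ibehav{\Ac}}$ (the no-dead-end step) together with $\pref{\adh{\ows{S}}}=\pref{\ows{S}}$, exactly as in the text. The only difference is organisational: you isolate the two characterisations as named facts before substituting, where the paper writes them inline as a single sequence of equivalences.
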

\begin{proof}
  Forward direction.
  Let~$\ow{w}$ be an \oword. The following sequence of equivalences
  holds.
\begin{align*}
    \ow{w} \in \ibehav{\Ac} &\e\Longleftrightarrow\e
    \pref{\ow{w}} \subseteq \behav{\Ac} \e\Longleftrightarrow\e
    \pref{\ow{w}} \subseteq \pref{\ows{S}}\\
%     \makebox[0cm][l]{~~~~\text{(forward-dir.\@ hypothesis)}} \\
  &\e\Longleftrightarrow\e
  \forall u \in \pref{\ow{w}}\quantvrg
  \exists \ow{s}_{u}\in\ow{S}\quantsp u\in\pref{\ow{s}_{u}}
  \e\Longleftrightarrow\e  \ow{w} \in \adh{\ows{S}}\eqpnt
\end{align*}
%   \begin{gather*}
%     \ow{w} \in \ibehav{\Ac} \\
%     \pref{\ow{w}} \subseteq \behav{\Ac} \\
%     \pref{\ow{w}} \subseteq \pref{\ows{S}}\makebox[0cm][l]{~~~~\text{(forward-dir.\@ hypothesis)}} \\
%     \forall u \in \pref{\ow{w}}\quantvrg \exists \ow{s}_{u} \in \ow{S} \quantsp u\in\pref{\ow{s}_{u}} \\
%     \ow{w} \in \adh{\ows{S}}
%   \end{gather*}

\medskip

\noindent
Backward direction. Let~$u$ be a word.
The following sequence of equivalences holds.
\begin{align*}
    u\in \behav{\Ac} &\e\Longleftrightarrow\e
    \exists \ow{w} \in\ibehav{\Ac}\quantsp u\in\pref{\ow{w}}
        \makebox[0cm][l]{~~~~\text{(no-dead-end hypothesis)}}\\
&\e\Longleftrightarrow\e
    \exists \ow{w} \in\adh{\ows{S}}\quantsp u\in\pref{\ow{w}}
        \makebox[0cm][l]{~~~~\text{(backward-dir.\@
	hypothesis)}}\eee\eee\\
&\e\Longleftrightarrow\e
\exists{\ow{w'}} \in \ows{S} \quantsp\e\msp u\in\pref{\ow{w'}}
        \makebox[0cm][l]{~~~~\text{(closure definition)}}\\
&\e\Longleftrightarrow\e    u\in \pref{\ows{S}}\eqpnt\e
\tag*{\qedhere}
\end{align*}
%   \begin{gather*}
%     u\in \behav{\Ac}\\
%     \exists \ow{w} \in\ibehav{\Ac}\quantsp u\in\pref{\ow{w}}
%         \makebox[0cm][l]{~~~~\text{(no-dead-end hypothesis)}}\\
%     \exists \ow{w} \in\adh{\ows{S}}\quantsp u\in\pref{\ow{w}}
%         \makebox[0cm][l]{~~~~\text{(backward-dir.\@ hypothesis)}}\\
%     \exists{\ow{w'}} \in \ows{S} \quantsp u\in\pref{\ow{w'}} \\
%     u\in \pref{\ows{S}}
%   \end{gather*}
\end{proof}
%

%%%%%%%%%%%%%%%%%%%%%%%%%%%%%%%%%%%%%%%%%%%%%%%%%%%%%%%%%%%%%%%%%%%%%%%%%%%%%%%%%%%%%%%%%%%%%%%%%%%
%%%%%%%%%%%%%%%%%%%%%%%%%%%%%%%%%%%%%%%%%%%%%%%%%%%%%%%%%%%%%%%%%%%%%%%%%%%%%%%%%%%%%%%%%%%%%%%%%%%
%%%%%%%%%%%%%%%%%%%%%%%%%%%%%%%%%%%%%%%%%%%%%%%%%%%%%%%%%%%%%%%%%%%%%%%%%%%%%%%%%%%%%%%%%%%%%%%%%%%
\section{Rational base numeration systems}
\lsection{rati-base}

In this section, we recall the definition of rational base numeration
systems that have been introduced by \citet*{AkiyEtAl08}, and the properties
of the representation trees %in rational base numeration systems
that were established in this paper.

% %
% Rational base numeration systems were introduced by two of the authors
% and Frougny \cite{AkiyEtAl08}.
% %
% This section~\thesection{} introduces no new concept and prove no new result
% on the matter, we
% simply give the definition of rational bases and restate results from \cite{AkiyEtAl08}.
% %

%
\begin{notation}
  We denote by~$p$ and~$q$ two co-prime integers such that~$p>q>1$,
  and by~$\base$ the rational number~$\base=\frac{p}{q}$.
  They will be fixed throughout the article.
\end{notation}
Note that the numeration system in base~$\pq$ we are about to describe is \emph{not}
the~$\beta$-numeration where~${\beta=\pq}$.
Indeed, in the latter, the representation of a number is
computed by a left-to-right algorithm
(called \emph{greedy}, \cf\citealp[Chapter~7]{Loth02}), the digit set is~$\set{0,1, \ldots,
\bfloor{\pq}}$ and the weight of the~$i$-th leftmost
digit is~$(\pq)^i$.
Meanwhile, in base~$\pq$, the representations are computed
by a right-to-left algorithm (\requation{MEA-alt}), digits are taken
in~$\set{0,1,\ldots,(p-1)}$ and the weight of the~$i$-th digits
is~$\frac{1}{q}(\pq)^i$.
\begin{figure}[ht!]
  \centering
  \includegraphics[scale=\TreeScale]{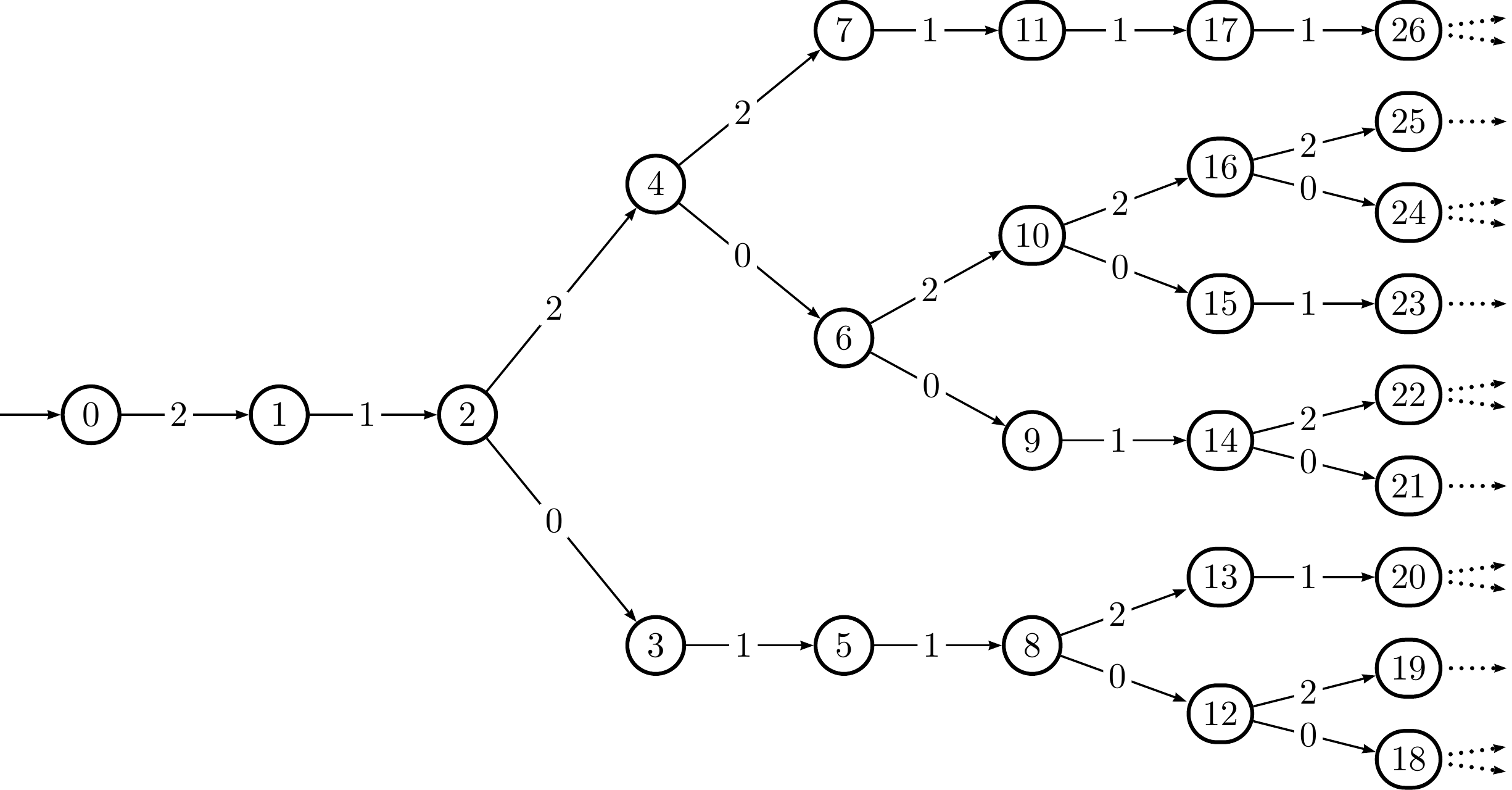}
  \caption{The language~$\Ltd$ represented as a tree}
  \lfigure{l32}
\end{figure}

%%%%%%%%%%%%%%%%%%%%%%%%%%%%%%%%%%%%%%%%%%%%%%%%%%%%%%%%%%%%%%%%%%%%%%%%%%%%%%%%%%%%%%%%%%%%%%%%%%%
%%%%%%%%%%%%%%%%%%%%%%%%%%%%%%%%%%%%%%%%%%%%%%%%%%%%%%%%%%%%%%%%%%%%%%%%%%%%%%%%%%%%%%%%%%%%%%%%%%%
\subsection{Representation of integers}

\medskip

Given a positive integer~$N$, let us define~$N_0=N$ and, for
all~$i>0$,
\begin{equation*}
  q\xmd N_i = p\xmd N_{(i+1)} + a_i
  \eqvrg
%   \notag
\end{equation*}
where~$a_i$ and~$N_{(i+1)}$ are the remainder and the quotient
of the Euclidean division of $q\xmd N_i$ by~$p$.
Hence~$a_i$ belongs to the alphabet~${\Ap=\{0,1,\ldots,p-1\}}$.
Since~$p>q$, the sequence~$(N_i)_{i\in\N}$ is first strictly decreasing
until it reaches 0:
there is an integer~$k$ such that~$N_0 > N_1 > \cdots > N_k > N_{k+1}=0$.
The word~$a_{k}\cdots a_1a_0$ of~$\Aps$ is denoted by~$\cod{N}$.
\requation{MEA-alt}, below, gives a compact definition of the same algorithm.
\begin{subequations}\lequation{MEA-alt}
  \begin{align}
  \lequation{MEA-alt-1}
    \cod{0} ={}&\epsilon \\
    \lequation{MEA-alt-2}
    \forall m\mathbin{>}0\quantsp
    \cod{m} ={}& \cod{n}\xmd a
      \qquad\text{where}\quad  n\in \N\quantvrg
                               a\in\Ap \quad\text{and}\quad q\xmd m = p\xmd n +a
  \end{align}
\end{subequations}

\medskip

\noindent
If~$\cod{N} = a_{k}a_{k-1}\cdots a_0$, then it holds
\begin{equation*}
  N = \sum^{k}_{i=0} \frac{a_i}{q} \left( \pq \right)^i
  \eqpnt
%   \notag
\end{equation*}
The \emph{evaluation function}~$\val{}$ is derived from this formula.
The \emph{value} of
any word~${a_{k}a_{k-1}\cdots a_0}$ over~$\Ap$, and indeed over any
alphabet of digits,   is defined by
\begin{equation}\label{eq.defi-pi}
  \val{a_{k}a_{k-1}\cdots a_0} =
  \sum^{k}_{i=0} \frac{a_i}{q} \left( \pq \right)^i
  \eqpnt
\end{equation}
A word~$u$ in~$\Aps$ is called a~$\pq$-\emph{expansion} of
an integer~$n$, if~${\val{u}=n}$.
Since $\pq$-expansions are unique up to leading 0's
(\cf~\citealt[Theorem~1]{AkiyEtAl08}),~$u$ is equal to~$0^i\cod{n}$ for some integer~$i$
and~$\cod{n}$ is called \emph{the~$\pq$-representation of~$n$}.
The set of the $\pq$-representations of integers is denoted by~$\Lpq$:
\begin{equation} \lequation{defi-lpq}
	\Lpq = \{\cod{n}~|~n\in\N \} \eqpnt
\end{equation}
It follows from \requation*{MEA-alt-2} that~$\Lpq$ is
prefix-closed and right-extendable.
% (for every representation~$\cod{n}$, there
% exists (at least) an~$a$ in~$\Ap$ such that~$q$ divides~$(np+a)$ and
% then~${\cod{\frac{np+a}{q}}=\cod{n}.a}$).
%
As a consequence,~$\Lpq$ can be represented as a tree with no dead-end
(cf.\@ Figures~\rfigure*{l32}, \rfigure*{l73} and later on \rfigure*{l43}).
The node set is~$\N$, the root is~$0$, and there is an arc~$n\pathx{a}m$
if~$\cod{n}a=\cod{m}$.
\begin{figure}[ht!]
  \centering
  \includegraphics[scale=\TreeScale]{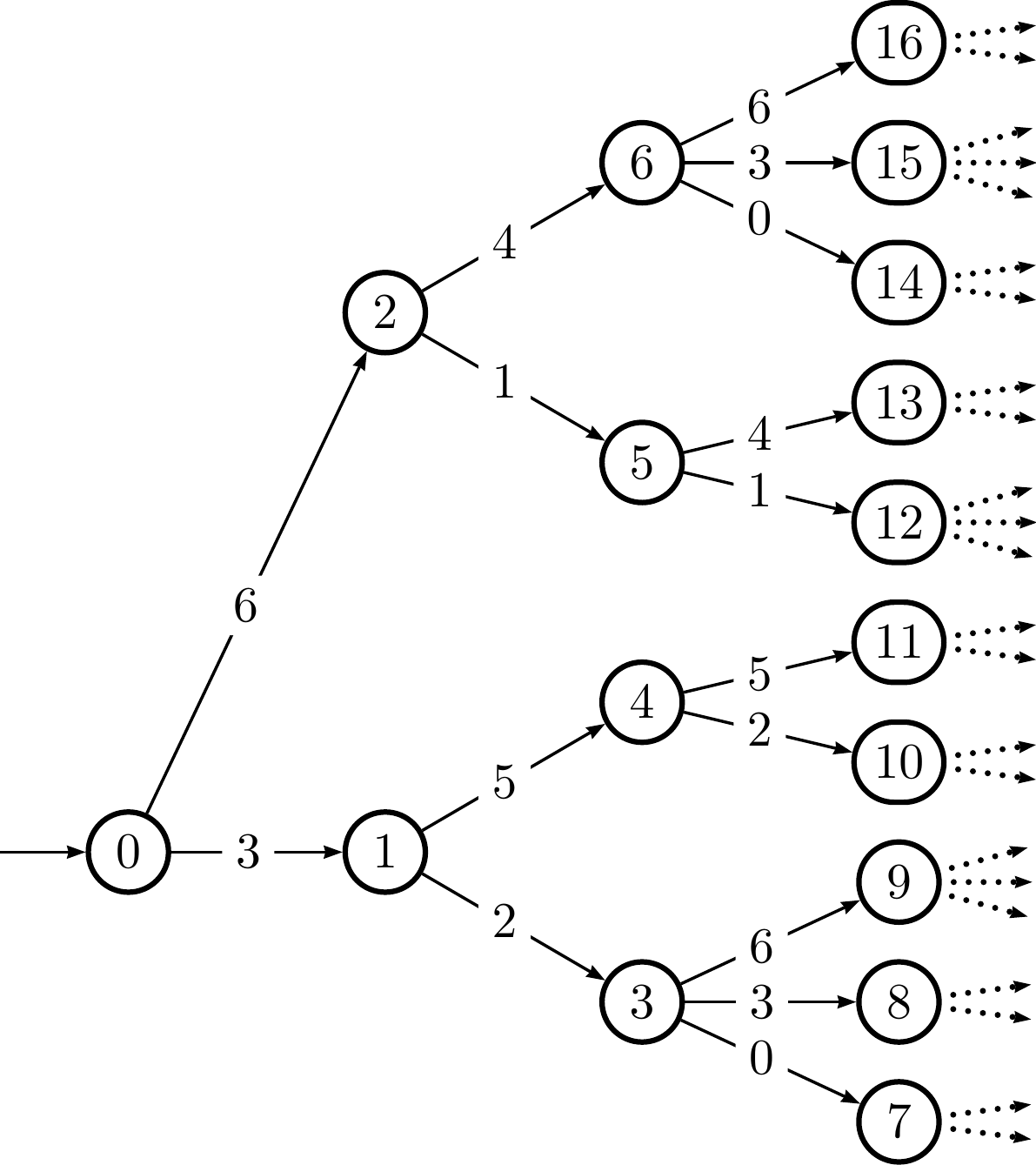}
  \caption{The language~$\Lst$ represented as a tree}
  \lfigure{l73}
\end{figure}
Moreover, the base~$\pq$ is the ``abstract numeration system'' (\cf \citealp{LecoRigo01,LecoRigo10-ib})
built from~$\Lpq$, a property that may be stated as follows:
%
% \begin{proposition}[{\cite[Proposition~11]{AkiyEtAl08}}]\lproposition{pq-ans}
%   The ordering of two integers~$n$ and~$m$ is the same as the one of~$\cod{n}$ and~$\cod{m}$
%   in the radix order.
% \end{proposition}
%
\begin{proposition}[{\citealp[Proposition~11]{AkiyEtAl08}}]
\lproposition{pq-ans}%
$\fa n,m\in\N\quantsp
n\leq m \e\Longleftrightarrow\e \cod{n}\rad\leq \cod{m}$.
\end{proposition}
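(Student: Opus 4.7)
My plan is to exploit the fact that both $\leq$ on $\N$ and $\rad\leq$ on $\Lpq$ are total orders and that $\cod{\cdot}:\N\to\Lpq$ is a bijection, which follows from the uniqueness of $\pq$-representations recalled just before the statement. Consequently, it is enough to prove the single implication $n < m \Longrightarrow \cod{n} \rad< \cod{m}$, the reverse following by totality. Along the way I would establish an auxiliary lemma: the map $n\mapsto\wlen{\cod{n}}$ is non-decreasing. This is a short induction based on \requation{MEA-alt-2}, using that the parent map $n\mapsto\lfloor qn/p\rfloor$ is itself non-decreasing (and is strictly smaller than the identity for $n>0$ since $q<p$). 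The lemma resolves at once any pair $n<m$ whose representations have different lengths, as the radix inequality is then immediate from the definition.

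For the main implication I would then proceed by strong induction on $m$. Given $n<m$ with $\wlen{\cod{n}}=\wlen{\cod{m}}$, write $\cod{n}=\cod{n'} a$ and $\cod{m}=\cod{m'} b$, where $n',m'$ are the parents of $n,m$ in $\Tpq$ and $a,b\in\Ap$. The crucial sub-step is to rule out $n'>m'$: the children of any node $n''$ of $\Tpq$ form a block of consecutive integers, namely those in the range from $\lceil pn''/q\rceil$ to $\lfloor(pn''+p-1)/q\rfloor$, and the block attached to $n''+1$ starts strictly above it, so sibling-blocks are ordered in the same way as their parents; thus $n'>m'$ would force $n>m$, contradicting the hypothesis. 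If $n'=m'$, then $qn=pn'+a$ and $qm=pm'+b$ combined with $n<m$ force $a<b$, hence $\cod{n}\lex<\cod{m}$. If $n'<m'$, the induction hypothesis applied to $n',m'<m$ gives $\cod{n'}\rad<\cod{m'}$, which is actually a lex inequality because both representations have the same length by the length lemma, and appending one letter on the right preserves a strict lex inequality that is already witnessed at an interior position.

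The main obstacle is really the geometric observation that excludes $n'>m'$: everything else is bookkeeping about lex versus radix order, but the disjointness and monotone ordering of sibling-blocks in $\Tpq$ is the genuine content of the proof. Once that fact is secured, the argument collapses to a clean two-layer induction---on $n$ for the length lemma, and on $m$ for the order-preservation---after which the backward direction of the equivalence is a free consequence of $\cod{\cdot}$ being an order-preserving bijection between two total orders.
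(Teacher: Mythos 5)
Your proof is correct. Note that the paper itself gives no proof of this statement: it is imported verbatim as Proposition~11 of the cited article of Akiyama, Frougny and Sakarovitch, so there is nothing internal to compare against. Your argument is essentially the standard one and is sound: the reduction to the single implication $n<m\Rightarrow\cod{n}\rad<\cod{m}$ via totality and injectivity of $\cod{\cdot}$ is legitimate (and injectivity is itself a consequence of that implication, so nothing is circular); the length lemma follows by induction because the parent map $m\mapsto\lfloor qm/p\rfloor$ is non-decreasing and strictly contracting for $m>0$; and the three-way case split on the parents $n',m'$ closes cleanly. One small remark: the ``geometric'' exclusion of $n'>m'$ via disjoint, monotonically ordered sibling blocks is more machinery than needed --- it is exactly the monotonicity of $n\mapsto\lfloor qn/p\rfloor$, which you already use for the length lemma --- but it is not wrong, and the block description (children of $n''$ being the consecutive integers from $\lceil pn''/q\rceil$ to $\lfloor(pn''+p-1)/q\rfloor$) is accurate. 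You should also make explicit that the case $\wlen{\cod{n}}=\wlen{\cod{m}}$ forces $n>0$ whenever $m>0$, so that writing $\cod{n}=\cod{n'}a$ is always legitimate there; this is immediate but worth a half-sentence.
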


\noindent
or, equivalently as:
\begin{equation}
\label{eq.pq-ans}%
\forall u,v \in\Lpq\quantsp
\val{u}\leq \val{v} \e\Longleftrightarrow\e u\rad\leq v
\eqpnt
% \notag
\end{equation}
It is known that~$\Lpq$ is not a regular language (not even a context-free language).
In fact, it even possesses a ``Finite Left Iteration Property''
which essentially says that~$\Lpq$ cannot satisfy any kind of pumping lemma.
\rlemma{tpq-flip}, later on, is a consequence of this fact.

\begin{definition}%\ldefinition{tpq}%
\begin{enumerate}
  \item \sublabel{d.taupq}
%   We denote by
  Let~$\taupq\colon\N\times\Z\rightarrow\N$ be the (partial) function
  defined by:
  \begin{equation}\label{eq.tau}
    \forall n\in\N\quantvrg
      \forall a \in \Z\quantsp
    \taupq(n,a)=\left(\frac{n\xmd p+a}{q}\right) \ee
      \text{if~$(n\xmd p+a)$ is divisible by~$q$.}
  \end{equation}

  \item \sublabel{d.tpq}
  We denote \/\footnote{%
      In \cite{AkiyEtAl08},~$\Tpq$ denotes an infinite directed
    tree.
      The labels of the (finite) paths starting from the root
    precisely formed the language~$\msp 0^*\Lpq$, as
    is~$\behav{\Tpq}$ in our case.}
  by~$\Tpq$ the \emph{infinite} automaton:~$\mes\eqspacing\Tpq=\aut{\,\Ap,\,\N,\,0,\,\taupq\,}$.
  \end{enumerate}
\end{definition}
\begin{remark}
  \begin{itemize}
    \item %In \requation{tau},
    The function~$\taupq$ is defined on~$\msp\N\times\Z\msp$
              instead of~$\msp\N\times\Ap\msp$ in anticipation of
              future developments.

    \item  The automaton~$\Tpq$ is not quite a tree.
        Indeed, the state~$0$ (that is, the root) holds a loop labelled by the digit~$0$
        since~$\taupq(0,0)=0$. % is obviously defined and equal to~$0$.
  \end{itemize}
\end{remark}
%

%
% For the sake of convenience, we give below an arithmetic characterisation
% of
%
The transitions of~$\Tpq$ are characterised by the following.
\begin{equation}\lequation{tpq-path}
\forall n,m \in \N \quantvrg \forall a\in\Ap \quantsp
n\pathx{a}[\Tpq]m \iff q\xmd m  = p\xmd n + a
\end{equation}
Comparing~\requation*{MEA-alt} and~\requation*{tpq-path}
shows how the difference between~$\Lpq$ and~$\Tpq$ is mostly a
question of formalism.
It holds $\behav{\Tpq}=\msp0^*\Lpq$ and next lemma gives a more precise
statement.
\begin{lemma} \llemma{tpq-val}
%   Let~$u\in\Aps$ be a word accepted by~$\Tpq$.
%   %
%   Its evaluation, $\val{u}$, is an integer and the run of~$u$ in~$\Tpq$
%   reaches the state~$\val{u}$.
  Let~$u$ be in~$\behav{\Tpq}$.
  Then, $\val{u}$ is in~$\N$ and
  $\msp0\pathx{u}[\Tpq]\val{u}\msp$.
\end{lemma}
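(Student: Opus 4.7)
The statement has two conclusions bundled together ($\val{u}\in\N$ and $0\pathx{u}[\Tpq]\val{u}$), but they will fall out of a single induction on $\wlen{u}$, so I would prove them simultaneously. The base case is immediate: $\epsilon$ is accepted at the root $0$, and $\val{\epsilon}=0$ by the empty-sum convention in (\ref{eq.defi-pi}).

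For the inductive step, I would write $u = v\xmd a$ with $v\in\behav{\Tpq}$ and $a\in\Ap$, and let $0\pathx{v}[\Tpq] n \pathx{a}[\Tpq] m$ be the (unique) run of $u$. The induction hypothesis gives $n = \val{v}\in\N$, and \requation*{tpq-path} applied to the last transition yields $q\xmd m = p\xmd n + a$, so in particular $m\in\N$. It therefore suffices to verify
\begin{equation*}
   \val{v\xmd a} \;=\; \pq\cdot \val{v} + \frac{a}{q}\eqpnt
\end{equation*}
This is a purely algebraic shift-of-index computation on formula (\ref{eq.defi-pi}): writing $v=a_k\cdots a_0$, appending $a$ on the right shifts every old digit to an index one higher, so the new sum equals $\frac{a}{q} + \pq\sum_{i=0}^k \frac{a_i}{q}(\pq)^i$. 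Combining, $\val{u} = \frac{p\xmd n + a}{q} = m\in\N$, and the run indeed ends at $\val{u}$.

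The only substantive step is the shift identity for $\val{}$, and even that is just unrolling the definition; the real content of the lemma is the match between the Euclidean-division recursion \requation*{MEA-alt-2} defining $\cod{\cdot}$ and the characterisation \requation*{tpq-path} of the transitions of $\Tpq$, which is precisely what the induction exposes. I expect no obstacle: no global property of $\Lpq$ (such as uniqueness of representation) is needed, because the inductive run is furnished by the hypothesis $u\in\behav{\Tpq}$ and by determinism of $\Tpq$.
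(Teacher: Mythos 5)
Your proof is correct and coincides with what the paper intends: the lemma is stated there without proof, the authors treating it as the routine consequence of comparing the recursion \requation*{MEA-alt-2} with the transition characterisation \requation*{tpq-path}, and your induction on $\wlen{u}$ is exactly the formalisation of that comparison. There are no gaps — prefix-closure of $\behav{\Tpq}$ (needed to invoke the induction hypothesis) is automatic, and the shift identity $\val{v\xmd a}=\pq\,\val{v}+\frac{a}{q}$ is the correct unrolling of the definition of $\val{}$.
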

\rlemma{tpq-val} implies that the tree representation of~$\Lpq$, as in Figures \rfigure*{l32},
\rfigure*{l73} and \rfigure*{l43}, augmented by an additional loop labelled
by~$0$ onto the root~$0$ becomes a representation of~$\Tpq$.
Moreover, since~$\Lpq$ is right-extendable, the next statement holds.
\begin{lemma}\llemma{tpq-righ-exte}
  $\Tpq$ has no dead-end.
\end{lemma}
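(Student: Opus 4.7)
The plan is to translate the claim into a transition-existence condition and then invoke right-extendability of $\Lpq$. A dead-end of $\Tpq$ is a state $n\in\N$ from which no transition originates; by~\requation{tpq-path}, this would mean that no digit $a\in\Ap$ makes $n\xmd p+a$ divisible by $q$. So the statement reduces to exhibiting, for each $n$, such a digit $a$.

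The shortest route leverages the right-extendability of $\Lpq$ remarked on immediately after~\requation{MEA-alt-2}: for every $n\in\N$ there is $a\in\Ap$ with $\cod{n}\xmd a\in\Lpq$, that is, $\cod{n}\xmd a=\cod{m}$ for some $m\in\N$. By~\rlemma{tpq-val}, the run of $\cod{m}$ in $\Tpq$ starts at the root $0$, ends at $\val{\cod{m}}=m$, and necessarily factors through the state $\val{\cod{n}}=n$. Its last edge is then an outgoing transition at $n$, so $n$ is not a dead-end.

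As a self-contained alternative — which incidentally reproves the right-extendability of $\Lpq$ — one argues directly: since $p>q$, the alphabet $\Ap=\{0,1,\ldots,p-1\}$ contains the complete residue system $\{0,1,\ldots,q-1\}$ modulo $q$, so for any $n$ there exists $a\in\Ap$ with $a\equiv -n\xmd p\pmod{q}$. Setting $m=(n\xmd p+a)/q\in\N$ then yields a valid transition from $n$ to $m$ labelled $a$, again by~\requation{tpq-path}.

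I foresee no real obstacle: each approach is a one-line consequence of material already in place in the excerpt. The only point requiring care is keeping straight the correspondence between paths in $\Tpq$ and elements of $\Lpq$ mediated by~\rlemma{tpq-val}, which is what allows the right-extendability statement about a language to transfer to a statement about the automaton.
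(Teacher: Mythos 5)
Your first argument is exactly the paper's: the lemma is stated there as an immediate consequence of the right-extendability of $\Lpq$, transferred to $\Tpq$ via \rlemma{tpq-val}, just as you do. Your self-contained alternative (choosing $a\equiv -n\xmd p \pmod q$ in $\Aq\subseteq\Ap$) is also correct and is essentially the reason $\Lpq$ is right-extendable in the first place.
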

We now state a few properties of~$\Tpq$.
They are the translations of results due to \cite{AkiyEtAl08} into the formalism
we use here.
\begin{lemma}[{\citealp[Lemma~6]{AkiyEtAl08}}]\llemma{tpq-futu}
  Let~$n$,~$n'$ be two integers. Let~$k$ be another integer.
  \begin{enumerate}%[label=$($\alph{*}$)$]
    \item \sublabel{l.tpq-futu->}
      If~$n$ and~$n'$ are congruent modulo~$q^k$, then for every word~$u$ of length~$k$, the following are equivalent.
      \begin{itemize}
        \item There exists an integer~$m$ such that~$n\pathx{u}m$.
        \item There exists an integer~$m'$ such that~$n'\pathx{u}m'$.
      \end{itemize}
    \item \sublabel{l.tpq-futu<-}
      If there exist two integers~$m,m'$ and a word~$u$ of length~$k$
      such that~$n\pathx{u}m$ and~$n'\pathx{u}m'$,
      then~$n$ et~$n'$ are congruent modulo~$q^k$.
  \end{enumerate}
\end{lemma}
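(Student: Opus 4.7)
The plan is to unfold the one-step relation $n \pathx{a} m \iff qm = pn+a$ into an explicit closed formula for a path of length $k$, and then read off both claims from that formula using the coprimality of $p$ and $q$.

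First I would set up induction on $k = \wlen{u}$. Writing a path of length $k$ as $n = n_0 \pathx{a_1} n_1 \pathx{a_2} \cdots \pathx{a_k} n_k = m$, the relation $q\xmd n_{i+1} = p\xmd n_i + a_{i+1}$ unfolds to
\begin{equation*}
  q^k\xmd m \e=\e p^k\xmd n \xmd+\xmd S(u)
  \eqvrg\qquad\text{where}\quad
  S(u) \e=\e \sum_{i=1}^{k} p^{k-i}\xmd q^{i-1}\xmd a_i \eqpnt
\end{equation*}
Crucially, $S(u)$ depends only on $u$, not on $n$. By \requation{tpq-path}, a path~$n\pathx{u}m$ exists if and only if $q^k$ divides $p^k\xmd n + S(u)$, in which case~$m$ is determined by the formula above.

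Part~(a) is then immediate: if $n \equiv n' \pmod{q^k}$, then $p^k n \equiv p^k n' \pmod{q^k}$, so $p^k n + S(u)$ is divisible by $q^k$ if and only if $p^k n' + S(u)$ is. Hence the run of~$u$ starts from~$n$ exactly when it starts from~$n'$. Part~(b) goes in the reverse direction: if both paths exist then both $p^k n + S(u)$ and $p^k n' + S(u)$ are multiples of $q^k$, so $q^k \mid p^k(n-n')$; since $\gcd(p,q)=1$ implies $\gcd(p^k,q^k)=1$, we conclude $q^k\mid (n-n')$.

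The only point that requires any care is the induction establishing the closed formula for $q^k m$, and in particular keeping the indexing of the letters of~$u$ consistent with the left-to-right order in which the path is traversed (as opposed to the right-to-left convention used for representations of integers in \requation{MEA-alt}). Once the formula is in hand, the two statements are one line each, and no further property of~$\Tpq$ beyond \requation{tpq-path} and the coprimality of $p$ and $q$ is used.
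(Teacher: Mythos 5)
The paper itself does not prove this lemma: it is imported as a ``translation'' of Lemma~6 of the cited article of Akiyama, Frougny and Sakarovitch, so there is no in-house argument to compare yours against. Your closed-formula proof is the natural self-contained one and is essentially correct: unfolding \requation{tpq-path} along a path does give $q^k m = p^k n + S(u)$ with $S(u)=\sum_{i=1}^{k} p^{k-i}q^{i-1}a_i$ independent of~$n$, and parts (a) and (b) then follow as you say. The one step you assert without justification is the ``if'' direction of ``a path $n\pathx{u}m$ exists if and only if $q^k$ divides $p^k n + S(u)$'', and part (a) genuinely needs it (part (b) uses only the ``only if'' direction). Divisibility of the total by $q^k$ does not by itself guarantee that each intermediate Euclidean step succeeds; it must be propagated backwards. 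Setting $M_j = p^j n + \sum_{i=1}^{j}p^{j-i}q^{i-1}a_i$, one has $M_{j+1} = pM_j + q^j a_{j+1}$, so $q^{j+1}\mid M_{j+1}$ forces $q^j \mid pM_j$ and hence, by $\gcd(p,q)=1$, $q^j\mid M_j$; downward induction then shows that every intermediate state $n_j = M_j/q^j$ is an integer (and it is nonnegative because the digits of $\Ap$ are). This is the same coprimality input you already invoke for part (b), so the fix is one line; alternatively one can sidestep the issue entirely with the step-by-step induction ($n\equiv n' \bmod q^k$ gives $pn+a\equiv pn'+a \bmod q^k$, hence simultaneous divisibility by $q$ and quotients congruent modulo $q^{k-1}$), which is presumably the argument of the original article.
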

\begin{lemma} \llemma{tpq-flip}
  Let~$n\pathx{\ow{w}}\cdots$ be a branch of~$\Tpq$.
  If~$\mes\ow{w}$ is periodic, then~${n=0}$ and~${\mes\ow{w}=0^\omega}$.
\end{lemma}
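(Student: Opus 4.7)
The plan is to exploit the linear structure of the transitions of~$\Tpq$: a purely periodic branch of period length~$k$ gives rise to a linear recurrence on the subsequence of states visited at the ends of consecutive periods, and the divisibility constraints forced by $\gcd(p,q)=1$ will make this recurrence trivial.

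Concretely, I would write $\ow{w} = u^\omega$ with $u = u_1 u_2 \cdots u_k \in \Ap^*$ of length $k\geq 1$, set $N_0 := n$, and let $N_j\in\N$ denote the state reached from $n$ after reading the prefix $u^j$ of $\ow{w}$. Iterating the transition equation $q\xmd m = p\xmd n + a$ from~\requation{tpq-path} along one period and eliminating the $k-1$ intermediate states, I obtain the recurrence
\begin{equation*}
   q^k\xmd N_{j+1} \;=\; p^k\xmd N_j + S, \qquad \text{where}\quad S \;:=\; \sum_{i=1}^{k} p^{k-i}\xmd q^{i-1}\xmd u_i \in \N.
\end{equation*}
A routine induction on~$j$ then yields the closed form
\begin{equation*}
   q^{jk}(p^k - q^k)\xmd N_j \;=\; p^{jk}\xmd C' - q^{jk}\xmd S, \qquad \text{with}\quad C' \;:=\; (p^k - q^k)\xmd n + S \in \N.
\end{equation*}

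The conclusion is then immediate from divisibility. Since $N_j$ is a non-negative integer, $q^{jk}$ divides the left-hand side; it also divides $q^{jk}\xmd S$, and hence it divides $p^{jk}\xmd C'$. As $\gcd(p,q) = 1$, the divisibility $q^{jk} \mid C'$ follows for every $j \geq 0$. Because $C'$ is a fixed integer and $q \geq 2$, this forces $C' = 0$; and since $p^k > q^k$ and $n,S$ are non-negative, the equality $(p^k - q^k)\xmd n + S = 0$ forces $n = 0$ and $S = 0$. The vanishing of~$S$, combined with $u_i \geq 0$ and $p^{k-i}q^{i-1} > 0$ for each $i$, forces every digit $u_i$ to be zero, whence $\ow{w} = 0^\omega$, as claimed.

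The only real obstacle is the telescoping in the first step that produces the closed form; it is elementary but must be carried out carefully because of the non-symmetric roles of $p$ and $q$ in \requation{tpq-path}. After that, the argument reduces to the classical fact that a fixed non-negative integer divisible by arbitrarily high powers of $q>1$ must be zero, which is what turns the coprimality of $p$ and $q$ into the desired rigidity statement.
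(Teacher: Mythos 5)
Your proof is correct. The telescoped recurrence $q^k\xmd N_{j+1}=p^k\xmd N_j+S$ does follow from iterating the transition relation of \requation{tpq-path} along one period, the closed form checks out by induction (for $j=0$ both sides equal $(p^k-q^k)\xmd n$, and the inductive step is a direct substitution), and the divisibility argument is sound: $C'=(p^k-q^k)\xmd n+S$ is a \emph{fixed} non-negative integer divisible by $q^{jk}$ for every $j$, hence zero, and the vanishing of the two non-negative summands gives $n=0$ and $u=0^k$. Your route is, however, genuinely different from the paper's, which is structural and only two lines long: writing the branch as $n\pathx{u}m\pathx{u^\omega}\cdots$, it invokes \rlemma{tpq-futu<-} (two states admitting a common outgoing path of length $k$ are congruent modulo $q^k$) applied to the powers $u^i$ to get $n\equiv m$ modulo $q^{\wlen{u}\xmd i}$ for all $i$, hence $n=m$; the branch then contains a circuit, and the only circuit of~$\Tpq$ is the loop $0\pathx{0}0$, which yields $n=0$ and $\ow{w}=0^\omega$. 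You instead unpack the arithmetic content of that cited lemma into an explicit closed form and extract both conclusions in one shot from $C'=0$; what this buys is self-containedness --- no appeal to \rlemma{tpq-futu} (imported from the earlier paper) nor to the easily-verified uniqueness of the circuit of~$\Tpq$ --- at the price of a slightly heavier computation. At bottom the two proofs rest on the same mechanism: since $\gcd(p,q)=1$, traversing periods forces a fixed integer to be divisible by arbitrarily high powers of~$q$.
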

\begin{proof}
  The hypothesis implies that there is a word~$u$ such that~$n\pathx{u}m\pathx{u^\omega}\cdots$
  is a branch of~$\Tpq$.
  From \rlemma{tpq-futu<-},~$n$ and~$m$ are
  congruent modulo~$q^{\wlen{u} \times i}$ for every integer~$i$.
  Hence~$n=m$.
  The only circuit in~$\Tpq$ is~$0\pathx{0}0$, hence the statement.
\end{proof}
%yielding

%
For every integer~$k$ let us define the (total)
function~$f_{k}\colon\N\rightarrow{\Ap}^{\!k}$ in the following way.
\requation{tpq-path} implies that every state of~$\Tpq$
(incuding~$0$) has exactly one incoming transition, hence, by
induction on~$k$, exactly one incoming path of length~$k$:
for every integer~$m$,
$\msp f_k(m)=u\msp$
where~$u$ is the label of this unique path of length~$k$ ending
in~$m$.
%
% %
% Let~$k$ be an integer.
% %
% Let us define a function~$f_k:\N\rightarrow {\Ap}^{\!k}$ as follows.
% %
% Let~$m$ be an integer.
% %
% It follows from \requation{tpq-path} that every state has exactly one incoming transition.
% %
% Hence, there is a unique word~$u$ in~${\Ap}^{\!k}$ and state~$n$ such that~$n\pathx{u}m$.
% %
% We set~$f_k(m)=u$.
% %

%
\begin{lemma}[{\citealp[Proposition~10]{AkiyEtAl08}}]\llemma{tpq-past}
  Let~$m$,~$m'$ be two integers.
  For every integer~$k$, $m$ and~$m'$ are congruent modulo~$p^k$ if
  and only if~$f_k(m)=f_k(m')$.
\end{lemma}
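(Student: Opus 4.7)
The plan is to prove the statement by induction on $k$, exploiting the recursive description of $f_{k+1}$ in terms of $f_k$ together with the defining equation~\requation*{tpq-path} for transitions of $\Tpq$.

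For the base case, $k=0$ is trivial since $f_0(m)=\epsilon$ for every $m$ and every pair of integers is congruent modulo~$1$. A slightly more informative base case is $k=1$: if $n\pathx{a}m$ is the unique incoming transition of $m$, then~\requation*{tpq-path} gives $a=qm-pn$, so $a\equiv qm\pmod{p}$. Hence $f_{1}(m)$ is the unique digit of $\Ap=\{0,\dots,p-1\}$ congruent to $qm$ modulo~$p$. Since $\gcd(p,q)=1$, the map $m\mapsto qm\bmod p$ is a bijection on $\Z/p\Z$, so $f_1(m)=f_1(m')\iff m\equiv m'\pmod p$.

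For the inductive step, assume the statement holds for $k$ and consider $f_{k+1}$. Let $n,n'$ be the (unique) predecessors of $m,m'$ in $\Tpq$, say $n\pathx{a}m$ and $n'\pathx{a'}m'$. By construction, $f_{k+1}(m)=f_k(n)\,a$ and $f_{k+1}(m')=f_k(n')\,a'$, so $f_{k+1}(m)=f_{k+1}(m')$ is equivalent to the conjunction $a=a'$ and $f_k(n)=f_k(n')$. For the forward direction, suppose $m\equiv m'\pmod{p^{k+1}}$. Then in particular $m\equiv m'\pmod p$, so the base case yields $a=a'$. From~\requation*{tpq-path} we obtain
\begin{equation*}
  q(m-m')=p(n-n')+(a-a')=p(n-n')\eqpnt
\end{equation*}
Since $\gcd(p,q)=1$ and $p^{k+1}\mid q(m-m')$, we get $p^{k+1}\mid p(n-n')$, i.e.\ $n\equiv n'\pmod{p^k}$. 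The induction hypothesis then gives $f_k(n)=f_k(n')$, and hence $f_{k+1}(m)=f_{k+1}(m')$.

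For the backward direction, suppose $f_{k+1}(m)=f_{k+1}(m')$; then $a=a'$ and $f_k(n)=f_k(n')$, so by induction $n\equiv n'\pmod{p^k}$. The same equation $q(m-m')=p(n-n')$ now shows that $p^{k+1}\mid q(m-m')$, and coprimality of $p$ and $q$ yields $m\equiv m'\pmod{p^{k+1}}$, completing the induction.

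The argument is really just a bookkeeping exercise; the only substantive ingredient is the defining congruence $qm\equiv a\pmod p$ extracted from~\requation*{tpq-path}, combined with $\gcd(p,q)=1$. No genuine obstacle is expected, but care must be taken that the induction propagates both directions simultaneously, since the backward implication at level $k+1$ calls on the forward implication of the base case (to conclude $a=a'$) and vice versa.
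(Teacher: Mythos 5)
Your proof is correct. Note that the paper itself does not prove this lemma: it is imported verbatim as Proposition~10 of the cited 2008 article of Akiyama, Frougny and Sakarovitch, so there is no in-paper argument to compare against. Your induction is the natural one and all steps check out: the base case $k=1$ correctly identifies $f_1(m)$ as the unique digit of $\{0,\dots,p-1\}$ congruent to $qm$ modulo $p$, and coprimality of $p$ and $q$ makes $m\mapsto qm \bmod p$ injective; the inductive step correctly reduces $f_{k+1}(m)=f_{k+1}(m')$ to $a=a'$ together with $f_k(n)=f_k(n')$ for the unique predecessors, and the identity $q(m-m')=p(n-n')$ transfers divisibility by $p^{k+1}$ back and forth using $\gcd(p,q)=1$ in both directions. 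The only blemish is your closing remark about the dependency structure: in the backward direction the equality $a=a'$ is read off directly from the equality of the two label words (they have the same last letter), not from the forward implication of the base case; the base case is only invoked in the forward direction. This misdescription of your own argument has no bearing on its validity.
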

%

% The two next two lemmas follow immediately.

%
\begin{lemma}\llemma{tpq-suff-prec}
  For every integer~$k$, $f_{k}$ is a bijection between any integer
  interval~$S$  of cardinal~$p^k$ and~${\Ap}^{\!k}$.
%   Let~$k$ be an integer and~$S$ an integer interval of cardinal~$p^k$.
%   %
%   Then~$f_k$ restricted to~$S$ is a bijection~$S\rightarrow{\Ap}^{\!k}$.
\end{lemma}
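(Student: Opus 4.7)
The plan is to reduce the statement to \rlemma{tpq-past} together with a counting argument. First, I would note that since every state of~$\Tpq$ has exactly one incoming path of length~$k$, the function~$f_k$ is well-defined on all of~$\N$, and its codomain~${\Ap}^{\!k}$ has cardinality exactly~$p^k$.

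Second, I would invoke \rlemma{tpq-past}: for integers~$m$ and~$m'$, $f_k(m) = f_k(m')$ if and only if~$m \equiv m' \pmod{p^k}$. In other words, $f_k$ factors through the quotient~$\N/p^k\N$, inducing an \emph{injection} from~$\N/p^k\N$ into~${\Ap}^{\!k}$.

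Third, let~$S$ be any integer interval of cardinal~$p^k$. Since~$S$ consists of~$p^k$ consecutive integers, it contains exactly one representative of each residue class modulo~$p^k$. Combined with the previous step, this shows that the restriction of~$f_k$ to~$S$ is injective.

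Finally, since~$|S| = p^k = |{\Ap}^{\!k}|$, an injection between finite sets of equal cardinality is automatically a bijection. The only step requiring any real content is the invocation of \rlemma{tpq-past}, which is already available; everything else is bookkeeping, so I do not anticipate any genuine obstacle.
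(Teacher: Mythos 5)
Your proposal is correct and is essentially the paper's own argument: both reduce the statement to \rlemma{tpq-past} by observing that the $p^k$ consecutive integers of~$S$ occupy distinct residue classes modulo~$p^k$, so that~$f_k$ is injective on~$S$, and then conclude by comparing cardinalities with~${\Ap}^{\!k}$. No gap.
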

\begin{proof}
  Two integers~$m,m'$ in~$S$ are necessarily in different residue
  classes modulo~$p^k$,
  hence from \rlemma{tpq-past}, satisfy~$f_{k}(m)\neq f_{k}(m')$.
  It follows that~$f_{k}(S)$ is of cardinal~$p^k$.
\end{proof}

  Applying \rlemma{tpq-suff-prec} to every integer~$k$ yields the
  following.

\begin{lemma} \llemma{tpq-suff}
  Every word in~$\Aps$ is the label of some path of~$\Tpq$.
\end{lemma}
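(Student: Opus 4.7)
The plan is to deduce \rlemma{tpq-suff} directly from \rlemma{tpq-suff-prec} by exhibiting, for each word, a terminal state whose backward path has that word as its label. The text already signals this strategy with the sentence preceding the statement.

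First, I fix an arbitrary word $u$ in $\Aps$ and set $k=\wlen{u}$. I then apply \rlemma{tpq-suff-prec} to the integer interval $S = \{0,1,\ldots,p^{k}-1\}$, which has cardinality $p^{k}$. The conclusion of that lemma states that the restriction of $f_{k}$ to $S$ is a bijection onto ${\Ap}^{\!k}$; in particular it is surjective, so there exists some integer $m \in S$ such that $f_{k}(m)=u$.

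By the very definition of $f_{k}$ recalled just before \rlemma{tpq-past}, the word $f_{k}(m)$ is the label of the unique incoming path of length $k$ ending at the state $m$ of~$\Tpq$ (existence and uniqueness being guaranteed by the fact that every state of~$\Tpq$ has exactly one incoming transition, as recorded in the paragraph defining $f_{k}$). Hence $u$ labels a path of $\Tpq$, which is what was to be proved.

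There is essentially no obstacle: the only thing one must be careful about is that a ``path'' in our terminology need not start at the root (runs are the paths from the root), so no additional argument is required to place the starting state at $0$. If one did want to start the path at the root, a separate argument would be needed and indeed the statement would be false in general.
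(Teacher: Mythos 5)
Your proof is correct and follows exactly the paper's intended argument: the paper derives this lemma in one line by ``applying \rlemma{tpq-suff-prec} to every integer~$k$'', and you have simply instantiated that with the interval~$S=\{0,\ldots,p^{k}-1\}$ and unwound the definition of~$f_{k}$. Your closing remark is also well taken: the statement concerns paths, not runs, and would indeed fail for runs since~$\behav{\Tpq}=0^{*}\Lpq$.
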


% %
% \begin{lemma}\llemma{tpq-suff}
%   Let~$u$ be a word in~$\Aps$ and~$k$ the length of~$u$.
%   %
%   For every integer~$i$, there is an integer~$m$ in~$\{i,i+1\ldots,i+p^k\}$
%   and another integer~$n$ such that~$n\pathx{u}m$ in~$\Tpq$.
% \end{lemma}
% %

%%%%%%%%%%%%%%%%%%%%%%%%%%%%%%%%%%%%%%%%%%%%%%%%%%%%%%%%%%%%%%%%%%%%%%%%%%%%%%%%%%%%%%%%%%%%%%%%%%%
%%%%%%%%%%%%%%%%%%%%%%%%%%%%%%%%%%%%%%%%%%%%%%%%%%%%%%%%%%%%%%%%%%%%%%%%%%%%%%%%%%%%%%%%%%%%%%%%%%%
\subsection{Representation of real numbers}

Let us define a second \emph{evaluation function}~$\realval{}$.
It evaluates an \oword after the radix point (for short a.r.p.) hence
computes a real number.
%
% Given
The \realvalue of an \oword $\msp\ow{w}=a_1\xmd a_2 \cdots\msp$ over the
alphabet~$\Ap$,
or indeed \emph{over any digit alphabet},  is
\begin{equation}\lequation{pq-real-val}
%   \begin{array}{ccccl}
%     \realval{}: & \Apo &\longrightarrow &\R \\
\realval{a_1 \xmd a_2 \cdots} =
\sum_{i\geq1} \frac{a_i}{q} \left(\frac{p}{q}\right)^{\!\!-i} \eqpnt
%   \end{array}
\end{equation}
\begin{proposition}\lproposition{ro-cont}
  The function~$\realval{}$ is uniformly continuous.
\end{proposition}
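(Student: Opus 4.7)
The plan is to get explicit control on $|\realval{\ow{u}}-\realval{\ow{v}}|$ in terms of the length of the longest common prefix of $\ow{u}$ and $\ow{v}$, exploiting the fact that $p/q>1$ so the weights $(p/q)^{-i}$ decay geometrically.

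First, I would fix $\ow{u}=a_1a_2\cdots$ and $\ow{v}=b_1b_2\cdots$ with $d(\ow{u},\ow{v})\leq 2^{-k}$, which by the definition of the metric means $a_i=b_i$ for $1\leq i\leq k$. Using \requation*{pq-real-val} and the triangle inequality, the first $k$ terms cancel and the tail can be bounded by the worst possible digit difference, namely $p-1$ (since digits lie in $\Ap=\{0,1,\ldots,p-1\}$):
\begin{equation*}
|\realval{\ow{u}}-\realval{\ow{v}}|
\;\leq\;\sum_{i>k}\frac{|a_i-b_i|}{q}\left(\frac{p}{q}\right)^{\!-i}
\;\leq\;\frac{p-1}{q}\sum_{i>k}\left(\frac{q}{p}\right)^{\!i}.
\end{equation*}
Since $p>q$, this geometric tail sums to $\dfrac{p-1}{q}\cdot\dfrac{(q/p)^{k+1}}{1-q/p}=\dfrac{p-1}{p-q}\left(\dfrac{q}{p}\right)^{\!k}$, a quantity depending only on $k$.

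Next, given $\epsilon>0$, I would choose $k$ large enough that $\frac{p-1}{p-q}(q/p)^k<\epsilon$, which is possible because $q/p<1$, and then set $\delta=2^{-k}$. The bound above shows that $d(\ow{u},\ow{v})<\delta$ implies $|\realval{\ow{u}}-\realval{\ow{v}}|<\epsilon$ for \emph{all} pairs $\ow{u},\ow{v}\in\Apo$, so $\realval{}$ is uniformly continuous.

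There is essentially no obstacle: the argument is a straightforward tail estimate for an absolutely convergent series whose convergence is uniform in the sequence of digits. The one mild subtlety is that \requation*{pq-real-val} is stated for \owords over \emph{any} digit alphabet, but in the context of this proposition the alphabet $\Ap$ is fixed and finite, which gives the uniform digit bound $|a_i-b_i|\leq p-1$ that makes the estimate uniform in $\ow{u},\ow{v}$.
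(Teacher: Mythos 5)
Your proof is correct, and it is the standard argument one would give here: the paper in fact states this proposition without proof, treating it as routine, and your geometric tail estimate (with the uniform digit bound coming from the finiteness of the alphabet) is exactly the expected justification.
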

Let us stress that the
function~$\realval{}$ is \emph{not} order-preserving.
% , as shown by an example below.
%
Since for every (non integer) rational base~$\pq$, $q\geq2$
and~$p\geq3$ hold, the following inequalities hold
% The inequality of the second equation holds from the fact that in
% , it holds.
\begin{equation}
  0\xmd (p-1)\xmd 0^\omega \lex<  1\xmd0^\omega
  \ee\text{and}\ee
  \realval[b]{0\xmd (p-1)\xmd 0^\omega} =
  \frac{q\xmd (p-1)}{p^2} > \frac{1}{p} = \realval{10^\omega}
  \eqpnt
  \notag
\end{equation}
% \begin{align*}
%   0\xmd (p-1)\xmd 0^\omega &\lex<  1\xmd0^\omega \\
%   \realval[b]{0\xmd (p-1)\xmd 0^\omega} = \frac{q\xmd (p-1)}{p^2}
%   &> \frac{1}{p} = \realval{10^\omega}
% \end{align*}
%
However, $\realval{}$ is order-preserving on the \olanguage accepted
by~$\Tpq$ (\rproposition{ro-orde-pres} below).
\begin{definition}\ldefinition{wpq}
  We denote by~$\mes\Wpq$ the \olanguage accepted by~$\Tpq$, that is,
  $\mes\Wpq = \ibehav{\Tpq}$~.
\end{definition}
For instance, Figures  \rfigure*{w32} and \rfigure*{w73}
(pages \pfigure*{w32} and \pfigure*{w73}) are representations of~$\Wtd$
and~$\Wst$ as
fractal trees.
In these figures, consider a path from the root to a node~$X$ labelled by a word~$u$.
The node~$X$ is then at the ordinate~$\realval{u\xmd 0^\omega}$ and is labelled
by~$\val{u}$.
The abscissa has no particular meaning except that it grows with the length of~$u$.
For example, in \rfigure{w32}, there is a path starting from the root and labelled
by~$u=21$; the endpoint of this path is a node labelled by~$\val{21}=2$
and positioned at the ordinate~$\realval{210^\omega}[\td]=0.888\cdots$.
Similarly, the run of~$u=210$ reaches a node labelled by~$\val{210}=3$
and whose ordinate is also~$\realval{2100^\omega}[\td]=\realval{210^\omega}[\td]=0.888\cdots$.
\begin{proposition}[{\citealp[Lemma~34]{AkiyEtAl08}}]\lproposition{ro-orde-pres}
$\msp\forall \ow{u},\ow{v} \in\Wpq\quantsp
\realval{\ow{u}}\leq\realval{\ow{v}} \e\Longleftrightarrow\e
\ow{u}\lex\leq \ow{v}\msp$.
%   The function~$\realval{}$ preserves the order
%   from~$(\Wpq,\lex\leq)$ to~$(\R,\leq)$.
\end{proposition}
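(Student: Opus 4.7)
The plan is to deduce the proposition from its finite-word counterpart \rproposition{pq-ans} by approximating each \oword in $\Wpq$ by its finite prefixes and then passing to the limit using the uniform continuity of $\realval$ (\rproposition{ro-cont}). The key bridge between the two evaluation functions is the elementary identity
\begin{equation*}
    \realval{u\xmd 0^{\omega}} = \left(\frac{q}{p}\right)^{\!\wlen{u}} \val{u},
\end{equation*}
obtained by comparing the defining sums of $\val{}$ and $\realval{}$ and a straightforward reindexing of the digits of~$u$.

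For the forward implication, take $\ow{u},\ow{v}$ in $\Wpq$ with $\ow{u} \lex\leq \ow{v}$ and consider their length-$n$ prefixes $u_n$ and $v_n$, both of which lie in $\behav{\Tpq}$ and represent non-negative integers by \rlemma{tpq-val}. For $n$ large enough that $u_n \neq v_n$, both words have equal length, so the comparison $u_n \lex\leq v_n$ coincides with $u_n \rad\leq v_n$; \rproposition{pq-ans} then yields $\val{u_n} \leq \val{v_n}$, and the bridge identity above gives $\realval{u_n\xmd 0^{\omega}} \leq \realval{v_n\xmd 0^{\omega}}$. Since $u_n\xmd 0^{\omega}$ and $v_n\xmd 0^{\omega}$ converge respectively to $\ow{u}$ and $\ow{v}$ in the product topology on~$\Apo$, the continuity of $\realval$ delivers $\realval{\ow{u}} \leq \realval{\ow{v}}$.

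The converse follows by contraposition together with the forward direction, \emph{provided} we rule out the equality $\realval{\ow{u}} = \realval{\ow{v}}$ for distinct $\ow{u},\ow{v}$ in $\Wpq$; this is the main obstacle, and indeed such an equality does occur in the integer-base case~$q=1$, so the argument must genuinely use $q\geq 2$. Writing $\ow{u} = w a \ow{x}$ and $\ow{v} = w b \ow{y}$ with $a<b$ and $w$ their longest common prefix, the equality would force the rigid relation $\realval{\ow{x}} - \realval{\ow{y}} = (b-a)/q$. The natural tool to exploit is the positive-integer sequence $\Delta_n = \val{v_n} - \val{u_n}$ (defined for $n > \wlen{w}$), which inherits from~$\taupq$ the recurrence $q\xmd\Delta_{n+1} = p\xmd\Delta_n + (v_{n+1} - u_{n+1})$. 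The hypothesis implies $(q/p)^n\Delta_n \to 0$, and this is incompatible with the recurrence: either $\Delta_n$ eventually exceeds $(p-1)/(p-q)$, and then the recurrence forces~$\Delta_n$ to grow at rate~$(p/q)^n$ so that $\realval{\ow{v}} - \realval{\ow{u}} > 0$; or $\Delta_n$ remains bounded, so a pigeonhole on the finite triple $(\val{u_n},\val{v_n},\Delta_n)$ makes the digit sequences of $\ow{u}$ and $\ow{v}$ ultimately periodic, contradicting \rlemma{tpq-flip}.
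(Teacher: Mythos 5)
The paper offers no proof of this proposition: it is imported from \citet{AkiyEtAl08} (their Lemma~34), so your argument has to be judged on its own merits. The forward implication is essentially correct as you give it: the identity $\realval{u\xmd 0^{\omega}}=(q/p)^{\wlen{u}}\,\val{u}$ holds, the reduction to \rproposition{pq-ans} on equal-length prefixes works (up to a one-line remark that prefixes live in $0^{*}\Lpq$ rather than $\Lpq$, so leading zeroes must be handled before invoking the radix order), and the passage to the limit by continuity of $\realval{}$ is standard. This implication is also the only part of the statement that the rest of the paper actually uses.

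The converse cannot be completed, and the obstruction is not a reparable technicality: distinct $\ow{u},\ow{v}$ in $\Wpq$ with $\realval{\ow{u}}=\realval{\ow{v}}$ \emph{do} exist for every $q\geq 2$, so the biconditional as printed is too strong. \rproposition{rela-mini-maxi} of this very paper exhibits such pairs: if $n\pathx{a}[\Tpq]m$ and $n\pathx{a\,{+}\,q}[\Tpq]m+1$, then prefixing by the run from $0$ to $n$ yields two branches of $\Tpq$ ending with $a\xmd\maxword{m}$ and $(a+q)\xmd\minword{m+1}$, lexicographically distinct but with the same \realvalue (the reals with two representations, the rational-base analogue of $0.999\cdots=1.000\cdots$). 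Along exactly these pairs your dichotomy lands in the bounded branch: since $\minword{m+1}=\mu(\maxword{m})$ digit by digit, the recurrence gives $\Delta_n=1$ for all $n$ past the divergence point, which never exceeds your threshold $(p-1)/(p-q)$. And the argument you propose for that branch is invalid on its own terms: the triple $(\val{u_n},\val{v_n},\Delta_n)$ does not range over a finite set, because $\val{u_n}$ and $\val{v_n}$ increase to infinity along any branch other than $0^{\omega}$; moreover, a recurring value of $\Delta_n$ would at best constrain the difference word, not make either of $\ow{u},\ow{v}$ ultimately periodic, so \rlemma{tpq-flip} cannot be invoked. What is true --- and what you did prove --- is that $\realval{}$ is weakly order-preserving on $\Wpq$, which is all that the later uses of this proposition (\rproposition{ro-vn-inte}, \rlemma{big-span-posi}) require.
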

As figures suggest, the set~$\Wpq$, when projected to~$\R$
by~$\realval{}$, produces an interval, as stated below.
\begin{theorem}[{\citealp[Theorem~2]{AkiyEtAl08}}]\ltheorem{ro-wpq-inte}
  The image of~$\mes\Wpq$ by~$\realval{}$ is an interval.
\end{theorem}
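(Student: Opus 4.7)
My plan is to show that $\realval{\Wpq}$ is compact and has no internal gap. First, $\Wpq=\ibehav{\Tpq}$ is a closed subset of $\Apo$: since $\Tpq$ has no dead-end (\rlemma{tpq-righ-exte}) and is deterministic, an \oword belongs to $\Wpq$ iff every prefix does. Hence $\Wpq$ is compact, and by continuity of $\realval{}$ (\rproposition{ro-cont}) so is $\realval{\Wpq}$; let $m=\min\realval{\Wpq}$ and $M=\max\realval{\Wpq}$. It will suffice to show that every $z\in[m,M]$ is attained.

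Given $z\in[m,M]$, I will construct an \oword $\ow{w}$ as the greedy limit of a sequence $(w_n)$, where $w_n$ is the lex-largest length-$n$ prefix in $\pref{\Wpq}$ admitting some extension $\ow{u}\in\Wpq$ with $\realval{\ow{u}}\leq z$. The candidate set at each level is finite and nonempty (the lex-min branch has value $m\leq z$), $w_n$ is easily seen to be a prefix of $w_{n+1}$, and the limit $\ow{w}$ lies in $\Wpq$ since $\Wpq$ is closed. Uniform continuity of $\realval{}$ then forces $\realval{\ow{w}}\leq z$, and the greedy maximality combined with \rproposition{ro-orde-pres} makes $\ow{w}$ the lex-maximum of $S^{\leq}_z=\{\ow{u}\in\Wpq : \realval{\ow{u}}\leq z\}$. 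Dually, I construct $\ow{w}'$ as the lex-minimum of $S^{\geq}_z$, with $\realval{\ow{w}'}\geq z$.

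If $\ow{w}=\ow{w}'$ then we are done. Otherwise $\ow{w}\lex<\ow{w}'$ and no element of $\Wpq$ lies strictly lex-between them; writing $\ow{w}=v\xmd c\xmd\ow{x}$ and $\ow{w}'=v\xmd d\xmd\ow{y}$ with $v$ the longest common prefix, this forces $c$ and $d$ to be outgoing letters at the state $s$ reached from the root by $v$ with no outgoing letter strictly between them, $\ow{x}$ to be the lex-max branch of $\Tpq$ from $s_c=\taupq(s,c)$, and $\ow{y}$ the lex-min branch from $s_d=\taupq(s,d)$. The heart of the argument is then the structural claim, which I will prove by induction: $s_d=s_c+1$, and for every $t\in\N$ the lex-max branch from state $t$ and the lex-min branch from state $t+1$ differ digitwise by the constant $p-q$. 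The inductive step reduces to a local computation: at any state the outgoing letters form an arithmetic progression of common difference $q$ inside $\{0,\ldots,p-1\}$, so the largest outgoing letter at $t$ lies in $[p-q,p-1]$, the smallest at $t+1$ in $[0,q-1]$, and the unique value in the resulting difference-range congruent to $p\pmod{q}$ is $p-q$; the corresponding child-states are again consecutive, completing the induction.

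Combining $d-c=q$ with the identity $\sum_{j\geq1}\frac{p-q}{q}(q/p)^j=1$, a direct calculation gives $\realval{\ow{w}}=\realval{\ow{w}'}$; sandwiched between $\leq z$ and $\geq z$, both values equal $z$, so $z\in\realval{\Wpq}$ and $\realval{\Wpq}=[m,M]$. The main obstacle will be spotting the digitwise-constant-difference claim: without it the greedy pair $(\ow{w},\ow{w}')$ could enclose a genuine gap; once identified, the inductive proof is routine.
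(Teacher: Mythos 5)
The paper does not prove this statement: it is imported verbatim as Theorem~2 of the cited source (Akiyama, Frougny, Sakarovitch 2008), so there is no in-paper proof to compare against. Your argument is correct, and it is essentially the argument of that source, reassembled from first principles. The compactness step and the reduction to ``no element of $\Wpq$ lies strictly lex-between the lex-max of $S^{\leq}_z$ and the lex-min of $S^{\geq}_z$'' are sound given \rproposition{ro-orde-pres}. Your structural claim is exactly the content of two facts the present paper states separately without proof: that the outgoing labels at a state form an arithmetic progression of step $q$ (so adjacent children are consecutive integers and differ in label by $q$), that $\maxword{t}$ and $\minword{t+1}$ differ digitwise by the constant $p-q$ (\rlemma{mini-to-maxi}), and the resulting value identity $\realval[b]{(a+q)\xmd\minword{m+1}}=\realval{a\xmd\maxword{m}}$ (\rproposition{rela-mini-maxi}); your residue computation pinning the difference to $p-q$ inside $[p-2q+1,\,p-1]$ is the right local argument and the geometric-series cancellation $\sum_{j\geq1}\frac{p-q}{q}(q/p)^j=1$ closes the gap. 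The one point worth making explicit is that the greedy lex-max/lex-min branches from a state always exist and take the extremal outgoing letter at every step because $\Tpq$ has no dead-end (\rlemma{tpq-righ-exte}); with that said, the proof stands.
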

%

%%%%%%%%%%%%%%%%%%%%%%%%%%%%%%%%%%%%%%%%%%%%%%%%%%%%%%%%%%%%%%%%%%%%%%%%%%%%%%%%%%%%%%%%%%%%%%%%%%%
%%%%%%%%%%%%%%%%%%%%%%%%%%%%%%%%%%%%%%%%%%%%%%%%%%%%%%%%%%%%%%%%%%%%%%%%%%%%%%%%%%%%%%%%%%%%%%%%%%%
\subsection{Bottom and top words}

\rlemma{tpq-righ-exte} states that every state~$n$ of~$\Tpq$ is
the root of an infinite subtree.
We now turn our attention to the \owords that are the frontiers of
these subtrees.
Let us first call \emph{lower alphabet}, and denote by~$\Aq$, the set of
the smallest~$q$ integers: $\Aq=\{0,1,\ldots, q-1\}$.
%       \item \ldefinition{mini-alpha}

\begin{definition}\ldefinition{bot-top}
    \begin{enumerate}
       \item \ldefinition{mini-word}
      We call \emph{bottom word\/%in the lexicographic order
            \footnote{\emph{Bottom words} were called \emph{minimal words} in \cite{AkiyEtAl08}.}
      of~$n$}, and denote by~$\minword{n}$, the
      smallest \oword that labels a \opath of~$\Tpq$
      originating from~$n$.

      \item
      Let~$\minwords$ denote the set of the bottom words:
      $\minwords=\{ \minword{n}~|~ n\in\N \}$.
    \end{enumerate}
\end{definition}

\begin{example}
    One reads on \rfigure{l32} some bottom words in  base~$\td$:
  \begin{equation*}
    \minword{1} = 1 \xmd 0 \xmd 1 \xmd 1 \xmd 0 \xmd 0 \xmd 0 \cdots
    \quad\text{,}\text\quad\quad
    \minword{3} = 1 \xmd 1 \xmd 0 \xmd 0 \xmd 0 \cdots
    \quad\quad\text{and}\quad\quad
    \minword{4} = 0\xmd 0 \xmd 1 \xmd 0 \xmd 1 \cdots
  \end{equation*}
%
%   Since the path~$0\pathx{10}3$ ,~$1$,~$\minword{3}$ is a suffix of~$\minword{1}$.
\end{example}

Bottom words are characterised by the alphabet they are written on:

\begin{property}\lproperty{mini-mini}
 $\msp\minwords = \Wpq \cap \Aq^{\omega}\msp$.
%     Let~$\ow{w}$ be an \oword labelling a branch of~$\Tpq$.
%     It belongs to~$\minwords$ if and only if it belongs to~$\Aqo$.
\end{property}

This property will be used under the following form.

\begin{property}\lproperty{mini-nece}
Let~$n$ be in~$\N$ and~$u$ in~$\Aq^{*}$.
If
$\msp n\pathx{u}[\Tpq]m\msp$,
then~$u$ is a prefix of~$\minword{n}$.
%   Let~$u$ be a finite word over the lower alphabet~$\Aq$ and let~$n$ be an integer.
%   %
%   If~$u$ is the label of a path originating from~$n$, then~$u$ is the prefix of~$\minword{n}$.
\end{property}

From \rlemma{tpq-suff} and \rproperty{mini-mini} follows the next statement.

\begin{lemma}\llemma{minw-dens}
  The set~$\minwords$ is dense in~$\Aqo$.
\end{lemma}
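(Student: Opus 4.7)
The plan is to show that every basic open neighbourhood in $\Aqo$ intersects $\minwords$, which is enough for density since the sets of the form $u\cdot\Aqo$ with $u\in\Aq^{*}$ form a basis of the product topology on $\Aqo$.

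So fix an arbitrary word $u\in\Aq^{*}$. I would argue in two short steps. First, since $\Aq\subseteq\Ap$, the word $u$ belongs to $\Aps$, so by \rlemma{tpq-suff} there exist integers $n,m$ such that $n\pathx{u}[\Tpq]m$ is a path of $\Tpq$. Second, because $u$ lies in $\Aq^{*}$, \rproperty{mini-nece} applies to this path and gives that $u$ is a prefix of $\minword{n}$. In particular $\minword{n}\in u\cdot\Aqo$, and this bottom word is an element of $\minwords$ at distance at most $2^{-\wlen{u}}$ from any prescribed $\omega$-word starting with $u$.

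Because $u\in\Aq^{*}$ was arbitrary, every basic open set of $\Aqo$ meets $\minwords$, which gives the density statement.

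There is no real obstacle here: the proof is essentially a one-line combination of \rlemma{tpq-suff} (existence of a path labelled by any finite word in $\Tpq$) with \rproperty{mini-nece} (any such path whose label is in $\Aq^{*}$ must follow the bottom branch of its starting state). The only point that deserves attention is making sure that the path produced by \rlemma{tpq-suff} has its starting state in $\N$ (which it does, since the states of $\Tpq$ are exactly $\N$), so that $\minword{n}$ is well defined and belongs to $\minwords$.
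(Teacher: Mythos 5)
Your proof is correct and follows essentially the same route as the paper, which derives the lemma directly from \rlemma{tpq-suff} together with \rproperty{mini-nece} (the usable form of \rproperty{mini-mini}). The combination you describe --- any $u\in\Aq^{*}$ labels a path $n\pathx{u}m$ of $\Tpq$, and such a path forces $u$ to be a prefix of $\minword{n}$ --- is exactly the intended argument.
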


Symmetrically, we denote by~$\maxword{n}$ the \emph{top word
\footnote{\emph{Top words} were called \emph{maximal words} in \cite{AkiyEtAl08}.} of~$n$},
by~$\maxwords$ the set of the top words and call \emph{upper alphabet}
the alphabet~$\Amax=\{p-q,p-q+1,\ldots,p-1\}$.
Statements much similar to \rproperty{mini-mini}, \rproperty{mini-nece} and \rlemma{minw-dens} could
be made about the top words and the upper alphabet.

\begin{example}
    One reads on \rfigure{l32} some top words in  base~$\td$:
%   Below are given the top words of~$\mes1$,~$3$ and~$4$ in base~$\td$,
%   they are the label of the highest branches of~$\Ttd$ (\rfigure{l32})
%   originating from these states.
  \begin{equation*}
    \maxword{1} = 1 \xmd 2 \xmd 2 \xmd 1 \xmd 1 \xmd 1 \xmd 2 \cdots
    \quad\text{,}\quad\quad
    \maxword{3} = 1 \xmd 1 \xmd 2 \xmd 1 \xmd 2 \cdots
    \quad\quad\text{and}\quad\quad
    \maxword{4} = 2 \xmd 1 \xmd 1 \xmd 1 \xmd 2 \cdots
  \end{equation*}
%     We may see that~$\minword{4}$ is a suffix of~$\minword{1}$.
\end{example}
%
% We take now interest in the close relation betweenfollowing
The bottom word of~$(n+1)$ and the top word of~$n$ are related by the
function
$\mu\colon\Amax\rightarrow\Aq$ defined by
%
% First, we define the function~$\mu$ as follows.
\begin{equation}\lequation{defi-mu}
    \mu(c)=c-(p-q)
    \eqvrg
\end{equation}
%
% The function~$\mu$ is
and extended to a (letter-to-letter) morphism
from~$\Amaxs$ to~$\Aqs$, and from~$\Amaxo$ to~$\Aqo$.
% as a letter-to-letter morphism.
% \begin{align}
%   \mu(a_{k}a_{k-1}\cdots  a_0) ={}& \mu(a_{k})\mu(a_{k-1})\cdots\mu(a_0)  \\
%   \mu(a_0a_{1}\cdots) ={}& \mu(a_0)\mu(a_1)\cdots
% \end{align}
%
%   \begin{array}{lcll}
%     \mu&  & & \\
%         & c &\longmapsto &
%   \end{array}

%
\begin{lemma}\llemma{mini-to-maxi}
  For every integer~$n$,~$\minword{n+1}=\mu(\maxword{n})$.
\end{lemma}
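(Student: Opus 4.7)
The strategy is to exhibit, via the morphism $\mu$, a bijection between branches of $\Tpq$ starting at $n$ and labelled in $\Amaxo$ on the one hand, and branches starting at $n+1$ and labelled in $\Aqo$ on the other hand, and then to identify $\maxword{n}$ and $\minword{n+1}$ as the unique branches of each kind.

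The heart of the matter is a one-step calculation using the transition characterisation \requation*{tpq-path}. For $c\in\Amax$, writing $c=\mu(c)+(p-q)$ transforms the equation $q\xmd m=p\xmd n+c$ into $q\xmd(m+1)=p\xmd(n+1)+\mu(c)$. Hence
\[
n\pathx{c}[\Tpq]m \ \text{with}\ c\in\Amax \quad\Longleftrightarrow\quad (n+1)\pathx{\mu(c)}[\Tpq](m+1)\ \text{with}\ \mu(c)\in\Aq,
\]
and the converse substitution shows that every $\Aq$-labelled transition leaving $n+1$ is of this form. Iterating on the length and then passing to the limit (the set of branches is closed in the product topology), the letter-to-letter morphism $\mu$ induces a bijection between branches of $\Tpq$ from $n$ labelled in $\Amaxo$ and branches from $n+1$ labelled in $\Aqo$, mapping states $m\mapsto m+1$ along the way.

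It remains to identify the unique branch on each side. From any state $m$, there is \emph{exactly one} outgoing transition whose label lies in $\Aq$: the congruence $p\xmd m+a\equiv 0\pmod{q}$ has a unique solution in $\{0,\ldots,q-1\}$ because $\gcd(p,q)=1$. Therefore there is at most one branch starting at $n+1$ and labelled in $\Aqo$; existence follows by iterating the unique $\Aq$-transition, and \rproperty{mini-nece} identifies this branch as $\minword{n+1}$. The symmetric statement for $\Amax$, announced just above the lemma, identifies $\maxword{n}$ as the unique branch from $n$ labelled in $\Amaxo$. Combining these two uniqueness statements with the bijection yields $\minword{n+1}=\mu(\maxword{n})$.

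The only delicate point I anticipate is the symmetric treatment of top words: one must check, in parallel with \rproperty{mini-mini} and \rproperty{mini-nece}, that $\maxword{n}$ lies in $\Amaxo$ and that every $\Amax$-labelled path from $n$ is a prefix of $\maxword{n}$. This is routine once one observes that, dually, each state admits exactly one outgoing transition labelled in $\Amax$, by the same coprimality argument.
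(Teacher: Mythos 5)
Your proof is correct. The paper states this lemma without proof (it is inherited from the framework of \citet{AkiyEtAl08}), so there is no official argument to compare against, but your route is the natural one: the identity $q\xmd(m+1)=p\xmd(n+1)+\mu(c)$ derived from $q\xmd m=p\xmd n+c$ via \requation*{tpq-path} is exactly the right computation, and the coprimality of $p$ and $q$ does give exactly one outgoing transition labelled in $\Aq$ (resp.\ in $\Amax$) at every state, which pins down $\minword{n+1}$ via \rproperty{mini-nece} and $\maxword{n}$ via its announced dual. The one point you flag as delicate — that the unique $\Amax$-labelled successor is in fact the lexicographically largest available digit, so that the greedy branch is the $\Amaxo$-branch — is indeed the only thing to check, and it follows as you say because the set of admissible digits at a state is an arithmetic progression of common difference $q$ inside $\Ap$, whose largest element necessarily lies in $\{p-q,\ldots,p-1\}$.
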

\begin{proposition} \lproposition{rela-mini-maxi}
  Let~$n,m$ be two integers and let~$a$ be a letter of~$\Ap$ such that
    $n \pathx{a}[\Tpq] m$ and~$n \pathx{a\,{+}\,q}[\Tpq] m+1$.
%   \end{equation*}
  Then,~$\realval[b]{(a\,{+}\,q)\xmd\minword{m+1}}=\realval{a\xmd\maxword{m}}$.
\end{proposition}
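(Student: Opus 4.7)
The plan is to reduce the identity to a clean difference of two \realvalues and then evaluate a geometric series via \rlemma{mini-to-maxi}.

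First, I would unpack both sides using the defining formula \requation*{pq-real-val}. Writing $\maxword{m}=c_1\xmd c_2\cdots$ and $\minword{m+1}=b_1\xmd b_2\cdots$, one has
\begin{equation*}
  \realval[b]{(a\,{+}\,q)\xmd\minword{m+1}}
   \e=\e \frac{a+q}{p} \e+\e \frac{q}{p}\xmd\realval{\minword{m+1}}
  \qquad\text{and}\qquad
  \realval{a\xmd\maxword{m}}
   \e=\e \frac{a}{p} \e+\e \frac{q}{p}\xmd\realval{\maxword{m}}\eqpnt
\end{equation*}
Subtracting, the proposition is thus equivalent to the identity
\begin{equation*}
  \realval{\maxword{m}} \e-\e \realval{\minword{m+1}} \e=\e 1\eqpnt
\end{equation*}
Observe that the hypotheses $n\pathx{a}m$ and $n\pathx{a+q}m+1$ are used only to justify that $a+q\in\Ap$, so that the concatenation on the left-hand side is a legitimate word over~$\Ap$; they play no further role in the arithmetic.

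Second, I would apply \rlemma{mini-to-maxi}, which gives $\minword{m+1}=\mu(\maxword{m})$, so that $b_i=c_i-(p-q)$ for every~$i\geq1$. Substituting into the difference,
\begin{equation*}
  \realval{\maxword{m}}-\realval{\minword{m+1}}
  \e=\e \sum_{i\geq 1}\frac{c_i-b_i}{q}\left(\frac{q}{p}\right)^{\!i}
  \e=\e \frac{p-q}{q}\sum_{i\geq 1}\left(\frac{q}{p}\right)^{\!i}
  \e=\e \frac{p-q}{q}\cdot\frac{q/p}{1-q/p}
  \e=\e 1\eqpnt
\end{equation*}
The geometric series converges because $0<q/p<1$, and the final simplification is immediate.

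There is no real obstacle: the statement is essentially a bookkeeping identity between two particular series, and the only non-elementary input is \rlemma{mini-to-maxi} that provides the digitwise relation between~$\maxword{m}$ and~$\minword{m+1}$. The small subtlety is just to notice that the \realvalue factors cleanly as ``first-digit contribution plus $q/p$ times the tail,'' so that the contributions of the unknown \owords $\maxword{m}$ and $\minword{m+1}$ can be combined by a single term-by-term subtraction.
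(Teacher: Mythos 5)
Your proof is correct. The paper states this proposition without giving a proof (it appears among the properties recalled from the earlier literature), and your argument is the natural one: the reduction to $\realval{\maxword{m}}-\realval{\minword{m+1}}=1$ via the factorisation $\realval{d\,\ow{w}}=\frac{d}{p}+\frac{q}{p}\realval{\ow{w}}$, followed by \rlemma{mini-to-maxi} and the geometric series, is exactly what is needed, and your observation that the path hypotheses only serve to place the statement in context (they guarantee $a+q\in\Ap$ and are otherwise redundant with one another) is accurate.
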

%

%%%%%%%%%%%%%%%%%%%%%%%%%%%%%%%%%%%%%%%%%%%%%%%%%%%%%%%%%%%%%%%%%%%%%%%%%%%%%%%%%%%%%%%%%%%%%%%%%%%
%%%%%%%%%%%%%%%%%%%%%%%%%%%%%%%%%%%%%%%%%%%%%%%%%%%%%%%%%%%%%%%%%%%%%%%%%%%%%%%%%%%%%%%%%%%%%%%%%%%
%%%%%%%%%%%%%%%%%%%%%%%%%%%%%%%%%%%%%%%%%%%%%%%%%%%%%%%%%%%%%%%%%%%%%%%%%%%%%%%%%%%%%%%%%%%%%%%%%%%
\section{Span-words}
\lsection{spa-wor}%

%
% In this section, we introduce
The notion of span-word will be central in the proof of both
Theorems~I and~II via the construction of a new automaton denoted
by~$\Tpqp$ and obtained from~$\Tpq$ by enlarging, or restricting, the
alphabet.

% and show that,
% up to an alphabet change, the \olanguage of the span-words is accepted
% by~$\Tpq$.
%
% This result is itself, but give
%

%%%%%%%%%%%%%%%%%%%%%%%%%%%%%%%%%%%%%%%%%%%%%%%%%%%%%%%%%%%%%%%%%%%%%%%%%%%%%%%%%%%%%%%%%%%%%%%%%%%
%%%%%%%%%%%%%%%%%%%%%%%%%%%%%%%%%%%%%%%%%%%%%%%%%%%%%%%%%%%%%%%%%%%%%%%%%%%%%%%%%%%%%%%%%%%%%%%%%%%
% \subsection{Span-words and span-automaton}

%
\begin{definition}\ldefinition{bpq}
    Let~$\Bpq$ denote the set of the differences between letters from the
    upper alphabet and letters from the lower alphabet:
    \begin{equation*}
      \Bpq= \Amax - \Aq = \Defi{d\in\Z}{\ext\xmd c\in\Amax,\,
      \ext\xmd b\in\Aq\quantsmsp d = c-b}
\eqpnt
    \end{equation*}
\end{definition}

The alphabet~$\Bpq$ is the integer interval whose cardinal is the odd
integer~$(2\xmd q -1)$, whose largest element is~$(p-1)$.
% and whose
Its
`central element', called \emph{middle-point}, is~$p-q$:
\begin{equation*}
  \Bpq= \set{p-(2\xmd q-1),\ldots,(p-1)} \eqpnt
\end{equation*}
%
% The next properties directly follow.
%

\begin{property}\lproperty{bpq}

  \begin{enumerate}
%     \item \sublabel{pp.bpq-diff}
%       Tletters
    \item \sublabel{pp.bpq-maxi}
      $\Amax\subseteq\Bpq$.

    \item \sublabel{pp.bpq-equ}
      If~$p=(2\xmd q -1)$, then~$\Bpq=\Ap$.

    \item \sublabel{pp.bpq-smal}
    If~$p<(2\xmd q -1)$, then~$\Bpq \supsetneq \Ap$ and contains \textbf{negative} digits.

    \item \sublabel{pp.bpq-big}
      If~$p>(2\xmd q -1)$, then~$\Bpq \subsetneq \Ap$; more precisely,~$\Bpq$
        is the set of the largest~$(2\xmd q -1)$ digits of~$\Ap$.
  \end{enumerate}
\end{property}

\begin{definition}
    \ldefinition{dw-min-max}
    We denote by~$\dwplus$ and~$\dwminus$ the
    digitwise addition and subtraction of words of the same length respectively, that is,
    \begin{gather*}
      (a_{k} \cdots a_1 \xmd a_0) \dwplus (b_{k} \cdots b_1 \xmd b_0)
        = (a_{k}+b_{k})\cdots (a_1+b_1)(a_0+b_0)\quad;\\
      (a_{k} \cdots a_1 \xmd a_0) \dwminus (b_{k} \cdots b_1 \xmd b_0)
        = (a_{k}-b_{k})\cdots (a_1-b_1)(a_0-b_0)\quad.
    \end{gather*}
    Digitwise addition and subtraction of \owords are defined similarly.
\end{definition}

\begin{property}\lproperty{sub}
For any~$w$ in~$\Bpqs$, there exist~$u$ in~$\Aqs$ and~$v$
in~$\Amaxs$ such that~$w=v\dwminus u$.
\end{property}

%   %
%   There
%   such that; each letter~$d$ of~$\Bpq$ can be written as~$(c-b)$, with~$b\in \Aq$
%   and~$c\in\Amax$ defined as follows.
%   %
%   If~$d\geq p-q$, then we set~$b=0$ and~$c=d$, otherwise it holds~$p-(2\xmd q-1) \leq
%   d < p-q$ and we set~$b=q-1$ and~$c=d+(q-1)$.
%   %
%   (Other letters~$b'$ in~$\Aq$ and~$c'$ in~$\Amax$ satisfy~$d=c'-b'$.)
%   %
%

\begin{definition}
  \begin{enumerate}
  \item\sublabel{d.span-word}
    We call \emph{span-word\/%
    \footnote{%
              The denomination \emph{span-word} comes from the
              \realvalue of those \owords,
              and will be explained in \rsection{on-span} (\rdefinition{spans}).}
    of~$n$}, and denote by~$\spanword{n}$,
    the \oword~$\maxword{n}\dwminus\minword{n}$.
  \item\sublabel{d.span-words}
    We denote by~$\spanwords$ the set of all span-words:~$\spanwords=\set{\spanword{n}~|~n\in\N}$~.
  \end{enumerate}
\end{definition}
\begin{example}
  In base~$\td$, it reads:
  \begin{align*}
    \spanword{1} &= \maxword{1} \dwminus \minword{1}
                 =         (1 \xmd 2 \xmd 2 \xmd 1 \xmd 1 \xmd 1 \xmd 2 \cdots)
                  \dwminus (1 \xmd 0 \xmd 1 \xmd 1 \xmd 0 \xmd 0 \xmd 0 \cdots)
                 =          0 \xmd 2 \xmd 1 \xmd 0 \xmd 1 \xmd 1 \xmd 2 \cdots
    \\
    \spanword{3} &= \maxword{3} \dwminus \minword{3}
                 =          (1 \xmd 1 \xmd 2 \xmd 1 \xmd 2 \cdots)
                   \dwminus (1 \xmd 1 \xmd 0 \xmd 0 \xmd 0 \cdots)
                 =           0 \xmd 0 \xmd 2 \xmd 1 \xmd 2 \cdots
    \\
    \spanword{4} &= \maxword{4} \dwminus \minword{4}
                 =          (2 \xmd 1 \xmd 1 \xmd 1 \xmd 2 \cdots)
                   \dwminus (0 \xmd 0 \xmd 1 \xmd 0 \xmd 1 \cdots)
                 =          2 \xmd 1 \xmd 0 \xmd 1 \xmd 1 \cdots
  \end{align*}
\end{example}
Since bottom words belong to~$\Aqo$ and top words to~$\Amaxo$,
it follows:
% the next statement
% follows from the \rdefinition{bpq} of~$\Bpq$.
%

%
\begin{property}
  $\spanwords\subseteq\Bpqo$.
\end{property}
%

%

%
% %
% Moreover, the cardinal of~$\Bpq$ is~$(2\xmd q -1)$, an odd number.
% %
% The alphabet~$\Bpq$ hence has a middle-point,~$(p-q)$.
% %
% This middle-point will have a role later on in \rsection{rempl-later}.
% %

% %
% The denomination \emph{span-word} is a reference to the \realvalue of
% those \owords (\cf \rsection{on-span}).
% %
% Moreover, the following property follows directly from Definitions
% \rdefinition*{bpq}, \rdefinition*{dwmin} and \rdefinition*{span-word}.
%
% %
% \begin{property}\lproperty{span-word-eval}
%   For every integer~$n$, the \oword $\spanword{n}$ belongs to~$\Bpqo$.
% \end{property}
% %

%
% We now define an automaton which is shown to accept the span-words
% afterwards (\rtheorem{span-auto}).
%

%
\begin{definition}\ldefinition{tpqp}
    Let~$\Tpqp$ be the automaton defined by
  \begin{equation*}
    \Tpqp  = \aut{\,\Bpq,\,\N,\, 0,\, \taupq\,}\quantvrg
  \end{equation*}
  where~$\taupq$ is defined by \requation{tau} with domain
  restricted to~$\N\times\Bpq$.
\end{definition}

The transitions of~$\Tpqp$ are characterised by:
%
% (Note that it differs from \requation*{tpq-path} only by the alphabet in which~$a$ is taken.)
% %
  \begin{equation}\lequation{tpqp-path}
  \forall n,m \in \N \quantvrg \forall a\in\Bpq \quantsp n\pathx{a}[\Tpqp]m \iff q\xmd m  = p\xmd n + a\eqpnt
  \end{equation}
Using \requation*{tpqp-path}, it is a routine to show that \rlemma{tpq-val}
extends to~$\Tpqp$. %when~$\Bpq\subsetneq \Ap$.
% (
% the words in~$\behav{\Tpqp}$ are  \rlemma{tpq-val}
% is also true for~$\Tpqp$.
%

%
\begin{lemma}\llemma{tpqp-val}
      Let~$u$ be in~$\behav{\Tpqp}$.
  Then, $\val{u}$ is in~$\N$ and
  $\msp0\pathx{u}[\Tpqp]\val{u}\msp$.
%   Let~$u$ be a word of~$\behav{\Tpqp}$.
%   %
%   Its evaluation, $\val{u}$, is an integer and the run of~$u$ in~$\Tpqp$
%   reaches the state~$\val{u}$.
\end{lemma}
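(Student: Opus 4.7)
The plan is to imitate the proof of \rlemma{tpq-val}, running an induction on the length of $u$, with the characterisation of transitions in~$\Tpqp$ given by~\requation*{tpqp-path} playing exactly the role that~\requation*{tpq-path} played for~$\Tpq$. The point is that the transition relation and the evaluation map are essentially inverses of one another when one digit is stripped off, and this is a purely algebraic fact that does not depend on which digit alphabet is used.

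For the base case~$u=\epsilon$, the empty run gives~$0\pathx{\epsilon}[\Tpqp]0$ and~$\val{\epsilon}=0\in\N$. For the inductive step, write~$u=v\xmd a$ with~$a\in\Bpq$. Since~$\Tpqp$ is deterministic and has no final-state restriction, $\behav{\Tpqp}$ is prefix-closed, so~$v\in\behav{\Tpqp}$. By the induction hypothesis, $\val{v}\in\N$ and~$0\pathx{v}[\Tpqp]\val{v}$. The run of~$u$ then extends this by a transition~$\val{v}\pathx{a}[\Tpqp]m$ for some state~$m\in\N$, and \requation*{tpqp-path} gives
\begin{equation*}
  q\xmd m \,=\, p\xmd \val{v} \,+\, a \eqpnt
\end{equation*}
On the other hand, unfolding definition~\eqref{eq.defi-pi} on~$u=v\xmd a$ yields~$\val{u}=\frac{1}{q}\bigl(p\xmd\val{v}+a\bigr)$. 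Comparing the two expressions, $\val{u}=m$, which in particular belongs to~$\N$, and~$0\pathx{u}[\Tpqp]\val{u}$, completing the induction.

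There is no real obstacle: the only point that deserves a moment's attention is that~$\Bpq$ may contain negative digits when~$p<2\xmd q-1$, so \emph{a priori} the partial sums defining~$\val{u}$ could fail to be non-negative. However, this is never an issue here because the state reached is by definition an element of~$\N$: the equality~$q\xmd m=p\xmd\val{v}+a$ combined with~$m\in\N$ forces~$\val{v\xmd a}=m\geq 0$ automatically at every step.
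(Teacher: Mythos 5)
Your induction is exactly the ``routine'' argument the paper alludes to (it gives no explicit proof, merely noting that \rlemma{tpq-val} extends to~$\Tpqp$ via the transition characterisation~\requation*{tpqp-path}), and every step checks out: prefix-closedness, the identity $\val{v\xmd a}=\frac{1}{q}(p\xmd\val{v}+a)$, and the comparison with $q\xmd m=p\xmd\val{v}+a$. The closing remark about negative digits in~$\Bpq$ is a correct and welcome observation, but the proof is complete without further elaboration.
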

\begin{example}
  \begin{enumerate}[itemsep=0.5\thmspace, topsep=0.5\thmspace]
    \item
    The base~$\td$ satisfies~$p=(2\xmd q - 1)$, hence~$\Btd=A_3$.
    In this case,~$\Ttdp$ is simply equal to~$\Ttd$.
    \item
    The base~$\frac{4}{3}$  satisfies~${p<(2\xmd q - 1)}$, hence~$\Bqt$ contains~$A_4$
      plus some negative digits (here only one:~$-1$).
    Transitions are added to~$\Tqt$ in order to build~$\Tqtp$.
    These transitions are drawn with a thick line in \rfigure{t43p} (\pfigure{t43p}).

    \item
    The base~$\frac{7}{3}$ satisfies~${p>(2\xmd q - 1)}$, hence~$\Bst$ is a strict subset of~$A_4$.
    The transitions labelled by the smallest two letters of~$A_4$
    are deleted from~$\Tst$ in order to produce~$\Tstp$.
    These transitions are dashed in \rfigure{t73p}.
  \end{enumerate}
\end{example}
    \begin{figure}[h]
      \centering
      \includegraphics[scale=\TreeScale]{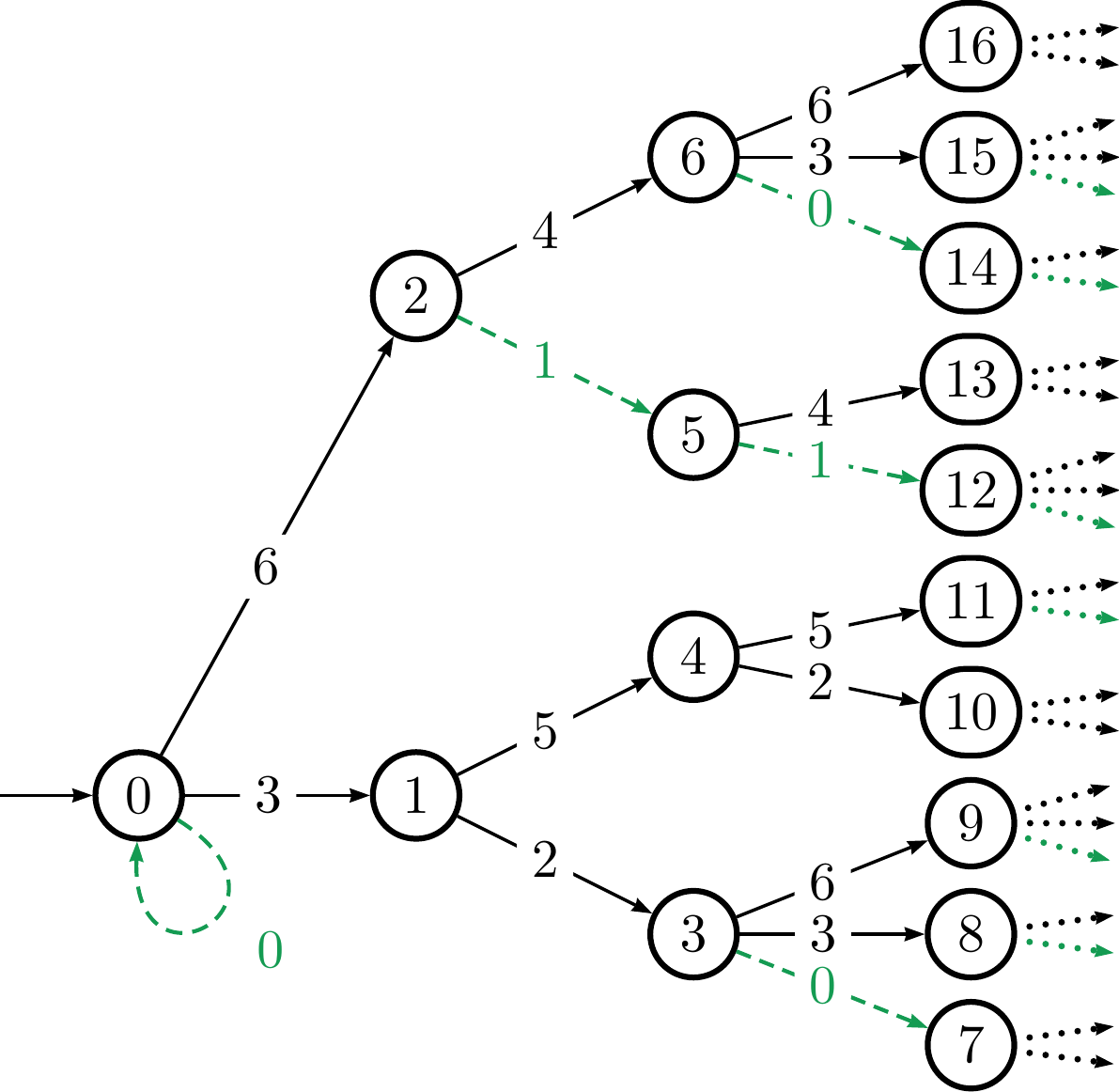}
      \caption{Construction of~$\Tstp$ from~$\Tst$, that is, deletion
      of the transitions labelled by the letters~$0$ and~$1$}
      \lfigure{t73p}
    \end{figure}%
The main result of the section states that~$\Tpqp$ accepts the
span-words, and more precisely reads as follows.
%
% The remainder of Section \thesection{} is dedicated to prove
% that~$\Tpqp$ accepts the
% span-words, as stated more precisely by next Theorem.
% %

%
\begin{theorem}\ltheorem{span-auto}\eqspacing$ \ibehav{\Tpqp} = \adh{\spanwords}$
\end{theorem}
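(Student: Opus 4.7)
The plan is to invoke \rlemma{beha-ibeh}, which reduces the \olanguage equality $\ibehav{\Tpqp} = \adh{\spanwords}$ to the conjunction of (i) $\Tpqp$ has no dead-end and (ii) $\behav{\Tpqp} = \pref{\spanwords}$. Point (i) is almost immediate: an outgoing transition from state $n$ in $\Tpqp$ is labelled by any $a \in \Bpq$ with $a \equiv -p\xmd n \pmod q$, and since $\Bpq$ is an interval of $2\xmd q - 1 \geq q$ consecutive integers, every residue class mod $q$ is hit, so each state has at least one outgoing edge.

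For the inclusion $\pref{\spanwords} \subseteq \behav{\Tpqp}$, I would fix $n$ and an integer $k$, and write $u \in \Aqs$ and $v \in \Amaxs$ for the length-$k$ prefixes of $\minword{n}$ and $\maxword{n}$; the corresponding prefix of $\spanword{n}$ is then $w = v \dwminus u$. Let $m_0 = m'_0 = n$, and let $m_0, m_1, \ldots, m_k$ (resp.\ $m'_0, m'_1, \ldots, m'_k$) denote the intermediate states of the $\Tpq$-run of $u$ (resp.\ of $v$) starting at $n$. Subtracting $q\xmd m_i = p\xmd m_{i-1} + u_i$ from $q\xmd m'_i = p\xmd m'_{i-1} + v_i$ gives $q(m'_i - m_i) = p(m'_{i-1} - m_{i-1}) + w_i$. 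The differences $d_i := m'_i - m_i$ are non-negative, since (by \rproposition{pq-ans}) $v_1 \cdots v_i$ is the lex-largest and $u_1 \cdots u_i$ the lex-smallest length-$i$ label of a $\Tpq$-path from $n$; hence $0 = d_0 \pathx{w_1} d_1 \pathx{w_2} \cdots \pathx{w_k} d_k$ is a valid run of $\Tpqp$ and $w \in \behav{\Tpqp}$.

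For the reverse inclusion $\behav{\Tpqp} \subseteq \pref{\spanwords}$, I would fix $w \in \behav{\Tpqp}$ of length $k$ ending at some state $d_k \in \N$, decompose it letterwise as $w = v \dwminus u$ with $u \in \Aqs$ and $v \in \Amaxs$ (by \rproperty{sub}), and exhibit a common starting state $n \in \N$ from which both $u$ and $v$ label $\Tpq$-runs. Iterating $q\xmd m_i = p\xmd m_{i-1} + x_i$ shows that a length-$k$ word $x$ labels a $\Tpq$-path from $n$ iff $p^k\xmd n + \sum_{j=1}^{k} p^{k-j}\xmd q^{j-1}\xmd x_j \equiv 0 \pmod{q^k}$, which (since $\gcd(p,q)=1$) pins down a single residue class of $n$ mod $q^k$. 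The classes for $u$ and $v$ differ only by $\sum_{j=1}^{k} p^{k-j}\xmd q^{j-1}\xmd w_j = q^k\xmd d_k \equiv 0 \pmod{q^k}$, hence coincide; any non-negative representative $n$ works, positivity of the intermediate states being automatic from $u_i, v_i \geq 0$. Finally, \rproperty{mini-nece} and its symmetric analogue for top words force $u$ to be a prefix of $\minword{n}$ and $v$ a prefix of $\maxword{n}$, so $w$ is a prefix of $\spanword{n}$.

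The main obstacle I expect is precisely this residue-class argument: one has to verify both that the set of valid starting states for a length-$k$ word is exactly one residue class modulo $q^k$ (so that the successive divisibility conditions collapse into the single global condition above), and that the discrepancy between the classes for $u$ and $v$ equals $q^k\xmd d_k$, hence vanishes mod $q^k$. Once that is in place, the no-dead-end check, the first inclusion, and the topological closure step through \rlemma{beha-ibeh} are essentially bookkeeping.
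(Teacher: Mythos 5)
Your proposal is correct, and its overall architecture is the one the paper uses: reduce the statement to the finite\nobreakdash-word equality $\behav{\Tpqp}=\pref{\spanwords}$ via \rlemma{beha-ibeh}, and prove both inclusions through the digitwise decomposition $w=v\dwminus u$ with $u\in\Aqs$, $v\in\Amaxs$. The difference is localised in the inclusion $\behav{\Tpqp}\subseteq\pref{\spanwords}$. The paper packages the whole correspondence into \rproposition{corr-word}, proved by induction on the length of the word from the linearity of $\taupq$ (\rlemma{tau-line}): given a path $0\pathx{w}j$ in $\Tpqp$ and \emph{any} $n$ with $n\pathx{u}m$ in $\Tpq$ (which exists by \rlemma{tpq-suff}), linearity immediately yields $n\pathx{v}(m+j)$, so $u$ and $v$ automatically share the starting state $n$. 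You instead solve explicitly for the admissible starting states of a length-$k$ word, showing they form a single residue class modulo $q^k$, and check that the classes for $u$ and $v$ differ by $q^k$ times the endpoint of the $\Tpqp$-run of $w$, hence coincide; this is in effect a quantitative re-derivation of \rlemma{tpq-futu} combined with the closed-form value of a path, and it is sound (the collapse of the successive divisibility conditions into one congruence mod $q^k$ works because $\gcd(p,q)=1$, and non-negativity of the intermediate states follows from $u_i,v_i\geq 0$). The paper's linearity argument is cleaner and, importantly, is stated with the extra offset parameter $i$ (\rtheorem{span-auto++}), which is reused later for the transducer $\Dpq$; your computation only covers the case $i=0$ needed here. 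Your explicit no-dead-end check for $\Tpqp$ (every residue class mod $q$ is hit because $\Bpq$ has $2\xmd q-1\geq q$ consecutive elements) is a welcome addition that the paper leaves implicit.
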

The proof essentially boils down to the linearity of~$\taupq$
(the transition function of~$\Tpq$ and~$\Tpqp$) as
expressed by the next lemma, which follows immediately
from \requation*{tpq-path} and \requation*{tpqp-path}.
\begin{lemma}\llemma{tau-line}
  Let~$n,m$ in~$\N$ and~$x,y$ in~$\Z$ and suppose that~$\taupq(n,x)$ is defined.
  Then,~$\taupq(m,y)$ is defined if and only if~$\taupq(n+m,\,x+y)$
  is defined.

%   Any two of the following imply the third.
%   \begin{enumerate}
%     \item~$\taupq(n,x)$ is defined.
%     \item.
%     \item.
%   \end{enumerate}
%   %
  In this case moreover, $\taupq(n+m,\,x+y)=\taupq(n,x)+\taupq(m,y)$.
\end{lemma}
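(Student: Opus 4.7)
The plan is to unfold \rdefinition{taupq} directly: by \requation{tau}, $\taupq(n,x)$ is defined precisely when $n\xmd p + x \equiv 0 \pmod{q}$ (with the result being the integer $(n\xmd p+x)/q$), and similarly for the pairs $(m,y)$ and $(n+m,\,x+y)$. So the whole statement reduces to a routine check that these three divisibility conditions behave additively.

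The key observation is the trivial identity
\begin{equation*}
  (n+m)\xmd p + (x+y) \;=\; (n\xmd p + x) + (m\xmd p + y)\eqpnt
\end{equation*}
Assuming $\taupq(n,x)$ is defined, the first summand on the right is divisible by $q$; hence $(n+m)\xmd p + (x+y)$ is divisible by $q$ if and only if $m\xmd p + y$ is, which is precisely the condition for $\taupq(m,y)$ to be defined. This handles the biconditional in both directions.

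When both sides are defined, the same identity divided by $q$ yields
\begin{equation*}
\taupq(n+m,\,x+y)
\;=\; \frac{(n+m)\xmd p + (x+y)}{q}
\;=\; \frac{n\xmd p + x}{q} + \frac{m\xmd p + y}{q}
\;=\; \taupq(n,x) + \taupq(m,y)\eqvrg
\end{equation*}
which is the claimed additivity.

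The only point that requires a word of care — and that I would single out as the one subtlety in an otherwise routine proof — is the codomain of $\taupq$: according to \rdefinition{taupq} the values lie in $\N$, not merely in $\Z$, so whenever we assert that $\taupq(n+m,\,x+y)$ is defined we must check that the quotient is non-negative. This is automatic from the additivity formula itself, since the right-hand side is a sum of two elements of $\N$. Hence no separate non-negativity argument is needed, and the lemma follows.
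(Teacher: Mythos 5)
Your proof is correct and is essentially the paper's own argument: the paper gives no separate proof of this lemma, stating only that it ``follows immediately'' from the characterisation $q\xmd m = p\xmd n + a$ of the transitions, i.e.\ from exactly the identity $(n+m)\xmd p+(x+y)=(n\xmd p+x)+(m\xmd p+y)$ that you use. One caveat about your closing aside: the additivity formula makes non-negativity of the quotient automatic only when passing from $\taupq(n,x)$ and $\taupq(m,y)$ to $\taupq(n+m,\,x+y)$ --- in the converse direction $\taupq(m,y)$ is a \emph{difference} of naturals and could a priori be negative --- but this is harmless here because the definedness condition in \requation{tau} is divisibility alone, which your main argument handles completely in both directions.
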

%

%
% Let us recall that a letter of~$\Amax$ minus a letter of~$\Aq$
% necessarily yields a letter of~$\Bpq$.
%

\begin{proposition}\lproposition{corr-word}
  Let~$u$ be in~$\Aqs$ and~$n$ and~$m$ in~$\N$ such that
  $n\pathx{u} m$ in~$\Tpq$.
  Let~$v$ be in~$\Amaxs$ of the same length as~$u$ and~$i$ and~$j$
  in~$\N$.
  Then:
  \begin{equation}
(n+i)\pathx{v}[\Tpq] (m+j)
\e\Longleftrightarrow\e
i\pathx{v\dwminus u}[\Tpqp] j
\eqpnt
\label{eq.corr-word}
% \notag
  \end{equation}
\end{proposition}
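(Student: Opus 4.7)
The plan is to prove the equivalence by induction on the common length~$k$ of~$u$ and~$v$, with the linearity of~$\taupq$ given by \rlemma{tau-line} as the sole engine of the argument. As a preliminary remark, since each letter of~$u$ lies in~$\Aq$ and each letter of~$v$ lies in~$\Amax$, the word~$v\dwminus u$ has letters in~$\Amax-\Aq=\Bpq$ by \rdefinition{bpq}, so the right-hand side is well-posed as a path in~$\Tpqp$.

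For the base case~$k=1$, I would write $u=a$ with $a\in\Aq$ and $v=c$ with $c\in\Amax$. The hypothesis $n\pathx{a}[\Tpq] m$ says that $\taupq(n,a)=m$ is defined. Applying \rlemma{tau-line} with $x=a$ and $y=c-a$ gives that $\taupq(i,c-a)$ is defined if and only if $\taupq(n+i,c)$ is defined, and, in that case, $\taupq(n+i,c)=m+\taupq(i,c-a)$. Hence $(n+i)\pathx{c}[\Tpq] (m+j)$ holds exactly when $\taupq(i,c-a)=j$, that is, when $i\pathx{c-a}[\Tpqp] j$, which is the desired equivalence at length one.

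For the inductive step I would factor $u=u'a$ and $v=v'c$ with $a\in\Aq$ and $c\in\Amax$, and split the path $n\pathx{u}[\Tpq] m$ as $n\pathx{u'}[\Tpq] m'\pathx{a}[\Tpq] m$. Any run $(n+i)\pathx{v}[\Tpq] (m+j)$ factors through some intermediate state $m''$, and the induction hypothesis applied to $u'$ and $v'$ rewrites $(n+i)\pathx{v'}[\Tpq] m''$ as $m''=m'+j'$ for some $j'$ with $i\pathx{v'\dwminus u'}[\Tpqp] j'$. The base case applied at $m'\pathx{a}[\Tpq] m$ (with $m'$ and $j'$ in the roles of $n$ and $i$) then rewrites $(m'+j')\pathx{c}[\Tpq] (m+j)$ as $j'\pathx{c-a}[\Tpqp] j$. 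Chaining the two equivalences yields $i\pathx{v\dwminus u}[\Tpqp] j$, as required.

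The substantive content is entirely contained in the base case, which is a direct repackaging of \rlemma{tau-line}; the inductive step is essentially bookkeeping. The only minor point to keep track of is that the intermediate state~$m''$ in an arbitrary run of~$v$ from~$n+i$ is automatically of the shape~$m'+j'$ — precisely what the induction hypothesis delivers — and that the appeal to the base case at~$m'$ is legitimate because~$m'\pathx{a}[\Tpq] m$ is an actual transition of~$\Tpq$, so the linearity of~$\taupq$ applies verbatim.
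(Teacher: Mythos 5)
Your proof is correct and follows essentially the same route as the paper's: an induction on the common length of~$u$ and~$v$ whose entire substance is the single-letter case, itself a direct application of \rlemma{tau-line}, the only (immaterial) difference being that you peel off the last letter where the paper peels off the first. The one point you flag — that the intermediate state reached from~$n+i$ after reading~$v'$ is automatically of the form~$m'+j'$ with~$j'$ in~$\N$ — is treated with the same brevity in the paper and follows from the additivity of~$\taupq$ together with the fact that its values lie in~$\N$, so there is no gap relative to the published argument.
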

% \begin{proposition}\lproposition{corr-word}
%   Let~$u$ be in~$\Aqs$ and~$v$ in~$\Amaxs$ of the same length.
% %   Let~$u$ and~$v$ be two words of the same length
% %   respectively over the alphabets~$\Aq$ and~$\Amax$.
%   %
%   Let~$n,i,j,m$ be four integers.
%   %
%   Any two of the following conditions imply the third:
%   %
%   \begin{enumerate}
%     \item $n\pathx{u} m$ in~$\Tpq$
%     \item $i\pathx{v\dwminus u} j$ in~$\Tpqp$
%     \item $(n+i)\pathx{v} (m+j)$ in~$\Tpq$
%   \end{enumerate}
% \end{proposition}
\begin{proof}
First, the statement holds if~$\wlen{u}=\wlen{v}=1$:
~$u$ is then reduced to one letter~$b$ of~$\Aq$,
$v$ to one letter~$c$
of~$\Amax$,
and~$v\ominus u$ to the letter~$(c-b)$ which belongs to~$\Bpq$.
By hypothesis,  $\taupq(n,b)$ is defined and equal to~$m$, and
\rlemma{tau-line} yields exactly \requation{corr-word}.
%   Items \enumstyle{a}, \enumstyle{b} and, \enumstyle{c} are then respectively equivalent
%   to the following.
%   \begin{enumerate}[label=\enumstyle{\alph{*}'}]
%     \item $\taupq(n,b)$ is defined and equal to~$m$
%       \hfill (equiv.\@ to \enumstyle{a} since $b\in\Aq\subseteq\Ap$)
%     \item $\taupq(i,\,c-b)$ is defined and equal to~$j$
%       \hfill (equiv.\@ to \enumstyle{b} since $(c-b)\in\Bpq$)
%     \item $\taupq(n+i,\,c)$ is defined and equal to~$m+j$
%       \hfill (equiv.\@ to \enumstyle{c} since $c\in\Amax\subseteq\Ap$)
%   \end{enumerate}
%   %
%   Applying \rlemma{tau-line} concludes the case~$\wlen{u}=\wlen{v}=1$.
%   %

  \bigskip

The case~$u=v=\epsilon$ is trivial.
Let us suppose that~$u=b\xmd u'$, $v = c\xmd v'$ and that
\begin{equation}
n\pathx{b}[\Tpq] n' \pathx{u'}[\Tpq] m
\eqpnt
% \label{eq.corr-word}
\notag
\end{equation}
If
$\msp\displaystyle{i\pathx{c-b}[\Tpqp]i'\pathx{v'\dwminus u'}[\Tpqp]j}\msp$
then
$\msp\displaystyle{n+i\pathx{c}[\Tpq]n'+i'}\msp$ and
$\msp\displaystyle{n'+i'\pathx{v}[\Tpq]m+j}\msp$, and  hence\linebreak
$\msp\displaystyle{n+i\pathx{c\xmd v'}[\Tpq]m+j}\msp$.
And Conversely, if
~$\msp\displaystyle{n+i\pathx{c}[\Tpq]n'+i'\pathx{v'}[\Tpq]m+j}\msp$~
then
$\msp\displaystyle{i\pathx{c-b}[\Tpqp]i'}\msp$ and
$\msp\displaystyle{i'\pathx{v\dwminus u}[\Tpqp]j}\msp$, and hence
$\msp\displaystyle{i\pathx{c\xmd v'\dwminus b\xmd u}[\Tpqp]j}\msp$.
\end{proof}
\begin{theorem}\ltheorem{span-auto++}
Let~$i$ be an integer and~$w$ a word in~$\Bpqs$.
The following are equivalent.
\begin{enumerate}
\item There exists an integer~$j$ such that~$\mes i\pathx{w} j\mes$
is a path of~$\Tpqp$.
\item There exists an integer~$n$ such that $w$ is a prefix
of~$\mes\maxword{n+i}\dwminus\minword{n}\mes$.
\end{enumerate}
\end{theorem}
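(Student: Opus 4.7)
The plan is to reduce the equivalence to the finite-prefix version already established in Proposition~\rproposition{corr-word}, after extending the ambient arena so that states may lie in~$\Z$ rather than only in~$\N$. The key point is that Lemma~\rlemma{tau-line} already expresses the additive behaviour of~$\taupq$ in terms of the group~$\Z$: a statement $i \pathx{w} j$ is merely shorthand for a chain of divisibilities~$q\,i_\ell = p\,i_{\ell-1} + w_\ell$, where $w = w_1 \cdots w_k$, and this chain remains meaningful when the~$i_\ell$ are integers of arbitrary sign. The statement \emph{and} the proof of Proposition~\rproposition{corr-word} then transfer verbatim to such integer-state paths; I shall call this variant the \emph{extended} Proposition~\rproposition{corr-word}.

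For the implication (2)~$\Rightarrow$~(1), let $n$ be as in the hypothesis; implicitly $n \geq 0$ and $n + i \geq 0$ so that $\minword{n}$ and $\maxword{n+i}$ are defined. Set $k = \wlen{w}$ and let $u$, $v$ be the length-$k$ prefixes of $\minword{n}$ and $\maxword{n+i}$ respectively, so that $u \in \Aqs$, $v \in \Amaxs$, and $w = v \dwminus u$. By the very definition of bottom and top words, there exist $m, m' \in \N$ with $n \pathx{u}[\Tpq] m$ and $(n+i) \pathx{v}[\Tpq] m'$. Applying the extended Proposition~\rproposition{corr-word} yields $i \pathx{w}[\Tpqp] (m' - m)$, and $j := m' - m \in \Z$ is the required integer.

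For the implication (1)~$\Rightarrow$~(2), use Property~\rproperty{sub} to factor $w = v \dwminus u$ with $u \in \Aqs$, $v \in \Amaxs$, both of length $k$. By Lemma~\rlemma{tpq-suff}, $u$ labels some path of $\Tpq$, say starting at a state $n_0 \in \N$; Lemma~\rlemma{tpq-futu->} then guarantees that $u$ labels a path of~$\Tpq$ starting at \emph{every} non-negative integer congruent to $n_0$ modulo~$q^k$. Choose such a starting state $n$ large enough that $n + i \geq 0$ and that the arrival states of the induced $v$-path (which, by Lemma~\rlemma{tau-line}, grow linearly with $n$) all lie in~$\N$. Writing $n \pathx{u}[\Tpq] m$, an application of the extended Proposition~\rproposition{corr-word} to $i \pathx{w}[\Tpqp] j$ and $n \pathx{u}[\Tpq] m$ produces a path $(n+i) \pathx{v}[\Tpq] (m+j)$ lying entirely in~$\N$, hence a genuine path of $\Tpq$. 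Property~\rproperty{mini-nece} (together with its top-word analogue) then forces $u$ to be a prefix of~$\minword{n}$ and $v$ to be a prefix of~$\maxword{n+i}$, so that $w = v \dwminus u$ is a prefix of~$\maxword{n+i} \dwminus \minword{n}$, as required.

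The main obstacle is the bookkeeping in direction (1)~$\Rightarrow$~(2): the state $n$ must be chosen simultaneously in the correct residue class modulo~$q^k$, large enough for $n + i \geq 0$, and large enough that every intermediate state of the induced $v$-path is non-negative — none of which is problematic individually, but they must hold together. Lemma~\rlemma{tpq-futu->} supplies infinitely many admissible residues and Lemma~\rlemma{tau-line} makes the growth of the arrival states linear in~$n$, so such an~$n$ is readily found.
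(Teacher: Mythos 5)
Your argument is, at bottom, the same as the paper's: both directions rest on writing $w=v\dwminus u$ with $u\in\Aqs$ and $v\in\Amaxs$ (via \rproperty{sub} in one direction, via taking prefixes of $\minword{n}$ and $\maxword{n+i}$ in the other), on transferring between the $\Tpqp$-path labelled by $w$ and the pair of $\Tpq$-paths labelled by $u$ and $v$ through \rproposition{corr-word}, and on \rproperty{mini-nece} together with its top-word analogue. Where you diverge is the detour through a $\Z$-state extension of \rproposition{corr-word}, and that detour is precisely where a genuine loose end appears. In the direction from (b) to (a) you conclude that $j:=m'-m$ lies in $\Z$ and declare it ``the required integer''; but statement (a) asserts that $i\pathx{w}j$ is a path \emph{of $\Tpqp$}, whose state set is $\N$, so you must still check that $j\geq 0$ and, more importantly, that every intermediate difference-state along the $w$-labelled path is non-negative. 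This is not vacuous: when $p<2\xmd q-1$ the alphabet $\Bpq$ contains negative digits, so non-negativity cannot be read off the labels of $w$. The paper settles the point in one line --- since $n+i\geq n$, the endpoint of the ($v$-labelled, maximal) path from $n+i$ dominates the endpoint of the ($u$-labelled, minimal) path from $n$, whence $m'\geq m$; the same comparison at each depth keeps all difference-states in $\N$, and this is exactly what allows the \emph{unextended} \rproposition{corr-word}, stated for $i,j\in\N$, to be applied directly.

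Once the proposition is used as stated, the extension to $\Z$ becomes unnecessary, and so does the bookkeeping you single out as the main obstacle in the direction from (a) to (b): there $i$ and $j$ are states of $\Tpqp$, hence already non-negative, and any $n$ with $n\pathx{u}[\Tpq]m$ supplied by \rlemma{tpq-suff} works, because every state of the resulting $v$-labelled path is a sum of a state of a genuine $\Tpq$-path and a state of a genuine $\Tpqp$-path, hence automatically in $\N$. The selection of a ``large enough'' $n$ in a prescribed residue class via \rlemma{tpq-futu->} is therefore harmless but superfluous. Close the non-negativity gap in the direction from (b) to (a); the rest of the argument is sound and matches the paper's.
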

\begin{proof} $\enumstyle{a}\!\Rightarrow\!\enumstyle{b}$.
%   Let~$w$ be a word and~$j$ an integer such that~$i\pathx{w}j$ is a path of~$\Tpqp$.
  %
Let~$u$ in~$\Aqs$ and~$v$ in~$\Amax^*$ such that~$w=v\dwminus u$
(\rproperty{sub}).  %

Since every word in~$\Aps$ labels a path of~$\Tpq$ (\rlemma{tpq-suff}),
  there exist~$n$ and~$m$ in~$\N$ such that~$n\pathx{u} m$.
By hypothesis, the path $i\pathx{w} j$ is in~$\Tpqp$, and by the choice of~$u$
and~$v$,
%   it holds~$w=v\dwminus u$, applying
\rproposition{corr-word} yields that ${(n+i)\pathx{v} (m+j)}$.
  %
%   Since~$u$ is over the lower alphabet, it is a prefix of~$\minword{n}$ (\rproperty{mini-mini}).
%   %
%   Similarly,~$v$ is necessarily a prefix of~$\maxword{n+i}$.
Since~$u$ is in~$\Aqs$, it is a prefix of~$\minword{n}$
(\rproperty{mini-nece}).
  Similarly,~$v$ is a prefix of~$\maxword{n+i}$.
  Hence,~$w=v\dwminus u$ is a prefix of~${\maxword{n+i}\dwminus\minword{n}}$.

  \bigskip

$\enumstyle{b}\! \Rightarrow \!\enumstyle{a}$.
  Let~$w$ be a prefix of~$\maxword{n+i}\dwminus\minword{n}$.
  %
%   We write~$u,v$ the respective prefixes (of length~$\wlen{w}$)
%   of~$\minword{n}$ and~$\maxword{n+i}$, hence it holds~$w=v\dwminus u$
%   (and~$v=u\dwplus w$).
  %
%   Moreover, we write~$v=w\dwminus u$, a word of~$\Bpqs$ which is the
%   prefix of length~$w$ of the span-word of~$n$.
We write~$u$ and~$v$ for the prefixes of length~$\wlen{w}$
  of~$\minword{n}$ and~$\maxword{n+i}$ respectively.
  Hence it holds~$w=v\dwminus u$
  (and~$v=u\dwplus w$).
  We denote by~$m$ and~$m'$ the endpoints of the paths~$n\pathx{u} m$
  and~$(n+i)\pathx{v} m'$ of~$\Tpq$.
  %
%   Since~$(n+i) \geq n$, it holds~$m' \geq m$ and we denote by~$j$
%   the integer~$m'-m$.
%   %
%   Applying \rproposition{corr-word} yields the existence of the path~$i\pathx{w} j$ in~$\Tpqp$.
  Since~$(n+i) \geq n$, it holds~$m' \geq m$ and we write~$j=m'-m$.
  \rproposition{corr-word} yields the existence of the
  path~$\msp i\pathx{w} j\msp$ in~$\Tpqp$.
\end{proof}

\begin{corollary}\lcorollary{span-auto++}
For every~$n$ and~$i$ in~$\N$, the \oword
$\msp\ow{u}=\maxword{n+i}\dwminus\minword{n}\msp$ is the label of a
branch of~$\Tpqp$ originating in state~$i$.
\end{corollary}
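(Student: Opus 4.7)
The plan is to obtain \rcorollary{span-auto++} as a direct consequence of \rtheorem{span-auto++} applied prefix by prefix to the \oword $\ow{u} = \maxword{n+i}\dwminus\minword{n}$, the finite paths obtained being then glued together into a single branch by determinism of $\Tpqp$.

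First I would fix~$n$ and~$i$ in~$\N$ and, for every integer~$k\geq0$, denote by~$w_k$ the prefix of length~$k$ of~$\ow{u}$. By the very definition of digitwise subtraction, $w_k$ is a prefix of~$\maxword{n+i}\dwminus\minword{n}$, so condition~\enumstyle{b} of \rtheorem{span-auto++} is satisfied with this same~$n$. Applying the implication $\enumstyle{b}\Rightarrow\enumstyle{a}$ of \rtheorem{span-auto++} then yields, for every~$k$, an integer~$j_k$ such that $\msp i\pathx{w_k}j_k\msp$ is a path of~$\Tpqp$.

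Next I would invoke the determinism of~$\Tpqp$: since~$\Tpqp$ is an automaton, there is at most one path starting at~$i$ with a given label, so for any~$k<k'$ the path $i\pathx{w_k}j_k$ must be the initial segment of the path $i\pathx{w_{k'}}j_{k'}$. The sequence of these compatible finite paths therefore concatenates into a unique infinite run originating in~$i$ and labelled by the \oword~$\ow{u}$, that is, a branch of~$\Tpqp$ starting at~$i$.

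There is no real obstacle here; the only subtle point to keep in mind is that, contrary to~$\Tpq$, the automaton~$\Tpqp$ may have dead-ends (this happens precisely when~$p>2\xmd q-1$, where transitions of~$\Tpq$ are deleted to form~$\Tpqp$), so one cannot a priori invoke a König-style argument to extend a finite run indefinitely. The previous argument sidesteps this issue: \rtheorem{span-auto++} guarantees that \emph{along the specific labels coming from~$\ow{u}$} a path from~$i$ exists at every finite length, which is exactly what is needed to build the branch.
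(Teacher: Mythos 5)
Your proof is correct and is exactly the argument the paper leaves implicit: the corollary is presented as an immediate consequence of \rtheorem{span-auto++}, obtained precisely as you do by applying the implication $\enumstyle{b}\Rightarrow\enumstyle{a}$ to every finite prefix of~$\ow{u}$ and gluing the resulting paths by determinism of~$\Tpqp$. One small aside in your last paragraph is inaccurate --- $\Tpqp$ in fact never has dead-ends, since $\Amax\subseteq\Bpq$ guarantees that every state keeps the outgoing transition labelled by the first letter of its top word --- but your argument does not rely on this point anyway.
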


% applying
\rtheorem{span-auto} is the direct consequence of \rtheorem{span-auto++}
with~$i=0$, together with \rlemma{beha-ibeh}.
%

%%%%%%%%%%%%%%%%%%%%%%%%%%%%%%%%%%%%%%%%%%%%%%%%%%%%%%%%%%%%%%%%%%%%%%%%%%%%%%%%%%%%%%%%%%%%%%%%%%%
%%%%%%%%%%%%%%%%%%%%%%%%%%%%%%%%%%%%%%%%%%%%%%%%%%%%%%%%%%%%%%%%%%%%%%%%%%%%%%%%%%%%%%%%%%%%%%%%%%%
%%%%%%%%%%%%%%%%%%%%%%%%%%%%%%%%%%%%%%%%%%%%%%%%%%%%%%%%%%%%%%%%%%%%%%%%%%%%%%%%%%%%%%%%%%%%%%%%%%%
\section{On the successor function for bottom words}
\lsection{suc-fun}

%
% In Section~\thesection{}, we study
We now consider
the function~$\Der$ that maps the bottom word
of~$n$ to the bottom word of~$n+1$.
This function is related to span-words by the following.
\begin{itemize}
  \item The span-word of~$n$ is the digitwise difference of the top word of~$n$
        and bottom word of~$n$.
        In some sense, it is a way to transform the later into the former.
  \item The letter-to-letter morphism~$\mu$ (previously defined
    in~\requation*{defi-mu}) maps, for all~$n$,
  the top word of~$n$ to the bottom word of~$n+1$.
\end{itemize}
Using these facts, we define in \rsection{dpq} a label-replacement
function~$\rempfun$,
which we apply to~$\Tpqp$ and obtain a transducer~$\Dpq$.
Finally we show \rtheorem{dpq=xi}, restated below.
\begin{falsetheorem}{\rtheorem*{dpq=xi}}
  Let~$p,q$ be two coprime integers such that~$p>q>1$.
  The infinite transducer~$\Dpq$ realises the continuous extension
  of~$\Der$.
\end{falsetheorem}
%

%%%%%%%%%%%%%%%%%%%%%%%%%%%%%%%%%%%%%%%%%%%%%%%%%%%%%%%%%%%%%%%%%%%%%%%%%%%%%%%%%%%%%%%%%%%%%%%%%%%
%%%%%%%%%%%%%%%%%%%%%%%%%%%%%%%%%%%%%%%%%%%%%%%%%%%%%%%%%%%%%%%%%%%%%%%%%%%%%%%%%%%%%%%%%%%%%%%%%%%
\subsection{The function~$\Der$}
\lsection{deri}

\begin{definition}\ldefinition{deri}
%   Let~$\Der$ be the function~$\minwords\rightarrow\minwords$ that
%   maps for every~$n$,~$\minword{n}$ to~$\minword{n+1}$.
  Let~$\Der\colon\minwords\rightarrow\minwords$ be the function that
  maps~$\minword{n}$ onto~$\minword{n+1}$ for every~$n$.
%   ~$\Aqo\rightarrow \Aqo$ whose domain is~$\minwords$
%   and defined by~$\Der(\minword{n})=\minword{n+1}$, for every integer~$n$.
\end{definition}

%
% Using \rlemma{tpq-futu}, one can show that~$\Der$ is ``letter-to-letter'' as
% stated below.
The function~$\Der$ is ``letter-to-letter'', or ``on-line'' and
``real-time'', as stated by the following.

\begin{lemma}\llemma{xi-ltl}
  Let~$n$ and~$m$ be two integers.
  For every integer~$i$, the prefixes of length~$i$ of~$\mes\minword{n}$ and of~$\mes\minword{m}$
    are equal if and only if the prefixes of length~$i$ of~$\Der(\minword{n})$ and of~$\Der(\minword{m})$ are.
\end{lemma}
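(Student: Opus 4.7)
The plan is to reduce the statement to a direct application of \rlemma{tpq-futu}, relying on \rproperty{mini-nece} as the bridge: any word in $\Aqs$ that labels a path from a state $n$ in $\Tpq$ is automatically a prefix of $\minword{n}$. In particular, since $\minword{n}$ and $\Der(\minword{n})=\minword{n+1}$ both lie in $\Aqo$, their length-$i$ prefixes are the unique words of $\Aqs$ of length $i$ labelling a path in $\Tpq$ from $n$ and from $n+1$ respectively.

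For the forward direction, I would assume that the length-$i$ prefixes of $\minword{n}$ and $\minword{m}$ coincide, say with a word $u \in \Aqs$ of length $i$. Then $u$ labels a path from $n$ and from $m$ in $\Tpq$, so \rlemma{tpq-futu<-} yields $n \equiv m \pmod{q^i}$, hence also $n+1 \equiv m+1 \pmod{q^i}$. Let $v \in \Aqs$ be the length-$i$ prefix of $\minword{n+1} = \Der(\minword{n})$; it labels a path from $n+1$ in $\Tpq$, so \rlemma{tpq-futu->} implies that it also labels a path from $m+1$. Since $v$ lies in $\Aqs$, \rproperty{mini-nece} forces $v$ to be a prefix of $\minword{m+1} = \Der(\minword{m})$; having length exactly $i$, it is the sought length-$i$ prefix.

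The backward implication is symmetric, obtained by applying the very same argument to the pair $(n+1, m+1)$ in place of $(n, m)$: agreement of the length-$i$ prefixes of the images yields $n+1 \equiv m+1 \pmod{q^i}$, whence $n \equiv m \pmod{q^i}$, and the length-$i$ prefix of $\minword{n}$ is transported by \rlemma{tpq-futu->} and \rproperty{mini-nece} onto the matching length-$i$ prefix of $\minword{m}$.

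I do not foresee a real obstacle. The only point of care is the distinction between "prefix" and "\emph{the} length-$i$ prefix"; it is resolved at once by observing that the word transported by \rlemma{tpq-futu} has length exactly $i$, so, being a prefix of a bottom word, it must coincide with its length-$i$ prefix.
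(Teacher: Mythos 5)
Your proof is correct and follows essentially the same route as the paper's: deduce $n\equiv m\ [q^i]$ from the shared prefix via \rlemma{tpq-futu<-}, transfer the prefix of $\minword{n+1}$ to a path from $m+1$ via \rlemma{tpq-futu->}, and conclude with \rproperty{mini-nece}, the backward direction being symmetric. No issues.
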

\begin{proof}
%   We write~$u,v\in\Aqs$ for the prefixes of length~$i$ of~$\minword{n},\minword{m}$
%   and~$u',v'\in\Aqs$ for those of~$\minword{(n+1)},\minword{(m+1)}$.
Let~$u$ and~$v$ be the prefixes of length~$i$ of~$\minword{n}$
and~$\minword{m}$ respectively,
and~$u'$ and~$v'$ those of~$\minword{(n+1)}$ and~$\minword{(m+1)}$.
These four words belong to~$\Aqs$.
  If~$u=v$, then~$(n\cdot u)$ and~$(m\cdot u)$ both exist (in~$\Tpq$).
  It follows from~\rlemma{tpq-futu<-} that~${n\equiv m~[q^i]}$, hence
  also~${(n+1)\equiv (m+1)~[q^i]}$.
  Moreover, by definition of~$u'$,~${((n+1)\cdot u')}$ exists.
  Applying \rlemma{tpq-futu->} then yields that~$((m+1)\cdot u')$ exists
  as well.
  Since~$u'$ is over the lower alphabet~$\Aq$, it is a prefix
  of~$\minword{n+1}$ (\rproperty{mini-nece}) hence~$u'=v'$
  Showing that~$u'=v'$ implies~$u=v$ is similar.
\end{proof}

Recall that $\minwords$ is dense in~$\Aqo$ (\rlemma{minw-dens}).
Then, it follows from \rlemma{xi-ltl} that~$\Der$ may be extended by continuity
to a bijection~$\Aqo\rightarrow\Aqo$.
We still denote this function by~$\Der$.
\rlemma{xi-ltl} states that the knowledge of the first~$i$
letters of an \oword~$\ow{w}$ is enough to compute the first~$i$
letters of~$\Der(\ow{w})$.
In other words,~$\Der$ is realised by an (infinite,
letter-to-letter and sequential) transducer.
\subsection{Definition of the transducer~$\Dpq$}
\lsection{dpq}

Recall that~$\mu\colon\Amax\rightarrow\Aq$ is the function defined
by~$\mu(c)=c-(p-q)$, for every~$c$ in~$\Amax$.
%in \requation*{defi-mu}

%
\begin{definition}\ldefinition{remp-fun}
We denote by~$\rempfun$ the function from~$\Bpq$
into~$\msp\powerset(\Aq\times\Aq)$ defined by:
\begin{equation*}
  \rempfun(d) =\Big\{~ \big(b, \mu(c) \big)~~\Big|~~ b\in\Aq~,~~c\in\Amax~,~~c-b=d ~\Big\}
  \eqpnt
\end{equation*}
%     \begin{array}{cccl}
%       \rempfun: & \Bpq & \rightarrow & \powerset(\Aq\times\Aq) \\[1mm]
%                 & d    & \mapsto     & \left\{~(b, \mu(c)) ~~\middle|~~ \begin{array}{@{}l@{}} b\in\Aq \\
%                                                                       c\in\Amax \\
%                                                                       c-b=d
%                                                                       \end{array}
%                                                                       ~\right\}
%     \end{array}
%   \end{equation*}
\end{definition}

The function~$\rempfun$ may be given a more self-contained definition:
the function~$\mu$ extended to~$\Bpq$ computes the (signed)
distance~$\mu(d)=d-(p-q)$ of~$d$ to the middle-point
of~$\Bpq$ and the set~$\rempfun(d)$ is the set of all
pairs~$(b,b')$ in~$\Aq{\times}\Aq$
whose difference,~$b'-b$, is equal to this distance.
%every~$d$ in~$\Bpq$ is at
\begin{property}
    $\msp\fa\xmd d\in\Bpq\quantsp
    \rempfun(d) ~=~ \big\{~(b, b')~\big|~b'-b=d-(p-q)~\big\}\msp$.
\end{property}
% The function~$\mu$ simply subtracts~$(p-q)$,
% hence~$\rempfun$ may be given a more self-contained definition.
% %
% Each letter~$d$ in~$\Bpq$ is at the (signed) distance~$d-(p-q)$ of the middle-point
% of~$\Bpq$.
% % %
% The set~$\rempfun(d)$ is the set of every pair~$(b,b')\in\Aq{\times}\Aq$
% whose difference,~$b'-b$, is equal to this distance.
The next property follows immediately.

\begin{property}\lproperty{psi-part}
  For every pair of distinct~$d$ and~$d'$ in~$\Bpq$,~$\mes\rempfun(d)\cap\rempfun(d') =\emptyset$.
\end{property}

\begin{definition}\ldefinition{dpq}
Let~$\Dpq$ be the transducer
\begin{equation*}
  \Dpq  = \aut{\, \Aq \times \Aq,\,\N,\, 0,\, \delta \,}\eqvrg
\end{equation*}
defined by
$\msp\delta(n,(b,b'))=\taupq(n,((b'-b)+(p-q)))\msp$
for every~$n$ in~$\N$ and letters~$b,b'$ of~$\Aq$.
% ,~$$
% if there exists~$d\in\Bpq$ such that~$(b,b')\in\psi(d)$;
% otherwise,~$\delta(n,(b,b'))$ is undefined.%
% \footnote{From \rproperty{psi-part}, there is at most one~$d$ such that~$(b,b')\in\phi(d)$, hence~$\delta$ is well defined.}
%
In other words,
\begin{equation*}
    \forall n,m \in \N \quantvrg \forall b,b'\in\Aq \quantsp
    n\pathx{(b,b')}[\Dpq]m
    \e\Longleftrightarrow\e
    n\pathx{d}[\Tpqp]m
    \e\text{and}\e
    (b,b')\in\rempfun(d)
    \eqvrg
\end{equation*}
that is, $\Dpq$ is obtained from~$\Tpqp$ by substituting every
label~$d$ by~$\rempfun(d)$.
\end{definition}
%
% \begin{definition}\ldefinition{dpq}
%   We denote by~$\Dpq$ the transducer
%   \begin{equation*}
%     \Dpq  = \aut{\, \Aq \times \Aq,\,\N,\, 0,\, \delta \,}\eqvrg
%   \end{equation*}
%   where, for every integer~$n$ and letters~$b,b'$ of~$\Aq$,~$\delta(n,(b,b'))=\tau(n,d)$
%   if there exists~$d\in\Bpq$ such that~$(b,b')\in\psi(d)$;
%   otherwise,~$\delta(n,(b,b'))$ is undefined.%
%   \footnote{From \rproperty{psi-part}, there is at most one~$d$ such that~$(b,b')\in\phi(d)$, hence~$\delta$ is well defined.}
%   %
%   In other words,~$\Dpq$ features the transition $n\pathx{(b,b')}m$
%   if and only if there exists a letter~$d\in\Bpq$ such that~$(b,b')\in\rempfun(d)$
%   and~the transition $n\pathx{d}m$ exists in~$\Tpqp$.
% \end{definition}
% %

The transitions of~$\Dpq$ are then also characterised by:
\begin{equation}\lequation{dpq-path}
  \forall n,m \in \N \quantvrg \forall b,b'\in\Aq \quantsp
    n\pathx{(b,b')}[\Dpq]m ~\iff~ q\xmd m  = p\xmd n +(b'-b)+(p-q)
\end{equation}
%
% The transducer~$\Dpq$ may be build by applying a label
% substitution~$\rempfun$, defined, on the transition of~$\Tpqp$.
% %

%
% \begin{proposition}\lproposition{dpq-cara}
%   Let~$n,m$ be two integers and~$a,b$ be two letetrs of~$\Aq$.
%   %
%   The transduce
% \end{proposition}
% \begin{proof}Computation...
%
% \end{proof}
%

\begin{figure}[t]
  \centering
  \includegraphics[scale=\TreeScale]{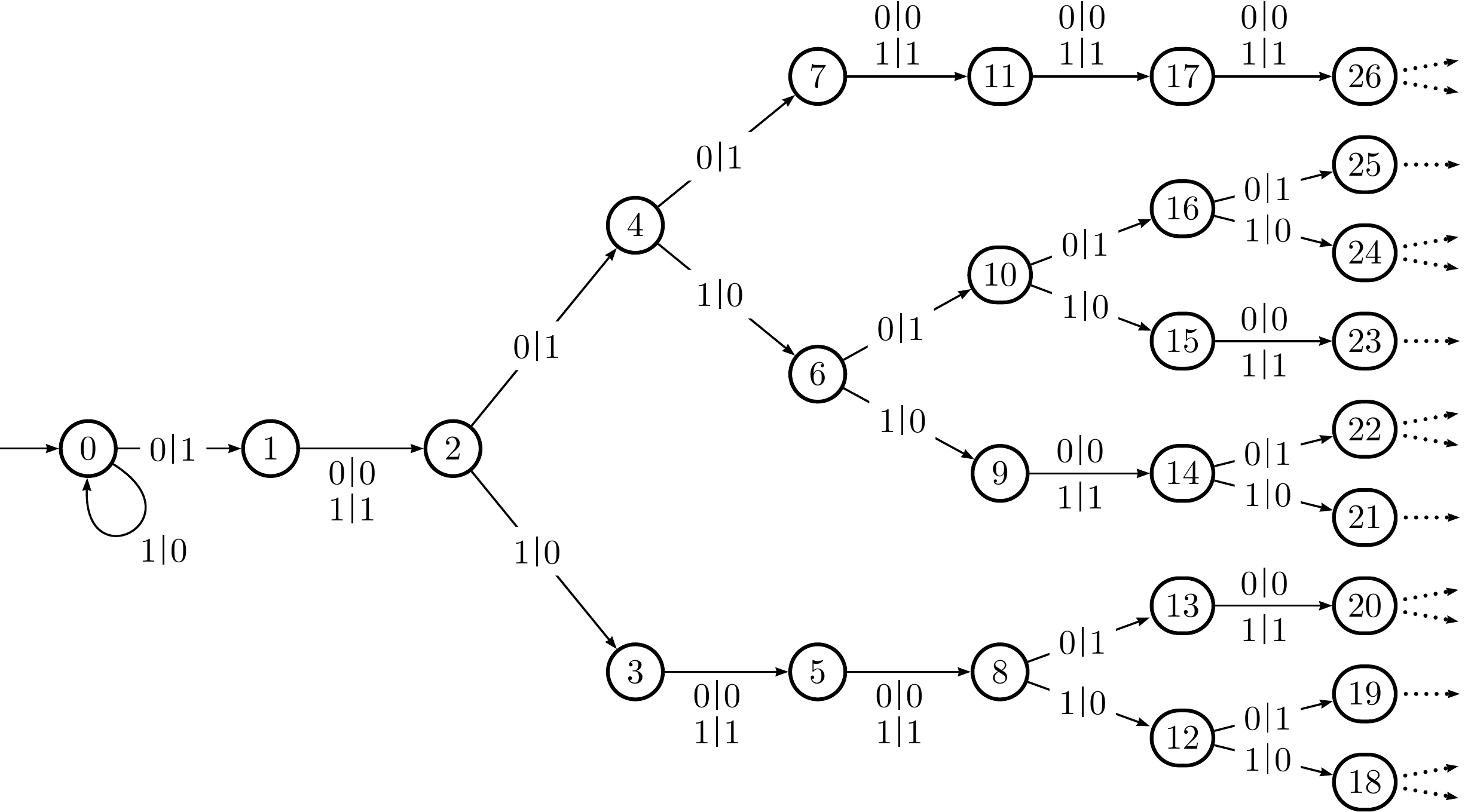}
  \caption{$\Dtd$}
  \lfigure{d32}
\end{figure}

\begin{example}%
%   In these examples, we denote by~$\rempfunp$ the function mapping each letter~$d$
%   of~$\Bpq$ to its distance to the middle-point:~$\rempfunp(d)=d-(p-q)$.
  %

  %
\begin{enumerate}[topsep={0.75\thmspace},itemsep={0.75\thmspace}]
\item
In base~$\td$, the middle-point of~$\Bpq$ is~$(p-q)=1$ and it reads:
% ~$\rempfun, \rempfunp$ are as follows.
\begin{align*}%{2}
  \mu(0) &= -1\eee & \rempfun(0) & = \set{~\transpair{1}{0}~} \\
  \mu(1) &= 0 & \rempfun(1) & = \set{~\transpair{1}{1},~\transpair{0}{0}~} \\
  \mu(2) &= 1 & \rempfun(2) & = \set{~\transpair{0}{1}~}
\end{align*}
%   \begin{equation*}
%     \begin{array}{rrcrcrlcl}
%       \rempfunp\colon & 0 & \longmapsto & -1 & \quad\quad\quad\quad & \rempfun\colon & 0 & \longmapsto & \set{~\transpair{1}{0}~} \\
%       \rempfunp\colon & 1 & \longmapsto &  0 &  & \rempfun\colon & 1 & \longmapsto & \set{~\transpair{1}{1},~\transpair{0}{0}~} \\
%       \rempfunp\colon & 2 & \longmapsto &  1 &  & \rempfun\colon & 2 & \longmapsto & \set{~\transpair{0}{1}~}
%    \end{array}
% %   \notag
%   \end{equation*}
The transducer~$\Dtd$ is shown in~\rfigure{d32}.
Since~$p=2\xmd q-1$, it has the same underlying graph as~$\Tpq$.
\item
In base~$\qt$, the middle-point is~$1$ as well and it reads:
\begin{align*}%{2}
  \mu(-1) &= -2\eee & \rempfun(-1) & = \set{~\transpair{2}{0}~} \\
  \mu(0) &= -1  & \rempfun(0) & = \set{~\transpair{2}{1},~\transpair{1}{0}~} \\
  \mu(1) &= 0 & \rempfun(1) & = \set{~\transpair{2}{2},~\transpair{1}{1},~\transpair{0}{0}~} \\
  \mu(2) &= 1 & \rempfun(2) & = \set{~\transpair{1}{2},~\transpair{0}{1}~} \\
  \mu(3) &= 2 & \rempfun(3) & = \set{~\transpair{0}{2}~}
\end{align*}
% \begin{equation*}
%   \begin{array}{rrcrcrrrlr}
%     \rempfunp\colon & -1 & \longmapsto & -2 & \quad\quad\quad\quad & \rempfun\colon & -1 & \longmapsto & \set{~\transpair{2}{0}~} \\
%     \rempfunp\colon & 0 & \longmapsto &  -1 &   & \rempfun\colon & 0 & \longmapsto & \set{~\transpair{2}{1},~\transpair{1}{0}~} \\
%     \rempfunp\colon & 1 & \longmapsto &  0 &   & \rempfun\colon & 1 & \longmapsto & \set{~\transpair{2}{2},~\transpair{1}{1},~\transpair{0}{0}~} \\
%     \rempfunp\colon & 2 & \longmapsto &  1 &   & \rempfun\colon & 2 & \longmapsto & \set{~\transpair{1}{2},~\transpair{0}{1}~} \\
%     \rempfunp\colon & 3 & \longmapsto &  2 &   & \rempfun\colon & 3 & \longmapsto & \set{~\transpair{0}{2}~} \\
%   \end{array}
% \end{equation*}
Figures \rfigure*{l43}, \rfigure*{t43p} and \rfigure*{d43}  sum up the construction of~$\Dqt$.

\item
In base~$\st$, $\Bst=\set{2,3,4,5,6}$, its middle-point is~$4$ and it reads:
\begin{align*}%{2}
  \mu(2) &= -2\eee & \rempfun(2) & = \set{~\transpair{2}{0}~} \\
  \mu(3) &= -1& \rempfun(3) & = \set{~\transpair{2}{1},~\transpair{1}{0}~} \\
  \mu(4) &= 0 & \rempfun(4) & = \set{~\transpair{2}{2},~\transpair{1}{1},~\transpair{0}{0}~} \\
  \mu(5) &= 1 & \rempfun(5) & = \set{~\transpair{1}{2},~\transpair{0}{1}~} \\
  \mu(6) &= 2 & \rempfun(6) & = \set{~\transpair{0}{2}~}
\end{align*}
%   \begin{equation*}
%     \begin{array}{rrcrcrrrlr}
%       \rempfunp\colon & 2 & \longmapsto & -2 & \quad\quad\quad\quad & \rempfun\colon & 2 & \longmapsto & \set{~\transpair{2}{0}~} \\
%       \rempfunp\colon & 3 & \longmapsto &  -1 &  & \rempfun\colon & 3 & \longmapsto & \set{~\transpair{2}{1},~\transpair{1}{0}~} \\
%       \rempfunp\colon & 4 & \longmapsto &  0 &  & \rempfun\colon & 4 & \longmapsto & \set{~\transpair{2}{2},~\transpair{1}{1},~\transpair{0}{0}~} \\
%       \rempfunp\colon & 5 & \longmapsto &  1 &  & \rempfun\colon & 5 & \longmapsto & \set{~\transpair{1}{2},~\transpair{0}{1}~} \\
%       \rempfunp\colon & 6 & \longmapsto &  2 &   & \rempfun\colon & 6 & \longmapsto & \set{~\transpair{0}{2}~} \\
%     \end{array}
% %     \notag
%   \end{equation*}
  The transducer~$\Dst$ is shown in~\rfigure{d73}; its inaccessible part is dashed out.
  \end{enumerate}
\end{example}

\begin{figure}[p]
  \centering
  \includegraphics[scale=\TreeScale,trim={0pt 10mm 0pt 5mm}]{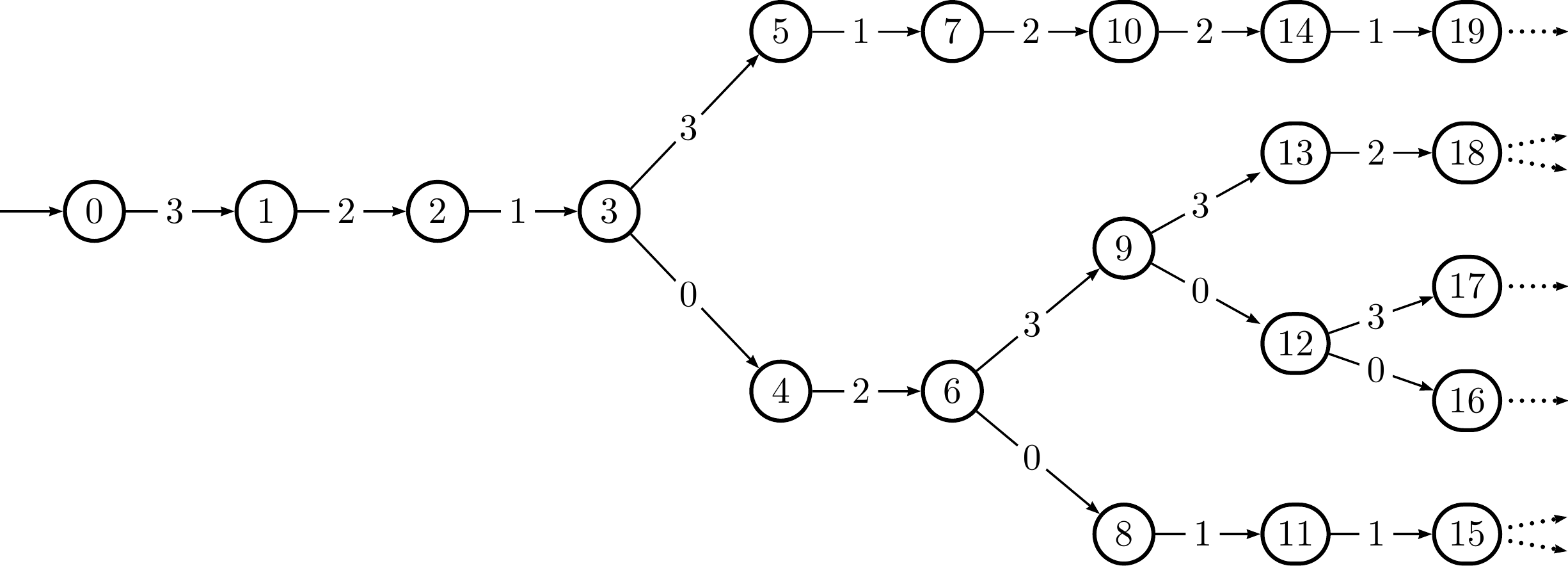}
  \caption{The language~$L_{\frac{4}{3}}$}
  \lfigure{l43}
\end{figure}
\begin{figure}[p]
  \centering
  \includegraphics[scale=\TreeScale,trim={0pt 10mm 0pt 5mm}]{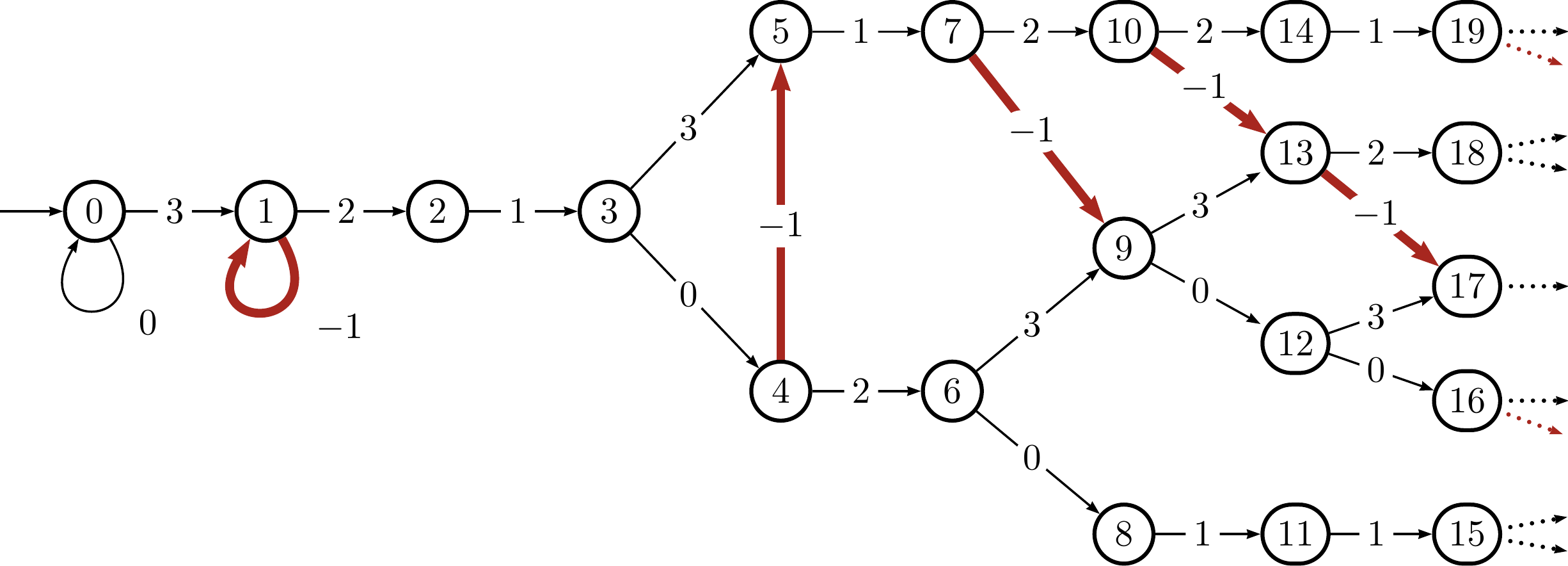}
  \caption{Transforming~$\Tqt$
            into~$\Tqtp$}
  \lfigure{t43p}
\end{figure}
\begin{figure}[p]
  \centering
  \includegraphics[scale=\TreeScale,trim={0pt 10mm 0pt 5mm}]{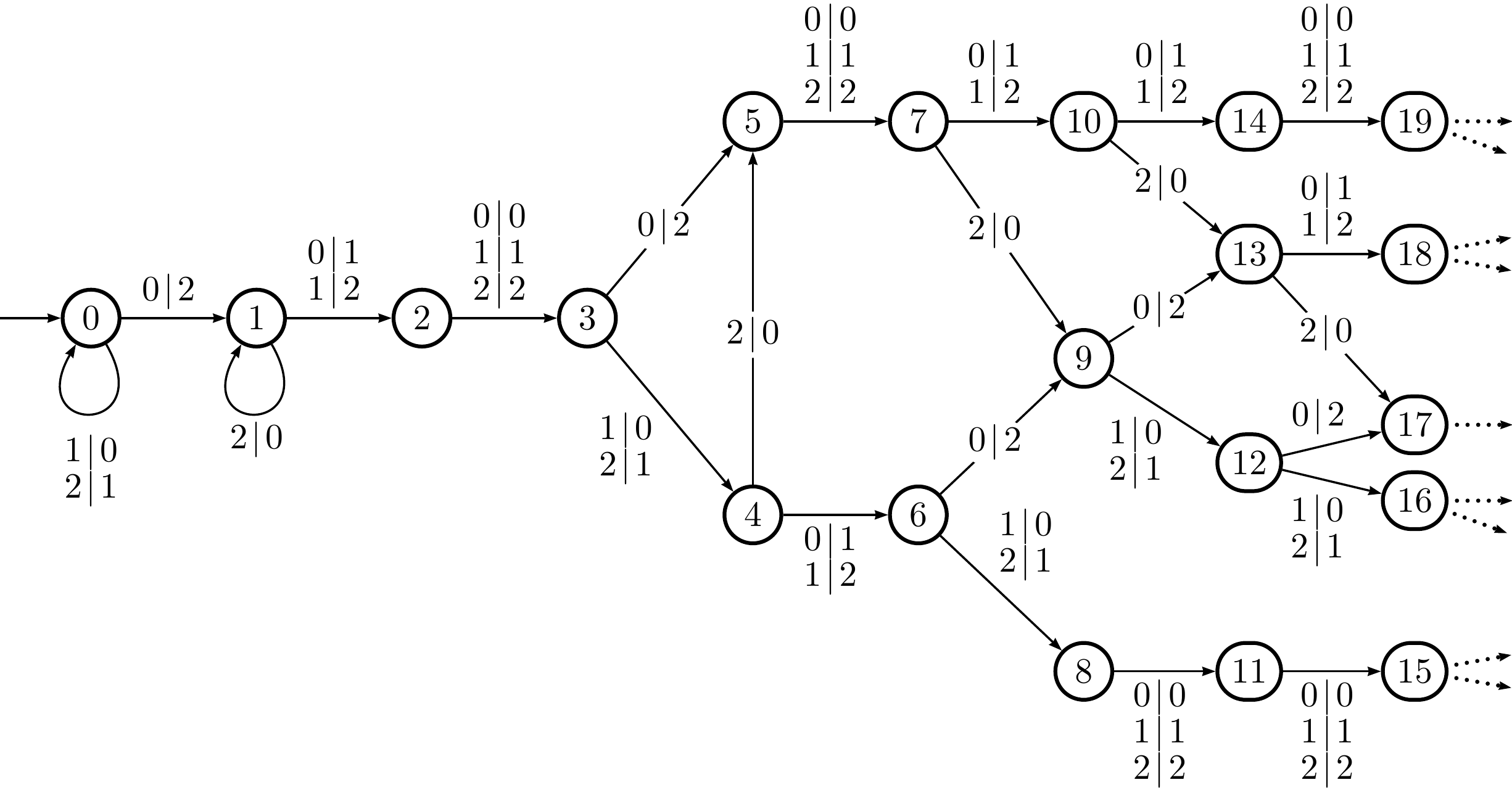}
  \caption{The   transducer~$\Dqt$}
  \lfigure{d43}
\end{figure}

\begin{figure}[t]
  \centering
  \includegraphics[scale=\TreeScale]{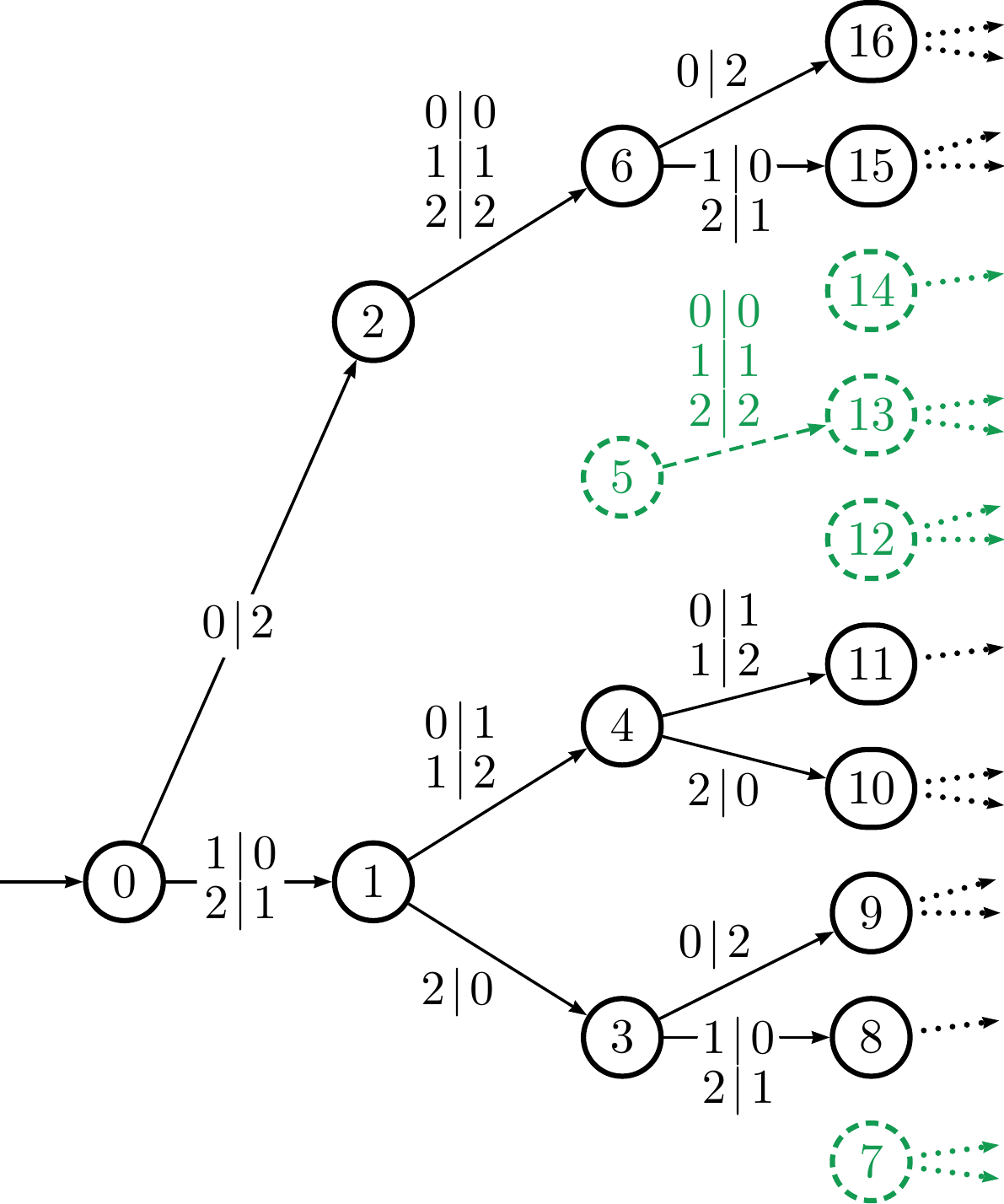}
  \caption{The transducer~$\Dst$}
  \lfigure{d73}
\end{figure}

%
% \begin{remark}
% Les fonctions~$\rempfun$ des exemples \rexample*{dqt} et \rexample*{dst} sont identiques à un décalage sur l'entrée près.
% Ceci toujours est vérifié dans le cas où deux bases~$\frac{p}{q}$ et~$\frac{p'}{q}$ ont le même dénominateur.
% % En général, les différentes parties de~$\Aq\times \Aq$ qui existent parmi les étiquettes des transitions de \Dpq* dépendent uniquement de~$q$.
% \end{remark}
%

%%%%%%%%%%%%%%%%%%%%%%%%%%%%%%%%%%%%%%%%%%%%%%%%%%%%%%%%%%%%%%%%%%%%%%%%%%%%%%%%%%%%%%%%%%%%%%%%%%
%%%%%%%%%%%%%%%%%%%%%%%%%%%%%%%%%%%%%%%%%%%%%%%%%%%%%%%%%%%%%%%%%%%%%%%%%%%%%%%%%%%%%%%%%%%%%%%%%%
\subsection{Behaviour of~$\Dpq$}

The transducer~$\Dpq$ is locally bijective, as both the underlying
input and the underlying output automata are complete deterministic
automata.
More precisely:

\begin{lemma}\llemma{dpq-loca-bije}
For every state~$n$ of~$\Dpq$ and every letter~$x$ in~$\Aq$, there
exist:\\[-.7ex]

\noindent
\hspace*{.8em}\enumstyle{a}\hspace*{.5em}
a unique transition
$\msp\displaystyle{n\pathx{(b,x)}[\Dpq]m}\msp$, and
\ee
\enumstyle{b}\hspace*{.5em}
a unique transition
$\msp\displaystyle{n\pathx{(x,b')}[\Dpq]m'}\msp$.

% %   \vspace*{-2mm}
%   \begin{enumerate}%[label=\alph{*}.]
%     \item \sublabel{l.dpq-comp}
%       There exists a unique letter~$b$ in $\Aq$ and a unique state~$m$ such that~.
%     \item \sublabel{l.dpq-cocomp}
%       There exists a unique letter~$b'$ in $\Aq$ and a unique state~$m'$ such that~$n\pathx{(x,b')}m'$.
%   \end{enumerate}
%   Let~$n$ be a state of~$\Dpq$ and~$x$ a letter in~$\Aq$.
% %   \vspace*{-2mm}
%   \begin{enumerate}%[label=\alph{*}.]
%     \item \sublabel{l.dpq-comp}
%       There exists a unique letter~$b$ in $\Aq$ and a unique state~$m$ such that~$n\pathx{(b,x)}m$.
%     \item \sublabel{l.dpq-cocomp}
%       There exists a unique letter~$b'$ in $\Aq$ and a unique state~$m'$ such that~$n\pathx{(x,b')}m'$.
%   \end{enumerate}
\end{lemma}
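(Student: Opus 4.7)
The plan is to reduce each statement to a linear congruence modulo~$q$ via the arithmetic characterisation~\requation*{dpq-path}, namely $n\pathx{(b,b')}[\Dpq]m$ iff $qm = pn + (b'-b) + (p-q)$.

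For part~\enumstyle{a}, fixing $n\in\N$ and $x\in\Aq$ as the second coordinate, I would rewrite the defining equation as $qm + b = p(n+1) + x - q$; divisibility by~$q$ then forces $b \equiv p(n+1)+x \pmod{q}$. Since $\Aq=\{0,1,\ldots,q-1\}$ is a complete system of residues modulo~$q$, this congruence admits a unique solution $b\in\Aq$, and $m$ is then determined as the quotient. Part~\enumstyle{b} is strictly symmetric: fixing $x\in\Aq$ as the first coordinate, the same manipulation yields $b'\equiv x - p(n+1)\pmod{q}$, with a unique solution $b'\in\Aq$.

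The one step that actually requires verification is the nonnegativity of~$m$ (resp.~$m'$); this will be nontrivial only when $n=0$ and $p<2q-1$, because $\Bpq$ then contains negative digits. For $n\geq 1$ I expect the coarse bound $qm \geq 2p - 2q + 1 > 0$ to close the case immediately, using $p\geq q+1$. For $n=0$ in part~\enumstyle{a}, the unique solution $b = (p+x)\bmod q$ should satisfy $b\leq p+x-q$ since $p+x\geq q$, and this inequality is precisely~$qm\geq 0$; part~\enumstyle{b} would be handled analogously. I expect this small case distinction around~$n=0$ to be the only mild obstacle; conceptually, the lemma is just the observation that~$\Aq$ is a complete set of residues modulo~$q$.
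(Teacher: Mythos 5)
Your proposal is correct and follows essentially the same route as the paper: both reduce the claim via the arithmetic characterisation of the transitions of~$\Dpq$ to the equation $q\xmd m + b = p\xmd n + x + p - q$ and conclude by uniqueness of the Euclidean division (equivalently, $\Aq$ being a complete residue system modulo~$q$). Your extra verification that $m$ (resp.~$m'$) is nonnegative is a point the paper leaves implicit, and your bounds for it are sound.
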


\begin{proof}
\enumstyle{a}\hspace*{1ex}
From \requation*{dpq-path},
$\msp\displaystyle{n\pathx{(b,x)}[\Dpq]m}\msp$ exists if and only if
$\msp\displaystyle{q\xmd m = p\xmd n + x-b + p -q}\msp$, that is,  if and only if
\begin{equation}\lequation{dpq-loca-bije-i}
  q\xmd m + b = p\xmd n + x + p -q \eqpnt
\end{equation}
The unicity of the pair~$(m,b)$ in \requation*{dpq-loca-bije-i}
follows, since~$b$ is in~$\Aq=\set{0,1,\ldots,q-1}$.
% Note that since and~$m\in\N$, it corresponds to the of~$(p\xmd n +x+(p-q))$ by~$q$:
% the quotient is~$m$ and the remainder is~$a$.
% %
% Thus, there is a unique pair .
%

%
A similar reasoning yields \enumstyle{b}.
\end{proof}
\begin{corollary}\lcorollary{dpq-stat-bije}
For every state~$n$ of~$\Dpq$ and every \oword~$\ow{w}$ in~$\Aqo$, there
exist:%\\[-.7ex]
\begin{enumerate}
\item a unique \oword~$\ow{u}$ in~$\Aqo$ such that
$\msp\displaystyle{n\pathx{(\ow{u},\ow{w})}[\Dpq]\cdots}\msp$, and

\item a unique \oword~$\ow{v}$ in~$\Aqo$ such that
$\msp\displaystyle{n\pathx{(\ow{w},\ow{v})}[\Dpq]\cdots}\msp$.
\end{enumerate}

%   Let~$n$ be a state of~$\Dpq$ and~$\ow{w}\in\Aqo$ an \oword.
% %
%   \begin{enumerate}
%     \item There exists a unique \oword~$\ow{u}\in\Aqo$ such that~$n\pathx{(\ow{u},\ow{w})}\cdots$
%     \item There exists a unique \oword~$\ow{v}\in\Aqo$ such that~$n\pathx{(\ow{w},\ow{v})}\cdots$
%   \end{enumerate}
% %     il existe un unique couple de mot infinis~$(w',w'')$ tels
% %     que~$i\pathx{\tpair{w}{w'}}\cdots$ et~$i\pathx{\tpair{w''}{w}}\cdots$
\end{corollary}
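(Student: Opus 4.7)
The plan is to iterate \rlemma{dpq-loca-bije} letter by letter in order to build the desired $\omega$-run inductively. Write $\ow{w} = w_1 \xmd w_2 \xmd w_3 \cdots$ and set $n_0 = n$. For part~\enumstyle{a}, I would apply clause~\enumstyle{a} of \rlemma{dpq-loca-bije} at state $n_0$ with input letter $x = w_1$ to obtain a unique transition $n_0 \pathx{(u_1, w_1)}[\Dpq] n_1$; then apply it again at $n_1$ with $x = w_2$ to produce a unique $n_1 \pathx{(u_2, w_2)}[\Dpq] n_2$, and so on. This yields by induction a unique sequence of letters $(u_k)_{k \geq 1}$ in $\Aq$ and a unique sequence of states $(n_k)_{k \geq 0}$ such that $n_{k-1}\pathx{(u_k, w_k)}[\Dpq] n_k$ for every $k \geq 1$. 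The $\omega$-word $\ow{u} = u_1 \xmd u_2 \xmd u_3 \cdots$ would then label, together with $\ow{w}$, the announced $\omega$-run originating at $n$.

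Uniqueness is baked into the construction: any $\omega$-run of $\Dpq$ originating at $n$ whose second-component label is $\ow{w}$ must agree, at every step $k$, with the unique transition prescribed by clause~\enumstyle{a} of \rlemma{dpq-loca-bije} applied at the current state, so its first-component label is forced to coincide letter-by-letter with $\ow{u}$. Part~\enumstyle{b} would be handled entirely symmetrically, invoking clause~\enumstyle{b} of the lemma at each inductive step to produce a unique $\ow{v}$ in $\Aqo$ labelling the run $n \pathx{(\ow{w}, \ow{v})}[\Dpq] \cdots$.

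I expect no genuine obstacle here: the local bijectivity guaranteed by \rlemma{dpq-loca-bije} ensures that the inductive step never fails, so the $\omega$-run is well-defined and no limiting argument beyond concatenation of transitions is required. In essence, the corollary is a straightforward unfolding, along all of $\Aqo$, of the letter-by-letter bijection provided by the lemma.
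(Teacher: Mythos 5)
Your proof is correct and is exactly the argument the paper intends: the corollary is stated as an immediate consequence of \rlemma{dpq-loca-bije}, obtained by unfolding the letter-by-letter local bijection inductively along the \oword, with uniqueness at each step forcing uniqueness of the whole \orun. Nothing is missing.
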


\begin{corollary}\lcorollary{dpq-bije}
  The transducer~$\Dpq$ realises a bijection:~$\Aqo\rightarrow\Aqo$.
\end{corollary}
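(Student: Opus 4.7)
The plan is to derive this corollary directly from \rcorollary{dpq-stat-bije} by specialising to the initial state $n=0$.

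First I would recall that, by definition, the function realised by~$\Dpq$ sends an input \oword~$\ow{u}\in\Aqo$ to the (a~priori partially defined, possibly non-unique) output \oword~$\ow{v}\in\Aqo$ such that~$0\pathx{(\ow{u},\ow{v})}[\Dpq]\cdots$ is a branch of~$\Dpq$ starting at the root. Applying part~\enumstyle{b} of \rcorollary{dpq-stat-bije} with~$n=0$ and~$\ow{w}=\ow{u}$ shows that such an \oword~$\ow{v}$ exists and is unique. Hence~$\Dpq$ indeed realises a (total) function~$f\colon\Aqo\rightarrow\Aqo$.

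Next I would use part~\enumstyle{a} of \rcorollary{dpq-stat-bije} with~$n=0$ and~$\ow{w}=\ow{v}$: for every target \oword~$\ow{v}\in\Aqo$, there exists a unique~$\ow{u}\in\Aqo$ such that~$0\pathx{(\ow{u},\ow{v})}[\Dpq]\cdots$. The existence of such~$\ow{u}$ gives the surjectivity of~$f$, and its unicity gives the injectivity of~$f$. Together, $f$ is a bijection from~$\Aqo$ onto~$\Aqo$.

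There is essentially no obstacle: the heavy lifting has already been done in \rlemma{dpq-loca-bije}, which establishes local bijectivity of the two projections of~$\Dpq$, and in \rcorollary{dpq-stat-bije}, which lifts this local bijectivity to branches from any fixed state. The present corollary is just the specialisation of that statement to the root state~$0$.
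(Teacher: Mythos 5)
Your proof is correct and follows exactly the route the paper intends: the corollary is an immediate specialisation of \rcorollary{dpq-stat-bije} to the initial state~$0$, with part~\enumstyle{b} giving well-definedness and totality of the realised function and part~\enumstyle{a} giving injectivity and surjectivity. The paper states the corollary without written proof precisely because this is the whole argument.
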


%%%%%%%%%%%%%%%%%%%%%%%%%%%%%%%%%%%%%%%%%%%%%%%%%%%%%%%%%%%%%%%%%%%%%%%%%%%%%%%%%%%%%%%%%%%%%%%%%%%
%%%%%%%%%%%%%%%%%%%%%%%%%%%%%%%%%%%%%%%%%%%%%%%%%%%%%%%%%%%%%%%%%%%%%%%%%%%%%%%%%%%%%%%%%%%%%%%%%%%
% \subsection{Proof of \protect\rtheorem{dpq=xi}}

%
% In this section \thesubsection{} we show \rtheorem{dpq=xi}, recalled below.
%

% %
% \rtheorem{dpq=xi} requires a few additional statements.
% %

%
For every~$i$ in~$\N$, we define the transducer~$\Dpqi$ obtained
from~$\Dpq$ by changing  the initial state~$0$ into the state~$i$:
\begin{equation*}
  \Dpqi  = \aut{\, \Aq \times \Aq,\,\N,\, i,\, \delta \,}\eqpnt
\end{equation*}
\rtheorem{dpq=xi} is the direct consequence of the following more
general statement.
\begin{theorem}\ltheorem{dpqi}
For every integer~$n$,~$\Dpqi$ accepts the pair~$(\minword{n},\minword{n+i+1})$.
\end{theorem}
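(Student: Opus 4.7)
The plan is to reduce the statement to \rcorollary{span-auto++} by tracking what the edge-substitution $\rempfun$ does to one very specific branch of~$\Tpqp$. By that corollary, applied to the pair $(n,i)$, the $\omega$-word $\ow{s} = \maxword{n+i}\dwminus\minword{n}$ labels a branch of~$\Tpqp$ originating at state~$i$. I want to lift this branch through the substitution that turns~$\Tpqp$ into~$\Dpq$ (\rdefinition{dpq}) and to verify that the lifted pair of $\omega$-words is exactly $(\minword{n},\minword{n+i+1})$.

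Concretely, I would write $\minword{n}=b_1 b_2 \cdots$ with each $b_k\in\Aq$ and $\maxword{n+i}=c_1 c_2\cdots$ with each $c_k\in\Amax$, so that the $k$-th letter of~$\ow{s}$ is $d_k = c_k-b_k\in\Bpq$. For every~$k$, the equality $d_k = c_k-b_k$ with $b_k\in\Aq$ and $c_k\in\Amax$ witnesses that the pair $(b_k,\mu(c_k))$ belongs to $\rempfun(d_k)$ by \rdefinition{remp-fun}. Since \rdefinition{dpq} is precisely the edgewise substitution of each $d\in\Bpq$ by $\rempfun(d)$, each transition of the branch of~$\Tpqp$ labeled by~$d_k$ gives rise in~$\Dpq$ to a transition between the same two states labeled by $(b_k,\mu(c_k))$. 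Concatenating these transitions produces a branch of~$\Dpq$ starting at state~$i$ with label the pair $\bigl(b_1 b_2\cdots,\,\mu(c_1)\mu(c_2)\cdots\bigr)=\bigl(\minword{n},\mu(\maxword{n+i})\bigr)$.

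To conclude, I would invoke \rlemma{mini-to-maxi} to replace $\mu(\maxword{n+i})$ by $\minword{(n+i)+1}=\minword{n+i+1}$. The branch just constructed then certifies that $\Dpqi$ accepts $(\minword{n},\minword{n+i+1})$, which is the claim.

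I do not anticipate a genuine obstacle: the work done upstream in \rproposition{corr-word} and \rtheorem{span-auto++} has already built into~$\Tpqp$ the branch that encodes the difference $\maxword{n+i}\dwminus\minword{n}$, and the substitution $\rempfun$ has been designed exactly so that the canonical decomposition of each difference letter as "upper digit minus lower digit" is one of the admissible preimages. The only point where one must be a little careful is that the decomposition of~$d_k$ one uses is the one coming from the specific words $\minword{n}$ and $\maxword{n+i}$ (not some arbitrary preimage in $\rempfun(d_k)$), but this is immediate from the definition of~$\rempfun$, so the argument is essentially a repackaging of the two ingredients \rcorollary{span-auto++} and \rlemma{mini-to-maxi}.
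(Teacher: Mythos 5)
Your proposal is correct and follows essentially the same route as the paper's own proof: invoke \rcorollary{span-auto++} to obtain the branch of~$\Tpqp$ labelled by $\maxword{n+i}\dwminus\minword{n}$ from state~$i$, observe that the letterwise decomposition $d_k=c_k-b_k$ puts $(b_k,\mu(c_k))$ in $\rempfun(d_k)$, lift the branch through the substitution defining~$\Dpq$, and finish with \rlemma{mini-to-maxi}. The only difference is cosmetic (you apply \rlemma{mini-to-maxi} at the end rather than at the start), so there is nothing to add.
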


\begin{proof}Let us write:
%   \begin{subequations}
    \begin{align*}%{2}
      \minword{n} ={}& b_1\xmd b_2\cdots &&(\text{an \oword over } \Aq) \\
      \maxword{n+i}={}& c_1\xmd c_2\cdots &&(\text{an \oword over } \Amax)\\
      \maxword{n+i}\dwminus\minword{n} = \ow{u} ={}& a_1\xmd a_2\cdots &&(\text{an \oword over } \Bpq) \\
      \minword{n+i+1} ={}& b_1'\xmd b_2'\cdots &&(\text{an \oword over } \Aq)
    \end{align*}
%   \end{subequations}
%
By \rcorollary{span-auto++}, the \oword~$\ow{u}$ is the label of a \opath of~$\Tpqp$
originating from the state~$i$.
We write:
\begin{equation*}
    i \pathx{a_1}[\Tpqp] m_1 \pathx{a_2}[\Tpqp] m_2 \pathx{a_3}[\Tpqp] \cdots
\end{equation*}
For every index~$k$,~$\mu(c_k)=b_k'$ (\rlemma{mini-to-maxi}).
Hence~$(b_k,b_k')=(b_i,\mu(c_k))$ satisfies the three
conditions: ${b_k\in\Aq}$, ~$c_k\in\Amax$ and $a_k=c_k-b_k$~;
in other words,~$(b_k,b_k')$ belongs to~$\rempfun(a_k)$
(\rdefinition{remp-fun}).

It then follows from \rdefinition{dpq} of~$\Dpq$ that the following
\opath exists in~$\Dpq$:
\begin{equation*}
  i \pathx{(b_1,b_1')}[\Dpq] m_1
    \pathx{(b_2,b_2')}[\Dpq] m_2
    \pathx{(b_3,b_3')}[\Dpq] \cdots
\end{equation*}
In other words,~$\Dc_{\base,i}$ accepts the pair~$(\minword{n},\minword{n+i+1})$.
\end{proof}
%
%
%   %
%
%   %
%   \rtheorem{span-auto++} states that
%   %
%   For every integer~$k$, we previously showed that the pair~$(b_k,b_k')$ belongs to~$\psi(a_k)$,
%    %

%
In particular, \rtheorem{dpqi} implies, for~$i=0$, that~$\Dpq$ accepts every
pair~$(\minword{n},\minword{n+1})$, for~$n$ in~$\N$.
Since~$\Dpq$ is letter-to-letter (\rdefinition{dpq}), it realises a
\emph{continuous function};
since its domain is~$\Aqo$ (\rcorollary{dpq-bije}) and
since~$\minwords$ is dense in~$\Aqo$ (\rlemma{minw-dens}),
$\Dpq$ realises~$\Der\colon\Aqo\rightarrow\Aqo$.
This concludes the proof of \rtheorem{dpq=xi}.
%

%%%%%%%%%%%%%%%%%%%%%%%%%%%%%%%%%%%%%%%%%%%%%%%%%%%%%%%%%%%%%%%%%%%%%%%%%%%%%%%%%%%%%%%%%%%%%%%%%%%
%%%%%%%%%%%%%%%%%%%%%%%%%%%%%%%%%%%%%%%%%%%%%%%%%%%%%%%%%%%%%%%%%%%%%%%%%%%%%%%%%%%%%%%%%%%%%%%%%%%
%%%%%%%%%%%%%%%%%%%%%%%%%%%%%%%%%%%%%%%%%%%%%%%%%%%%%%%%%%%%%%%%%%%%%%%%%%%%%%%%%%%%%%%%%%%%%%%%%%%
\section{The set of spans}
\lsection{on-span}

The proof of \rtheorem{dpq=xi} draws the attention to the \owords
$\msp\spanword{n}=\maxword{n}\dwminus\minword{n}\msp$
% $\msp\ow{u}_{n}=\maxword{n}\dwminus\minword{n}\msp$
and naturally to their evaluation by the function~$\realval{}$.
For every integer~$n$, let us write~$u_{n}=\cod{n}$;
% and~$\ow{u}_{n}=\nobreak\spanword{n}$;
the real number~$\realval{u_{n}\spanword{n}}$ is the length of the
interval of the real line delimited, so to speak, by the `end-points'
of the \owords~$u_{n}\minword{n}$ and~$u_{n}\maxword{n}$ when the
representation trees are drawn in a \emph{fractal way}, as in the
first \rfigure{rep-tre-sys} or in the following~\rfigure{w32}.

Of course, this value will decrease exponentially with the
length~$~\ell$
of~$u_{n}$ and a reasonable `renormalisation' consists in considering
the value~$\realval{\spanword{n}}$ instead, which we call the
\emph{span of~$n$}.
%simply
In the case of a classical integer base numeration system, this
notion is obviously uninteresting as this value is~$1$ for every~$n$.
And it is as easy to observe, for instance on \rfigure{w32}, that in
a rational base numeration system, distinct integers may have
distinct spans.

In this section we study the topological structure of the set of spans
in a given system, and show that it depends upon whether~$z=\pq$ is
larger than 2 or not (\rtheorem{span}).

% %
% Figures~\rfigure*{w32} and~\rfigure*{w73} (resp.~pages~\pfigure*{w32} and
% \pfigure*{w73}) show the projection of~$\Wpq$ to an interval of~$\R$ by
% the evaluation function a.r.p.
% %
% If~$u$ is a word of~$\pref{\Wpq}$, then the subset of~$\Wpq$ of the words that starts with~$u$
% ks projected to an interval of~$\R$ included in~$\realval{\Wpq}$;
% the length of the interval is called the span of~$u$.
% %
% Observing the figures, one may see that two words~$u$, $v$
% in~$\pref{\Wpq}$ do not necessarily have the same span: in~\rfigure{w32}
% for instance, the span of~$u=2101$ is over twice the one of~$v=2122$.
% %
%
% %
% This definition of span is problematic since it decreases
% exponentially with the length of~$u$.
% %
% In the following, we consider a "normalised" span, equal to~$(\frac{p}{q})^d$ times
% the span, where~$d$ is the length of~$u$.
% %
% Then, two words~$u$ and~$v$ such that~$\val{u}=\val{v}$,
% have the same normalised span; and we can talk about the normalised
% span of an integer~$n$, which we simply call span of~$n$.
% %
%
% %
% In Section \thesection, we define directly the span of an integer
% (\rdefinition{spans}) and then study the topological closure of the set of the
% spans of integers.
% %
% We show that this set has different topological
% properties depending on whether~$z=\pq$ is larger than 2
% (\rtheorem{span}).
% %

\begin{figure}[ht!]
  \centering
  \includegraphics[width=0.8\linewidth]{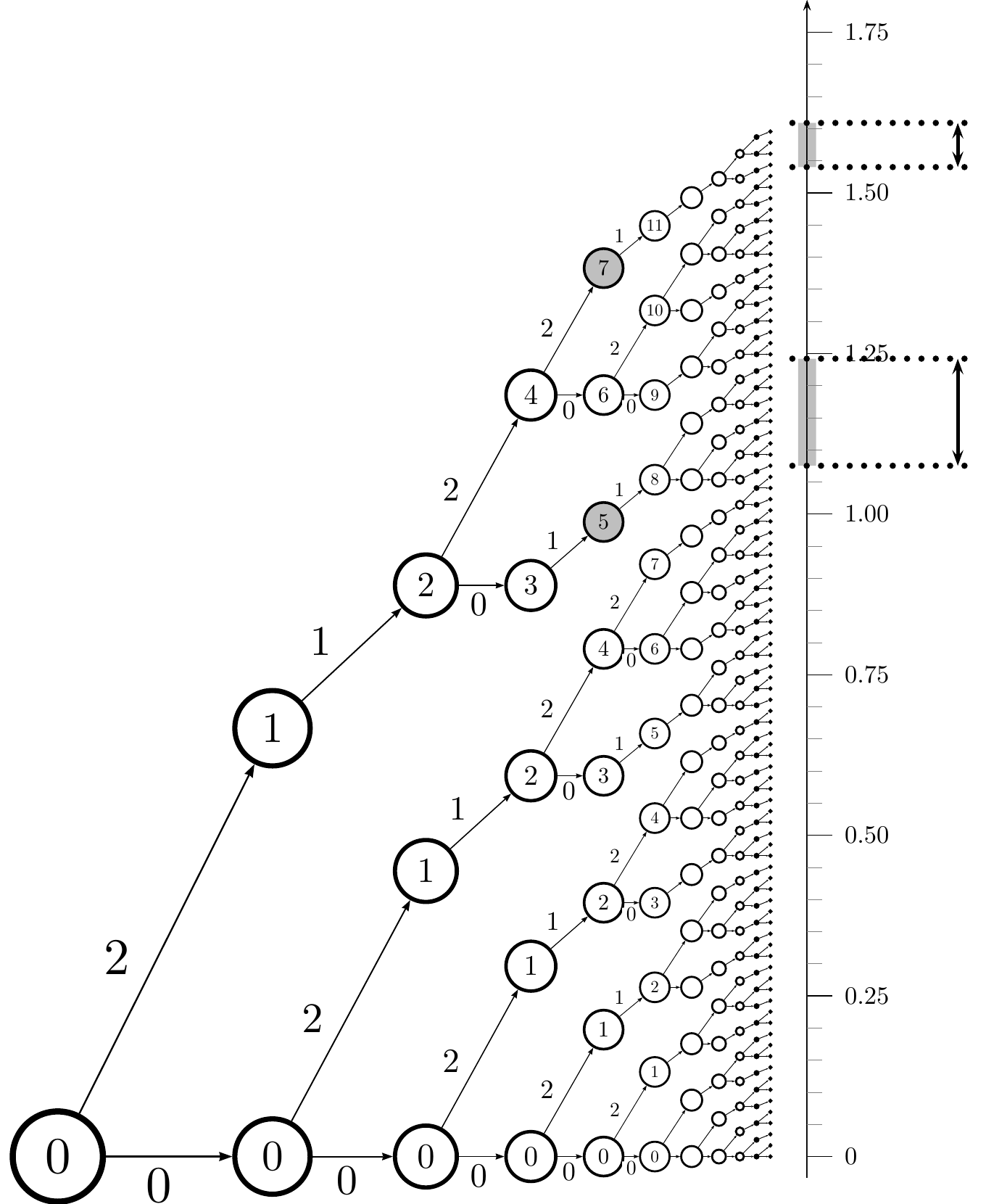}
  \caption{Fractal drawing of real number representations in
  base~$\td$}
%   \caption{Fractal Tree of the representation of real numbers in base~$\td$}
  \lfigure{w32}
\end{figure}

%%%%%%%%%%%%%%%%%%%%%%%%%%%%%%%%%%%%%%%%%%%%%%%%%%%%%%%%%%%%%%%%%%%%%%%%%%%%%%%%%%%%%%%%%%%%%%%%%%%
%%%%%%%%%%%%%%%%%%%%%%%%%%%%%%%%%%%%%%%%%%%%%%%%%%%%%%%%%%%%%%%%%%%%%%%%%%%%%%%%%%%%%%%%%%%%%%%%%%%
\subsection{Span of a node}

\begin{notation}\lnotation{defi-vn}
  For every integer~$n$, we denote by~$\Vn$ the set of all \owords~$\ow{w}$
  such that~$n\pathx{\ow{w}}\cdots$ is a branch of~$\Tpq$:
  \begin{equation*}
    \Vn =  {\cod{n}}^{\!-1} \xmd \Wpq = \Big\{\, \ow{w} ~\Big|~ \big(\cod{n}\ow{w}\big)\in\Wpq \,\Big\}\quad.
  \end{equation*}
\end{notation}

Note that~$\mathsf{V}_0=\Wpq$ and that for every integer~$n$, the \owords
$\minword{n}$ and~$\maxword{n}$ belong to~$\Vn$.
\rtheorem{ro-wpq-inte} states that~$\realval{\ows{V}_0}$ is an interval,
and  next proposition extends it to any~$\Vn$.

\begin{proposition}\lproposition{ro-vn-inte}
%   For every integer~$n$, the real-set~$\smash{\realval{\Vn}}$ is the interval
  \eqspacing $\realval{\Vn}= \Big[\, \realval{\minword{n}},~\realval{\maxword{n}} \,\Big]$
\end{proposition}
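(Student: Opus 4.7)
My plan is to reduce the statement to the already-established case $n=0$ (\rtheorem{ro-wpq-inte}) by pulling back through the prefix $\cod{n}$. Set $k=\wlen{\cod{n}}$. The key first observation is the affine identity
\begin{equation*}
  \realval{\cod{n}\xmd\ow{u}} = z^{-k}\bigl(n + \realval{\ow{u}}\bigr)
\end{equation*}
valid for every $\ow{u}\in\Apo$; it follows from \requation{pq-real-val} together with $\realval{\cod{n}\xmd 0^\omega}=z^{-k}\xmd\val{\cod{n}}=z^{-k}\xmd n$. Since $z^{-k}>0$, this is an increasing affine bijection of $\R$, so the desired interval equality for $\realval{\Vn}$ is equivalent to the analogous equality for $\realval{\cod{n}\xmd\Vn}$.

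The second observation is that $\cod{n}\xmd\Vn$ is exactly the set of \owords of $\Wpq$ starting with $\cod{n}$: by \rlemma{tpq-val}, the unique run of $\cod{n}$ in $\Tpq$ goes from $0$ to $n$, so $\cod{n}\xmd\ow{u}\in\Wpq$ if and only if $\ow{u}\in\Vn$. The inclusion $\realval{\Vn}\subseteq\bigl[\realval{\minword{n}},\realval{\maxword{n}}\bigr]$ is then immediate: for every $\ow{u}\in\Vn$, the extremality of $\minword{n}$ and $\maxword{n}$ yields $\cod{n}\xmd\minword{n}\lex\leq\cod{n}\xmd\ow{u}\lex\leq\cod{n}\xmd\maxword{n}$, and \rproposition{ro-orde-pres}, applied in $\Wpq$, transports these lex inequalities into the same inequalities for $\realval{}$-values; the affine identity then yields the claim.

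For the reverse inclusion, I would take any $r$ in $\bigl[\realval{\minword{n}},\realval{\maxword{n}}\bigr]$ and set $r'=z^{-k}(n+r)$, which lies in the segment $\bigl[\realval{\cod{n}\xmd\minword{n}},\realval{\cod{n}\xmd\maxword{n}}\bigr]$; this segment is contained in $\realval{\Wpq}$ by \rtheorem{ro-wpq-inte}. Hence there exists $\ow{w}\in\Wpq$ with $\realval{\ow{w}}=r'$, and \rproposition{ro-orde-pres} gives $\cod{n}\xmd\minword{n}\lex\leq\ow{w}\lex\leq\cod{n}\xmd\maxword{n}$. The only genuine step is now the elementary observation that any \oword lying lexicographically between two \owords sharing the common prefix $\cod{n}$ must itself start with $\cod{n}$: if $\ow{w}$ disagreed with $\cod{n}$ at some position $j\leq k$, then its $j$-th letter would be either strictly smaller or strictly larger than the common $j$-th letter of the two bounds, pushing $\ow{w}$ strictly outside the lex interval. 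Thus $\ow{w}=\cod{n}\xmd\ow{u}$ for some $\ow{u}\in\Vn$, and the affine identity gives $\realval{\ow{u}}=r$. I do not expect any step to present a real obstacle; the core of the argument is the affine change of variable together with the lex-prefix observation.
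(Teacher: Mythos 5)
Your proof is correct and follows essentially the same route as the paper's: both reduce to the interval property of $\realval{\Wpq}$ (\rtheorem{ro-wpq-inte}) via the increasing affine change of variable induced by prepending $\cod{n}$, and both rely on \rproposition{ro-orde-pres} together with the observation that any \oword lying lexicographically between $\cod{n}\xmd\minword{n}$ and $\cod{n}\xmd\maxword{n}$ must itself begin with $\cod{n}$. The only difference is cosmetic: you argue element-wise for the reverse inclusion where the paper argues at the level of sets.
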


\begin{proof}
For readability, we write~$u_{n}=\cod{n}$ and
let~$\ell=\wlen{u_{n}}$.
From the \rdefinition{bot-top} of bottom and top words, every
word~$\ow{w}$ in~$\Vn$ satisfies
\begin{equation*}
  \minword{n} \lex\leq \ow{w} \lex\leq\maxword{n}
  \qquad\text{hence}\qquad
  u_{n}\minword{n} \lex\leq u_{n}\ow{w} \lex\leq u_{n}\maxword{n}
%   \cod{n}\minword{n} \lex\leq \cod{n}\ow{w} \lex\leq \cod{n}\maxword{n}
  \eqpnt
\end{equation*}
Conversely, since the prefix of length~$\ell$ of any \owords~$\ow{v}$ such that
$\msp u_{n}\minword{n} \lex\leq \ow{v} \lex\leq u_{n}\maxword{n}\msp$
is~$u_{n}$, it holds:
\begin{equation*}
u_{n}\Vn ={}  \big\{~ \ow{v}\in\Wpq \,~\big|~\,
             u_{n}\minword{n} ~{\lex\leq}~ \ow{v} ~{\lex\leq}~ u_{n}\maxword{n} ~\big\}
  \eqpnt
\end{equation*}
%   %
%   The three words on the right belong to~$\Wpq$.
%   %
Since~$\realval{}$ preserves order on~$\Wpq$ (\rproposition{ro-orde-pres}),
it follows that~$\realval{u_{n}\Vn}$ is an interval
since~$\realval{\Wpq}$ is an interval.

For any \oword~$\ow{v}$, it holds:
\begin{equation*}
\realval{u_{n}\ow{v}} =  \realval{u_{n}0^\omega} +
              \left(\frac{p}{q}\right)^{-\ell}\realval{\ow{v}}
\e\text{hence}\e
\realval{\ow{v}} =
\left(\frac{p}{q}\right)^{\ell}\realval{u_{n}\ow{v}} -
\left(\frac{p}{q}\right)^{\ell}\realval{u_{n}0^\omega}
\eqpnt
\end{equation*}
It follows that~$\realval{\Vn}$ is the image of the
interval~$\realval{u_{n}\Vn}$ by an affine transformation, hence an
interval.
\end{proof}

\begin{definition}\ldefinition{spans}
\begin{enumerate}
\item
For every integer~$n$, we call \emph{span of~$n$}, and
denote by~$\spann$, the length of the interval~$\realval{\Vn}$:
%   of \rproposition{ro-vn-inte}:
$\spann  = \realval{\maxword{n}}-\realval{\minword{n}}
         = \realval{\spanword{n}}\msp$.

\item We denote by~$\Spq$ the set of spans:
$\msp\Spq=\Defi{\spann}{n\in\N}=\Defi{\realval{\spanword{n}}}{n\in\N}\msp$.

\end{enumerate}
\end{definition}
%

% %
% \begin{property}\lproperty{span-word-eval}
%   For every integer~$n$, the \oword $\spanword{n}$
%   evaluates a.r.p.\@ to~$\spann$:
%   \begin{equation*}
%     \forall n\in\N \quantsp \realval{\spanword{n}} = \spann \eqpnt
%   \end{equation*}
% \end{property}
% %

%
Since the function~$\realval{}$ is continuous,
and~$\ibehav{\Tpqp}=\adh{\spanwords}$ (\rtheorem{span-auto}), the next statement holds.

\begin{theorem}\ltheorem{ro-L-Tpqp=adh-Spq}
%   The set of the \realvalues of the \olanguage of~$\Tpqp$ is the topological closure of~$\Spq$:
  \eqspacing$ \realval{\ibehav{\Tpqp}} = \adh{\Spq}$.
\end{theorem}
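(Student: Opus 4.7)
The plan is to invoke \rtheorem{span-auto} to replace $\ibehav{\Tpqp}$ by $\adh{\spanwords}$, so that the statement to be proved becomes
\begin{equation*}
  \realval{\adh{\spanwords}} \ee=\ee \adh{\realval{\spanwords}}\eqvrg
\end{equation*}
since by \rdefinition{spans} one has $\Spq=\realval{\spanwords}$. This is an instance of the general fact that for a continuous map $f\colon X\to Y$ between metric spaces with $X$ compact, the image of the closure equals the closure of the image; the task reduces to identifying the compact ambient space.

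First I would note that $\spanwords\subseteq\Bpqo$, and that $\Bpqo$, equipped with the product topology (equivalently, with the distance defined in the excerpt), is compact, being a countable product of the finite discrete space $\Bpq$. Consequently the topological closure $\adh{\spanwords}$, taken inside $\Bpqo$, is a closed subset of a compact space and is itself compact.

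For the inclusion $\realval{\adh{\spanwords}}\subseteq\adh{\Spq}$, I would use the uniform continuity of $\realval{}$ (\rproposition{ro-cont}): any $\ow{w}\in\adh{\spanwords}$ is the limit of a sequence $(\spanword{n_k})_{k}$, hence $\realval{\ow{w}}=\lim_k\realval{\spanword{n_k}}$ lies in $\adh{\Spq}$. For the reverse inclusion, compactness of $\adh{\spanwords}$ implies that its continuous image $\realval{\adh{\spanwords}}$ is a compact, and therefore closed, subset of $\R$; since it already contains $\realval{\spanwords}=\Spq$, it must contain $\adh{\Spq}$.

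The only step that requires any real attention is the observation that $\Bpqo$ is compact (an application of Tychonoff, or equivalently of the fact that a Cantor-like space is compact); everything else is essentially bookkeeping with continuity and closure. No difficult obstacle arises, because the hard combinatorial content, namely the identification of $\ibehav{\Tpqp}$ with $\adh{\spanwords}$, has already been handled in \rtheorem{span-auto}.
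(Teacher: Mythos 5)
Your proof is correct and follows essentially the same route as the paper, which derives the statement in one line from \rtheorem{span-auto} together with the continuity of $\realval{}$. You merely make explicit the compactness of $\Bpqo$ (hence of $\adh{\spanwords}$) needed to turn ``image of the closure'' into ``closure of the image'', a detail the paper leaves implicit.
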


% purpose of this section is to study the
The topological properties of the set~$\adh{\Spq}\subset\R$
depend on whether~$p$ is smaller or greater than~$2\xmd q-1$.\footnote{%
  It could seem simpler to write: `whether~$\base$ is smaller or
  greater than~$2$' which is logically equivalent since~$z=2$
  defines an integer base rather than a rational base.
  But this would hide that the true \emph{border case} is when
  $\msp p = 2\xmd q-1\msp$ and this case behaves sometimes like
  $\msp p<2\xmd q-1\msp$ --- as here in \rtheorem{span} ---
  and sometimes like $\msp p>2\xmd q-1\msp$ --- as
  in Theorem~3 in \citet{AkiyEtAl08}.
  Note also that~$p$ and~$q$ coprime and~$p > 2\xmd q-1$
  imply~$p\geq 2\xmd q+1$.}
Before stating the result, let us recall a definition.
A bounded closed set that is nowhere dense and has no isolated point
is called a \emph{Cantor set}.
The classical ternary Cantor set is of measure zero, but it is not
necessarily the case of
all Cantor sets (\cf~\citealp{Kech95}).
%
% The topological properties of the set~$\adh{\Spq}\subset\R$
% %
% depend on whether~$\base$ is smaller or greater than~$2$
% (or, equivalent whether~$p$ is smaller or greater than~$2\xmd q-1$, since~$z=2$
% defines an integer base rather than a rational base).
% %
% Recall that a bounded closed set that is nowhere dense and has no isolated point
% is called a \emph{Cantor set}.
% %
% The classical ternary Cantor set is of measure zero, but it is not
% necessarily the case of
% all Cantor sets (\cf~\cite{Kech95}).
% %

\begin{falsetheorem}{\rtheorem*{span}}
  Let~$p,q$ be two coprime integers such that~$p>q>1$ and~$z=\frac{p}{q}$.
  \begin{enumerate}
      \item
      If~$p\leq 2\xmd q-1$, then~$\adh{\Spq}$ is equal to \emph{the interval}~$\realval{\Wpq}$.
      \item
      If $p > 2\xmd q-1$, then~$\adh{\Spq}$ is a Cantor set
      of measure zero.
    \end{enumerate}
\end{falsetheorem}

The two parts of \rtheorem{span} are shown independently
in \rsection{span-smal} and \rsection{span-larg}.

Beforehand, we give a characterisation of~$\adh{\Spq}$ that holds in
all cases but the status of which lies in between the two parts of
\rtheorem{span}.
%
% We choose to state it here although this choice is arguable as explained
% hereafter.
% %
% It holds in general (whether $p\leq (2\xmd q-1)$ or $p> (2\xmd q-1)$).
%
% When the base is small, the proof
For small bases, its proof
uses a result from the \textbf{next}
\rsection{span-smal} and, this part of the statement is never applied in the
following.
%
% When the base is large,
For large bases,
the proof is easy but will be used in the proof
of \rtheorem{span-big} later on.
Recall that~$\Bpq$ is the integer interval whose length
is~$2\xmd q-1$ and whose largest element is~$p-1$ (\rdefinition{bpq}).

\begin{proposition}\lproposition{adh-Spq-ro-Wpq-cap-Bpqo}
  \eqspacing$\adh{\Spq}=\realval{\Wpq \cap \Bpqo}$
\end{proposition}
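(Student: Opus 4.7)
My plan is to combine \rtheorem{ro-L-Tpqp=adh-Spq}, which reduces the claim to the set equality $\realval{\ibehav{\Tpqp}} = \realval{\Wpq \cap \Bpqo}$, with a case split according to whether $p > 2\xmd q - 1$ or $p \leq 2\xmd q - 1$, as prompted by the structure of $\Bpq$ detailed in \rproperty{bpq}.

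In the easy case $p > 2\xmd q - 1$ (the one that will be re-used in \rsection{span-larg}), part~(d) of \rproperty{bpq} gives $\Bpq \subsetneq \Ap$, so $\Tpqp$ is obtained from $\Tpq$ by merely deleting every transition whose label lies outside $\Bpq$, with no other modification of the underlying graph and no additional states. A branch of $\Tpqp$ originating in the root is therefore exactly a branch of $\Tpq$ whose label belongs to $\Bpqo$, which gives the set equality $\ibehav{\Tpqp} = \Wpq \cap \Bpqo$; applying $\realval{}$ concludes this case.

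In the case $p \leq 2\xmd q - 1$, parts~(b) and~(c) of \rproperty{bpq} give $\Ap \subseteq \Bpq$, hence $\Wpq \subseteq \Apo \subseteq \Bpqo$ and $\Wpq \cap \Bpqo = \Wpq$, so the claim reduces to $\adh{\Spq} = \realval{\Wpq}$. One inclusion follows from the observation that $\Tpq$ is now a sub-automaton of $\Tpqp$, whence $\Wpq \subseteq \ibehav{\Tpqp}$ and $\realval{\Wpq} \subseteq \realval{\ibehav{\Tpqp}} = \adh{\Spq}$. The reverse inclusion --- that the real value of every \oword accepted by $\Tpqp$, including those making use of the negative digits of $\Bpq$, lies in the interval $\realval{\Wpq}$ --- is the main obstacle; I plan to derive it by invoking the result of \rsection{span-smal} that establishes \rtheorem{span}\enumstyle{a}, in line with the paper's observation that the present proposition is not re-used for small bases.
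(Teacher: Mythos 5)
Your proposal follows essentially the same route as the paper: the same case split on $p$ versus $2\xmd q-1$, the same direct identification $\ibehav{\Tpqp}=\Wpq\cap\Bpqo$ for large bases, and the same reduction $\Wpq\cap\Bpqo=\Wpq$ followed by the forward reference to \rsection{span-smal} (namely \rproposition{span-dist-smal}, which gives $\realval{\ibehav{\Tpq}}=\realval{\ibehav{\Tpqp}}$ and hence the inclusion you single out as the main obstacle) for small bases. Your splitting of the small-base case into two inclusions is a harmless stylistic variation; the argument is correct and non-circular, since \rproposition{span-dist-smal} does not rely on the present statement.
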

\begin{proof}
  If~$p\leq 2\xmd q-1$, then~$\Ap\subseteq\Bpq$ (Properties~\rproperty*{bpq-smal}
  and~\enumstyle{\rproperty*{bpq-equ*}}),
  hence~$\Wpq\subseteq\Apo\subseteq\Bpqo$.
  It follows that~$\Wpq \cap \Bpqo=\Wpq$.
  We will see in the next \rsection{span-smal} (\rproposition{span-dist-smal})
  that~$\realval{\ibehav{\Tpq}}=\realval{\ibehav{\Tpqp}}$.
  Finally, \rtheorem{ro-L-Tpqp=adh-Spq} concludes the proof in this case:
  \begin{equation*}
  \adh{\Spq}=\realval{\ibehav{\Tpqp}}
            =\realval{\ibehav{\Tpq}}
	    =\realval{\Wpq}
	    =\realval{\Wpq \cap \Bpqo}
      \eqpnt
  \end{equation*}
  If~$p> 2\xmd q-1$,
  %Then,
  ~$\Tpqp$ is built from~$\Tpq$ by deleting
  the transitions labelled by~$\Ap\setminus\Bpq$.
  An \oword~$\ow{w}$ of~$\Apo$ is accepted by~$\Tpqp$ if and only
  if 1) it is accepted by~$\Tpq$ and 2) every digit of~$\ow{w}$ belongs to~$\Bpq$.
  In other words:
  \begin{equation*}
  \ibehav{\Tpqp} = \Wpq \cap \Bpqo  \eqpnt
  \end{equation*}
  Since by \rdefinition{wpq},~$\Wpq=\ibehav{\Tpq}$, \rtheorem{ro-L-Tpqp=adh-Spq} concludes the proof.
\end{proof}

%%%%%%%%%%%%%%%%%%%%%%%%%%%%%%%%%%%%%%%%%%%%%%%%%%%%%%%%%%%%%%%%%%%%%%%%
%%%%%%%%%%%%%%%%%%%%%%%%%%%%%%%%%%%%%%%%%%%%%%%%%%%%%%%%%%%%%%%%%%%%%%%%
\subsection{The span-set in small bases~($\msp p\leq 2\xmd q-1\msp$)} %$(z<2)$
\lsection{span-smal}

%
% In this subsection~$p\leq 2q-1$.
% The purpose of section \thesubsection{} is to establish \rtheorem{span-smal}.
%
First, we show that the shortest run reaching
a given state~$n$ has the same length in~$\Tpq$ and in~$\Tpqp$, that
is, the fact that in this case~$\Tpqp$ is obtained from~$\Tpq$ by
adding new transitions does not allow nevertheless any `shortcuts'.
%
% Then, we prove \rproposition{span-dist-smal}, of which
% \rtheorem{span-smal} is corollary.
% %

%
\begin{lemma}\llemma{not-bet-rep}
%   Let~$u$ be a word of~$\Bpqs$ accepted by~$\Tpqp$.
%   %
%   We write~$m=\val{u}$.
  Let~$u$ be in~$\behav{\Tpqp}$ and~$m=\val{u}$.
  If~$p\leq 2\xmd q-1$, then~${\wlen{\cod{m}}\leq\wlen{u}}$.
\end{lemma}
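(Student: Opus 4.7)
The plan is to proceed by induction on $k = \wlen{u}$, splitting the inductive step according to whether the last digit of $u$ lies in $\Ap$ or is one of the ``new'' negative digits of $\Bpq \setminus \Ap$. The base case $k = 0$ is immediate: $u = \epsilon$ forces $m = 0$, so $\wlen{\cod{0}} = 0 \leq 0$.

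For the inductive step, write $u = u'\xmd a$ with $a \in \Bpq$ and $u' \in \behav{\Tpqp}$ of length $k-1$. Set $n' = \val{u'}$; \rlemma{tpqp-val} applied to $u'$ yields $0 \pathx{u'}[\Tpqp] n'$, and the last transition $n' \pathx{a}[\Tpqp] m$ gives $qm = pn' + a$ via \requation{tpqp-path}. The inductive hypothesis provides $\wlen{\cod{n'}} \leq k - 1$. In the sub-case $a \in \Ap$, the integer $a$ is precisely the remainder of $qm$ modulo $p$ and $n'$ is the quotient, so \requation{MEA-alt-2} forces $\cod{m} = \cod{n'}\xmd a$ for $m > 0$, whence $\wlen{\cod{m}} = \wlen{\cod{n'}} + 1 \leq k$; the degenerate subcase $m = 0$ gives $\cod{m} = \epsilon$ and is immediate.

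The interesting sub-case is $a \in \Bpq \setminus \Ap$, which can only occur when $p < 2\xmd q - 1$ (Property \rproperty{bpq-smal}) and in which $a < 0$. I would rewrite the transition as $qm = p(n' - 1) + (a + p)$. From $a \geq p - (2\xmd q - 1)$ and $p > q$, we get $a + p \geq 2p - 2\xmd q + 1 \geq 0$, so $a + p \in \Ap$. Moreover, $m \geq 0$ together with $a < 0$ force $pn' \geq -a > 0$, hence $n' \geq 1$ and $n' - 1 \in \N$. The MEA recurrence then yields $\cod{m} = \cod{n' - 1} \cdot (a + p)$ (or $\cod{m} = \epsilon$ if $m = 0$). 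Finally, from $n' - 1 < n'$ and the radix-order characterization of \rproposition{pq-ans}, we obtain $\cod{n'-1} \rad\leq \cod{n'}$, hence $\wlen{\cod{n' - 1}} \leq \wlen{\cod{n'}} \leq k - 1$, closing the induction.

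The main obstacle is this second sub-case: a negative-digit transition of $\Tpqp$ has no direct counterpart in $\Tpq$ and must be \emph{simulated} by an $\Ap$-digit transition from a shifted source state $n' - 1$; the key point is that $m \in \N$ (guaranteed by \rlemma{tpqp-val}) is exactly what ensures this shift stays inside~$\N$, so the simulation remains valid and the radix-order bound can be invoked.
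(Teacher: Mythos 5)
Your proof is correct and takes essentially the same route as the paper's: induction on $\wlen{u}$, splitting on whether the last digit lies in $\Ap$ or is a negative digit of $\Bpq\setminus\Ap$, and in the latter case identifying the genuine $\Tpq$-predecessor of $m$ and bounding its representation length via \rproposition{pq-ans}. The only (harmless) difference is that you compute this predecessor explicitly as $n'-1$ with incoming digit $a+p$, whereas the paper obtains it from the Euclidean division of $m$ and merely argues it is smaller than the $\Tpqp$-predecessor.
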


\begin{proof}
  By induction over the length of~$u$.
  The case~$u=\epsilon$ is trivial.
  Let~$u=u'd$ be a non-empty word over~$\Bpq$ that is accepted by~$\Tpqp$.
  If~$\val{u}=0$, then the lemma holds; we assume in the following
  that~$\val{u}>0$.
  We denote the run of~$u$ as follows:
  \begin{equation*}%\lequation{not-bet-rep-i}
    0 \pathx{u'}[\Tpqp] n \pathx{d}[\Tpqp] m\quad.
    \end{equation*}
  \rlemma{tpqp-val} yields that~$n=\val{u'}$ and~$m=\val{u}>0$.
  From induction hypothesis, it holds
  \begin{equation}\lequation{not-bet-rep-ii}
     \wlen{\cod{n'}} \leq \wlen{u'} \quad.
  \end{equation}
  Since~$z$ is a small base,~$\Ap$ is included in~$\Bpq$.
  The remainder of the proof depends on whether~$d$ belongs to~$\Ap$ or to~$\Bpq\setminus\Ap$.

  \medskip

  \noindent\textbf{Case 1: $d\in\Ap$.} Then, the transition~$n \pathx{d} m$
  exists in~$\Tpq$ (in addition to existing in~$\Tpqp$).
  Since moreover~$m\neq 0$, it follow that
  \begin{equation*}%\lequation{not-bet-rep-iii}
    \cod{m} = \cod{n} \xmd d
    \e\text{and hence}\e
    \wlen{\cod{m}} = \wlen{\cod{n} \xmd d} \leq \wlen{u'\xmd d} = \wlen{u}
    \eqpnt
  \end{equation*}
  %
  %
%   Finally, using \requation*{not-bet-rep-iii}, \requation*{not-bet-rep-ii},
%   and the definition of~$u$, it holds
%   \begin{equation*}
%     \wlen{\cod{m}} = \wlen{\cod{n} \xmd d} \leq \wlen{u'\xmd d} = \wlen{u}
%   \end{equation*}
%   Thus, the lemma holds in Case 1.
  %

  \medskip

  \noindent\textbf{Case 2: $d\notin\Ap$.}
  The digit~$d$ belongs to~$\Bpq\setminus\Ap$,
  hence is negative (\rproperty{bpq-smal}).
  We apply the Euclidean division algorithm to~$m$ (\requation{MEA-alt-2}
  since~$m>0$):
  there exists a unique pair~$(n',a)$ in~${(\N\times\Ap)}$ such
  that~$\cod{m}=\cod{n'}a$.
  Thus, the state~$m$ has in~$\Tpqp$ the two incoming
  transitions~$n'\pathx{b}m$ and $n\pathx{a}m$.
  Hence from \requation{tpqp-path},~$q\xmd m$ is both equal
  to~$n'p+b$ and~$np+a$.
  Since~$a$ is negative and~$b$ is not,~$n'<n$.
  Moreover, since representation in base~$\pq$
  preserves order (\rproposition{pq-ans})
  \begin{equation}\lequation{not-bet-rep-iv}
    \cod{n'} \rad\leq \cod{n}
    \qquad\text{hence,}\qquad\wlen{\cod{n'}} \leq \wlen{\cod{n}}\quad.
  \end{equation}
  Finally, we conclude Case 2 by applying in succession the definition
  of~$(n',a)$, and Equations \requation*{not-bet-rep-iv} and
  \requation*{not-bet-rep-ii}:
  \begin{equation*}%\notag
  \wlen{\cod{m}} = \wlen{\cod{n'}\xmd a}
        = \wlen{\cod{n'}}+1 \leq \wlen{\cod{n}}+1 \leq \wlen{u'}+1
	= \wlen{u} \quad. \qedhere
  \end{equation*}
\end{proof}
\begin{corollary}\lcorollary{Tpqp-eq-eval-leng}
For every~$u$ in~$\behav{\Tpqp}$, there exists~$v$ in~$\behav{\Tpq}$
such that
%   For every word~$u$ accepted by~$\Tpqp$, there exists
%   a word~$v$ accepted by~$\Tpq$ such that
  \begin{equation*}
    \val{u}=\val{v} \qquad\text{and}\qquad \wlen{u}=\wlen{v} \quad.
  \end{equation*}
\end{corollary}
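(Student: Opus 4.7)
The plan is to use \rlemma{not-bet-rep} directly to produce the required word by padding with leading zeroes. Let $u$ be in $\behav{\Tpqp}$ and write $m=\val{u}$. By \rlemma{not-bet-rep} we know that $\wlen{\cod{m}} \leq \wlen{u}$, so the integer $k = \wlen{u} - \wlen{\cod{m}}$ is nonnegative.

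I would then define $v = 0^{k}\xmd \cod{m}$. By construction $\wlen{v} = \wlen{u}$, and since leading zeroes do not change the value, $\val{v} = \val{\cod{m}} = m = \val{u}$. It remains to check that $v$ is accepted by $\Tpq$: the root of $\Tpq$ carries the loop $0 \pathx{0}[\Tpq] 0$, so $0 \pathx{0^{k}}[\Tpq] 0$, and then $0 \pathx{\cod{m}}[\Tpq] m$ by definition of the representation language. Concatenating these two paths gives the run of $v$ in $\Tpq$, hence $v \in \behav{\Tpq}$.

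There is essentially no obstacle here; the only content of the corollary is the length inequality supplied by \rlemma{not-bet-rep}, after which the construction of $v$ is the canonical leading-zero padding of $\cod{m}$ and the verification that it belongs to $\behav{\Tpq}$ reduces to the existence of the loop $0 \pathx{0}[\Tpq] 0$ at the root.
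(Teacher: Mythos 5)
Your proof is correct and is essentially the paper's own argument: the paper's proof reads, in full, ``With notation of \rlemma{not-bet-rep}, let $v=0^i\cod{m}$ with the suitable number $i$ of $0$'s,'' and you have simply spelled out why $i=\wlen{u}-\wlen{\cod{m}}$ is nonnegative and why the padded word is accepted by $\Tpq$. No issues.
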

\begin{proof}
With notation of \rlemma{not-bet-rep}, let~$v=0^i\cod{m}$
with the suitable number~$i$ of~0's.
\end{proof}
Next, we show that although~$\Tpqp$ accepts more \owords than~$\Tpq$,
the extra accepted \owords do not bring new \realvalues.
\begin{proposition}\lproposition{span-dist-smal}
If~$p\leq 2\xmd q-1$,
then~$\realval{\ibehav{\Tpq}}=\realval{\ibehav{\Tpqp}}$.
\end{proposition}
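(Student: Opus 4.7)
The plan is to prove both inclusions of the equality $\realval{\ibehav{\Tpq}}=\realval{\ibehav{\Tpqp}}$. The direction $\realval{\ibehav{\Tpq}}\subseteq\realval{\ibehav{\Tpqp}}$ is immediate: when $p\leq 2\xmd q-1$ one has $\Ap\subseteq\Bpq$ (\rproperty{bpq}) and $\Tpqp$ is obtained from $\Tpq$ by only adding transitions, so every branch of $\Tpq$ is a branch of $\Tpqp$ and $\ibehav{\Tpq}\subseteq\ibehav{\Tpqp}$.

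For the reverse inclusion, I would take an arbitrary $\ow{w}=a_1\xmd a_2\cdots$ in $\ibehav{\Tpqp}$ and construct a sequence $(\ow{w}_n)_{n\in\N}$ in $\ibehav{\Tpq}$ whose real evaluations converge to $\realval{\ow{w}}$. For each $n$, set $u_n=a_1\cdots a_n$ and $m_n=\val{u_n}$; this is a nonnegative integer by \rlemma{tpqp-val}. By \rcorollary{Tpqp-eq-eval-leng} there is a word $v_n\in\behav{\Tpq}$ with $\wlen{v_n}=n$ and $\val{v_n}=m_n$; explicitly $v_n=0^{n-\wlen{\cod{m_n}}}\xmd\cod{m_n}$. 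The run of $v_n$ in $\Tpq$ ends at state $m_n$, so I can extend it with $\minword{m_n}$ to define
\[
\ow{w}_n \e=\e v_n\xmd\minword{m_n} \e\in\e \ibehav{\Tpq}\eqpnt
\]

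The key estimate uses the factorisation $\realval{v\xmd\ow{x}}=\realval{v\xmd 0^\omega}+(q/p)^{\wlen{v}}\xmd\realval{\ow{x}}$, which follows directly from \requation{pq-real-val}. Writing $\ow{w}=u_n\xmd\ow{t}_n$ with $\ow{t}_n=a_{n+1}a_{n+2}\cdots$ and applying the identity to both $\ow{w}$ and $\ow{w}_n$, the equalities $\wlen{u_n}=\wlen{v_n}=n$ together with $\val{u_n}=\val{v_n}=m_n$ force $\realval{u_n\xmd 0^\omega}=\realval{v_n\xmd 0^\omega}$, whence
\[
\realval{\ow{w}}-\realval{\ow{w}_n} \e=\e (q/p)^n\,\bigl[\realval{\ow{t}_n}-\realval{\minword{m_n}}\bigr]\eqpnt
\]
The bracket is bounded uniformly in $n$, since both terms are evaluations of $\omega$-words over finite digit alphabets; hence $\realval{\ow{w}_n}\to\realval{\ow{w}}$. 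As $\realval{\ibehav{\Tpq}}=\realval{\Wpq}$ is the closed interval $\bigl[\realval{\minword{0}},\realval{\maxword{0}}\bigr]$ by \rproposition{ro-vn-inte} (applied at $n=0$), the limit $\realval{\ow{w}}$ lies in it.

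The delicate point — and the reason a naïve prefix-limit construction fails — is that the witnesses $v_n$ produced by the corollary are not in general pairwise consistent: $v_n$ need not be a prefix of $v_{n+1}$, because reading a negative digit in $\Tpqp$ corresponds in $\Tpq$ to borrowing and hence altering earlier digits of the representation. One therefore cannot assemble the $v_n$ into a single $\omega$-word. The argument sidesteps this by comparing $\realval{\ow{w}_n}$ to $\realval{\ow{w}}$ directly and exploiting the exponential decay in $(q/p)^n$, together with the closedness of $\realval{\ibehav{\Tpq}}$.
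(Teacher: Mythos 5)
Your proof is correct and follows essentially the same route as the paper: the easy inclusion from $\Ap\subseteq\Bpq$, then for the reverse inclusion an approximation of $\realval{\ow{w}}$ by evaluations of $\omega$-words of $\ibehav{\Tpq}$ obtained by extending the length-and-value-matching witnesses of \rcorollary{Tpqp-eq-eval-leng}, concluding by closedness of $\realval{\Wpq}$. Your error estimate is in fact slightly sharper than the paper's (an exact identity via $\realval{u_n0^\omega}=\realval{v_n0^\omega}$ rather than a crude tail bound), and your remark about the $v_n$ not being pairwise consistent correctly identifies why one cannot simply pass to a limit word.
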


\begin{proof}
Since~$p\leq 2\xmd q-1$,~$\Ap$ is included in~$\Bpq$.
It follows that every transition of~$\Tpq$ also appears in~$\Tpqp$ and
every \oword of~$\ibehav{\Tpq}$ thus belongs to~$\ibehav{\Tpqp}$
hence~$\realval{\ibehav{\Tpq}}\nobreak\subseteq\nobreak\realval{\ibehav{\Tpqp}}$.
Let~$\ow{w}$ be an \oword in~$\Bpqo$ that is accepted by~$\Tpqp$.
For every integer~$i$, we denote by~$w_i$ the prefix of~$\ow{w}$
of length~$i$.
From~\rcorollary{Tpqp-eq-eval-leng}, there exists a finite word~$v_i$
accepted by~$\Tpq$ such that~$\wlen{v_i}=i$ and~$\val{v_i}=\val{w_i}$.
Since~$\Tpq$ has no dead-end (\rlemma{tpq-righ-exte}),
there exists an \oword~$\ow{u}_i\in\ibehav{\Tpq}$
that features~$v_i$ as prefix.
For every integer~$i$, the \owords~$\ow{w}$ and~$\ow{u}_i$ have
respective prefixes of length~$i$ with the same value.
It follows that
  \begin{equation*}
    \abs{\realval{\ow{w}}-\realval{\ow{u}_i}} < \sum_{n=i+1}^{\infty}\frac{\card{\Ap}+\card{\Bpq}}{q}\left(\frac{p}{q}\right)^{-i} \quad.
  \end{equation*}
  Hence,~$\left(\realval{\ow{u}_i}\right)$ tends to~$\realval{\ow{w}}$ when~$i$
  tends to infinity.
  Besides, since~$\realval{\ibehav{\Tpq}}$ is a closed set (\rtheorem{ro-wpq-inte}),
  $\realval{\ow{w}}$ belongs to~$\realval{\ibehav{\Tpq}}$.
  In other words, there exists an \oword~$\ow{v}$ in~$\ibehav{\Tpq}$
  such that~$\realval{\ow{v}}=\realval{\ow{w}}$.
  Hence,~$\realval{\ibehav{\Tpq}} \supseteq\realval{\ibehav{\Tpqp}}$.
\end{proof}
%

% \begin{falsetheorem}{\rtheorem*{span-smal}}
%   Let~$p,q$ be two coprime integers such that~$p>q>1$.
%   %
%   If~$p\leq (2 q-1)$, then the closure of~$\Spq$ is an interval.
% \end{falsetheorem}
\begin{proof}[of \rtheorem{span-smal}]
\rtheorem{ro-L-Tpqp=adh-Spq} and \rproposition{span-dist-smal} imply
  \begin{equation}
    \adh{\Spq} = \realval{\ibehav{\Tpqp}}
              = \realval{\ibehav{\Tpq}}
	      \eqvrg
%               =\realval{\Wpq}
%               =\left[0,\realval{\maxword{0}}\right]
  \end{equation}
and $\realval{\ibehav{\Tpq}}$ is a closed interval by
\rtheorem{ro-wpq-inte}.
\end{proof}

%%%%%%%%%%%%%%%%%%%%%%%%%%%%%%%%%%%%%%%%%%%%%%%%%%%%%%%%%%%%%%%%%%%%%%%%
%%%%%%%%%%%%%%%%%%%%%%%%%%%%%%%%%%%%%%%%%%%%%%%%%%%%%%%%%%%%%%%%%%%%%%%%
\subsection{The span-set in large bases~($\msp p> 2\xmd q-1\msp$)} %($z>2$)
\lsection{span-larg}

In order to prevent any misinterpretation in case of cursory reading,
we repeat the hypothesis $\msp p> 2\xmd q-1\msp$ in every statement.
%
% In this subsection~$p> 2\xmd q-1$.
%In this subsection~$p\ge 2\xmd q-1$.
%
The proof is divided in two parts:
\rproposition{adh-spq-no-isol-point}
and~\rproposition{wpqp-null-meas}.
% The proof is divided in the three Propositions
% \rproposition*{adh-spq-clos-boun}, \rproposition*{adh-spq-no-isol-point},
% and~\rproposition*{wpqp-null-meas}, the first of which is immediate.
%
% \begin{proposition}\lproposition{adh-spq-clos-boun}
%   The set~$\adh{\Spq}$ is closed and bounded.
% \end{proposition}
%
%
Let us recall first that the set~$\adh{\Spq}$ is closed and bounded
and then the following two properties that hold in large bases.
\begin{property}
We assume~$p> 2\xmd q-1$.
\begin{enumerate}
  \item \sublabel{pp.big-tpq-floor-z-succ}
    Every state~$n$ of~$\Tpq$ has at least
    $\left\lfloor\displaystyle{\frac{p}{q}}\right\rfloor$
    outgoing transitions.
  \item \sublabel{pp.big-bpq-posi}
    The digits of~$\Bpq$ are strictly positive.
\end{enumerate}
\end{property}

\begin{lemma}\llemma{big-span-posi}%
We assume~$p>2\xmd q -1$.
For every integer~$n$, it
holds~$0<\gammapq\leq\spann\leq\omegapq$,\linebreak[2]
where~$\omegapq=\realval{\maxword{0}}$
and~$\gammapq=\realval{q\xmd\minword{1}}$.
\end{lemma}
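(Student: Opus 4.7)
The plan rests on the identity $\spann=\realval{\spanword{n}}$ (\rdefinition{spans}) together with the fact—derived from \rcorollary{span-auto++} applied with $i=0$—that each span-word $\spanword{n}$ labels a branch of $\Tpqp$ originating in state~$0$, so $\spanword{n}\in\ibehav{\Tpqp}$. In the large-base regime ($p>2\xmd q-1$), the automaton $\Tpqp$ is a sub-automaton of $\Tpq$ obtained by deleting all transitions whose label lies outside $\Bpq$ (see the proof of \rproposition{adh-Spq-ro-Wpq-cap-Bpqo}); thus $\ibehav{\Tpqp}=\Wpq\cap\Bpqo\subseteq\Wpq$, and I can rely throughout on \rproposition{ro-orde-pres} (order preservation of $\realval{}$ on $\Wpq$).

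For the upper bound $\spann\leq\omegapq$, the digits of $\maxword{0}$ lie in $\Amax\subseteq\Bpq$ (\rproperty{bpq-maxi}), so $\maxword{0}\in\ibehav{\Tpqp}$. Since $\maxword{0}$ is the lex-maximum of $\Wpq$ by definition, order preservation makes it the $\realval{}$-maximum of $\Wpq$, hence also of its subset $\ibehav{\Tpqp}$. Applying this to $\spanword{n}$ gives $\spann=\realval{\spanword{n}}\leq\realval{\maxword{0}}=\omegapq$.

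For the lower bound $\spann\geq\gammapq$, write $\minword{n}=b_1 b_2\cdots$ and $\maxword{n}=c_1 c_2\cdots$. The digits $b_1$ and $c_1$ are respectively the smallest element of $\Aq$ and the largest element of $\Amax$ among the outgoing digits at state $n$ in $\Tpq$; both are congruent to $-n\xmd p$ modulo $q$, so the first digit of $\spanword{n}$ equals $c_1-b_1=k\xmd q$ for some integer $k\geq 1$ (the inequality $k\geq 1$ being ensured by \rproperty{big-tpq-floor-z-succ}). I then split on $k$. If $k\geq 2$, then $\spann\geq 2\xmd z^{-1}$; the crude bound $\realval{\minword{1}}\leq\realval{(q-1)^\omega}=\frac{q-1}{p-q}$ yields $\gammapq\leq z^{-1}\frac{p-1}{p-q}$, and the hypothesis $p>2\xmd q-1$—equivalent to $2\xmd(p-q)>p-1$—implies $\gammapq<2\xmd z^{-1}\leq\spann$. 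If $k=1$, the first letter $q$ belongs to $\Bpq$ and the first transition of $\spanword{n}$ in $\Tpqp$ reaches state $\taupq(0,q)=1$; the suffix $\ow{w}$ of $\spanword{n}$ is then a branch of $\Tpqp$, hence of $\Tpq$, from state~$1$. By definition of $\minword{1}$ one has $\minword{1}\lex\leq\ow{w}$, and order preservation applied to $\cod{1}\xmd\ow{w}$ and $\cod{1}\xmd\minword{1}$ in $\Wpq$ gives $\realval{\minword{1}}\leq\realval{\ow{w}}$. Hence $\spann=z^{-1}\xmd(1+\realval{\ow{w}})\geq z^{-1}\xmd(1+\realval{\minword{1}})=\realval{q\xmd\minword{1}}=\gammapq$.

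Positivity $\gammapq>0$ is immediate since $\realval{q\xmd\minword{1}}\geq(q/q)\xmd z^{-1}=z^{-1}>0$. The delicate point I anticipate is Case $k=1$: it relies on recognising that the suffix of $\spanword{n}$ after its first letter lives simultaneously in $\Tpqp$ and as a branch of $\Tpq$ from state~$1$, a transfer that is legitimate precisely because in the large-base regime $\Tpqp$ is a restriction of $\Tpq$ rather than an enlargement.
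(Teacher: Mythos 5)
Your proof is correct, and your upper bound is obtained exactly as in the paper. For the lower bound the paper is more economical: it observes that, since every digit of~$\Bpq$ is positive when~$p>2\xmd q-1$, the word~$\spanword{n}$ belongs to the set~$X$ of \owords of~$\Wpq$ whose first digit is nonzero; that~$q\xmd\minword{1}$ is the lexicographic minimum of~$X$ (the outgoing digits at state~$0$ are the multiples of~$q$, so a member of~$X$ either starts with a digit~$>q$, or starts with~$q$ and continues with a branch from state~$1$, hence with something~$\lex\geq\minword{1}$); and that~$\maxword{0}$ is the lexicographic maximum of~$\Wpq$. A single application of \rproposition{ro-orde-pres} then gives $\gammapq\leq\spann\leq\omegapq$ in one stroke. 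Your case analysis on~$k$ recovers the same mechanism, but your case~$k\geq 2$ replaces the lexicographic comparison $q\xmd\minword{1}\lex< (k\xmd q)\cdots$ by an arithmetic estimate ($\spann\geq 2\xmd z^{-1}$ against $\gammapq\leq z^{-1}\tfrac{p-1}{p-q}<2\xmd z^{-1}$); that estimate is correct and does exploit $p>2\xmd q-1$, but it is unnecessary --- the order-preservation argument you already run in case~$k=1$ applies verbatim when the first digit exceeds~$q$. The one point you rightly flag as delicate, namely that branches of~$\Tpqp$ are branches of~$\Tpq$ because~$\Tpqp$ is a restriction of~$\Tpq$ in the large-base regime, is exactly what the paper packages once and for all in \rproposition{adh-Spq-ro-Wpq-cap-Bpqo}.
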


\begin{proof}
We denote by~$X$ the set consisting of the \owords of~$\Wpq$
that do not start with the digit~$0$.
Hence~$\maxword{0}$ and~$q\xmd\minword{1}$ are respectively the
greatest and the least \oword of~$X$ in the lexicographic ordering.
From \rproposition{ro-orde-pres} then follows that~$\realval{X}$
is a subset of~$[\gammapq,\omegapq]$.
On the other hand, it follows from
\rproposition{adh-Spq-ro-Wpq-cap-Bpqo} that~$\spann$
belongs to~$\realval{\Wpq\cap\Bpqo}$.
From \rproperty{big-bpq-posi},~$\Bpq$ does not contain the digit~$0$,
hence~$\msp\ibehav{\Tpqp}\subseteq X\msp$
and it holds: $\msp\spann\in\realval{X}\subseteq[\gammapq,\omegapq]\msp$.
\end{proof}

%
% \begin{proof}
% \enumstyle{a}\hspace*{.5em} follows from \requation{tpq-path},
%   %
% \e\enumstyle{b}\hspace*{.5em} from the \rdefinition{bpq} of~$\Bpq$.
%   %
%
%   %
% \enumstyle{c}\hspace*{.5em}
%   %
%   The real number~$\spann$ is the evaluation of the span-word
%   of~$n$ (\rdefinition{spans}).
%   %
%   This word is over~$\Bpq$, an alphabet containing only (strictly)
%   positive digits, hence:
%    \begin{equation*}
%     \spann = \realval{\spanword{n}} >
%     \realval{1^\omega}=\frac{1}{q}\sum_{i\geq 1}
%     \left(\frac{p}{q}\right)^{-i} > 0 \eqpnt \qedhere
%   \end{equation*}
%   %
%   $(d)\quad$
%   Let~$w\lex>w'$ be two \owords such that~$n\pathx{w}[\Tpq]\cdots$ and~$n\pathx{w'}[\Tpq]\cdots$.
%   %
%   WLOG we may assume that the first letters of~$w$ and~$w'$ differ an
%   %
%   In this case~$\realval{w}\geq \realval{w'}+\spann$.
%   %
% \end{proof}
% %

%
% Now, we start the proof of that~$\adh{\Spq}$ contains no isolated point
% (\rproposition{adh-spq-no-isol-point}) by a preliminary lemma.
%

%
\begin{lemma}\llemma{big-two-branch}
We assume~$p>2\xmd q-1$.
For every integer~$n$, there exist in~$\Tpqp$ two branches originating
from~$n$ that are labelled by \owords with distinct \realvalues.
%
% Let~$n$ be an integer.
% %
% There exist in~$\Tpqp$ two branches originating from~$n$
% that are labelled by \owords with distinct \realvalues.
% %
\end{lemma}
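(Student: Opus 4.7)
The plan rests on three observations. First, $\Tpqp$ has no dead-end: at any state $n'$, the labels of outgoing transitions of $\Tpq$ form an arithmetic progression with common difference $q$ inside $\{0,\ldots,p-1\}$, whose largest element lies in the interval $[p-q,\,p-1]\subseteq\Bpq$, and hence survives the passage from $\Tpq$ to $\Tpqp$. Second, since $p>2\xmd q-1$ gives $\Bpq\subseteq\Ap$, any branch of $\Tpqp$ from $n$, once prefixed by $\cod{n}$, is a word of $\Wpq$; so \rproposition{ro-orde-pres}, together with the affine change of variable induced by that prefix, forces $\realval{}$ to be injective on branches of $\Tpqp$ originating at $n$. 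It therefore suffices to exhibit two \emph{distinct} branches of $\Tpqp$ from~$n$.

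The core of the argument is to show that some descendant of $n$ in $\Tpqp$ admits two outgoing transitions. At state $n'$, the second-largest valid label is exactly $q$ less than the largest, and an easy congruence check shows that it still belongs to $\Bpq$ unless $n'\equiv -1\pmod{q}$. So the only states with a single outgoing transition in $\Tpqp$ are those with $n'\equiv -1\pmod{q}$; at such a state, one computes that the unique outgoing digit is $p-q$ and that its target $n''$ satisfies $q(n''+1)=p(n'+1)$, i.e., $n''+1 = p(n'+1)/q$. Using $\gcd(p,q)=1$, this yields $v_{q}(n''+1)=v_{q}(n'+1)-1$ for the $q$-adic valuation. Iterating this exactly $k=v_{q}(n+1)$ times leads to a descendant $m$ with $v_{q}(m+1)=0$, hence $m\not\equiv -1\pmod{q}$, which has at least two outgoing transitions in $\Tpqp$.

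Choosing two distinct outgoing transitions at $m$ and extending each to an infinite branch --- which is possible because $\Tpqp$ has no dead-end --- then yields, after prepending the path from $n$ to $m$, two branches $\ow{w}_{1}\neq\ow{w}_{2}$ of $\Tpqp$ from $n$ that differ at the first position after that common prefix. The first paragraph then gives $\realval{\ow{w}_{1}}\neq\realval{\ow{w}_{2}}$, completing the proof. The only delicate point is the second paragraph's $q$-adic valuation bookkeeping; the remaining steps amount to direct computations from the transition equation $qn''=pn'+d$ and to routine invocations of \rproposition{ro-orde-pres}.
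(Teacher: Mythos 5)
Your construction of two distinct branches is sound, and the $q$-adic valuation bookkeeping in your second paragraph is a nice, genuinely different way to locate a descendant of~$n$ with out-degree at least~$2$ in~$\Tpqp$ (the paper instead perturbs the top word~$\maxword{n}$ at a position where a digit exceeds~$p-q$, such a position existing by aperiodicity, \rlemma{tpq-flip}). But the first paragraph contains a genuine gap that the rest of the argument inherits: injectivity of~$\realval{}$ on the branches is false, so exhibiting two \emph{distinct} branches does not suffice. \rproposition{ro-orde-pres} cannot be read as giving injectivity on~$\Wpq$: \rproposition{rela-mini-maxi} produces pairs of distinct \owords with equal \realvalues (the base-$\pq$ analogue of $0.1000\cdots=0.0999\cdots$), and if~$\realval{}$ were an order-preserving injection on the compact, totally disconnected set~$\Wpq$, its image could not be the interval of \rtheorem{ro-wpq-inte}. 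The correct reading of \rproposition{ro-orde-pres} is only weak monotonicity, and it is precisely this failure of injectivity that makes the lemma nontrivial.

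What is missing is an argument that the two branches you pick have provably different values. This is where the paper uses span positivity: it takes~$\ow{w}=\maxword{n}$, replaces some digit~$a>p-q$ by~$a-q$ to get a second branch~$\ow{v}=u\xmd(a-q)\xmd\maxword{m-1}$, and computes via \rproposition{rela-mini-maxi} that $\realval{\ow{w}}-\realval{\ow{v}}=\left(\pq\right)^{-\wlen{ua}}\spann[m]$, which is strictly positive by \rlemma{big-span-posi} (itself relying on~$p>2\xmd q-1$). Your construction can be repaired along the same lines: at the branching state~$m$, take the two outgoing labels~$d$ and~$d-q$, continue the first by~$\maxword{\cdot}$ and the second by anything; then the difference of values is bounded below by a positive span. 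But some invocation of \rlemma{big-span-posi} (or an equivalent lower bound) is unavoidable; as written, your two branches could a priori evaluate to the same real number.
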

\begin{proof}
  We write~$\ow{w}=\maxword{n}$.
  Since~$\Amax$ is included in~$\Bpq$ (\rproperty{bpq-maxi}),
  all the transitions of the branch~$n\pathx{\ow{w}}\cdots$ of~$\Tpq$
  also exists in~$\Tpqp$.
  Since~$\ow{w}$ is the label of a branch of~$\Tpq$, \rlemma{tpq-flip} yields
  that it is not equal to~$(p-q)^\omega$.
  (Recall that~$p-q$ is the smallest letter of~$\Amax$.)
  Thus, there exists a digit~$a\in \Amax$,~$a>p-q$, a
  prefix~$u$ of~$\ow{w}$ and two states~$n',m$ such that
  \begin{equation*}
    n \pathx{u}[\Tpqp] n' \pathx{a}[\Tpqp] m
    \eqpnt
  \end{equation*}
  The integer~$(a-q)$ is greater than~$p-(2\xmd q-1)$ (and smaller
  than~$p-1$), hence  a letter of~$\Bpq$.
  %
%   In other words,~$(a-q)$ is
  %
  Then, the definition of~$\Tpqp$ (\requation{tpqp-path}) implies that
  \begin{equation*}
    n' \pathx{a-q}[\Tpqp] (m-1)
  \end{equation*}
  We denote by~$\ow{v}$ the word~$\ow{v}=u\xmd (a-q) \xmd \maxword{m-1}$,
  which labels a branch originating from~$n$.
  \rproposition{rela-mini-maxi} (\pproposition{rela-mini-maxi}) implies
  that the words~$\mes \ow{v}=u\xmd (a-q)\xmd \maxword{m-1}\mes$
  and~$\mes u\xmd a\xmd \minword{m}\mes$ have the same \realvalue.
  Hence it holds
  \begin{equation*}
    \realval{\ow{w}}-\realval{\ow{v}}
      = \realval{u\xmd a\xmd \maxword{m}} - \realval{u\xmd a\xmd \minword{m}}
      = \left(\pq\right)^{-\wlen{ua}} \spann[m]
    \eqpnt
  \end{equation*}
  Since~$z$ is a large base, every span is
  positive (\rlemma{big-span-posi}) and the lemma holds.
\end{proof}
\begin{proposition}\lproposition{adh-spq-no-isol-point}
  If~$p>2\xmd q -1$, the set~$\adh{\Spq}$ contains no isolated point.
\end{proposition}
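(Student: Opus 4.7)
The plan is to show that every $x\in\adh{\Spq}$ is the limit of a sequence of points of $\adh{\Spq}$ distinct from~$x$, using the ramification in~$\Tpqp$ supplied by \rlemma{big-two-branch}. By \rtheorem{ro-L-Tpqp=adh-Spq}, I may write $x=\realval{\ow{w}}$ for some branch $\ow{w}=a_1 a_2 a_3\cdots$ of~$\Tpqp$ originating in state~$0$; denote by $0=n_0\pathx{a_1}[\Tpqp]n_1\pathx{a_2}[\Tpqp]n_2\cdots$ the states it visits. For each integer~$k\geq0$, write $w_k=a_1\cdots a_k$ and let $\ow{t}_k=a_{k+1}a_{k+2}\cdots$ be the corresponding suffix of~$\ow{w}$, which is itself the label of a branch of~$\Tpqp$ originating in~$n_k$.

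For each~$k$, \rlemma{big-two-branch} supplies two branches of~$\Tpqp$ originating in~$n_k$ whose labels have distinct \realvalues; I pick one whose label~$\ow{v}_k$ satisfies $\realval{\ow{v}_k}\neq\realval{\ow{t}_k}$. The concatenation $\ow{w}'_k=w_k\ow{v}_k$ is then itself the label of a branch of~$\Tpqp$ from state~$0$, so $\realval{\ow{w}'_k}$ lies in $\realval{\ibehav{\Tpqp}}=\adh{\Spq}$. Since $\ow{w}$ and $\ow{w}'_k$ share the prefix~$w_k$, the definition~\requation{pq-real-val} of~$\realval{}$ gives
\begin{equation*}
  \realval{\ow{w}'_k}-\realval{\ow{w}}
    = \left(\frac{p}{q}\right)^{-k}\bigl(\realval{\ow{v}_k}-\realval{\ow{t}_k}\bigr)
  \eqpnt
\end{equation*}

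By the choice of~$\ow{v}_k$ the parenthesised factor is nonzero, so $\realval{\ow{w}'_k}\neq x$. That factor is also bounded independently of~$k$, since $\realval{}$ is uniformly bounded on~$\Bpqo$ (an immediate consequence of the geometric-type sum in~\requation*{pq-real-val} applied to the finite alphabet~$\Bpq$), while $(p/q)^{-k}\to0$ as $k\to\infty$. Hence $\realval{\ow{w}'_k}\to x$ while remaining distinct from~$x$, so $x$ is not isolated in~$\adh{\Spq}$. The only substantive input is \rlemma{big-two-branch}, which is where the hypothesis $p>2\xmd q-1$ enters; given that lemma, the argument is essentially one of self-similarity: the automaton $\Tpqp$ branches nontrivially at every state, and the contraction factor~$(p/q)^{-k}$ forces the resulting perturbations of~$x$ to accumulate at~$x$.
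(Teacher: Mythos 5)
Your proof is correct and follows essentially the same route as the paper's: run along $\ow{w}$ to state~$n_k$, invoke \rlemma{big-two-branch} to splice in an alternative tail whose \realvalue differs from that of the actual suffix, and observe that the resulting points of $\adh{\Spq}$ are distinct from~$x$ yet converge to it. The only cosmetic difference is that you make the convergence quantitative via the factor $(p/q)^{-k}$ and the boundedness of $\realval{}$ on~$\Bpqo$, whereas the paper simply appeals to the continuity of~$\realval{}$ and convergence of the spliced words to~$\ow{w}$ in the product topology.
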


\begin{proof}
% of \rproposition{adh-spq-no-isol-point}]
  Let~$x$ be a real number in~$\adh{\Spq}$.
  There exists an \oword~$\ow{w}=a_0\xmd a_1\cdots a_{i}\cdots$ accepted
  by~$\Tpqp$
  such that~$\realval{\ow{w}}=x$.
  We denote its \orun in~$\Tpqp$ as follows:
  \begin{equation*}
    0=n_0 \pathx{a_1}[\Tpqp] n_1
          \pathx{a_2}[\Tpqp] n_2
          \pathx{a_3}[\Tpqp]
          \cdots
  \end{equation*}
  Let~$k$ be a positive integer.
  We apply the previous \rlemma{big-two-branch} to~$n_k$ : there exist two
  \owords that label branches originating from~$n_k$ and that have different
  \realvalues.
  One of them must have a value distinct from~$\realval{a_{k+1}a_{k+2}\cdots}$;
  we denote this \oword by~$\ow{v}$.
  We moreover write~${\ow{v}_k=a_1\xmd a_2\cdots a_{k}\xmd \ow{v}}\mes$
  which then satisfies the following.
  \begin{gather}
    \lequation{adh-spq-no-isol-point-i}
    \ow{v}_k\in \ibehav{\Tpqp}
    \\
    \lequation{adh-spq-no-isol-point-ii}
    \realval{\ow{v}_k}\neq \realval{\ow{w}}
    \\
    \lequation{adh-spq-no-isol-point-iii}
    \text{$\ow{v}_k$ and~$\ow{w}$ have the same prefix of length~$k$}
  \end{gather}
  \rtheorem{ro-L-Tpqp=adh-Spq} yields
  that~$\realval{\ibehav{\Tpqp}} = \adh{\Spq}$
  and \requation{adh-spq-no-isol-point-i} that~$\realval{\ow{v}_k}$
  belongs to~$\adh{\Spq}$.
  From \requation*{adh-spq-no-isol-point-ii},~$\realval{\ow{v}_k}$
  indeed belongs to~$\big(\adh{\Spq}\setminus\{x\}\big)$.
  From \requation*{adh-spq-no-isol-point-iii}, the
  sequence~$(\ow{v}_k)_{k\in\N}$ tends to~$\ow{w}$.
  Finally, since~$\realval{}$ is
  continuous,~$\left(\realval{\ow{v}_k}\right)_{k\in\N}$ is a sequence
  of~$\adh{\Spq}\setminus\{x\}$
  which tends to~$\realval{\ow{w}}=x$.
\end{proof}

It remains to show that~$\adh{\Spq}$ is of measure zero.
Let us first recall the classical proof that the Ternary Cantor
set~$K_{3}$ has measure~$0$.
The set~$K_{3}$ is obtained from the interval~$I_{0}=[0,1]$ by
successive refinements.
At step~$n$, $I_{n}$ is a finite union of intervals~$J_{n,j}$,
every~$J_{n,j}$ is divided in three intervals of equal length,
and~$I_{n+1}$ is obtained by subtracting from each~$J_{n,j}$ the
(open) middle interval.
The measure of~$I_{n}$, that is, the sum of the lengths of the
disjoint~$J_{n,j}$ is~$\left(\frac{2}{3}\right)^{n}$.
The~$I_{n}$ form an infinite decreasing sequence of sets,
$\msp K_{3} = \bigcap_{n\in\N}I_{n}\msp$ and its measure is the limit
of the sequence~$\left(\frac{2}{3}\right)^{n}$, hence~$0$.
The proof of part~(b) of \rtheorem{span} follows the same scheme,
loaded with some technicalities.

\begin{lemma}\llemma{n-unre-m}
We assume~$p> 2\xmd q-1$.
Let~$i$ be an integer such that~$\floor{z}^i\geq 2\xmd q$.
Then, for every integer~$n$, there exists an integer~$m$
and a path~$\msp n\pathx{}[\Tpq]m\msp$ in~$\Tpq$ of length~$i$ that does not
exists in~$\Tpqp$.
\end{lemma}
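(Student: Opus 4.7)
The plan is to exhibit a path of length $i$ in $\Tpq$ starting at~$n$ whose label uses at least one digit of $\Ap \setminus \Bpq = \{0,1,\ldots,p-2q\}$; since $\Tpqp$ is obtained from $\Tpq$ by removing precisely the transitions with such labels, the path will then lie in $\Tpq$ but not in $\Tpqp$.

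I would first analyse the set $\En$ of endpoints of paths of length~$i$ starting from~$n$ in~$\Tpq$. The defining relation $n\pathx{a}[\Tpq]m$, namely $q\xmd m = p\xmd n + a$ with $a\in\Ap$, shows that the successors of any state~$m$ form the integer interval $\left[\lceil p\xmd m/q\rceil,\,\lceil p(m+1)/q\rceil-1\right]$ and that the successor intervals of consecutive states are adjacent. By induction on~$i$, $\En$ is therefore itself an integer interval. Moreover, since every state of~$\Tpq$ has at least $\lfloor z\rfloor$ outgoing transitions (\rproperty{big-tpq-floor-z-succ}) and since distinct paths from~$n$ in~$\Tpq$ reach distinct states (each state has a unique incoming path of any given length), one has $|\En|\geq\lfloor z\rfloor^{i}\geq 2\xmd q$ by the hypothesis.

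I would then pick $2\xmd q$ consecutive integers $m_{0},m_{0}+1,\ldots,m_{0}+2q-1$ inside $\En$ and, for each such~$m'$, consider the rightmost letter of the unique label $f_{i}(m')$ of the length-$i$ path from~$n$ to~$m'$ in~$\Tpq$ (uniqueness follows from \rlemma{tpq-past}). If the last transition of that path is $n'\pathx{a}[\Tpq]m'$, then $q\xmd m' = p\xmd n' + a$ with $a\in\Ap$, so this rightmost letter equals $q\xmd m' \bmod p$. Since $\gcd(p,q)=1$, the map $m'\mapsto q\xmd m'\bmod p$ is a bijection on $\Z/p\Z$; and since $p>2q-1$ combined with $\gcd(p,q)=1$ forces $p\geq 2q+1$, the $2q$ chosen consecutive integers yield $2q$ pairwise distinct values of that last letter. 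Because $|\Bpq|=2q-1$, at least one of them lies in $\Ap\setminus\Bpq$, and the corresponding path witnesses the claim.

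The main obstacle is the first step, namely the verification that $\En$ is an integer interval of size at least $\lfloor z\rfloor^{i}$; once that structural fact is in hand, the pigeonhole on the last letter is essentially forced by the equality $|\Bpq|=2q-1$ appearing in the hypothesis $\lfloor z\rfloor^{i}\geq 2q$.
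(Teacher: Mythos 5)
Your proposal is correct and follows essentially the same route as the paper's own proof: show that the set of states reachable from~$n$ in~$i$ steps is an integer interval of cardinality at least~$\lfloor z\rfloor^{i}\geq 2\xmd q$, observe that the last letter of the incoming path is determined by the endpoint modulo~$p$, and conclude by pigeonhole against $|\Bpq|=2\xmd q-1$. If anything, your explicit identification of the last letter as $q\xmd m'\bmod p$ and the injectivity of $m'\mapsto q\xmd m'\bmod p$ on $2\xmd q\leq p$ consecutive integers makes precise a step the paper only sketches.
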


\begin{proof}
  \rproperty{big-tpq-floor-z-succ} states that every state has at least~$\floor{z}$
  outgoing transitions in~$\Tpq$.
  Hence every state is the origin of at least~$\floor{z}^i$ distinct paths of length~$i$.
  Let~$n$ be a state and~$S$ the set of the states reachable from~$n$ in~$i$ steps.
  The cardinal of~$S$ is greater than~$2 \xmd q$ (previous paragraph) and~$S$
  is an integer interval.
  Hence~$S \mod p$ visits at least~$2\xmd q$ different residue classes modulo~$p$.
  Since the function mapping the residue classes of a state~$s$ and the label
  of the unique incoming transition of~$s$ in~$\Tpq$.
  The incoming transitions of the states of~$S$ are labelled by at least~$2\xmd q$ distinct
  letters.
  At least one of these letters does not belong to~$\Dpq$ (since it is of cardinal~$2\xmd q -1$);
  we denote by~$a$ this letter and by~$m$ a state of~$S$ the incoming transition of which is labelled by~$a$.
  The last transition of the path from~$n$ to~$m$ in~$\Tpq$ is deleted in~$\Tpqp$.
\end{proof}
For every finite word~$u$ in~$\pref{\Wpq}=0^{*}\Lpq$, we denote by~$\Zu$ the set
of the \owords that are accepted by~$\Tpq$ and that start
with~$u$:~$\mes\Zu =u \xmd (u^{-1}\xmd  \Wpq)$.
It is related to the sets~$\Vn$ (\rnotation{defi-vn}) by the following:
\begin{equation*}
  \forall u\in \pref{\Wpq}\quantsp \Zu = u \xmd \Vn \quad\quad \text{where} \quad n=\val{u} \eqpnt
\end{equation*}
Moreover, we denote by~$\Iu$ the set of the \realvalues of these
words:~$I_u=\realval{\Zu}$.
It then follows from the previous equation and \rproposition{ro-vn-inte} that
\begin{equation}\lequation{iu-expl}
  \forall u\in \pref{\Wpq}\quantsp I_u =
  \left[\realval{u\xmd \minword{n}},\,\realval{u\xmd\maxword{n}}\right]
  \quad\quad \text{where}\quad n=\val{u}\eqpnt
\end{equation}
When the base~$\pq$ is large,~$I_u$ is never reduced to a single element
since~$(\realval{u\minword{n}}-\realval{u\minword{n}})$ is equal
to $(\frac{p}{q})^{-\wlen{u}}\spann$,
a positive real from \rlemma{big-span-posi}.
Note also the following properties satisfied by these intervals:
\begin{gather}\lequation{iu-pref->subs}
  \forall u,v\in \pref{\Wpq} \quantsp u \text{ is a prefix of }v
  \implies \Iu\subseteq\Iu[v]\eqpnt \\[1ex]
  \lequation{iu-two-points->pref}
  \forall u,v\in \pref{\Wpq}
  \quantsp \Iu\cap\Iu[v] \text{ is non-trivial} \implies
    \left\{\begin{array}{ll}
      \text{either } &u\text{ is a prefix of }v \\
      \text{or} & v\text{ is a prefix of }u
    \end{array}\right.\\[1ex]
  \lequation{spli-Iu}
  \forall u\in\pref{\Wpq}\quantsp
  \Iu = \bigcup_{\substack{a\in\Ap\\
                           u\xmd a\in{\pref{W_{\hspace*{-0.5pt}\scriptstyle z}}}}}
        \Iu[ua] \eqpnt\eee
\end{gather}

\begin{figure}[p]\centering
% \noindent\hspace*{-1pt}\begin{minipage}[t]{0.75\linewidth}
  \subfigure[$\Wst$]{
  \includegraphics[height=0.9\textheight]{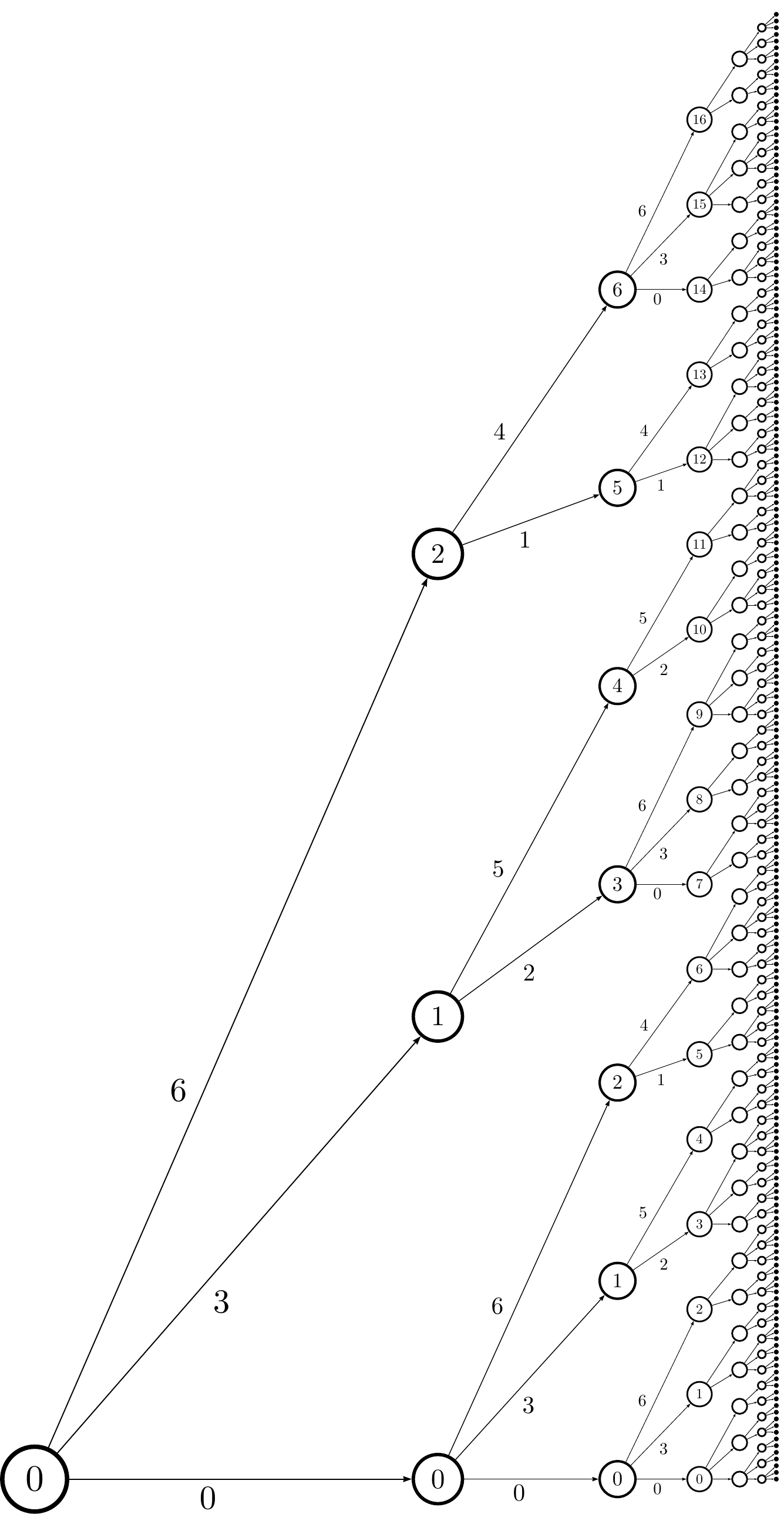}
%   \captionof{figure}{}
  \lfigure{w73}
  }
% \end{minipage}%
% \begin{minipage}[t]{0.25\linewidth}
  \subfigure[From left to right, ${\Iu[\epsilon]}$, $\itrans({\Iu[\epsilon]}),\ldots, \itrans^5({\Iu[\epsilon]})$]{
   \lfigure{ref-i73}%
    \includegraphics[height=0.9\textheight]{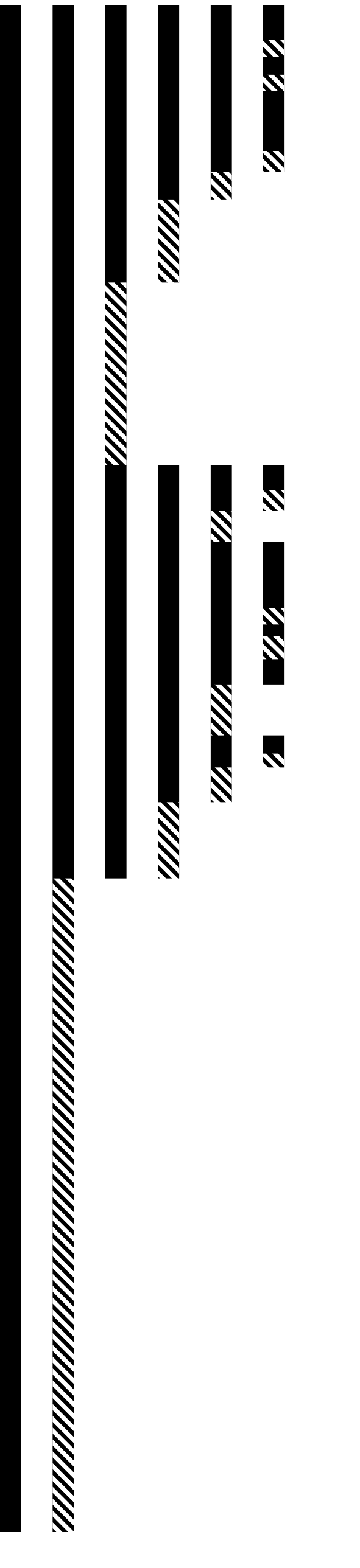}
%   \captionof{figure}{
  }
%   }

  \caption{Construction of~$\adh{\mathbf{Span}_{\frac{7}{3}}}$ by successive interval deletions}
\end{figure}

We denote by~$\II$ the set of all the intervals~$I_u$,
\begin{equation}
  \II = \big\{ I_u~|~u\in\pref{\Wpq}\big\} \quad,
\end{equation}
and by~$\itrans$ the function~$\pII\rightarrow\pII$
defined as follows.
\begin{equation}\lequation{defi-itra}
  \forall \pS\in\powerset(\II)\quantsp
  \itrans(\pS) = \Big\{~ \Iu[u\xmd d]~~\Big|~~\Iu \in \pS
                                           ~,~~
                                           ud\in\pref{\Wpq}
                                           ~~\text{and}~~
                                           d\in\Bpq
                                           ~\Big\}
\end{equation}

% \begin{subequations}\lequation{defi-itra}
% \begin{align}\lequation{defi-itra-i}
%
%
%
%   \forall u\in\pref{\Wpq} \quad \itrans\big(\{\Iu\}\big) ={}&
%                   \big\{ \Iu[u\xmd d] ~\big|~ d\in\Bpq,~ u\xmd d\in\Wpq \big\}  \\
%   \lequation{defi-itra-ii}
%   \forall \pS\in\powerset(\II)\quad \itrans(\pS) ={}&
%                                             \bigcup_{I\in \pS} \itrans(I)
% \end{align}
% \end{subequations}
%
% Note that by definition of~$\Iu$

%
In \requation*{defi-itra}, the variable~$d$ is taken in~$\Bpq$
whereas in~\requation*{spli-Iu} the variable~$a$ is taken in~$\Ap$.
When~$z$ is a large base,~$\Bpq$ is strictly included~$\Ap$, hence~$\itrans$ is a
refinement function:
\begin{equation*}
  \forall \pS\in\pII \quantsp \left( \Cup_{I\in \itrans(\pS)} I \right)\subseteq \Cup_{I\in \pS} I \eqpnt
\end{equation*}
\rfigure{ref-i73} shows the successive applications
of function~$\itrans$ to~$\Iu[\epsilon]$ in the large base~$z=\st$.
Hashed segments contain the points that are removed by the last application
of~$\itrans$.
\begin{lemma}\llemma{split-conv}
  We assume that~$p>2\xmd q -1$.
  Let~$\pS_0=\{I_\epsilon\}$ and for every integer~$j$,~$\pS_{j+1}=\itrans(\pS_j)$.
  Moreover, for every integer~$j$ we write~$U_j=\big(\Cup_{I\in\pS_j} I \big)$.
  Then, it holds
  \begin{equation*}
    \bigcap_{j\geq 0} U_j =  \adh{\Spq} \eqpnt
  \end{equation*}
\end{lemma}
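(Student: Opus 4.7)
The strategy is first to unfold the recursion defining $\pS_j$: by a straightforward induction on $j$ using \requation{defi-itra}, one gets $\pS_j = \{\Iu \mid u \in \pref{\Wpq} \cap \Bpq^j\}$, so that $U_j$ is a finite union of such intervals. In view of \rproposition{adh-Spq-ro-Wpq-cap-Bpqo}, it then suffices to prove the equality $\bigcap_{j \geq 0} U_j = \realval{\Wpq \cap \Bpqo}$.

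The inclusion $\supseteq$ will be immediate: given $\ow{w} \in \Wpq \cap \Bpqo$, its length-$j$ prefix $u_j$ lies in $\pref{\Wpq} \cap \Bpq^j$ and $\ow{w} \in \Zu[u_j]$, whence $\realval{\ow{w}} \in \Iu[u_j] \subseteq U_j$ for every $j$.

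For the reverse inclusion I plan a K\"onig's lemma argument. I fix $x \in \bigcap_j U_j$ and build the tree whose level-$j$ nodes are the (at most $(2q-1)^j$) words $u \in \pref{\Wpq} \cap \Bpq^j$ satisfying $x \in \Iu$, the parent map being the deletion of the last letter. Each level is finite and non-empty (the latter because $x \in U_j$), and if $v$ is the length-$j$ prefix of a level-$(j{+}1)$ node $u$, then $v \in \pref{\Wpq} \cap \Bpq^j$ and $x \in \Iu \subseteq \Iu[v]$ by \requation{iu-pref->subs}, so $v$ is itself a node. K\"onig's lemma then yields an infinite branch, i.e.\ an \oword $\ow{w} \in \Bpqo$ every finite prefix of which labels a path of $\Tpq$ from the root; by determinism of $\Tpq$ and \rlemma{tpq-righ-exte}, this $\ow{w}$ belongs to $\Wpq$.

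It remains to identify $x$ with $\realval{\ow{w}}$. Both numbers lie in $\Iu$ for every prefix $u$ of $\ow{w}$: the point $x$ by construction of the branch, and $\realval{\ow{w}}$ because $\ow{w} \in \Zu$. By \rproposition{ro-vn-inte} together with the affine scaling of $\realval{}$ on cylinders, the width of $\Iu$ equals $(p/q)^{-\wlen{u}}\,\spann[\val{u}]$, which is bounded above by $(p/q)^{-\wlen{u}}\,\omegapq$ thanks to \rlemma{big-span-posi} and therefore tends to zero as $\wlen{u}$ grows. Hence $x = \realval{\ow{w}} \in \realval{\Wpq \cap \Bpqo}$, and the two inclusions together finish the proof. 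The hard part is the K\"onig's lemma step, where one has to check the three structural properties of the tree (finite levels, non-empty levels, parent-truncation staying in the tree) and then convert the resulting infinite branch into a genuine element of $\Wpq$ using the absence of dead-ends in $\Tpq$; once this is in place, the remainder is just the observation that the cylinder intervals have width going to zero.
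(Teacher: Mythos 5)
Your proof is correct. Its skeleton coincides with the paper's: unfold the recursion to get $\pS_j=\{\Iu\mid u\in\pref{\Wpq}\cap{\Bpq}^{j}\}$ and reduce both inclusions, via \rproposition{adh-Spq-ro-Wpq-cap-Bpqo}, to the identity $\bigcap_{j}U_j=\realval{\Wpq\cap\Bpqo}$; the easy inclusion is handled essentially identically. Where you genuinely diverge is in the compactness step for $\bigcap_j U_j\subseteq\adh{\Spq}$. The paper picks, for each $j$, an \oword $\ow{w}_j\in\Wpq$ whose first $j$ letters lie in $\Bpq$ \emph{and which evaluates exactly to~$x$} (possible because $x\in\Iu[u_j]=\realval{\Zu[u_j]}$), extracts a convergent subsequence in the compact space $\Apo$, and concludes by continuity of $\realval{}$ that the limit still evaluates to~$x$. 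You instead apply K\"onig's lemma to the tree of finite prefixes $u$ with $x\in\Iu$, obtain a single branch $\ow{w}\in\Wpq\cap\Bpqo$, and then must separately identify $\realval{\ow{w}}$ with $x$ through the width estimate $\ell(\Iu)\leq\omegapq\xmd z^{-\wlen{u}}$. The two devices are interchangeable here (K\"onig's lemma is sequential compactness for this finitely branching tree); yours avoids having to choose, level by level, witnesses that evaluate exactly to~$x$, at the price of the shrinking-width argument --- which costs nothing, since the paper records that estimate anyway as \rlemma{l-Iu-appr}. One small remark: you invoke \requation{iu-pref->subs} for $\Iu\subseteq\Iu[v]$ when $v$ is a prefix of $u$; that is the correct monotonicity (longer prefixes give smaller cylinders), even though the displayed formula in the paper states the inclusion with the roles of $u$ and $v$ interchanged.
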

\begin{proof}
  Right inclusion.
  Let~$x$ be in~$\adh{\Spq}$ and~$\ow{w}$ a word in~$\adh{\spanwords}$
  such that~$\realval{\ow{w}}=x$.
  From \rtheorem{span-auto},~$w\in\Wpqp$.
  We fix an integer~$j$ and denote by~$u$ the prefix of length~$j$ of~$\ow{w}$,
  hence~$u$ belongs to~$\Bpqs$.
  Inductively applying~\requation{defi-itra} yields that
  \begin{multline*}
    \pS_j = \itrans^{\,j}(\pS_0)
          = \{ I_v ~|~ v\in X_j\}~, \\
          \text{with }~
             X_j = \big\{ v\in\Bpqs\cap\pref{\Wpq} ~\big|~  \wlen{v} = j \big\}\quad.
  \end{multline*}
  It may be verified that~$u$ belongs to~$X_j$, hence~$I_u$ to~$\pS_j$.
  Since by definition~${\Iu=\realval{\Zu}}$ and~$\mes\ow{w}\in\Zu$, the
  number~$\mes x=\realval{\ow{w}}$ belongs to~$I_u$ hence to~$U_j$.
  Since~$j$ was taken arbitrarily, it follows that~$x$ belongs to~$\bigcap_{j\geq 0} U_j$.
  Hence, it holds
  \begin{equation*}
    \bigcap_{j\geq 0} U_j \supseteq\adh{\Spq}\eqpnt
  \end{equation*}

  \medskip

  Left inclusion.
  Let~$x$ be a real number of~$\bigcap_{j\geq 0} U_j$.
  For every integer~$j$, the number~$x$ belongs to~$U_j$, hence to some
  interval~$I_{u_j}$ of~$\pS_j$, where~$u_j\in\Bpqs$ and~$\wlen{u_j}=j$.
  Therefore, there exists an \oword~$\ow{w}_j$ in~$\Wpq$ that starts
  with~$u_j$ and that evaluates a.r.p.\@ to~$x$.
  In particular, note that the first~$j$
  letters of~$\ow{w}_j$ belong to~$\Bpq$.
  The topology on~$\Apo$ implies that every infinite sequence
  has a convergent sub-sequence.
  We denote by~$\ow{w}$ the limit of an arbitrary convergent sub-sequence of~$(\ow{w}_j)_{j\in\N}$.
  Since~$\Wpq$ is
  closed,~$\ow{w}$ is a word of~$\Wpq$.
  Since for every integer~$j$, the first~$j$
  letters of~$\ow{w}_j$ belong to~$\Bpq$,~$\ow{w}$ also belongs to~$\Bpqo$.
  Since~$\realval{}$ is continuous,~$\realval{\ow{w}}=x$ and then,~$x$ belongs to~$\realval{\Bpqo\cap\Wpq}$.
  Finally, \rproposition{adh-Spq-ro-Wpq-cap-Bpqo} yields that
%   \rproposition{adh-Spq-ro-Wpq-cap-Bpqo} yields that ~$\realval{\Bpqo\cap\Wpq}=\adh{\Spq}$,
%   hence that~$x\in\adh{\Spq}$
%   and finally
  \begin{equation*}
    \bigcap_{j\geq 0} U_j \subseteq  \realval{\Bpqo\cap\Wpq}=\adh{\Spq} \eqpnt \qedhere
  \end{equation*}
\end{proof}

We denote by~$\ell$ the Lebesgue measure on~$\R$.
Then, from \requation*{iu-expl}, it holds
\begin{equation}
  \forall u\in\pref{\Wpq}\quad \ell(I_u)
    = \left(\pq\right)^{-\wlen{u}}\spann
      \quad\quad\text{where~$n$ is such that}\quad
      0\pathx{u}[\Tpq]n
  \eqpnt
\end{equation}
Applying \rlemma{big-span-posi} then implies the following.

\begin{lemma}\llemma{l-Iu-appr}
  If~$p>2\xmd q -1$, then for every~$u\in\pref{\Wpq}$,
  it holds
  \begin{equation*}
    0<\gammapq\xmd z^{-\wlen{u}}\leq\ell(\Iu)\leq \omegapq\xmd z^{-\wlen{u}} \eqpntvrg
  \end{equation*}
  (recall that~$\gammapq=\realval{q\xmd \minword{1}}$ and~$\quad\omegapq=\realval{\maxword{0}}$).
\end{lemma}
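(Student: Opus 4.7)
The plan is direct: this lemma follows immediately from the displayed identity preceding its statement, combined with \rlemma{big-span-posi}. First, given $u \in \pref{\Wpq}$, I recall that $\Tpq$ is deterministic and has no dead-ends (\rlemma{tpq-righ-exte}), so there exists a unique integer~$n$ with $0 \pathx{u}[\Tpq] n$; indeed $n = \val{u}$ by \rlemma{tpq-val}. The formula just above the lemma statement then reads $\ell(\Iu) = z^{-\wlen{u}} \spann$, expressing $\ell(\Iu)$ as the product of the explicit positive factor $z^{-\wlen{u}}$ and the span of~$n$.

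Second, \rlemma{big-span-posi}---whose hypothesis is precisely $p > 2\xmd q - 1$---asserts that $0 < \gammapq \leq \spann \leq \omegapq$ for \emph{every} integer~$n$. Multiplying this uniform two-sided bound by the strictly positive factor $z^{-\wlen{u}}$ yields
\begin{equation*}
  0 < \gammapq \xmd z^{-\wlen{u}} \leq \ell(\Iu) \leq \omegapq \xmd z^{-\wlen{u}}\eqvrg
\end{equation*}
which is exactly the conclusion to be shown.

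There is essentially no obstacle: once the uniform span bounds of \rlemma{big-span-posi} are in hand, the statement is one substitution away. The substantive point being recorded here is that, in the large-base case, the lengths of \emph{all} the intervals~$\Iu$ (not merely those arising as $\spann$ themselves) are sandwiched between two geometric sequences of common ratio~$z^{-1}$. This is precisely the uniform control needed for the covering argument by which \rsection{span-larg} ultimately derives that $\adh{\Spq}$ has Lebesgue measure zero.
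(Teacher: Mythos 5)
Your proof is correct and is exactly the paper's argument: the displayed identity $\ell(\Iu)=z^{-\wlen{u}}\spann$ (from \requation{iu-expl}) combined with the uniform bounds $0<\gammapq\leq\spann\leq\omegapq$ of \rlemma{big-span-posi}, which the paper itself presents as the entire justification. Nothing further is needed.
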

By abuse of notation, we use~$\ell$ on elements of~$\pII$ with the following meaning
\begin{equation}
  \forall \pS\in\pII\quad \ell(\pS) = \ell\left(\,\bigcup_{I\in \pS} I\right)\eqpnt
\end{equation}

\begin{lemma}\llemma{spli-alph}
  If~$p>2\xmd q-1$, there exists a positive integer~$i$ and a real number~$\alpha$,~${0<\alpha<1}$,
  such that for every word~$u\in\pref{\Wpq}$,
  \begin{equation*}
    \ell\left(\itrans^{\,i}(\Iu)\right) \leq \alpha \xmd \ell (\Iu)
    \eqpnt
  \end{equation*}
\end{lemma}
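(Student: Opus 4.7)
The plan is to choose an integer $i$ such that $\floor{z}^{\,i} \geq 2q$, and then to show that $\itrans^{i}$ shrinks $\Iu$ by a factor bounded away from $1$ uniformly in $u$. Iterating \requation*{defi-itra} gives
\[ \itrans^{i}(\{\Iu\}) = \bigl\{\,\Iu[uw] \,\bigm|\, w \in \Bpq^{\,i},\ uw \in \pref{\Wpq}\,\bigr\}, \]
while iterating \requation*{spli-Iu} gives $\Iu = \bigcup \bigl\{\Iu[uw] : w \in \Ap^{\,i},\ uw \in \pref{\Wpq}\bigr\}$. A short sub-argument shows that in each of these two decompositions sibling sub-intervals meet only at endpoints: this rests on \rproposition{rela-mini-maxi} and the observation (from \requation*{tpq-path}) that two consecutive children $m$ and $m+1$ of a state $n$ of $\Tpq$ are reached via the letters $a$ and $a+q$. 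Lebesgue measure is thus additive on both decompositions, and
\[ \ell(\Iu) - \ell\bigl(\itrans^{i}(\{\Iu\})\bigr) \;=\; \sum_{\substack{w \in \Ap^{\,i} \setminus \Bpq^{\,i} \\ uw \in \pref{\Wpq}}} \ell(\Iu[uw]). \]

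Setting $n = \val{u}$, the choice of $i$ allows me to invoke \rlemma{n-unre-m}: there is a word $v \in \Ap^{\,i}$ labelling a path of length $i$ in $\Tpq$ starting at $n$ that does not exist in $\Tpqp$. In particular $uv \in \pref{\Wpq}$, while at least one letter of $v$ lies outside $\Bpq$, so $v \notin \Bpq^{\,i}$; hence $\Iu[uv]$ contributes to the right-hand sum above. Applying \rlemma{l-Iu-appr} to $\Iu[uv]$ and to $\Iu$ gives $\ell(\Iu[uv]) \geq \gammapq\, z^{-\wlen{u}-i}$ and $\ell(\Iu) \leq \omegapq\, z^{-\wlen{u}}$, whence
\[ \frac{\ell(\itrans^{i}(\{\Iu\}))}{\ell(\Iu)} \;\leq\; 1 - \frac{\gammapq}{\omegapq\, z^{\,i}}. \]
By \rlemma{big-span-posi}, $\gammapq > 0$, so $\alpha := 1 - \gammapq/(\omegapq\, z^{\,i})$ lies in $(0,1)$ and the lemma follows with this $i$ and $\alpha$.

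The only subtle point I foresee is the additivity of Lebesgue measure used in the first paragraph, that is, the claim that the distinct sub-intervals $\Iu[uw]$ and $\Iu[uw']$ (with $w, w'$ of the same length and both $uw, uw' \in \pref{\Wpq}$) have disjoint interiors. This reduces by induction on $|w|$ to the one-letter case, where \rproposition{rela-mini-maxi} yields the equality $\realval{(a\,{+}\,q)\,\minword{m\,{+}\,1}} = \realval{a\,\maxword{m}}$ between the right endpoint of $\Iu[u\xmd a]$ and the left endpoint of $\Iu[u\xmd (a+q)]$ whenever both children exist, together with the coverage identity \requation*{spli-Iu} to rule out any gap.
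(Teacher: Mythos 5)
Your proof is correct and follows essentially the same route as the paper's: you choose $i$ via \rlemma{n-unre-m}, decompose $\Iu$ into the sub-intervals $\Iu[uw]$ with $\wlen{w}=i$, note that the word $v$ supplied by that lemma yields a removed sub-interval whose length is bounded below by \rlemma{l-Iu-appr}, and arrive at the same constant $\alpha = 1 - z^{-i}\gammapq/\omegapq$. The only cosmetic difference is that you re-derive the near-disjointness of sibling intervals directly from \rproposition{rela-mini-maxi}, whereas the paper simply invokes the contrapositive of the stated property \requation*{iu-two-points->pref}.
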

\begin{proof}
  We choose~$i$ as in \rlemma{n-unre-m}.
  Let~$u$ be a word of~$\pref{\Wpq}$.
  We denote by~$k$ the length of~$u$ and by~$n$
  the state reached by the run of~$u$ in~$\Tpq$.
  Then, from \requation*{defi-itra} and the contrapositive
  of~\requation*{iu-two-points->pref},
  it holds
  \begin{multline*}
    \ell \left(\itrans^{\,i}(\Iu)\right) =
        \ell\bigg(\bigcup_{w\in X} \Iu[uw] \bigg) =
        \sum_{w\in X} \ell(\Iu[uw]) \\[0.2em]
        \text{where}\quad X=\big\{w\in{\Bpqs}~\big|~ \wlen{w}=i,~ uw\in\pref{\Wpq} \big\}
    \eqpnt
  \end{multline*}
  Then, \requation{spli-Iu} yields that
  \begin{multline*}
    \ell \left(\itrans^{\,i}(\Iu)\right)
        = \ell(I_u) - \sum_{w\in Y} \ell(\Iu[uw]) \\[0.2em]
    \text{where}\quad Y= \big\{w \in \Aps~\big|~w\mathbin{\notin}\Bpqs,~\wlen{w}=i,~uw\in\pref{\Wpq} \big\} \eqpnt
  \end{multline*}
  Now, we apply \rlemma{n-unre-m} to~$n$ : there exists a state~$m$ and a word~$v$ of
  length~$i>0$ such that the path~$n\pathx{v}m$ exists in~$\Tpq$ but does not exists in~$\Tpqp$.
  Hence,~$v$ features
  a digit that belongs to~$\Ap\setminus\Bpq$.
  It follows that~$v$ belongs to~$Y$, hence that the following holds.
  \begin{align}
    \ell \left(\itrans^{\,i}(\Iu)\right) \notag
        \leq{}& \ell(\Iu) - \ell(\Iu[uv]) \notag\\
        \leq{}& \ell(\Iu) \left(1 - \frac{\ell(\Iu[uv])}{\ell(\Iu)} \right) \notag\\
        \leq{}& \ell(\Iu) \left(1 - \frac{z^{-\wlen{uv}}\gammapq}{z^{-\wlen{u}}\omegapq}\right)
        \quad\quad\text{(Using \rlemma{l-Iu-appr})}
        \notag \\
        \leq{}& \alpha \xmd \ell(\Iu) \quad\quad\text{with}\quad \alpha=1 - z^{-i}\frac{\gammapq}{\omegapq}\notag
%
%         \leq{}& \ell(\Iu) \left(1 - \frac{(\frac{p}{q})^{-k-i}\,\spann[m]}{(\frac{p}{q})^{-k}\,\spann[n]} \right) \notag\\
%         \leq{}& \ell(\Iu) \left(1 - \left(\frac{p}{q}\right)^{-i}\frac{\spann[m]}{\spann[n]} \right) \lequation{spli-alph-i}
  \end{align}
%   We choose~$\alpha$ as above.
  %
  Since~$\gammapq$ and~$\omegapq$ are positive, then~$\alpha<1$.
  Since~$i$ is positive and~$\gammapq \leq \omegapq$, then~$\alpha>0$.
%   \begin{equation*}
%     \alpha= 1- \left(\frac{p}{q}\right)^{-i} \frac{x}{y}  \quad\quad\text{where}\quad
%     \begin{array}{l}
%       \displaystyle x= \inf \big(\Spq\big) \\
%       \displaystyle y= \sup \big(\Spq\big)
%     \end{array}
%   \end{equation*}
%   Since~$z$ is a large base, every span is greater than a positive real number~$\eta$
%   (\rlemma{big-span-posi}) and~$x,y$ are positive,
%   hence~$\alpha<1$.
%   %
%   Moreover, since~$i$ is positive and since~$x\leq y$,~$\alpha$ is positive.
%   %
%   Then, the whole statement follows from~$\requation*{spli-alph-i}$.
\end{proof}

\begin{proposition}\lproposition{wpqp-null-meas}
  If~$p> 2\xmd q-1$, then~$\adh{\Spq}$ is of measure zero.
\end{proposition}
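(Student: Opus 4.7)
The plan is to combine \rlemma{split-conv} and \rlemma{spli-alph} with the continuity of the Lebesgue measure on a decreasing family of sets of finite total measure. By \rlemma{split-conv}, $\adh{\Spq} = \bigcap_{j\geq 0} U_j$, and since $U_0 = \Iu[\epsilon]$ is a bounded interval it has finite measure. It therefore suffices to prove that $\ell(U_j)$ tends to zero as $j\to\infty$.

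First I would observe that the intervals composing $\pS_j$ have pairwise disjoint interiors. Indeed, every element of $\pS_j$ is of the form $\Iu[u]$ with $u \in \Bpqs \cap \pref{\Wpq}$ of length~$j$, and by \requation*{iu-two-points->pref} two such distinct $\Iu[u]$ and $\Iu[v]$ can meet only at endpoints, since neither of $u,v$ is a prefix of the other. Consequently $\ell(U_j) = \sum_{I\in\pS_j}\ell(I)$, so the single-interval bound of \rlemma{spli-alph} extends by summation to the whole family: applying $\itrans^{\,i}$ to $\pS_j$ replaces each $I \in \pS_j$ by a disjoint union of sub-intervals of total length at most $\alpha\,\ell(I)$, which yields $\ell(U_{j+i}) \leq \alpha\,\ell(U_j)$.

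Iterating this bound gives $\ell(U_{ki}) \leq \alpha^{k}\,\ell(U_0)$, which tends to zero as $k\to\infty$ because $0<\alpha<1$. The full sequence $(\ell(U_j))_{j\in\N}$ is non-increasing because $\itrans$ only produces sub-intervals of its input, so $U_{j+1}\subseteq U_j$; the existence of a subsequence tending to zero then forces the whole sequence to tend to zero. Continuity from above of the Lebesgue measure on the decreasing family of subsets of $U_0$ finally gives $\ell(\adh{\Spq}) = \lim_{j} \ell(U_j) = 0$, as required.

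The main bookkeeping obstacle is justifying the additive extension of \rlemma{spli-alph} from a single interval to the whole family $\pS_j$; this relies on the disjointness recorded in \requation*{iu-two-points->pref} together with the fact that $\itrans$ acts independently on each interval. Everything else is a direct Cantor-style shrinking-measure computation analogous to the ternary Cantor set argument recalled just before \rlemma{n-unre-m}.
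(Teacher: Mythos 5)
Your proof is correct and follows essentially the same route as the paper: iterate \rlemma{spli-alph} on the decreasing sequence $(U_j)$ from \rlemma{split-conv} to get $\ell(U_{ki})\leq\alpha^k\,\ell(U_0)\to 0$, then pass to the intersection. The only difference is that you make explicit the summation over the pairwise (interior-)disjoint intervals of $\pS_j$ via \requation*{iu-two-points->pref}, a step the paper's proof leaves implicit when it ``applies'' \rlemma{spli-alph} to the whole family.
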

\begin{proof}%[Proof of \rproposition{wpqp-null-meas}]
  Let~$(U_j)_{j\in\N}$ be the sequence defined in \rlemma{split-conv}.
  Let~$i,\alpha$ be the two parameters from \rlemma{spli-alph}.
  Applying the later yields
  \begin{equation}\lequation{wpqp-null-meas-i}
    \forall k\in\N \quantsp \ell(U_{k\xmd i}) < \alpha^{k}\,\ell(U_0)\eqpnt
  \end{equation}
  Since the sequence~$\big(U_{j}\big)_{j\in\N}$ is decreasing by inclusion,
  the sequence~$\big(\ell(U_{j})\big)_{j\in\N}$ is decreasing.
  Then, \requation*{wpqp-null-meas-i} implies that the later sequence tends
  to~$0$ when~$j$ tends to infinity.
  Finally, the set~$\adh{\Spq}$, which is the limit of the sequence~$\big(U_{j}\big)_{j\in\N}$
  (\rlemma{split-conv}),
  is of measure zero.
\end{proof}

%%%%%%%%%%%%%%%%%%%%%%%%%%%%%%%%%%%%%%%%%%%%%%%%%%%%%%%%%%%%%%%%%%%%%%%%
%%%%%%%%%%%%%%%%%%%%%%%%%%%%%%%%%%%%%%%%%%%%%%%%%%%%%%%%%%%%%%%%%%%%%%%%
% \subsection{ Hausdorff dimension of span-sets in large bases~$(p>2\xmd q -1)$}
% \lsection{Hau-dim-spa}

%
\paragraph{Hausdorff dimension}
% When it comes to sets of measure zero, the notion of Hausdorff dimension allows
% to get more precise informations of their content (\cf~\cite{Falc14}).
% %
% For instance, it is known that the Hausdorff dimension of the Ternary Cantor set
% is~$\frac{\ln 2}{\ln 3 }$\,; and a construction keeping~$k$ parts among~$n$
% usually results in sets of Hausdoff dimension~$\frac{\ln k}{\ln n}$\,.
% %
% In the case of~$\adh{\Spq}$, we keep ``in average''~$2\xmd q -1$ parts among~$p$,
% and one would expect the Hausdorff dimension to be~$\frac{\ln 2\xmd q -1}{\ln p}$\,.
% %
% However we will show that it is strictly smaller.
%
One can go further in the comparison between the Cantor sets and the
span-sets, and investigate their \emph{Hausdorff dimension} which
give more accurate information on their topological structure
(\cf\citealp{Falc14}).
It is known that the Hausdorff dimension of the Ternary Cantor set~$K_{3}$
is~$\msp\frac{\ln 2}{\ln 3}\msp$.
Generalization of the construction that yields~$K_{3}$, in which~$k$
parts out of~$n$ are kept usually results in sets with Hausdoff
dimension~$\msp\frac{\ln k}{\ln n}\msp$.
In the case of~$\adh{\Spq}$, we keep ``in average''~$2\xmd q-1$ parts
out of~$p$, and one could expect a Hausdorff dimension
of~$\msp\frac{\ln(2\xmd q-1)}{\ln p}\msp$.
We show below that this dimension is indeed strictly smaller.

Given a set~$F$, the \emph{$d$-dimensional Hausdorff measure} of~$F$ is defined by
\begin{equation*}
  \hausmeas[d]{F} = \lim_{\epsilon \rightarrow 0}~\inf \left\{~ \sum_i{} {r_i}^{\!d} ~~\middle|~~
    \begin{array}{@{}l}
    \text{there is a countable cover of }F\text{ by balls } B_0,B_1,\ldots\\
    \text{such that for every }i\text{, }B_i\text{ has radius }r_i\text{ and }r_i\mathbin{<}\epsilon.
    \end{array}
  \right\}\eqpnt
\end{equation*}
Then, the Hausdorff dimension of~$F$ is defined by:
\begin{equation*}
  \hausdim{F} = \inf\left\{ d\mathbin{\geq}0 ~\middle|~ \hausmeas{F} \mathbin{=} 0\right\} \eqpnt
\end{equation*}

\begin{proposition}\lproposition{haus}
  If~$p> 2q-1$, then~$\msp\displaystyle{\frac{\ln 2}{\ln p -\ln q}}\msp$
  is an upper bound for
  the Hausdorff dimension of~$\adh{\Spq}$.
%   is smaller than or equal
%   to~$\msp\displaystyle{\frac{\ln 2}{\ln z}}\msp$.
\end{proposition}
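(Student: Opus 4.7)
The plan is to exhibit, for every dimension $d$ strictly greater than $\ln 2/(\ln p-\ln q)$, a sequence of covers of~$\adh{\Spq}$ witnessing that its $d$-dimensional Hausdorff measure vanishes. The natural candidates are the sets $U_j$ already introduced in the proof of \rlemma{split-conv}: for every integer $j$, $U_j=\bigcup_{u\in X_j}\Iu$ where $X_j=\{u\in\Bpqs\cap\pref{\Wpq}\mid\wlen{u}=j\}$, and $\adh{\Spq}\subseteq U_j$.

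Two estimates are needed. The first is geometric and already available: \rlemma{l-Iu-appr} gives $\ell(\Iu)\leq\omega_z z^{-j}$ for every $u\in X_j$, so each $\Iu$ fits inside a ball of radius $r_j=\omega_z z^{-j}/2$. The second, which is the main step, is a combinatorial bound on the cardinality of~$X_j$. The key observation is that when $p>2q-1$, every state of~$\Tpqp$ has \textbf{at most two} outgoing transitions. Indeed, by \requation{tpqp-path}, the outgoing transitions of a state~$n$ in~$\Tpqp$ are labelled by those digits $a\in\Bpq$ that satisfy $a\equiv -np\pmod q$, and $\Bpq$, being an integer interval of cardinal $2q-1$, meets any residue class modulo~$q$ in at most two elements. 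By determinism of~$\Tpqp$, the number of runs of length~$j$ starting at state~$0$, which is exactly~$|X_j|$, therefore satisfies $|X_j|\leq 2^j$.

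Combining both estimates gives the required control:
\begin{equation*}
\sum_{u\in X_j}r_j^{\,d}\leq 2^j\left(\frac{\omega_z}{2}\right)^{\!d}z^{-jd}=\left(\frac{\omega_z}{2}\right)^{\!d}\left(\frac{2}{z^d}\right)^{\!j}.
\end{equation*}
For $d>\ln 2/(\ln p-\ln q)$ we have $2/z^d<1$, hence the right-hand side tends to~$0$ as $j\to\infty$, while $r_j\to 0$ as well. Feeding these covers into the definition of $\hausmeas[d]{\cdot}$ therefore yields $\hausmeas[d]{\adh{\Spq}}=0$, and taking the infimum over such~$d$ gives the claimed bound $\hausdim{\adh{\Spq}}\leq\ln 2/(\ln p-\ln q)$.

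I anticipate no substantial obstacle beyond isolating the at-most-two-outgoing-transitions property of~$\Tpqp$; the remainder is standard bookkeeping with the Hausdorff measure. A minor point worth checking is that the bound $|X_j|\leq 2^j$ is unaffected if some states of~$\Tpqp$ happen to have only one outgoing transition, which is immediate since we only need an upper bound on the branching.
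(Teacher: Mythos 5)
Your proposal is correct and follows essentially the same route as the paper: both rest on the observation that every node of~$\Tpqp$ has at most two outgoing transitions, giving at most~$2^j$ surviving intervals of length at most~$\omegapq z^{-j}$ at depth~$j$, whence the bound~$\ln 2/\ln z$. The only cosmetic difference is that the paper passes through the box-counting dimension while you feed the covers directly into the definition of the Hausdorff measure; your explicit justification of the at-most-two-successors property via residue classes of~$\Bpq$ modulo~$q$ is a welcome detail that the paper merely asserts.
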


\begin{proof}
We compute indeed an upper bound for the \emph{Minkovski}, or
\emph{box-counting} dimension, which is known to be an upper bound for
the Hausdorff dimension.
  Let~$r$ be a positive real number.
  We denote by~$N(r)$ the minimal number of interval of length~$r$
  required to cover~$\adh{\Spq}$.
  Let~$d$ be a positive real number.
  The remainder of the proof consists in majoring~$N(r)\xmd r^{d}$.
  In the process of deleting edges from~$\Tpq$ to build~$\Tpqp$,
  there are at most two surviving edges coming out from every node.
  Hence, at the depth~$i$ of~$\Tpqp$, there are at most~$2^i$ nodes accessible
  from the root.
  We fix~$i$ as follows:
  \begin{gather}
    i = \bceil{ \frac{\ln \omegapq - \ln r}{\ln z} }      \lequation{haus-i}
%     \\
%     \lequation{haus-r}
%     z^{i-1} \leq  \frac{\omegapq}{r} < z^i
  \end{gather}
  Note in particular that from \rlemma{l-Iu-appr}, it holds:
%   of length~$i-1$:
  \begin{equation*} \lequation{haus-l-Ui}
    \forall u\in \pref{\Wpq} \quantvrg \wlen{u}=i\quantsp \ell(\Iu)< \omegapq z^{-i} < r \eqpnt
  \end{equation*}
  Hence, one interval of length~$r$ is enough to cover~$I_u$ and then
  \begin{equation*}
    N(r) < 2^i \leq 2^{\left(\textstyle \frac{\ln \omegapq - \ln r}{\ln z}+1\right)} = \eta\xmd
    \big(r\big)^{\textstyle-\frac{\ln 2}{\ln z}}
    \quad\quad\text{with}\quad\eta = 2^{\textstyle\frac{\omegapq+\ln z}{\ln z}} \\
  \end{equation*}
  Hence,~$r^{d} N(r)$ is smaller than a constant times~${r\rule{0pt}{2ex}}^{\textstyle\big(d-\frac{\ln 2}{\ln z}\big)}$.
  If moreover~$d>\frac{\ln 2}{\ln z}$, then~$N(r) \xmd r^d$ tends to~$0$
  when~$r$ tends to~$0$.
  Since for every real~$r$, we may cover~$\adh{\Spq}$ with~$N(r)$ intervals of length~$r$,
  it holds
  \begin{equation*}
    \forall d>\frac{\ln 2}{\ln z}\quantsp \hausmeas[d]{\adh{\Spq}} \leq \lim_{r\rightarrow 0} N(r) \xmd r^d = 0 \eqpnt
    \qedhere
  \end{equation*}
\end{proof}
In all cases different from~$z=\frac{5}{2}$, the
bound~$\frac{\ln2}{\ln z}$ is better (smaller)
than the bound~$\msp\frac{\ln(2\xmd q-1)}{\ln p}\msp$ that was
inspired by the example of Cantor sets.
This can be seen by means of some classical (though sometimes
tedious) computations.
The case~$z=\frac{5}{2}$ is dealt with in a very similar way.
In this case, every node in~$\Tpq$ possesses at most~$3$ surviving
paths of length~$2$ that remains in~$\Tpqp$.
This yields a bound~$\frac{\ln3}{2\xmd\ln\frac{5}{2}}$ which is easily
checked to be smaller than the corresponding
bound~$\msp\frac{\ln3}{\ln5}\msp$.
%

%
% Some tedious calculation shows that if~$z\neq \frac{5}{2}$, it holds
% \begin{equation*}
%   \frac{\ln 2}{\ln z} < \frac{\ln (2\xmd q-1)}{\ln p} \eqpnt
% \end{equation*}
% From \rproposition{haus} then follows that the Hausdorff dimension of~$\adh{\Spq}$
% is strictly smaller than~$\frac{p}{2\xmd q-1}$.
% %
% For the special case~$z=\frac{5}{2}$, with a reasoning similar to the proof of \rproposition{haus},
% one can show that the Hausdorff dimension is at most~$\frac{\ln 3}{2\xmd \ln z}$ (which
% can be verified to be strictly smaller
% than~$\frac{\ln 2\xmd q-1}{\ln p}$).
% %s
% Indeed, in this special case, every node possesses at most 3 outgoing paths
% of length~$2$ that survive the deletion.
% %

% to the one of proof Using a reasoning similar allows to show that in the
%case~$z=\frac{5}{2}$

% \begin{lemma}
%   If~$p> 2q-1$, and~$z\neq \frac{5}{2}$, then~$\frac{\ln 2}{\ln z}<\frac{\ln(2q-1)}{\ln(p)}$.
% \end{lemma}
% \begin{proof} FIXME
%
% \end{proof}

% \input{extra_lemma}

%%%%%%%%%%%%%%%%%%%%%%%%%%%%%%%%%%%%%%%%%%%%%%%%%%%%%%%%%%%%%%%%%%%%%%%%%%%%%%%%%%%%%%%%%%%%%%%%%%%
%%%%%%%%%%%%%%%%%%%%%%%%%%%%%%%%%%%%%%%%%%%%%%%%%%%%%%%%%%%%%%%%%%%%%%%%%%%%%%%%%%%%%%%%%%%%%%%%%%%
%%%%%%%%%%%%%%%%%%%%%%%%%%%%%%%%%%%%%%%%%%%%%%%%%%%%%%%%%%%%%%%%%%%%%%%%%%%%%%%%%%%%%%%%%%%%%%%%%%%

\section{Conclusion }

We have seen with \rtheorem{dpq=xi} that the function~$\Der$, which
transforms a bottom word of~$\Tpq$ into another one, is realised by a
transducer which is so to speak built upon~$\Tpq$ itself.
To tell the truth, we had in mind a stronger property when we began
this work.

All bottom words of~$\Tpq$ are distinct.
But we conjecture that they all share something in common, that they
are all of the `same kind'.
Two infinite words would be considered very naturally to be of the
same kind if they can be mapped one to the other by a finite state
machine.
It is obviously the case for~$\minword{n}$ and~$\minword{m}$ if one
is a suffix of the other, that is, if~$m$ is a node that is reached
from~$n$ by its bottom word.
We conjectured it is the case for every pair of integers~$n$ and~$m$
but were not able to prove it.
We thus leave it as an open problem:

\begin{problem}\lproblem{dpq=xi}
Prove, or disprove, the following statement:

Let~$p,q$ be two coprime integers such that~$p>q>1$ and~$z=\pq$.
For every integer~$n$, there exists a \emph{finite} letter-to-letter
and cosequential transducer~$\En$ (which depends also on~$z$ of
course) such that
$\msp\En(\minword{n}) = \minword{n+1}\msp$.
\end{problem}

Another problem that is left open by this work is the computation of
the Hausdorff dimension of the set~$\adh{\Spq}$ in the cases
where~$p>2q-1$, along the line of \rproposition{haus}.
We have seen that in this cases the set~$\adh{\Spq}$ may be described
in a way comparable to the construction of the classical ternary
Cantor set.
As a result, both sets have similar topological properties (closed, bounded,
empty interior, no isolated point, Lebesgue-measure zero).
This comparison hence suggests that the Hausdorff dimension
of~$\adh{\Spq}$ could be~$\frac{\ln(2\xmd q-1)}{\ln p}$.
We showed an upper bound that is strictly smaller than this last
value.
The exact computation of the Hausdorff dimension seems to be more
difficult and is the subject of ongoing work by the authors.
The first attempts lead to the following conjecture.

\begin{conjecture}If~$p>2\xmd q-1$, then the Hausdorff dimension of~$\adh{\Spq}$ is equal
  to~$\msp\frac{\ln (2\xmd q -1)-\ln q}{\ln p - \ln q}\msp$.
\end{conjecture}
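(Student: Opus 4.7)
The plan is to realise $\adh\Spq$ as the attractor of a graph-directed self-similar system driven by $\Tpqp$, in which every edge contracts distances by the factor $z^{-1}$, and then derive its Hausdorff dimension from a Bowen-type formula. The dimension should turn out to be $\log_z\rho$, where $\rho = \lim_{k\to\infty}P_k^{1/k}$ and $P_k$ is the number of paths of length $k$ in $\Tpqp$. Since $(\ln(2q-1)-\ln q)/(\ln p-\ln q) = \log_z((2q-1)/q)$, the conjecture splits naturally into two sub-claims: (a) $\rho = (2q-1)/q$, and (b) $\hausdim{\adh\Spq} = \log_z\rho$.

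For (a), the starting observation is that the out-degree of a node $n$ of $\Tpqp$ depends only on $n\bmod q$, and equals $2$ for exactly $q-1$ of the residue classes and $1$ for the remaining one, because $\Bpq$ is an interval of length $2q-1$ hitting each residue class modulo $q$ either once or twice. Writing $N_k(r)$ for the count of depth-$k$ nodes of $\Tpqp$ lying in residue class $r$, one has $P_{k+1}=\sum_r c(r)\,N_k(r)$ with $c(r)\in\{1,2\}$ summing to $2q-1$, so $P_{k+1}/P_k\to(2q-1)/q$ precisely when the vectors $(N_k(r)/P_k)_r$ tend to the uniform distribution on $\Z/q\Z$. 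This equidistribution should follow from ergodic properties of the induced transition dynamics on $\Z/q^N\Z$, exploiting that $\gcd(p,q)=1$ makes multiplication by $p$ a bijection of these rings.

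Granting (a), claim (b) proceeds in the classical two-sided manner. The upper bound $\hausdim{\adh\Spq}\le\log_z\rho$ is a sharpening of \rproposition{haus}: cover $\adh\Spq$ by the $P_k$ intervals $\Iu$ of length at most $\omegapq z^{-k}$ (\rlemma{l-Iu-appr}); for any $d>\log_z\rho$ the associated $d$-dimensional sum tends to zero. For the lower bound, I would push forward under $\realval{}$ a measure of maximal entropy on $\ibehav{\Tpqp}$ (built from the left Perron eigenvector of the appropriate transition structure). Injectivity of $\realval{}$ on $\Wpq$ (\rproposition{ro-orde-pres}) rules out overlap losses, and bounding how many depth-$k$ cylinders can meet an interval of length $r\sim z^{-k}$ should then yield $\mu(B_r(x))\le Cr^{\log_z\rho}$ and, by the mass distribution principle, the matching lower bound.

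The principal obstacle is (a): although out-degrees are determined by $n\bmod q$, the \emph{transitions} between residue classes depend on data modulo arbitrarily large powers of $q$, so the Markov-chain interpretation lives on an infinite state space and demands a genuine limit argument. Small-base computations (for $p=5$, $q=2$ one finds $P_k=1,2,3,5,7,10,16,23,\ldots$, with ratios oscillating near $3/2$) are consistent with the prediction but do not suggest a closed-form Perron eigenvector. A cleaner alternative route would be to exhibit a finite graph-directed iterated function system whose attractor coincides \emph{exactly} with $\adh\Spq$, so that the Mauldin--Williams theorem would apply directly; however, the design of such a finite graph seems itself nontrivial.
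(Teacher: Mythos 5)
This statement is left \emph{open} in the paper: it appears only as a conjecture, and the sole result proved in its direction is \rproposition{haus}, which gives the upper bound $\frac{\ln 2}{\ln p-\ln q}$. So there is no paper proof to compare against, and, as you acknowledge yourself, what you have written is a programme rather than an argument. The heuristic behind it is sound and consistent with the conjecture: $\frac{\ln(2q-1)-\ln q}{\ln p-\ln q}=\log_z\frac{2q-1}{q}$, the quantity $\frac{2q-1}{q}$ is exactly the average out-degree of $\Tpqp$ over a full residue system modulo $q$ (since $\Bpq$ consists of $2q-1$ consecutive integers and $\gcd(p,q)=1$), and it is strictly below $2$, hence compatible with the proven bound. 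But both halves of your programme rest on unproved statements about the combinatorics of $\Tpq$, and these are precisely the statements that are hard in rational base systems.

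Concretely: (i) even the improved \emph{upper} bound requires $\limsup_k P_k^{1/k}\le\frac{2q-1}{q}$, where $P_k$ is the number of length-$k$ paths of $\Tpqp$; without the equidistribution of depth-$k$ nodes among residue classes modulo $q$, the worst case puts all nodes in out-degree-two classes and you only get $P_k\le 2^k$, which reproduces \rproposition{haus} and nothing more. The transitions between residue classes are governed by congruences modulo arbitrarily high powers of $q$ (cf.\ \rlemma{tpq-futu}), so the required statement is an equidistribution result on an inverse limit, not on a finite Markov chain; no such result is established here or in the literature you cite. (ii) The lower bound is in worse shape: the mass distribution principle needs $\mu(\Iu)\le C\,\rho^{-\wlen{u}}$ \emph{uniformly} over all cylinders, which amounts to a uniform-in-$n$ two-sided bound on the growth of the subtree of $\Tpqp$ below every node $n$ --- a much stronger requirement than control of the average $P_k$, and there is no candidate construction of the ``Perron eigenvector'' on this infinite, non-self-similar graph. (iii) The ``cleaner alternative'' of a finite graph-directed system faces a structural obstruction: as recalled in the introduction, all subtrees of $\Tpq$ (hence of $\Tpqp$) are pairwise distinct, so no identification of nodes into finitely many types realises $\adh{\Spq}$ exactly; one would at best obtain an approximating attractor, and controlling the error is again the equidistribution problem of (i). In short, your proposal correctly locates where the difficulty lies and is a reasonable research plan, but it does not prove the conjecture, which remains open.
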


\smallskip

%%%%%%%%%%%%%%%%%%%%%%%%%%%%%%%%%%%%%%%%%%%%%%%%%%%%%%%%%%%%%%%%%%%%%%%%%%%%%%%%%%%%%%%%%%%%%%%%%%%
%%%%%%%%%%%%%%%%%%%%%%%%%%%%%%%%%%%%%%%%%%%%%%%%%%%%%%%%%%%%%%%%%%%%%%%%%%%%%%%%%%%%%%%%%%%%%%%%%%%
%%%%%%%%%%%%%%%%%%%%%%%%%%%%%%%%%%%%%%%%%%%%%%%%%%%%%%%%%%%%%%%%%%%%%%%%%%%%%%%%%%%%%%%%%%%%%%%%%%%

\acknowledgments
The authors are very grateful to the unknown referee who suggested
them to study the Hausdorff dimension of the span-sets and hinted
at the bound from which they began to work.

The second author gratefully acknowledges the support of a Marie
Sk{\l}odowska--Curie post-doctoral fellowship, co-funded by the European Union and
the University of Li\`ege (Belgium), while he was completing this work
during the academic year~2016/2017.

\nocite{BertRigo10-b}
\bibliographystyle{abbrvnat}
\bibliography{Alexandrie-abbrevs.bib,Alexandrie-AC.bib,Alexandrie-DF.bib,Alexandrie-GL.bib,Alexandrie-MR.bib,Alexandrie-SZ.bib,CANT.bib,extra.bib}

% \clearpage
% \input{notation}
\end{document}